

\newif{\ifarxiv}
\arxivtrue

\documentclass[11pt]{article}
\pdfoutput=1 
\usepackage{amsthm}
\usepackage{amsmath,amssymb,stmaryrd}
\usepackage[T1]{fontenc}
\usepackage[utf8]{inputenc}
\usepackage{authblk}
\usepackage{cite}
\usepackage{amsfonts,stmaryrd}
\usepackage{amsmath}
\usepackage[inline]{enumitem}
\usepackage{url}
\usepackage{xypic}
\usepackage{xcolor}
\usepackage{orcidlink}


\usepackage{prooftree}

\newtheorem{theorem}{Theorem}[section]
\newtheorem{lemma}[theorem]{Lemma}
\newtheorem{fact}[theorem]{Fact}
\newtheorem{proposition}[theorem]{Proposition}
\newtheorem{corollary}[theorem]{Corollary}

\newtheorem{definition}[theorem]{Definition}
\newtheorem{remark}[theorem]{Remark}

\newcommand\kwfont[1]{\pmb{\mathtt{#1}}}
\newcommand\eqdef{\mathrel{\buildrel \text{def}\over=}}
\newcommand\diff{\setminus}
\newcommand\limp{\mathrel{\Rightarrow}}

\newcommand\sqrtop{\text{sqrt}}
\newcommand\sqrtkw{\underline{\sqrtop}}

\newcommand\intT{\kwfont{int}}
\newcommand\boolT{\kwfont{bool}}
\newcommand\unitT{\kwfont{unit}}
\newcommand\voidT{\kwfont{void}}
\newcommand\realT{\kwfont{real}}

\newcommand\muxop{\text{mux}}
\newcommand\muxkw{\underline{\muxop}}
\newcommand\posop{\text{pos}}
\newcommand\poskw{\underline{\posop}}
\newcommand\expop{\text{exp}}
\newcommand\expkw{\underline{\expop}}

\newcommand\ifkw{\kwfont{if}}

\newcommand\true{\underline{\trueop}}
\newcommand\trueop{\mathrm{true}}
\newcommand\false{\underline{\falseop}}
\newcommand\falseop{\mathrm{false}}

\newcommand\sample{\kwfont{sample}}
\newcommand\score{\kwfont{score}}
\newcommand\retkw{\mathop{\kwfont{ret}}\nolimits}

\newcommand\dokw{\mathop{\kwfont{do}}\nolimits}
\newcommand\reckw{\kwfont{rec}}

\newcommand\casekw{\kwfont{case}}

\newcommand\headop{\mathrm{head}}
\newcommand\headkw{\underline{\headop}}

\newcommand\tailop{\mathrm{tail}}
\newcommand\tailkw{\underline{\tailop}}

\newcommand\letkw{\kwfont{let}}
\newcommand\inkw{\kwfont{in}}
\newcommand\letbe[2]{\letkw\;{#1}\;\inkw\;{#2}}
\newcommand\nat{\mathbb{N}}

\newcommand\Z{\mathbb{Z}}
\newcommand\Eval[1]{\left\llbracket{#1}\right\rrbracket}

\newcommand{\real}{\mathbb{R}}
\newcommand\Ipref{\mathbf{I}}
\newcommand\IR{\Ipref\real}
\newcommand\realp{\real_+}
\newcommand{\creal}{\overline\real_+}

\newcommand\FV{\mathrm{fv}}

\newcommand\R{\mathrel{R}}
\newcommand\rover[1]{\overline{\overline{#1}}}

\newcommand\IRbb{\IR^\star} 
\newcommand\identity[1]{\mathrm{id}_{#1}}

\newcommand\Lform{\mathcal{L}}
\newcommand\Val{\mathbf{V}}

\newcommand\lfp{\mathop{\text{lfp}}}

\newcommand\Config{\Gamma}
\newcommand\upc{\mathop{\uparrow}}
\newcommand\dc{\mathop{\downarrow}}
\newcommand\uuarrow{\rlap{$\uparrow$}\raise.5ex\hbox{$\uparrow$}}
\newcommand\ddarrow{\rlap{$\downarrow$}\raise.5ex\hbox{$\downarrow$}}
\newcommand\Open{{\mathcal O}}
\newcommand\fin{\mathrm{f}}

\newcommand\cb[1]{\mathbf{#1}} 
\newcommand{\catc}{{\cb{C}}} 

\newcommand{\Dcpo}{{\mathbf{Dcpo}}}
\newcommand{\TOP}{{\mathbf{Top}}}

\newcommand\App{\mathrm{App}}

\newcommand\num{\mathrm{num}}
\newcommand\bin{\mathrm{bin}}


\begin{document}
%
\title{A Domain-Theoretic Approach to Statistical Programming Languages}

\author[1]{Jean Goubault-Larrecq\, \orcidlink{0000-0001-5879-3304}}
\author[2,3]{Xiaodong Jia\, \orcidlink{0000-0001-9310-6143}}
\author[1]{Cl\'ement Th\'eron\, \orcidlink{0000-0003-2703-4289}}
\affil[1]{Universit\'e Paris-Saclay, CNRS, ENS Paris-Saclay, Laboratoire M\'ethodes Formelles, 91190, Gif-sur-Yvette, France}
\affil[2]{School of Mathematics, Hunan University,
  Changsha, Hunan 410082, China}
\affil[3]{Department of Computer Science, Tulane University,
  New Orleans, LA 70118, USA}

\maketitle

\begin{abstract}
  We give a domain-theoretic semantics to a statistical programming
  language, using the plain old category of dcpos, in contrast to some
  more sophisticated recent proposals.  Remarkably, our monad of
  minimal valuations is commutative, which allows for program
  transformations that permute the order of independent random draws,
  as one would expect.  A similar property is not known for Jones and
  Plotkin' s monad of continuous valuations.  Instead of working with
  true real numbers, we work with exact real arithmetic, providing a
  bridge towards possible implementations.  (Implementations by
  themselves are not addressed here.)  Rather remarkably, we show that
  restricting ourselves to minimal valuations does not restrict us
  much: \emph{all} measures on the real line can be modeled by minimal
  valuations on the domain $\IR_\bot$ of exact real arithmetic.  We
  give three operational semantics for our language, and we show that
  they are all adequate with respect to the denotational semantics.
  We also explore quite a few examples in order to demonstrate that
  our semantics computes exactly as one would expect, and in order to
  debunk the myth that a semantics based on continuous maps would not
  be expressive enough to encode measures with non-compact support
  using only measures with compact support, or to encode measures via
  non-continuous density functions, for instance.  Our examples also
  include some useful, non-trivial cases of distributions on
  higher-order objects.
\end{abstract}


\maketitle

\ifarxiv
\relax
\else
\tableofcontents
\fi

\section{Introduction}

The purpose of this paper is to give a simple, domain-theoretic semantics
to statistical programming languages.

Statistical programming languages such as Church \cite{GMRBT:Church},
Anglican \cite{WvdMM:Anglican}, WebPPL \cite{GS:webppl} or Venture
\cite{MSP:venture}, were introduced as a convenient means to describe
and implement so-called stochastic generative processes.  Those are
randomized programs that describe probability distributions.

Initial proposals focused on implementations.  One of the first
proposals for a formal semantics of such a language, SFPC, is due to
V{\'a}k{\'a}r et al.\ \cite{VKS:SFPC}, and is based on
\emph{quasi-Borel predomains}, a notion that expands on the clever
notion of \emph{quasi-Borel spaces} \cite{HKSY:qBorel}, with
additional domain-theoretic structure.

The constructions of \cite{VKS:SFPC} are rather involved, and one may
wonder there would exist simpler denotational semantics for such
languages.  We will give one, based on domain theory alone.  Domain
theory is probably one of the oldest mathematical basis for
denotational semantics \cite{scott69}.  It is a common belief that it
would be inadequate for giving semantics to probabilistic languages.
This may be due to a superficial reading of a famous paper by A. Jung
and R. Tix \cite{jung98}.  And indeed, there are several purely
domain-theoretic semantics of probabilistic languages
\cite{jones90,jgl-jlap14,JGL-lics19}.

However, it is true that statistical probabilistic languages present
additional challenges to the semanticist.  Those are caused by several
additional features that one must take into account: a native type of
real numbers, continuous distributions, and perhaps most importantly,
\emph{soft constraints} \cite{SYHKW:soft:constr}.  The latter is a
convenient way of implementing the computation of conditional
distributions, or smoothed versions therefore, and is implemented by a
primitive called $\score$ in SFPC.

One difference between the quasi-Borel semantics of SFPC
\cite{VKS:SFPC}, or that of PCFSS \cite{DLH:geom:bayes}, or that of
PPCF \cite{EPT:PPCF}, with our domain-theoretic semantics does not lie
in probabilities or soft constraints, but with the way we handle real
numbers.  SFPC, PCFSS, and PPCF rely on \emph{true} real numbers, that
is, values of type $\realT$ are interpreted as elements of $\real$.
We interpret values of type $\realT$ as so-called \emph{exact} real
numbers, namely as elements of a dcpo $\IR_\bot$ of interval
approximations of real numbers, as in RealPCF and other proposals for
so-called exact real arithmetic
\cite{BC:exact:real,lester92,sunderhauf95c,escardo96a,escardo96,edalat97,Plume:ERC,marcial04,Ho:ERCE}.
This is the most natural choice with domain-theoretic semantics, as
$\real$ itself is not a dcpo (more precisely, the topology of $\real$
cannot be obtained as a Scott topology), but embeds naturally in a
domain of interval approximations.  This is also a natural bridge to
implementations; although we will not pursue this topic in depth, the
final section of this paper will give hints.

\paragraph{Contributions.}

The main contribution of this paper is, therefore, a simple, purely
domain-theoretic denotational semantics for a statistical programming
language with exact real arithmetic, continuous distributions and soft
constraints, featuring full recursion.  Additionally, and as in
previous proposals such as \cite{VKS:SFPC}, our monad of so-called
\emph{minimal valuations} implementing probabilistic choice is
commutative.  This is crucial in establishing the correctness of
run-of-the-mill program transformations such as permuting the order in
which two independent random variables are drawn.  In domain-theoretic
semantics of probabilistic programming languages based on Jones and
Plotkin's continuous valuations \cite{jones89,jones90}, it was not
known how to achieve this, at least until recent work by Jia,
Lindenhovius, Mislove and Zamdzhiev \cite{JLM:prob:quant}.  (We made
that discovery at the same time that they did: see the final related
work section for details.)  For the moment, let us just say that there
are two differences between our work and theirs.  The less significant
one is probably that we work with (minimal) valuations that are
unbounded, as required to give semantics to $\score$, while they work
with subprobability valuations.  The more significant one is that they
do not handle \emph{continuous} distributions.  It is a priori unclear
which continuous distributions on $\real$ can be represented as
\emph{minimal} valuations.  Being minimal, indeed, is a genuine
restriction: as we will see, Lebesgue measure is a continuous
valuation on $\real$ that is not minimal.  In spite of this, and this
is perhaps the most pleasing aspect of the current work, \emph{all}
measures on $\real$---including Lebesgue measure---can be realized as
minimal valuations on $\IR_\bot$, as we will show in
Section~\ref{sec:lebesgue-r-valuation}.  We will also provide an
extensive list of examples in Section~\ref{sec:examples}, by which we
hope to demonstrate that our approach is capable of defining a rich
set of distributions, including distributions on higher-order objects.

\paragraph{Outline.}

We give some preliminary definitions in
Section~\ref{sec:preliminaries}, where we also describe some of the
challenges in more detail.  In Section~\ref{sec:monads-continuous}, we
recapitulate the construction and basic properties of the monad $\Val$
of continuous valuations.  The commutativity of that monad is
equivalent to a form of the Fubini-Tonelli theorem, which in its most
basic form expresses an interchange property for double integrals.
Rather subtly, and perhaps paradoxically, such a Fubini-Tonelli
theorem is easy to obtain for the $\Val$ monad on the category $\TOP$
of topological spaces, but is an open problem on the subcategory
$\Dcpo$ of dcpos.  We will explain the issue in
Section~\ref{sec:fubini-theorem-what}, where we will see that
everything boils down to the fact that products in $\Dcpo$ are in
general different from products in the larger category $\TOP$.

In order to obtain a Fubini-Tonelli theorem on $\Dcpo$, we may opt to
restrict to, say, continuous dcpos, but this runs into some trouble,
as we will have seen in Section~\ref{sec:preliminaries}.  Our solution
is much simpler: we restrict continuous valuations to a submonad
$\Val_m$ of so-called \emph{minimal} valuations, and we show that
$\Val_m$ is a commutative monad on $\Dcpo$ in
Section~\ref{sec:minim-valu-fubini}.  Now, since we will restrict our
valuations to be minimal, doesn't this exclude some interesting
continuous distributions?  And indeed that will seem to be the case:
as we have already mentioned, we will show that Lebesgue measure, seen
as a continuous valuation on $\real$, is not minimal.  But (and again,
the shift is subtle), it \emph{is} minimal on the domain $\IR_\bot$
that serves to do exact real arithmetic.  In fact, as we show in
Section~\ref{sec:lebesgue-r-valuation}, \emph{every} measure on
$\real$ gives rise to a minimal valuation on $\IR_\bot$.  We will even
give a simple, explicit description of the corresponding minimal
valuation as a supremum of a countable chain of simple valuations.

Having done all this preliminary work, we introduce a higher-order
statistical programming language, ISPCF, with continuous distributions
and soft constraints, as well as full recursion, in
Section~\ref{sec:simple-sfpc-calculus}.  Its denotational semantics is
somehow straightforward, considering our preparatory steps, and is
given in Section~\ref{sec:semantics}.  In order to get a grasp of what
we can express in ISPCF, and more importantly, how one can reason
about ISPCF programs using that semantics, we provide a rather
extensive list of examples in Section~\ref{sec:examples}.  This
culminates with non-trivial examples of distributions on some
higher-order data types.

Finally, we will explore a few operational semantics for ISPCF, paving
the way for formally verified implementations.  (We will not address
implementations per se in this paper.)  The first operational
semantics for ISPCF we give is similar to some other earlier proposals
\cite{VKS:SFPC,DLH:geom:bayes,EPT:PPCF}, except that our real numbers
are exact reals, not true reals, and that our transition function is
continuous, not just measurable; it is given in
Section~\ref{sec:ideal-oper-semant}.  It is not too hard to give
another operational semantics which works with \emph{true} reals, and
which is therefore even closer to the operational semantics given in
\cite{VKS:SFPC,DLH:geom:bayes,EPT:PPCF}; this is the \emph{precise}
operational semantics of Section~\ref{sec:meas-oper-semant}.  Finally,
we give a sampling-based operational semantics in
Section~\ref{sec:effectivity}.  The name ``sampling-based'' is by
analogy with early work by \cite{PPT:sampling}, and with one of the
semantics of \cite{DLH:geom:bayes}.  That semantics is meant to be one
step closer to an implementation: instead of drawing real numbers at
random with respect to arbitrary measures on $\real$, the
sampling-based semantics draws bits independently at random, and
derives certain continuous distributions from those bits.  In each
case, we show that our operational semantics are sound and adequate
with respect to our denotational semantics.

We finish by reviewing related work in Section~\ref{sec:related-work},
and by concluding in Section~\ref{sec:conc}.

\section{Preliminaries, Challenges}
\label{sec:preliminaries}

We refer to \cite{billingsley86} for basics of measure theory, and to
\cite{abramsky94,gierz03,goubault13a} for basics of domain theory and
topology.

\subsection{Measure theory}

A \emph{$\sigma$-algebra} on a set $X$ is a collection of subsets
closed under countable unions and complements.  A \emph{measurable
  space} $X$ is a set with a $\sigma$-algebra $\Sigma_X$.  The
elements of $\Sigma_X$ are usually called the \emph{measurable
  subsets} of $X$.

A \emph{measure} $\mu$ on $X$ is a $\sigma$-additive map from
$\Sigma_X$ to $\creal$, where $\creal$ is the set of extended
non-negative real numbers $\realp \cup \{+\infty\}$.  We will agree
that $0 . (+\infty) = 0$.  (This makes multiplication on $\creal$
Scott-continuous, see below.)  The property of
\emph{$\sigma$-additivity} means that, for every countable family of
pairwise disjoint sets $E_n$,
$\mu (\bigcup_n E_n) = \sum_n \mu (E_n)$.  (Here $n$ ranges over any
subset of $\nat$, possibly empty.)

A \emph{measurable map} $f \colon X \to Y$ between measurable spaces
is a map such that $f^{-1} (E) \in \Sigma_X$ for every
$E \in \Sigma_Y$.  The \emph{image measure} $f [\mu]$ of a measure
$\mu$ on $X$ is defined by $f [\mu] (E) \eqdef \mu (f^{-1} (E))$.

The $\sigma$-algebra $\Sigma (A)$ \emph{generated by} a family $A$ of
subsets of $X$ is the the smallest $\sigma$-algebra containing $A$.
The \emph{Borel $\sigma$-algebra} on a topological space is the
$\sigma$-algebra generated by its topology.  The \emph{standard
  topology} on $\creal$ is generated by the intervals $[0, b[$,
$]a, b[$ and $]a, +\infty]$, with $0 < a < b < +\infty$.  Its Borel
$\sigma$-algebra is also generated by just the intervals
$]a, +\infty]$ (the Scott-open subsets, see below).  Hence a
measurable map $h \colon X \to \creal$ is a map such that
$h^{-1} (]t, +\infty]) \in \Sigma_X$ for every $t \in \real$.  Its
\emph{Lebesgue integral} can be defined elegantly through
\emph{Choquet's formula}:
$\int_X h d\mu \eqdef \int_0^{+\infty} \mu (h^{-1} (]t, +\infty]))
dt$, where the right-hand integral is an ordinary Riemann integral.

This formula makes the following \emph{change-of-variables formula} an
easy observation: for every measurable map $f \colon X \to Y$, for
every measurable map $h \colon Y \to \creal$,
$\int_Y h df[\mu] = \int_X (h \circ f) d\mu$.



There is a unique measure $\lambda$ on $\real$ such that
$\lambda (]a, b[) = b-a$ for every open bounded interval $]a, b[$.
This measure is called \emph{Lebesgue measure}.

A measure $\mu$ on $X$ is \emph{bounded} if and only if
$\mu (X) < +\infty$.  A measure $\mu$ is \emph{$\sigma$-finite} if
there is a sequence
$E_0 \subseteq E_1 \subseteq \cdots \subseteq E_n \subseteq \cdots$ of
measurable subsets of $X$ whose union is $X$ and such that
$\mu (E_n) < +\infty$ for every $n \in \nat$.  A \emph{$\pi$-system}
$\Pi$ on a set $X$ is a family of sets closed under finite
intersections.  If $X$ is a measurable space such that
$\Sigma_X = \Sigma (\Pi)$, any two $\sigma$-finite measures that agree
on $\Pi$ also agree on $\Sigma_X$.  In particular, Lebesgue measure on
$\real$ is uniquely defined by the specification
$\lambda (]a, b[) = b-a$.

\subsection{Domain theory and topology}

A \emph{dcpo} is a poset in which every directed family $D$ has a
supremum $\sup D$.  A prime example is $\creal$, with the usual
ordering.  Another example is $\IR$, the poset of closed intervals
$[a, b]$ with $a, b \in \real$ and $a \leq b$, ordered by reverse
inclusion $\supseteq$.  Every directed family
${([a_i, b_i])}_{i \in I}$ in $\IR$ has a supremum
$\bigcap_{i \in I} [a_i, b_i] = [\sup_{i \in I} a_i, \inf_{i \in I}
b_i]$.  $\IR_\bot$ is the \emph{lift} of $\IR$, namely the dcpo
obtained by adding a fresh element $\bot$ below all others.  In
general, we define the lift $X_\bot$ of a dcpo $X$ similarly.  In the
case of $\IR_\bot$, we may equate $\bot$ with the whole set $\IR$
itself, so that $\IR_\bot$ is still ordered by reverse inclusion.
$\IR_\bot$ will be the domain of interpretation of the type $\realT$
of exact real numbers.  Among them, we find the \emph{total} numbers
$a \in \real$, which we may equate with the maximal elements $[a, a]$
of $\IR$.

$\IR_\bot$ is an example of a \emph{pointed} dcpo, namely one that has
a least element, which we will always write as $\bot$, and which we read
as \emph{bottom}.

The \emph{standard topology} on $\real$ is generated by the open
intervals $]a, b[$, with $a < b$.  The map $i \colon a \mapsto [a, a]$
is then a topological embedding of $\real$, with its standard
topology, into $\IR$ (or $\IR_\bot$) with its Scott topology.  In
other words, $i$ is continuous, and every open subset $U$ of $\real$
is the inverse image of some Scott-open subset of $\IR$ (resp.,
$\IR_\bot$) by $i$.  Explicitly, $]a, b[$ is the inverse image of the
Scott-open subset of intervals $[c, d]$ such that $a < c \leq d < b$.
This allows us to consider $\real$ as a subspace of $\IR$, resp.\
$\IR_\bot$.

We will also write $\leq$ for the ordering on any poset.  In the
example of $\IR$ or $\IR_\bot$, $\leq$ is $\supseteq$.  The
\emph{upward closure} $\upc A$ of a subset $A$ of a poset $X$ is
$\{y \in X \mid \exists x \in A, x \leq y\}$.  The \emph{downward
  closure} $\dc A$ is defined similarly.  A set $A$ is \emph{upwards
  closed} if and only if $A = \upc A$, and \emph{downwards closed} if
and only if $A = \dc A$.  A subset $U$ of a poset $X$ is
\emph{Scott-open} if and only if it is upwards closed and, for every
directed family $D$ such that $\sup D$ exists and is in $U$, some
element of $D$ is in $U$ already.  The Scott-open subsets of a poset
$X$ form its \emph{Scott topology}.

The \emph{way-below} relation $\ll$ on a poset $X$ is defined by
$x \ll y$ if and only if, for every directed family $D$ with a
supremum $z$, if $y \leq z$, then $x$ is less than or equal to some
element of $D$ already.  We write $\uuarrow x$ for
$\{y \in X \mid x \ll y\}$, and $\ddarrow y$ for
$\{x \in X \mid x \ll y\}$.  A poset $X$ is \emph{continuous} if and
only if $\ddarrow x$ is directed and has $x$ as supremum for every
$x \in X$.  A \emph{basis} $B$ of a poset $X$ is a subset of $X$ such
that $\ddarrow x \cap B$ is directed and has $x$ as supremum for every
$x \in X$.  A poset $X$ is continuous if and only if it has a basis
(namely, $X$ itself).  A poset is \emph{$\omega$-continuous} if and
only if it has a countable basis.  Examples include $\creal$, with any
countable dense subset (with respect to its standard topology), such
as the rational numbers, or such as the \emph{dyadic numbers} $k/2^n$
($k, n \in \nat$); or $\IR$ and $\IR_\bot$, with the basis of
intervals $[a, b]$ where $a$ and $b$ are both dyadic or both rational.

We write $\Open X$ for the lattice of open subsets of a topological
space.  This applies to dcpos $X$ as well, which will always be
considered with their Scott topology.  The continuous maps
$f \colon X \to Y$ between two dcpos coincide with the
\emph{Scott-continuous} maps, namely the monotonic (order-preserving)
maps that preserve all directed suprema.  We write $\Lform X$ for the
space of continuous maps from a topological space $X$ to $\creal$, the
latter with its Scott topology, as usual.  Such maps are usually
called \emph{lower semicontinuous}, or \emph{lsc}, in the mathematical
literature.  Note that $\Lform X$, with the pointwise ordering, is a
dcpo.

There are several ways in which one can model probabilistic choice.
The most classical one is through measures.  A popular alternative
used in domain theory is given by \emph{continuous valuations}
\cite{jones89,jones90}.  A continuous valuation is a Scott-continuous
map $\nu \colon \Open X \to \creal$ such that $\nu (\emptyset)=0$
(\emph{strictness}) and, for all $U, V \in \Open X$,
$\nu (U \cup V) + \nu (U \cap V) = \nu (U) + \nu (V)$
(\emph{modularity}). 
Canonical examples of continuous valuations on $X$ are \emph{Dirac
valuations} $\delta_{x}$ for $x\in X$, where for each open subset $U$
of $X$, $\delta_{x}(U) = 1$ if $x\in U$ and $\delta_{x}(U) = 0$, otherwise. 
 The set of all continuous valuations on $X$
is denoted by $\Val X$. We order $\Val X$ by the \emph{stochastic 
order} defined as $\nu_{1} \leq \nu_{2}$ if and only if 
$\nu_{1}(U) \leq \nu_{2}(U)$ for all opens of $X$. The set 
$\Val X$ is a dcpo in the stochastic order. 

There is a notion of integral
$\int_{x \in X} h (x) d\nu$, or briefly $\int h d\nu$, for every
$h \in \Lform X$, which can again be defined by a Choquet formula.
Tix \cite[Satz~4.4]{tix95} showed that the integral is a
Scott-continuous bilinear form, namely:
\begin{itemize}
\item for every $\nu \in \Val X$, the map $h \in \Lform X \mapsto \int
  h d\nu$ is Scott-continuous and linear, in the sense that
  $\int \alpha h d\nu = \alpha \int h d\nu$ for every $\alpha \in
  \real$ and $\int (h+h') d\nu = \int h d\nu + \int h' d\nu$, for all
  $h, h' \in \Lform X$;
\item for every $h \in \Lform X$, the map $\nu \in \Val X \mapsto \int
  h d\nu$ is Scott-continuous and linear, in a similar sense.
\end{itemize}
More generally, a \emph{linear} map $G \colon \Lform X \to \creal$
satisfies $G (h+h') = G (h)+G(h')$ and $G (\alpha.h) = \alpha.G(h)$
for all $\alpha \in \realp$, $h, h' \in \Lform X$.  Conversely, any
Scott-continuous linear map $G \colon \Lform X \to \creal$ is of the
form $h \mapsto \int h d\nu$ for a unique continuous valuation $\nu$,
given by $\nu (U) \eqdef G (\chi_U)$, where $\chi_U$ is the
characteristic map of $U$ ($\chi_U (x) \eqdef 1$ if $x \in U$, $0$
otherwise).

Continuous valuations and measures are pretty much the same thing on
$\omega$-continuous dcpos, namely on continuous dcpos with a countable
basis.  This holds more generally on de Brecht's quasi-Polish spaces
\cite{deBrecht:qPolish}, a class of spaces that contains not only the
$\omega$-continuous dcpos from domain theory but also the Polish
spaces from topological measure theory.

One can see this as follows.  In one direction, Adamski's theorem
states that every measure $\mu$ on a hereditarily Lindel\"of space $X$
is $\tau$-smooth \cite[Theorem~3.1]{Adamski:measures}, meaning that
its restriction to the lattice of open subsets of $X$ is a continuous
valuation.  A hereditarily Lindel\"of space is a space whose subspaces
are all Lindel\"of, or equivalently a space in which every family of
open sets contains a countable subfamily with the same union.  Every
second-countable space is hereditarily Lindel\"of, and that includes
all quasi-Polish spaces.  In the other direction, every continuous
valuation on an LCS-complete space extends to a Borel measure
\cite[Theorem~1.1]{DGJL-isdt19}.  An \emph{LCS-complete} space is a
space that is homeomorphic to a $G_\delta$ subset of a locally compact
sober space.  Every quasi-Polish space is LCS-complete; in fact, the
quasi-Polish spaces are exactly the second-countable LCS-complete
spaces \cite[Theorem~9.5]{DGJL-isdt19}.

\subsection{Is there any trouble with the probabilistic powerdomain?}

The probabilistic powerdomain, namely the dcpo $\Val X$ of all
continuous valuations on a space $X$, ordered pointwise, is known to
have its problems \cite{jung98}.  Precisely, there is no known
Cartesian-closed category of continuous dcpos that is closed under the
$\Val$ functor.  It is sometimes believed that this means that domain
theory cannot be used to give semantics to higher-order probabilistic
languages.  This would be wrong: the category $\Dcpo$ of \emph{all}
dcpos, not just the continuous dcpos, \emph{is} Cartesian-closed and
closed under the $\Val$ functor \cite{jones89,jones90}.

Continuity is not required to prove, say, soundness and adequacy
theorems using logical relations, as one realizes by reading the
relevant parts of \cite{streicher02}, and as we will do in
Section~\ref{sec:adequacy}.  But it is required to obtain a form of
the Fubini-Tonelli theorem, or, in categorical terms, to turn $\Val$
into a \emph{commutative} monad.  (See
\cite[Theorem~9.2]{Kock:monad:comm} for the relation between the two
notions.)  Commutativity is important in applications, as we will
briefly discuss in Remark~\ref{rem:comm}, and is a key ingredient of
the semantics of \cite{VKS:SFPC}.  We \emph{do} obtain a commutative
monad, without any need for continuity, by a simple trick based on
inductive closures (Definition~\ref{defn:simple},
Section~\ref{sec:minim-valu-fubini}).  This trick was found
independently by at least one other group of researchers
\cite{JLM:prob:quant}, but on different, and incomparable, monads.

\subsection{Scoring, and density functions}
Compared to ordinary probabilistic languages, statistical programming
languages aim to offer the possibility of computing conditional
distributions.  This runs into questions on non-computability
\cite{AFR:noncomp}.  At this point, we note that there are
well-established theories of computable probability distributions,
notably on computable metric spaces \cite{GHR:symb:dyn}, on more
general computable topological spaces \cite{Roy:PhD}, and also based
on Weihrauch's type two theory of effectivity, also known as
\emph{TTE} \cite{Weihrauch:computability}, see
\cite{Weihrauch:prob,SS:processes}.  In the absence of probabilities,
Schulz \cite{Schulz:TTE=RealPCF} shows that the functions from
$[0, 1]^n$ to $[0, 1]$ that are computable in the sense of TTE and of
RealPCF coincide.

In practice, a number of algorithms are implemented to compute certain
special conditional distributions, with a fallback strategy based on
one form or another of rejection sampling, or with a so-called
\emph{scoring} mechanism, which allows one to give more or less
importance to specific outcomes.

This scoring mechanism is typically implemented through a primitive
called $\score$
\cite{SYHKW:soft:constr,VKS:SFPC,DLH:geom:bayes,EPT:PPCF}.  Roughly,
the effect of $\score\; M$, where $M \colon \realT$ evaluates to a
non-negative real number $\alpha$, is to multiply the `probability' of
the current computation branch by $\alpha$---making it a measure
rather than a probability, whence the quotes.

As a consequence, $\score$ can be used to build new measures
$g \cdot \mu$ from a measure $\mu$ and a density function $g$.  The
measure $g \cdot \mu$ is defined by
$(g \cdot \mu) (E) \eqdef \int_x \chi_E (x) g (x) d\mu$, and is
sometimes written as $g d\mu$.  Indeed, in case $x$ is drawn at random
with respect to some measure $\mu$, writing $\score\;g(x); M(x)$ will
have the effect of executing $M (x)$ as though $x$ had been drawn with
probability multiplied by $g (x)$, namely as though it had been drawn
at random with respect to $g \cdot \mu$.  This interpretation conforms
to intuition if $g$ really is a density function, namely if
$\int_x g (x) d\mu=1$, in which case $g \cdot \mu$ is a probability
distribution.  In general, however, $g \cdot \mu$ will be a measure.
As a simple, but extreme example, $\score\; 0$ annihilates the effect
of the current computation.  If $x$ is drawn with a measure whose
total mass is, say, $\pi/4$, then $\score (4/\pi)$ will renormalize
the measure to a probability distribution.  Other uses of $\score$
include soft conditioning and Bayesian fitting, as illustrated in
\cite{VKS:SFPC}.


\section{Monads of continuous valuations}
\label{sec:monads-continuous}

We will describe probabilistic effects by following Moggi's seminal
work on monads \cite{moggi89,moggi91}.  We use Manes' presentation of
monads \cite{Manes:algth}: a monad $(T, \eta, \_^\dagger)$ on a
category $\catc$ is a function $T$ mapping objects of $\catc$ to
objects of $\catc$, a collection of morphisms
$\eta_X \colon X \to TX$, one for each object $X$ of $\catc$, and
called the \emph{unit}, and for every morphism $f \colon X \to TY$, a
morphism $f^\dagger \colon TX \to TY$ called the \emph{extension} of
$f$; those are required to satisfy the axioms:
\begin{enumerate}
\item $f^\dagger \circ \eta_X = f$;
\item $\eta_X^\dagger = \identity {TX}$;
\item $(g^\dagger \circ f)^\dagger = g^\dagger \circ f^\dagger$.
\end{enumerate}
Then $T$ extends to an endofunctor, acting on morphisms through
$T f = (\eta_Y \circ f)^\dagger$.  Proposition~\ref{prop:VX:monad}
below is due to Jones \cite[Theorem~4.5]{jones90}.  Her definition of
the integral was different, and she implicitly restricted valuations
to subprobability valuations.  A similar statement is due to Kirch
\cite[Satz~6.1]{kirch93}, for continuous dcpos instead of general
dcpos.  Tix \cite{tix95} was probably the first to use the Choquet
formula in this context.  Her study was also restricted to continuous
dcpos.

We define $\eta_X \colon X \to \Val X$ as mapping $x \in X$ to
$\delta_x$.  For every Scott-continuous map $f \colon X \to \Val Y$,
for every $\mu \in \Val X$, for every Scott-open subset $V$ of $Y$, we
define:
\begin{align}
  \label{eq:dagger:def}
  f^\dagger (\mu) (V) & \eqdef \int_{x \in X} f (x) (V) d\mu.
\end{align}

The following lemma is proved exactly as in most of the references we
have just cited.
\begin{lemma}
  \label{lemma:eta:mu}
  For all dcpos $X$ and $Y$, for every Scott-continuous map
  $f \colon X \to \Val Y$,
  \begin{enumerate}[label=(\roman*)]
  \item the map $\eta_X$ is Scott-continuous;
  \item the map $f^\dagger$ is Scott-continuous from $\Val X$ to $\Val
    Y$;
  \item for every $\mu \in \Val X$, for every $g \in \Lform Y$,
    \begin{align}
      \label{eq:dagger}
      \int_{y \in Y} g (y) df^\dagger (\mu)
      & = \int_{x \in X}
        \left(\int_{y \in Y} g (y) df (x)\right) d\mu.
    \end{align}
    \item  For every
$h \in \Lform X$, $\int_{x' \in X} h(x') d\delta_x = h (x)$. 
  \end{enumerate}
\end{lemma}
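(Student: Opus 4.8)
The plan is to derive all four items from the Choquet-formula definition of the integral together with Tix's theorem that the integral is a Scott-continuous bilinear form, which I use freely since it is recalled above. Items (i) and (iv) are elementary and independent of the rest; item (ii) first needs the verification that $f^\dagger(\mu)$ really is a continuous valuation on $Y$; and item (iii), which is the only part with genuine content, I would prove by the standard reduction to simple maps.

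For (i): monotonicity of $x \mapsto \delta_x$ is immediate, since open sets are upward closed, so $x \leq y$ and $x \in U$ force $y \in U$, hence $\delta_x(U) \leq \delta_y(U)$. If $D$ is directed with supremum $z$ and $U \in \Open X$, then Scott-openness of $U$ gives $z \in U$ iff $d \in U$ for some $d \in D$; thus $\delta_z(U) = \sup_{d \in D} \delta_d(U)$, and since directed suprema in $\Val X$ are computed pointwise on open sets, $\delta_z = \sup_{d \in D} \delta_d$. For (iv), I would simply unfold Choquet's formula:
\[
  \int_{x' \in X} h(x')\,d\delta_x
  = \int_0^{+\infty} \delta_x\bigl(h^{-1}(]t,+\infty])\bigr)\,dt
  = \int_0^{+\infty} \bigl[\,t < h(x)\,\bigr]\,dt
  = h(x),
\]
the middle integrand being $1$ when $t < h(x)$ and $0$ otherwise, because $\delta_x(h^{-1}(]t,+\infty]))=1$ iff $h(x) > t$.

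For (ii): I first check $f^\dagger(\mu) \in \Val Y$. Strictness holds since $f(x)(\emptyset)=0$ for all $x$. Modularity follows from modularity of each valuation $f(x)$ and additivity of $h \mapsto \int h\,d\mu$ (linearity, Tix):
\[
  f^\dagger(\mu)(U\cup V) + f^\dagger(\mu)(U\cap V)
  = \int_x \bigl(f(x)(U) + f(x)(V)\bigr)\,d\mu
  = f^\dagger(\mu)(U) + f^\dagger(\mu)(V).
\]
For Scott-continuity of $f^\dagger(\mu)$ on $\Open Y$: given a directed family $(V_i)_i$ with union $V$, each map $x \mapsto f(x)(V_i)$ lies in $\Lform X$ — because $f$ is Scott-continuous into $\Val Y$ and evaluation at a fixed open set is Scott-continuous from $\Val Y$ to $\creal$ — and these maps form a directed family with pointwise supremum $x \mapsto f(x)(V)$, so $f^\dagger(\mu)(V) = \sup_i f^\dagger(\mu)(V_i)$ by Scott-continuity of $h \mapsto \int h\,d\mu$. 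Finally, Scott-continuity of $f^\dagger$ itself reduces, open set by open set, to monotonicity and directed-sup preservation of $\nu \mapsto \int f(\cdot)(V)\,d\nu$, which is again Tix's theorem (second bullet).

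For (iii): both sides are Scott-continuous and linear as functions of $g \in \Lform Y$ — the left side directly by Tix's theorem applied to the valuation $f^\dagger(\mu)$; the right side because $g \mapsto \int_y g\,df(x)$ is Scott-continuous and linear for each $x$ (Tix), the resulting function of $x$ lies in $\Lform X$ (a composite of Scott-continuous maps), and $h \mapsto \int_x h\,d\mu$ is Scott-continuous and linear (Tix again). Hence it suffices to verify the identity for simple maps $g = \chi_V$, $V \in \Open Y$, extend it by linearity to all finite $\realp$-linear combinations of such maps, and then invoke that every $g \in \Lform Y$ is the supremum of the increasing sequence $g_n \eqdef \sum_{k=1}^{n2^n} 2^{-n}\,\chi_{g^{-1}(]k2^{-n},+\infty])}$ of simple maps. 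For $g = \chi_V$ the left side is $f^\dagger(\mu)(V) = \int_x f(x)(V)\,d\mu$ by the defining equation~\eqref{eq:dagger:def}, while on the right $\int_y \chi_V\,df(x) = f(x)(V)$, so the right side equals $\int_x f(x)(V)\,d\mu$ too. I expect the only delicate points here to be bookkeeping: ensuring at each stage that the function being integrated over $x$ genuinely lies in $\Lform X$, so that the outer integral is well defined, and that $(g_n)_n$ is indeed a directed family with supremum $g$ in $\Lform Y$. Both are routine, but they are precisely where a hasty argument would slip.
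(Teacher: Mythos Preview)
Your proposal is correct and follows essentially the same route as the paper: items (i), (ii), and (iv) are argued identically, and for (iii) both you and the paper reduce to the simple-map approximants $g_K = \sum_{k=1}^{K2^K} 2^{-K}\chi_{g^{-1}(]k2^{-K},+\infty])}$, verify the identity there using the definition of $f^\dagger$ and linearity, and pass to the supremum by Scott-continuity. The only cosmetic difference is that you frame (iii) as ``both sides are Scott-continuous linear in $g$, so check on $\chi_V$'', whereas the paper computes both sides directly on each $g_K$; the content is the same.
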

\begin{proof}
  (i) If $x \leq y$, then for every $U \in \Open X$, if
  $\delta_x (U)=1$ then $x$ is in $U$, so $y$ is in $U$ as well, and
  therefore $\delta_y (U)=1$.  It follows that
  $\delta_x \leq \delta_y$.  Let $D$ be any directed family in $X$,
  and $x \eqdef \sup D$.  For every $U \in \Open X$, $\delta_x (U) = 1$ if
  and only if $x \in U$, if and only if some element $y \in D$ is in
  $U$, by definition of Scott-open sets; and that is equivalent to
  $\sup_{y \in D} \delta_y (U)=1$.

  (ii) We first verify that $f^\dagger (\mu)$ is a continuous
  valuation, for every $\mu \in \Val X$.  Strictness and modularity
  follow easily from the fact that $f (x) \in \Val Y$ for every $x \in
  X$.  Scott-continuity follows from the fact that the integral is a
  (bilinear) form that is Scott-continuous in its function argument.
  The integral is also Scott-continuous in its valuation argument, so
  $f^\dagger$ is itself Scott-continuous.

  (iii) We write $g$ as the directed supremum of the maps $g_K$,
  defined by $g_K (y) \eqdef \sum_{k=1}^{K2^K} \frac 1 {2^K}
  \chi_{g^{-1} (]\frac k {2^K}, +\infty])}$, $K \in \nat$.  Then:
  \begin{align*}
    \int_{y \in Y} g_K (y) df^\dagger (\mu)
    & = \sum_{k=1}^{K2^K} \frac 1 {2^K} \int_{y \in Y} \chi_{g^{-1}
      (]\frac k {2^K}, +\infty])} (y) df^\dagger (\mu) \\
    & = \sum_{k=1}^{K2^K} \frac 1 {2^K} f^\dagger (\mu)
      (g^{-1} (]\frac k {2^K}, +\infty]) \\
    & = \sum_{k=1}^{K2^K} \frac 1 {2^K} \int_{x \in X} f (x)
      (g^{-1} (]\frac k {2^K}, +\infty]) d \mu \\
    & = \int_{x \in X} \sum_{k=1}^{K2^K} \frac 1 {2^K} f (x)
      (g^{-1} (]\frac k {2^K}, +\infty]) d \mu \\
    & = \int_{x \in X} \left(\int_{y \in Y} g_K (y) df (x)\right) d\mu,
  \end{align*}
  and the result follows by  Scott-continuity of the integral in its
  function argument and taking suprema, as $K$ tends to $+\infty$.
  
  (iv)  The Choquet formula for the integral yields\\
 $\int_{x' \in X} h(x') d\delta_x = \int_0^{+\infty} \delta_x (h^{-1} (]t, +\infty])) dt = \int_0^{h (x)} 1 dt = h (x)$.
\end{proof}

\begin{proposition}
  \label{prop:VX:monad}
  The triple $(\Val, \eta, \_^\dagger)$ is a monad on the category
  $\Dcpo$ of dcpos and Scott-continuous maps.  For every
  Scott-continuous map $f \colon X \to Y$, for every $\mu \in \Val X$,
  $\Val f (\mu)$ is the image valuation $f [\mu]$, defined by
  $f [\mu] (V) \eqdef \mu (f^{-1} (V))$, for every $V \in \Open Y$.
\end{proposition}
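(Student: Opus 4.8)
The plan is to verify Manes' three monad axioms directly, feeding them the integral identities already collected in Lemma~\ref{lemma:eta:mu}, and then to read off the functorial description of $\Val f$ from the formula $\Val f = (\eta_Y \circ f)^\dagger$. Throughout, the one routine fact I keep invoking is that evaluation $\nu \mapsto \nu (V)$ at a fixed open set is Scott-continuous on $\Val Y$ (this is built into the stochastic order), so that for any Scott-continuous $f \colon X \to \Val Y$ and any $V \in \Open Y$ the map $x \mapsto f (x) (V)$ lies in $\Lform X$; this is what makes the integrals below legitimate. I also use the Choquet formula in its most elementary instance $\int_X \chi_U \, d\mu = \mu (U)$, valid since $\chi_U^{-1} (]t, +\infty])$ equals $U$ for $0 \leq t < 1$ and $\emptyset$ for $t \geq 1$.

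For axiom~(1), I compute $(f^\dagger \circ \eta_X) (x) (V) = f^\dagger (\delta_x) (V) = \int_{x' \in X} f (x') (V) \, d\delta_x$, and apply Lemma~\ref{lemma:eta:mu}(iv) with $h = f (\cdot) (V)$ to get $f (x) (V)$; since this holds for all $V$, $f^\dagger \circ \eta_X = f$. For axiom~(2), I compute $\eta_X^\dagger (\mu) (U) = \int_{x \in X} \delta_x (U) \, d\mu = \int_{x \in X} \chi_U (x) \, d\mu = \mu (U)$, so $\eta_X^\dagger = \identity{\Val X}$.

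Axiom~(3) is the only one needing the real content of the lemma. Given $f \colon X \to \Val Y$ and $g \colon Y \to \Val Z$, fix $\mu \in \Val X$ and $W \in \Open Z$. Unfolding the definitions, the left-hand side is $(g^\dagger \circ f)^\dagger (\mu) (W) = \int_{x \in X} g^\dagger (f (x)) (W) \, d\mu = \int_{x \in X} \left( \int_{y \in Y} g (y) (W) \, df (x) \right) d\mu$, while the right-hand side is $(g^\dagger \circ f^\dagger) (\mu) (W) = \int_{y \in Y} g (y) (W) \, df^\dagger (\mu)$. These are equated exactly by Lemma~\ref{lemma:eta:mu}(iii), i.e.\ identity~\eqref{eq:dagger}, taken with $h = g (\cdot) (W) \in \Lform Y$. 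Together with Lemma~\ref{lemma:eta:mu}(i)--(ii), which say $\eta_X$ and each $f^\dagger$ are $\Dcpo$-morphisms, this gives the monad.

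For the image-valuation claim I substitute $\eta_Y \circ f$ into $\Val f = (\eta_Y \circ f)^\dagger$: for $\mu \in \Val X$ and $V \in \Open Y$, $\Val f (\mu) (V) = \int_{x \in X} \delta_{f (x)} (V) \, d\mu = \int_{x \in X} \chi_{f^{-1} (V)} (x) \, d\mu = \mu (f^{-1} (V)) = f [\mu] (V)$, using $\delta_{f (x)} (V) = 1 \iff x \in f^{-1} (V)$. I do not expect any genuine obstacle: the work has been front-loaded into Lemma~\ref{lemma:eta:mu}, and the only points demanding care are checking lower semicontinuity of the integrands and the bookkeeping in the associativity computation, making sure the inner and outer integration variables in \eqref{eq:dagger} are matched to the right occurrences of $f$ and $g$.
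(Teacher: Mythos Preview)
Your proof is correct and follows essentially the same route as the paper: verify Manes' three axioms by unfolding the definition of $f^\dagger$ and invoking the integral identities of Lemma~\ref{lemma:eta:mu}, with axiom~(3) hinging on~\eqref{eq:dagger}. You are in fact slightly more complete than the paper, which omits the explicit computation of $\Val f(\mu)$ as the image valuation and leaves the well-definedness of the integrands implicit.
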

\begin{proof}
  In light of Lemma~\ref{lemma:eta:mu}, the first part will be proved
  once we have verified the three axioms given by Manes:
  \begin{enumerate}
  \item $f^\dagger \circ \eta_X$ maps $x$ to the continuous valuation
    $\nu$ defined by $\nu (V) \eqdef \int_{x \in X} f (x) (V)
    d\delta_x = f (x) (V)$; so $\nu = f (x)$, and therefore $f^\dagger
    \circ \eta_X = f$.
  \item $\eta_X^\dagger$ satisfies $\eta_X^\dagger (\mu) (U) = \int_{x
      \in X} \eta_X (x) (U) d\mu = \int_{x \in X} \chi_U (x) d\mu =
    \mu (U)$, so $\eta_X^\dagger = \identity {\Val X}$.
  \item For all $f \colon X \to \Val Y$, $g \colon Y \to \Val Z$, and
    $W \in \Open Z$,
    \begin{align*}
      (g^\dagger \circ f)^\dagger (\mu) (W)
      & = \int_{x \in X} g^\dagger (f (x)) (W) d\mu \\
      & = \int_{x \in X} \left(\int_{y \in Y} g (y) (W) df(x)\right) d\mu,
    \end{align*}
    while:
    \begin{align*}
      g^\dagger (f^\dagger (\mu)) (W)
      & = \int_{y \in Y} g (y) (W) df^\dagger (\mu) \\
      & = \int_{x \in X}
        \left(\int_{y \in Y} g (y) (W) df (x)\right) d\mu,
    \end{align*}
    by (\ref{eq:dagger}).
  \end{enumerate}
\end{proof}

\section{The Fubini-Tonelli theorem, and what goes wrong with $\Dcpo$}
\label{sec:fubini-theorem-what}



Jones proved a form of Fubini's (more accurately, Tonelli's) theorem
for (subprobability) continuous valuations, on continuous dcpos
\cite{jones90}.  This actually generalizes to continuous valuations on
arbitrary topological spaces, and the proof is not that complicated,
as we demonstrate.  (Another purpose we have in giving that proof is in
order to fix a gap in Jones' proof, who actually does not show the
existence of the product valuation.)

As we will see in more detail later, Tonelli's theorem turns $\Val$
into a commutative monad, and that is a basic requirement for being
able to say that drawing two objects at random independently can be
done in any order.
\begin{proposition}[Fubini-Tonelli for continuous valuations on $\TOP$]
  \label{prop:fubini:top}
  Let $X$ and $Y$ be two spaces, $\mu \in \Val X$, $\nu \in \Val Y$.
  There is a unique so-called product valuation $\mu \times \nu$ on $X
  \times Y$ such that, for every $U \in \Open X$ and for every $V \in
  \Open Y$, $(\mu \times \nu) (U \times V) = \mu (U) . \nu (V)$.  For
  every $f \in \Lform (X \times Y)$,
  \begin{align*}
    \int_{(x, y) \in X \times Y} f (x, y) d(\mu \times \nu)
    & = \int_{x \in X} \left(\int_{y \in Y} f (x, y) d\nu\right)d\mu \\
    & = \int_{y \in Y} \left(\int_{x \in X} f (x, y) d\mu\right)d\nu.
  \end{align*}
\end{proposition}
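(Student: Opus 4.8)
The plan is to construct the product valuation not by extending a set function (this is exactly the point where Jones' argument has a gap) but by a Riesz-style representation. First I would define a functional $G \colon \Lform (X \times Y) \to \creal$ by
\begin{align*}
  G (f) & \eqdef \int_{x \in X} \left( \int_{y \in Y} f (x, y) \, d\nu \right) d\mu ,
\end{align*}
then check that it is well-defined, Scott-continuous and linear, and invoke the representation theorem recalled in Section~\ref{sec:preliminaries}: such a $G$ is of the form $h \mapsto \int h \, d\rho$ for a \emph{unique} $\rho \in \Val (X \times Y)$, which we name $\mu \times \nu$. Evaluating $G$ at $\chi_{U \times V} = \big((x,y) \mapsto \chi_U (x) \chi_V (y)\big)$ and using linearity of the inner and outer integrals gives $(\mu \times \nu) (U \times V) = \mu (U) . \nu (V)$, and the first displayed integral identity is simply the definition of $G$ combined with the representation. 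Running the symmetric construction with the roles of $\mu$ and $\nu$ exchanged yields a second continuous valuation on $X \times Y$ that also agrees with $(U,V) \mapsto \mu (U) . \nu (V)$ on open rectangles; by the uniqueness statement (last paragraph) it equals $\mu \times \nu$, which delivers the second integral identity.

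The crux is well-definedness of $G$. For fixed $x$, the slice $y \mapsto f (x, y)$ is in $\Lform Y$ since $y \mapsto (x, y)$ is continuous, so the inner integral makes sense; the real work is to show that $g \colon x \mapsto \int_{y \in Y} f (x, y) \, d\nu$ lies in $\Lform X$. I would first treat $f = \chi_W$ with $W \in \Open (X \times Y)$: here $g (x) = \nu (W_x)$ where $W_x \eqdef \{ y \mid (x,y) \in W \}$, and writing $W = \bigcup_{i \in I} U_i \times V_i$ in the product topology we have $W_x = \bigcup_{i \in I_x} V_i$ with $I_x \eqdef \{ i \mid x \in U_i \}$. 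Scott-continuity of $\nu$ gives $\nu (W_x) = \sup \{ \nu (\bigcup_{i \in F} V_i) \mid F \subseteq_{\mathrm{fin}} I,\ \forall i \in F \colon x \in U_i \}$, hence
\begin{align*}
  g^{-1} (]t, +\infty]) & = \bigcup \Big\{ \textstyle\bigcap_{i \in F} U_i \;\Big|\; F \subseteq_{\mathrm{fin}} I,\ \nu \big( \textstyle\bigcup_{i \in F} V_i \big) > t \Big\} ,
\end{align*}
which is open, so $g \in \Lform X$ in this case.

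For general $f$, approximate it from below by the simple functions $f_K \eqdef \sum_{k=1}^{K 2^K} \frac 1 {2^K} \chi_{f^{-1} (]k/2^K, +\infty])}$, exactly as in the proof of Lemma~\ref{lemma:eta:mu}(iii). By linearity of the inner integral, $x \mapsto \int_{y \in Y} f_K (x, y) \, d\nu$ is a finite nonnegative linear combination of functions of the form treated above, hence in $\Lform X$; and $g = \sup_K \big( x \mapsto \int_{y \in Y} f_K (x, y) \, d\nu \big)$ is a directed (indeed increasing) supremum by Scott-continuity of the integral in its function argument, hence in $\Lform X$. Scott-continuity and linearity of $G$ then follow from the corresponding properties of both integrals (Tix's bilinearity). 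So $G$ represents a unique $\mu \times \nu \in \Val (X \times Y)$ with the stated value on rectangles.

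It remains to prove uniqueness of a continuous valuation on $X \times Y$ determined on open rectangles. Suppose $\rho_1, \rho_2 \in \Val (X \times Y)$ both restrict to $(U, V) \mapsto \mu (U) . \nu (V)$ on rectangles. I would show by induction on $n$ that $\rho_1$ and $\rho_2$ agree on every finite union $W_1 \cup \dots \cup W_n$ of open rectangles: the base case is the hypothesis, and for the step, modularity gives $\rho_i (A \cup W_n) + \rho_i (A \cap W_n) = \rho_i (A) + \rho_i (W_n)$ with $A \eqdef W_1 \cup \dots \cup W_{n-1}$, where $A \cap W_n$ is again a finite union of at most $n-1$ open rectangles, so the right-hand sides coincide for $i = 1, 2$; if the common value $\rho_i (A \cap W_n)$ is finite we cancel it, and if it is $+\infty$ then $\rho_i (A \cup W_n) = +\infty$ for both $i$ by monotonicity — so agreement holds in all cases. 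Finally every $W \in \Open (X \times Y)$ is the directed union of the finite unions of open rectangles it contains, so $\rho_1 (W) = \rho_2 (W)$ by Scott-continuity. The main obstacle of the whole argument is the lower-semicontinuity statement of the second and third paragraphs (which is also precisely what secures the existence of $\mu \times \nu$); the rest is bookkeeping with modularity, Scott-continuity, and Tix's bilinear integral.
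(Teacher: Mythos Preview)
Your proof is correct and follows essentially the same strategy as the paper's: define the iterated-integral functionals, invoke the Riesz-style representation to obtain continuous valuations, and use the rectangle-agreement plus modularity plus Scott-continuity to prove uniqueness and hence equality of the two iterated integrals. You supply considerably more detail than the paper's sketch on the key step—lower semicontinuity of $x \mapsto \int_Y f(x,y)\,d\nu$—which the paper simply asserts, and your inductive handling of the $+\infty$ case in the uniqueness argument is a clean alternative to the paper's inclusion-exclusion formulation.
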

\begin{proof}(Sketch.)  We first deal with uniqueness.  We assume any
  continuous valuation $\xi$ on $X \times Y$ such that
  $\xi (U \times V) = \mu (U) . \nu (V)$ for all $U \in \Open X$ and
  $V \in \Open Y$.  We call any such product $U \times V$ an
  \emph{open rectangle}.  The value of $\xi$ on finite unions
  $W \eqdef \bigcup_{i \in J} U_i \times V_i$ of open rectangles is
  determined uniquely: either $\mu (U_i) . \nu (U_i) = +\infty$ for
  some $i \in J$, and then $\xi (W)$ must be equal to $+\infty$, by
  monotonicity; or $\xi (W)$ must be equal to
  $\sum_{K \neq \emptyset, K \subseteq J} (-1)^{|K|+1} \xi (\bigcap_{i
    \in K} U_i \times V_i)$ by the so-called inclusion-exclusion
  formula (an easy consequence of modularity), showing that $\xi (W)$
  is again determined uniquely.  Finally, by definition of the product
  topology, every open subset $W$ of $X \times Y$ is a union
  $\bigcup_{i \in I} U_i \times V_i$ of open rectangles, hence a
  directed union of finite unions $\bigcup_{i \in J} U_i \times V_i$
  (where $J$ ranges over the finite subsets of $I$); $\xi (W)$ is then
  determined uniquely, since $\xi$ is Scott-continuous.

  For the existence part, the easiest route is to consider the two
  maps $G, G' \colon \Lform (X \times Y) \to \creal$ defined by:
  \begin{align*}
    G (f) & \eqdef \int_{x \in X} \left(\int_{y \in Y} f (x, y)
            d\nu\right)d\mu \\
    G' (f) & \eqdef \int_{y \in Y} \left(\int_{x \in X} f (x, y) d\mu\right)d\nu.
  \end{align*}
  We check that $G$ and $G'$ are Scott-continuous, linear maps, and
  are therefore integral functionals for unique continuous valuations
  $\xi$ and $\xi'$ on $X \times Y$, respectively.  For every open
  rectangle $U \times V$, $\xi (U \times V) = G (\chi_{U \times V})$
  is equal to $\mu (U).\nu (V)$, and similarly for $\xi' (U \times
  V)$.  By the uniqueness part, $\xi=\xi'$, so $G=G'$.  We write $\mu
  \times \nu$ for $\xi$, and the theorem is proved.
\end{proof}
This proof is the core of several similar proofs.  One of the closest
is due to Vickers \cite{Vickers:val:loc}, who proves a similar
theorem on the category of locales instead of $\TOP$.  Although
localic theorems usually generalize purely topological theorems, this
one does not, because locale products do not coincide with products in
$\TOP$ in general \cite[Theorem~2]{isbell81}.

Since every dcpo can be seen as a topological space with its Scott
topology, it would seem that we would obtain a Fubini-Tonelli theorem
for Scott-continuous maps and continuous valuations on arbitrary
dcpos, as a special case of Proposition~\ref{prop:fubini:top}.

This is not the case, but the reason is subtle.  As with locales,
products in $\Dcpo$ do not usually coincide with products in $\TOP$.
Explicitly, let us write $X_\sigma$ for the topological space obtained
by equipping a dcpo $X$ with its Scott topology.  Then the topology on
$(X \times Y)_\sigma$ (where $\times$ is dcpo product) is finer, and
in general strictly finer, than the product topology on
$X_\sigma \times Y_\sigma$.  (See Exercise~5.2.16 of
\cite{goubault13a} for an example where it is strictly finer.)  As a
consequence, there are more, and generally, strictly more
Scott-continuous maps from $X \times Y$ to $\creal$ than continuous
maps from the topological product $X_\sigma \times Y_\sigma$ to
$\creal$.  The Fubini-Tonelli formula holds for functions of the
second, smaller class, but it is unknown whether it holds for the
first kind of functions.  One exception is when either $X$ or $Y$ is
\emph{core-compact} (the Scott-opens form a continuous lattice), 
in which case the Scott and product topologies
coincide on $X \times Y$ \cite[Theorem II-4.13]{gierz03}; then
we retrieve Jones' version of the Fubini-Tonelli theorem.

One may blame $\Dcpo$ for this state of affairs.  Another possibility
is to consider that we are considering too general a notion of
continuous valuation.  We will explore this avenue in the next
section, by restricting to so-called minimal valuations.

\section{Minimal valuations, and Fubini-Tonelli again}
\label{sec:minim-valu-fubini}

%

A \emph{simple valuation} is any finite linear combination
$\sum_{i=1}^n r_i \delta_{x_i}$ of Dirac masses, with coefficients
$r_i$ in $\realp$.

\begin{definition}[Minimal valuations]
  \label{defn:simple}
  Let $\Val_\fin X$ be the poset of simple valuations on $X$, and
  $\Val_m X$ be the inductive closure of $\Val_\fin X$ in $\Val X$.
  The elements of $\Val_m X$ are called the \emph{minimal valuations}
  on $X$.
\end{definition}

The \emph{inductive closure} of a subset $A$ of a dcpo $Z$ is the
smallest subset of $Z$ that contains $A$ and is closed under directed
suprema.  It is obtained by taking all directed suprema of elements of
$A$, all directed suprema of elements obtained in this fashion, and
proceeding this way transfinitely.

\begin{remark}
  \label{rem:minval:cont}
  On a continuous dcpo $X$, every continuous valuation is a directed
  supremum of simple valuations.  In fact, $\Val X$ is a continuous
  dcpo with a basis of simple valuations, as showed by Jones
  \cite{jones90}, at least in the case of subprobability valuations.
  (The general theorem can be found as Theorem~IV-9.16 of
  \cite{gierz03}.)  Hence, in particular, continuous valuations and
  minimal valuations agree on every continuous dcpo.  One may wonder
  whether all continuous valuations on a dcpo are minimal.  We will
  give a counterexample to this claim in a forthcoming paper.
\end{remark}

\subsection{The monad $\Val_m$ of minimal valuations}
\label{sec:monad-val_m-minimal}


We will now show that $\Val_m$ defines a submonad of $\Val$.  To
this end, we need to know more about inductive closures.  A
\emph{d-closed subset} of a dcpo $Z$ is a subset $C$ such that the
supremum of every directed family of elements of $C$, taken in $Z$, is
in $C$.  The d-closed subsets form the closed subsets of a topology
called the \emph{d-topology} \cite[Section~5]{keimel08}, and the
inductive closure of a subset $A$ coincides with its \emph{d-closure}
$cl_d (A)$, namely its closure in the d-topology.

We note that every Scott-continuous map is continuous with respect to
the underlying d-topologies.  This is easily checked, or see
\cite[Lemma~5.3]{keimel08}.  In particular:
\begin{fact}
  \label{fact:cld}
  For every Scott-continuous map $f \colon \Val X \to \Val Y$, for
  every $A \subseteq \Val X$, $f (cl_d (A)) \subseteq cl_d (f (A))$.
\end{fact}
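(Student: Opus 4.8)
The plan is to reduce the statement to the elementary topological fact that a continuous map sends the closure of a set into the closure of its image, applied to the d-topologies on $\Val X$ and $\Val Y$. Concretely, I would first invoke the observation recalled just above (from \cite[Lemma~5.3]{keimel08}) that the Scott-continuous map $f$ is continuous for the d-topologies, i.e.\ that $f^{-1}(C)$ is d-closed in $\Val X$ whenever $C$ is d-closed in $\Val Y$. I would then consider the set $C \eqdef cl_d(f(A))$, which is d-closed in $\Val Y$ by definition of the d-closure.

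Next I would form $f^{-1}(C)$, which is d-closed in $\Val X$ by d-continuity of $f$, and check that it contains $A$: indeed $f(A) \subseteq cl_d(f(A)) = C$, so $A \subseteq f^{-1}(f(A)) \subseteq f^{-1}(C)$. Since $cl_d(A)$ is, by the definition of inductive closure recalled after Definition~\ref{defn:simple}, the smallest d-closed subset of $\Val X$ containing $A$, it follows that $cl_d(A) \subseteq f^{-1}(C)$. Applying $f$ to both sides gives $f(cl_d(A)) \subseteq C = cl_d(f(A))$, which is exactly the claim.

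There is essentially no obstacle here: the argument is the standard ``preimage of a closed set is closed'' manipulation, and the only ingredient that is not purely formal is the identification of the inductive closure with the d-closure and the d-continuity of Scott-continuous maps, both of which are already stated in the text and attributed to \cite{keimel08}. If one preferred to avoid naming the d-topology explicitly, the same proof can be phrased by transfinite induction on the stages of the inductive closure (directed suprema are preserved by $f$ by Scott-continuity, so each stage of $cl_d(A)$ is mapped into the corresponding stage of $cl_d(f(A))$), but the topological phrasing is shorter and cleaner.
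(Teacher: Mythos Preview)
Your proposal is correct and follows essentially the same approach as the paper: the paper simply notes that Scott-continuous maps are continuous for the d-topologies (citing \cite[Lemma~5.3]{keimel08}) and states the fact as an immediate consequence, which is exactly the standard ``continuous image of a closure lies in the closure of the image'' argument you spell out.
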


\begin{lemma}
  \label{lemma:VXm:drag}
  For every space $X$, $\Val_m X$ is closed under addition and
  multiplication by elements of $\creal$, as computed in the larger
  space $\Val X$.
\end{lemma}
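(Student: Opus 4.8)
The plan is to reduce everything to Fact~\ref{fact:cld}, exploiting that $\Val_m X = cl_d(\Val_\fin X)$ and that both $\nu \mapsto \nu + \nu'$ (for fixed $\nu'$) and $\nu \mapsto \alpha\nu$ (for fixed $\alpha \in \creal$) are Scott-continuous endomaps of $\Val X$.

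First I would record the easy facts about simple valuations. A sum $\sum_{i=1}^m r_i \delta_{x_i} + \sum_{j=1}^n s_j \delta_{y_j}$ is again a finite linear combination of Dirac masses with coefficients in $\realp$, so $\Val_\fin X$ is closed under addition. For scaling by $\alpha \in \creal$: if $\alpha < +\infty$ then $\alpha \cdot \sum_i r_i \delta_{x_i} = \sum_i (\alpha r_i)\delta_{x_i} \in \Val_\fin X$; if $\alpha = +\infty$, then using $0 \cdot (+\infty) = 0$ we get $\alpha \cdot \sum_i r_i \delta_{x_i} = \sum_{i : r_i > 0} (+\infty)\,\delta_{x_i}$, which is the directed supremum $\sup_{k \in \nat} k \cdot \big(\sum_{i : r_i > 0}\delta_{x_i}\big)$ of simple valuations and hence lies in $\Val_m X$. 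So in every case, scaling a simple valuation by an element of $\creal$ produces an element of $\Val_m X$.

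Next I would check that for fixed $\nu' \in \Val X$ and fixed $\alpha \in \creal$ the maps $\nu \mapsto \nu + \nu'$ and $\nu \mapsto \alpha\nu$ are well-defined Scott-continuous maps $\Val X \to \Val X$. That $\nu + \nu'$ and $\alpha\nu$ are continuous valuations is immediate from strictness and modularity of $\nu, \nu'$ and from Scott-continuity of $+$ and of multiplication by $\alpha$ on $\creal$ (the latter again relying on $0 \cdot (+\infty) = 0$). Monotonicity is clear, and since suprema in $\Val X$ are computed pointwise on opens, for a directed $D \subseteq \Val X$ with supremum $\nu$ and any $U \in \Open X$ one has $(\nu + \nu')(U) = \sup_{\mu \in D}\mu(U) + \nu'(U) = \sup_{\mu \in D}(\mu(U) + \nu'(U))$, and likewise $(\alpha\nu)(U) = \sup_{\mu\in D}(\alpha\mu)(U)$; so both maps preserve directed suprema.

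The argument then concludes by applying Fact~\ref{fact:cld}, noting that $\Val_m X$ is d-closed (being a d-closure). For scaling: $\nu \mapsto \alpha\nu$ is Scott-continuous and sends $\Val_\fin X$ into $\Val_m X$ by the first step, so $\alpha \cdot cl_d(\Val_\fin X) \subseteq cl_d(\Val_m X) = \Val_m X$, i.e. $\alpha\nu \in \Val_m X$ for all $\nu \in \Val_m X$. For addition I would apply Fact~\ref{fact:cld} twice: first, for fixed $\nu_2 \in \Val_\fin X$, the Scott-continuous map $\nu \mapsto \nu + \nu_2$ sends $\Val_\fin X$ into $\Val_\fin X \subseteq \Val_m X$, hence sends $\Val_m X = cl_d(\Val_\fin X)$ into $\Val_m X$; this gives $\nu_1 + \nu_2 \in \Val_m X$ whenever $\nu_1 \in \Val_m X$, $\nu_2 \in \Val_\fin X$. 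Second, for fixed $\nu_1 \in \Val_m X$, the Scott-continuous map $\nu \mapsto \nu_1 + \nu$ now sends $\Val_\fin X$ into $\Val_m X$ by what was just shown, hence sends $\Val_m X$ into $\Val_m X$. The only genuine subtlety—the ``main obstacle,'' such as it is—is the $\alpha = +\infty$ case of the first step, where the scaled simple valuation is no longer simple and must be exhibited explicitly as a directed supremum of simple valuations; everything else is routine bookkeeping around Fact~\ref{fact:cld}.
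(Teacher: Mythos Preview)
Your proof is correct and follows essentially the same approach as the paper: use Scott-continuity of the translation and scaling maps together with Fact~\ref{fact:cld} applied to $A = \Val_\fin X$, handling addition by a two-step bootstrap (first one argument simple, then both minimal). You are in fact more careful than the paper on one point: the paper simply says multiplication ``is similar'' to addition, whereas you correctly observe that for $\alpha = +\infty$ the scaled simple valuation need not be simple (since coefficients in $\Val_\fin X$ are required to lie in $\realp$) and must be exhibited as a directed supremum of simple valuations before invoking Fact~\ref{fact:cld}.
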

\begin{proof}
  Let us deal with addition.  Multiplication is similar.
  
  For every simple valuation $\mu$, the map
  $f_\mu \colon \nu \in \Val X \mapsto \mu + \nu$ is Scott-continuous,
  and maps simple valuations to simple valuations.  By
  Fact~\ref{fact:cld} with $A \eqdef \Val_\fin X$, $f_\mu$ maps all
  elements of $cl_d (A) = \Val_m X$ to
  $cl_d (f_\mu (A)) \subseteq cl_d (\Val_\fin X) = \Val_m X$.

  It follows that for every minimal valuation $\nu$, the map
  $g \colon \mu \in \Val X \mapsto \mu+\nu = f_\mu (\nu)$ maps simple
  valuations to minimal valuations.  We observe that $g$ is also
  Scott-continuous.  By Fact~\ref{fact:cld} with the same $A$ as
  above, $g$ maps all elements of $cl_d (A) = \Val_m X$ to
  $cl_d (g (A)) \subseteq cl_d (\Val_m X) = \Val_m X$.  Hence, for
  every $\nu \in \Val_m X$, for every $\mu \in \Val_m X$, $\mu + \nu$
  is in $\Val_m X$.
\end{proof}

\begin{lemma}
  \label{lemma:VXm:dagger}
  For any Scott-continuous map $f \colon X \to \Val_m Y$, $f^\dagger$
  is a Scott-continuous map from $\Val_m X$ to $\Val_m Y$.
  Similarly, for every Scott-continuous map $f \colon X \to Y$, $\Val
  f$ is a Scott-continuous map from $\Val_m X$ to $\Val_m Y$.
\end{lemma}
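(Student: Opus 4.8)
The plan is to reduce the whole statement to Fact~\ref{fact:cld} together with a computation of $f^\dagger$ on simple valuations. We already know from Lemma~\ref{lemma:eta:mu}(ii) that $f^\dagger \colon \Val X \to \Val Y$ is Scott-continuous. Since $\Val_m X = cl_d (\Val_\fin X)$ is d-closed in $\Val X$, it is closed under directed suprema taken in $\Val X$, hence is a sub-dcpo of $\Val X$; the same holds for $\Val_m Y$ inside $\Val Y$. Therefore, once we establish the single inclusion $f^\dagger (\Val_m X) \subseteq \Val_m Y$, the restriction $f^\dagger \colon \Val_m X \to \Val_m Y$ is automatically Scott-continuous, because directed suprema in a sub-dcpo coincide with those in the ambient dcpo. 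So everything comes down to that inclusion.

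First I would compute $f^\dagger$ on a simple valuation. If $\mu = \sum_{i=1}^n r_i \delta_{x_i} \in \Val_\fin X$, then for every $V \in \Open Y$, using the definition (\ref{eq:dagger:def}), linearity of the integral in its valuation argument, and Lemma~\ref{lemma:eta:mu}(iv),
\[
  f^\dagger (\mu) (V) = \int_{x \in X} f (x) (V) \, d\mu = \sum_{i=1}^n r_i f (x_i) (V),
\]
so that $f^\dagger (\mu) = \sum_{i=1}^n r_i f (x_i)$ as continuous valuations. Each $f (x_i)$ lies in $\Val_m Y$ by hypothesis, so by Lemma~\ref{lemma:VXm:drag} this finite linear combination lies in $\Val_m Y$ as well; thus $f^\dagger (\Val_\fin X) \subseteq \Val_m Y$. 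Now apply Fact~\ref{fact:cld} with $A \eqdef \Val_\fin X$: since $f^\dagger$ is Scott-continuous, $f^\dagger (cl_d (\Val_\fin X)) \subseteq cl_d (f^\dagger (\Val_\fin X)) \subseteq cl_d (\Val_m Y) = \Val_m Y$, the last equality because $\Val_m Y$ is already d-closed. Since $cl_d (\Val_\fin X) = \Val_m X$, this gives $f^\dagger (\Val_m X) \subseteq \Val_m Y$, finishing the first part.

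For the second part, note that $\Val f = (\eta_Y \circ f)^\dagger$, and $\eta_Y \circ f$ sends $x$ to $\delta_{f (x)} \in \Val_\fin Y \subseteq \Val_m Y$, so $\eta_Y \circ f$ is a Scott-continuous map $X \to \Val_m Y$; the first part then applies verbatim to $(\eta_Y \circ f)^\dagger$, showing that $\Val f$ restricts to a Scott-continuous map from $\Val_m X$ to $\Val_m Y$. The one point requiring a little care — what I would flag as the main obstacle — is precisely the identity $f^\dagger (\mu) = \sum_i r_i f (x_i)$ for simple $\mu$: it hinges on reading $f^\dagger (\mu) (V)$ through the integral definition and on the linearity of $\nu \mapsto \int h \, d\nu$, so that $\int h \, d(\sum_i r_i \delta_{x_i}) = \sum_i r_i h (x_i)$ for the lsc function $h = f (\cdot)(V)$. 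Everything else is a routine invocation of Fact~\ref{fact:cld} and Lemma~\ref{lemma:VXm:drag}, plus the observation that d-closedness of $\Val_m X$ and $\Val_m Y$ is what makes the Scott-continuity assertions immediate.
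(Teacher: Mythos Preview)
Your proposal is correct and follows essentially the same argument as the paper: compute $f^\dagger$ on simple valuations as $\sum_i r_i f(x_i)$, use Lemma~\ref{lemma:VXm:drag} to place this in $\Val_m Y$, then apply Fact~\ref{fact:cld} with $A = \Val_\fin X$ to extend to the d-closure, and derive the second part from $\Val f = (\eta_Y \circ f)^\dagger$. The only difference is that you spell out explicitly why Scott-continuity of the restriction is automatic once the inclusion is established, whereas the paper dismisses this in one clause.
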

\begin{proof}
  For the first part, the only challenge is to show that, for every
  $\nu \in \Val_m X$, $f^\dagger (\nu)$ is in $\Val_m Y$.
  Scott-continuity follows from the fact that $f^\dagger$ is
  Scott-continuous from $\Val X$ to $\Val Y$.

  For every $\nu \eqdef \sum_{i=1}^n r_i \delta_{x_i}$ in
  $\Val_\fin X$, $f^\dagger (\nu)$ is the continuous valuation
  $\sum_{i=1}^n r_i f (x_i)$: for every $V \in \Open Y$, $f^\dagger
  (\nu) (V) = \int_{x \in X} f (x) (V) d\nu = \sum_{i=1}^n r_i f (x_i) (V)$.
  By Lemma~\ref{lemma:VXm:drag}, and since $f (x_i)$ is in $\Val_m
  Y$ for each $i$, $f^\dagger (\nu)$ is in $\Val_m Y$ as well.

  Hence $f^\dagger$ maps $\Val_\fin X$ to $\Val_m Y$.  Using
  Fact~\ref{fact:cld} with $A \eqdef \Val_\fin X$,
  $f^\dagger (cl_d (A)) = f^\dagger (\Val_m X)$ is included in
  $cl_d (f^\dagger (A)) \subseteq cl_d (\Val_m Y) = \Val_m Y$.

  The second part follows from the first part and the equation $\Val f
  = (\eta_Y \circ f)^\dagger$.
\end{proof}

We observe that $\eta_X (x) = \delta_x$ is in
$\Val_\fin X \subseteq \Val_m X$ for every dcpo $X$, and every
$x \in X$, whence the following.
\begin{proposition}
  \label{prop:VmX:monad}
  The triple $(\Val_m, \eta, \_^\dagger)$ is a monad on the category
  of dcpos and Scott-continuous maps.
\end{proposition}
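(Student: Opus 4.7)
The plan is to show that $\Val_m$ inherits all the monad structure from $\Val$ as a submonad. Since Proposition~\ref{prop:VX:monad} already establishes that $(\Val, \eta, \_^\dagger)$ is a monad on $\Dcpo$, and since the elements, unit, and extension of the candidate $\Val_m$ are defined by restricting those of $\Val$, the three Manes axioms will come for free once I check that the restriction makes sense, i.e.\ that both the unit and every extension land inside $\Val_m$.

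First, I would verify that $\eta_X$ corestricts to a Scott-continuous map $X \to \Val_m X$. For every $x \in X$, $\eta_X (x) = \delta_x$ is a simple valuation (it is the one-term linear combination $1 \cdot \delta_x$), hence lies in $\Val_\fin X \subseteq \Val_m X$; Scott-continuity was already established in Lemma~\ref{lemma:eta:mu}(i) and is preserved by corestriction to a sub-dcpo. Next, for any Scott-continuous $f \colon X \to \Val_m Y$, Lemma~\ref{lemma:VXm:dagger} directly gives that $f^\dagger$ is a Scott-continuous map $\Val_m X \to \Val_m Y$. These two observations show that the data $(\Val_m, \eta, \_^\dagger)$ actually lives in $\Dcpo$.

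It then remains to verify Manes' three axioms on this restricted data. But each axiom is an equation between Scott-continuous maps valued in $\Val Y$ (or $\Val Z$), and Proposition~\ref{prop:VX:monad} already asserts that all three hold when the domain is $\Val X$. Since $\Val_m X$ is a sub-dcpo of $\Val X$ and the restricted operations coincide with the ambient ones pointwise, the axioms $f^\dagger \circ \eta_X = f$, $\eta_X^\dagger = \identity{\Val_m X}$, and $(g^\dagger \circ f)^\dagger = g^\dagger \circ f^\dagger$ transfer verbatim.

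I do not anticipate any real obstacle: the substantive work sits in Lemmas~\ref{lemma:VXm:drag} and~\ref{lemma:VXm:dagger}, where d-closure and Fact~\ref{fact:cld} were used to ensure that addition, scalar multiplication, and the Kleisli extension all preserve minimality. If anything requires a moment of care, it is spelling out that $\eta_X^\dagger$ taken in the $\Val$ monad is literally the identity on $\Val X$, so its restriction to $\Val_m X$ is the identity on $\Val_m X$ rather than some other endomap; this is immediate, but it is the only place where one needs to confirm that ``restriction'' and ``Manes extension'' commute in the expected way.
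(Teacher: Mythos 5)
Your proposal matches the paper's approach exactly: note that $\eta_X(x)=\delta_x \in \Val_\fin X \subseteq \Val_m X$, invoke Lemma~\ref{lemma:VXm:dagger} for the extension, and observe that the Manes axioms transfer from $\Val$ because the restricted operations coincide pointwise with the ambient ones. The paper leaves the transfer step implicit (``whence the following''), and you have simply spelled it out correctly.
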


\subsection{Tensorial strengths}
\label{sec:tensorial-strengths}

A tensorial strength for a monad $(T, \eta, \_^\dagger)$ is a
collection $t$ of morphisms
$t_{X,Y} \colon X \times TY \to T (X \times Y)$, natural in $X$ and
$Y$, satisfying certain coherence conditions (which we omit, see
\cite{moggi91}.)  We then say that $(T, \eta, \_^\dagger, t)$ is a
\emph{strong} monad.  We will satisfy ourselves with the following
result.  By \cite[Proposition~3.4]{moggi91}, in a category with finite
products and enough points, if one can find morphisms
$t_{X,Y} \colon X \times TY \to T (X \times Y)$ for all objects $X$
and $Y$ such that
$t_{X,Y} \circ \langle x, \nu\rangle = T (\langle x \circ !, \identity
{Y} \rangle) \circ \nu$, then the collection of those morphisms is the
unique tensorial strength.  A category with a terminal object $1$ has
\emph{enough points} if and only if, for any two morphisms
$f, g \colon X \to Y$, $f=g$ if and only if for every
$x \colon 1 \to X$, $f \circ x = g \circ x$.

The category $\Dcpo$ of dcpos has finite products, and has enough
points.  Specializing the above to $T \eqdef \Val$, the formula for
$t_{X,Y}$ reads: for every $x \in X$, for every $\nu \in \Val Y$, for
every $W \in \Open (X \times Y)$,
$t_{X, Y} (x, \nu) (W) = \nu (\{y \in Y \mid (x, y) \in W\}$.
Rewriting this as
$t_{X,Y} (x, \nu) (W) = \int_{y \in Y} \chi_W (x, y) d\nu$, hence
$t_{X, Y} (x, \nu) = \Val (\lambda y \in Y . (x, y)) (\nu)$, we
retrieve formulae already given by Jones \cite[Section~4.3]{jones90},
and which show immediately that the map $t_{X, Y}$ is well-defined and
Scott-continuous.

It also follows from Lemma~\ref{lemma:VXm:dagger} that, if $\nu$ is a
minimal valuation, namely an element of $\Val_m X$, then
$t_{X, Y} (x, \nu)$ is an element of $\Val_m (X \times Y)$.  It
follows:
\begin{proposition}
  \label{prop:VmX:t}
  $(\Val, \eta, \_^\dagger, t)$ and
  $(\Val_m, \eta, \_^\dagger, t)$ are strong monads on $\Dcpo$.
\end{proposition}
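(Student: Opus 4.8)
The plan is to reduce the statement to Moggi's criterion \cite[Proposition~3.4]{moggi91}, recalled just above: in a category with finite products and enough points, any family of morphisms $t_{X,Y}\colon X\times TY\to T(X\times Y)$ satisfying $t_{X,Y}\circ\langle x,\nu\rangle = T(\langle x\circ{!},\identity Y\rangle)\circ\nu$ is automatically \emph{the} (unique) tensorial strength, making $(T,\eta,\_^\dagger,t)$ a strong monad. All the prerequisites are already in place: $\Dcpo$ has finite products and enough points; $\Val$ is a monad on $\Dcpo$ by Proposition~\ref{prop:VX:monad}; and $\Val_m$ is a monad on $\Dcpo$ by Proposition~\ref{prop:VmX:monad}. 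So all that remains is, for each of $T\eqdef\Val$ and $T\eqdef\Val_m$, to exhibit the maps $t_{X,Y}$ and check the displayed equation.

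For $T\eqdef\Val$, I would take $t_{X,Y}(x,\nu)\eqdef\Val(\lambda y\in Y.(x,y))(\nu)$, the map exhibited in the discussion above and there seen to be well-defined and Scott-continuous; explicitly, $t_{X,Y}(x,\nu)(W)=\nu(\{y\in Y\mid(x,y)\in W\})$ for $W\in\Open(X\times Y)$. The equation demanded by Moggi's criterion holds by construction, since $\langle x\circ{!},\identity Y\rangle$ is exactly the map $\lambda y\in Y.(x,y)$; hence this family is the tensorial strength for $\Val$, and $(\Val,\eta,\_^\dagger,t)$ is a strong monad.

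For $T\eqdef\Val_m$, I would restrict the very same family. The only point to verify is that $t_{X,Y}$ corestricts to $\Val_m(X\times Y)$, and this was already observed above: since $\lambda y\in Y.(x,y)\colon Y\to X\times Y$ is Scott-continuous, Lemma~\ref{lemma:VXm:dagger} gives that $\Val(\lambda y.(x,y))$ sends $\Val_m Y$ into $\Val_m(X\times Y)$, so $t_{X,Y}$ restricts to a Scott-continuous map $X\times\Val_m Y\to\Val_m(X\times Y)$. The defining equation persists under this restriction, both sides being restrictions of the corresponding maps for $\Val$, so Moggi's criterion applies once more and $(\Val_m,\eta,\_^\dagger,t)$ is a strong monad.

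The genuinely delicate part is, of course, the naturality of $t$ in $X$ and $Y$ together with the coherence diagrams of a strong monad, which I would not attempt to verify by hand. This is precisely what Moggi's Proposition~3.4 delivers for free from the single, trivially checked equation above --- which is the whole point of working with a category with enough points. For $\Val_m$ one can alternatively see naturality and coherence transfer from $\Val$, since the inclusion $\Val_m X\hookrightarrow\Val X$ is a Scott-continuous monomorphism: two morphisms with codomain $\Val_m(-)$ agree iff their composites into $\Val(-)$ agree, and the relevant equations are already known there. This is also what makes $\Val_m$ a genuine submonad of $\Val$.
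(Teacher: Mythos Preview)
Your proposal is correct and follows essentially the same approach as the paper: the paper's ``proof'' is the discussion immediately preceding the proposition, which invokes Moggi's Proposition~3.4, exhibits $t_{X,Y}(x,\nu)=\Val(\lambda y.(x,y))(\nu)$ as the (well-defined, Scott-continuous) strength for $\Val$, and then appeals to Lemma~\ref{lemma:VXm:dagger} to see that this restricts to $\Val_m$. Your additional remark about transferring the coherence diagrams from $\Val$ to $\Val_m$ via the monomorphic inclusion is a nice alternative justification, but the paper simply relies on Moggi's criterion applied directly to $\Val_m$.
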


\subsection{Minimal valuations form a commutative monad on $\Dcpo$, or
  Fubini-Tonelli again}
\label{sec:min-fubini}

We now show that $\Val_m$ is a \emph{commutative} monad on $\Dcpo$.
The corresponding result is unknown for $\Val$.  Equivalently, the
Fubini-Tonelli theorem holds for minimal valuations.  In order to
prove it, we use the following simple lemma.
\begin{lemma}
  \label{lemma:f=g:cld}
  Two morphisms $f, g \colon X \to Y$ in $\Dcpo$ that coincide on
  $A \subseteq X$ also coincide on $cl_d (A)$.
\end{lemma}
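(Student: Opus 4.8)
The plan is to reduce the statement to a single application of Fact~\ref{fact:cld}, or rather the idea behind it: Scott-continuous maps are continuous for the d-topology, and the d-closure is exactly the inductive closure. First I would form the equalizer-like set $E \eqdef \{x \in X \mid f(x) = g(x)\}$. The hypothesis says $A \subseteq E$, and the goal is $cl_d(A) \subseteq E$. Since $cl_d(A)$ is by definition the smallest d-closed set containing $A$, it suffices to show that $E$ is d-closed, i.e.\ that $E$ is closed in the d-topology on $X$.

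To see that $E$ is d-closed, let $D$ be a directed family of elements of $E$ with supremum $x = \sup D$ taken in $X$. For each $d \in D$ we have $f(d) = g(d)$. Because $f$ and $g$ are Scott-continuous, they preserve the directed supremum: $f(x) = \sup_{d \in D} f(d) = \sup_{d \in D} g(d) = g(x)$, where the middle equality holds pointwise since $f(d) = g(d)$ for every $d \in D$. Hence $x \in E$, so $E$ is d-closed. (Equivalently, one can invoke \cite[Lemma~5.3]{keimel08}: $f$ and $g$ are continuous for the underlying d-topologies, so $f \times g$ is too, the diagonal of $Y$ is d-closed in $Y \times Y$ when $Y$ is a dcpo, and $E$ is its preimage; but the direct argument above is shorter.)

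Combining the two observations: $E$ is a d-closed subset of $X$ containing $A$, so by minimality of the d-closure, $cl_d(A) \subseteq E$, which is precisely the claim that $f$ and $g$ coincide on $cl_d(A)$. There is essentially no obstacle here; the only point that requires a moment's care is that the d-closure coincides with the inductive closure (already recorded in the text, via \cite[Section~5]{keimel08}), so that ``d-closed'' is the right closure notion to check, and that directed suprema in $Y \times Y$ are computed coordinatewise, which is what lets $f(d)=g(d)$ for all $d$ pass to the supremum.
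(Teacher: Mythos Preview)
Your proof is correct and is essentially identical to the paper's: both define the equalizer set $\{x \in X \mid f(x)=g(x)\}$, observe it is d-closed because $f$ and $g$ preserve directed suprema, and conclude that it contains $cl_d(A)$ since it contains $A$. The parenthetical alternative via the diagonal is extra, but the main line matches the paper exactly.
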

\begin{proof}
  Let $B \eqdef \{x \in X \mid f (x)=g (x)\}$.  Since $f$ and $g$
  preserve directed suprema, $B$ is d-closed.  By assumption, $A$ is
  included in $B$, so $B$ also contains $cl_d (A)$.
\end{proof}

\begin{theorem}[Fubini-Tonelli for minimal valuations on $\Dcpo$]
  \label{thm:fubini:min}
  Let $X$, $Y$ be two dcpos, $f \in \Lform (X \times Y)$, $\mu \in
  \Val X$, and $\nu \in \Val Y$.  If $\mu$ or $\nu$ is minimal, then:
  \begin{align}
    \label{eq:fubini}
    \int_{x \in X} \left(\int_{y \in Y} f (x, y) d\nu\right)d\mu
    & = \int_{y \in Y} \left(\int_{x \in X} f (x, y) d\mu\right)d\nu.
  \end{align}
\end{theorem}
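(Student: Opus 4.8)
The plan is to reduce the identity \eqref{eq:fubini} to the case of simple valuations, and then invoke the d-closure description of minimal valuations via Lemma~\ref{lemma:f=g:cld}. Assume without loss of generality that $\nu$ is minimal; the case where $\mu$ is minimal is symmetric, obtained by interchanging the roles of $X,\mu$ with $Y,\nu$. Fix $f \in \Lform(X\times Y)$ and $\mu \in \Val X$, and view the two sides of \eqref{eq:fubini} as functions of $\nu$ alone: set $g(y) \eqdef \int_{x\in X} f(x,y)\,d\mu$ and, for each $\nu$, $h_\nu(x) \eqdef \int_{y\in Y} f(x,y)\,d\nu$, and define $L,R \colon \Val Y \to \creal$ by $L(\nu) \eqdef \int_{x\in X} h_\nu(x)\,d\mu$ and $R(\nu) \eqdef \int_{y\in Y} g(y)\,d\nu$. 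The strategy is to establish: (a) $g \in \Lform Y$ and $h_\nu \in \Lform X$ for every $\nu \in \Val Y$, so that $L$ and $R$ are well-defined; (b) $L$ and $R$ are Scott-continuous maps $\Val Y \to \creal$; (c) $L$ and $R$ coincide on every simple valuation. Granting (a)--(c), Lemma~\ref{lemma:f=g:cld} applied to $L,R \colon \Val Y \to \creal$ with $A \eqdef \Val_\fin Y$ forces $L = R$ on $cl_d(\Val_\fin Y) = \Val_m Y$, which is precisely \eqref{eq:fubini} for minimal $\nu$.

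For (a), I would use that Scott-continuity of $f$ on the dcpo product entails separate Scott-continuity (for fixed $x$, the map $y \mapsto (x,y)$ is Scott-continuous), so for directed $D \subseteq Y$ with $\sup D = y$ one has $f(x,y) = \sup_{y'\in D} f(x,y')$ for all $x$; Scott-continuity of the Choquet integral in its function argument then gives $g(y) = \sup_{y'\in D} g(y')$, and monotonicity of $g$ is clear, so $g \in \Lform Y$; the same computation with $X$ and $Y$ exchanged gives $h_\nu \in \Lform X$. For (b): $R$ is Scott-continuous in $\nu$ (indeed linear) directly by the Scott-continuous bilinearity of the integral, being integration of the fixed lower semicontinuous map $g$ against $\nu$. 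For $L$, given a directed family $\{\nu_i\}_i$ in $\Val Y$ with supremum $\nu$, Scott-continuity of the integral in its valuation argument gives $h_\nu(x) = \sup_i h_{\nu_i}(x)$ pointwise, the family $\{h_{\nu_i}\}_i$ is directed in the pointwise order on $\Lform X$ with supremum $h_\nu$, and Scott-continuity of the integral in its function argument then yields $L(\nu) = \int_{x\in X} h_\nu(x)\,d\mu = \sup_i L(\nu_i)$.

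Step (c) is the only genuinely computational point, and it is short. For $\nu = \sum_{i=1}^n r_i \delta_{y_i}$, Lemma~\ref{lemma:eta:mu}(iv) together with linearity of the integral in its valuation argument gives $\int_{y\in Y} f(x,y)\,d\nu = \sum_{i=1}^n r_i f(x,y_i)$ for every $x$; then by linearity in the function argument, $L(\nu) = \int_{x\in X}\bigl(\sum_{i=1}^n r_i f(x,y_i)\bigr)\,d\mu = \sum_{i=1}^n r_i \int_{x\in X} f(x,y_i)\,d\mu = \sum_{i=1}^n r_i g(y_i)$; and likewise $R(\nu) = \int_{y\in Y} g(y)\,d\nu = \sum_{i=1}^n r_i g(y_i)$ by Lemma~\ref{lemma:eta:mu}(iv) and linearity. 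Hence $L(\nu) = R(\nu)$ on $\Val_\fin Y$.

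I expect no real obstacle: the whole difficulty of the general ($\Dcpo$) Fubini problem discussed in Section~\ref{sec:fubini-theorem-what}---the mismatch between the Scott topology on the dcpo product $X\times Y$ and the product topology on $X_\sigma\times Y_\sigma$---has been sidestepped by restricting to minimal valuations, which are by construction obtained from simple valuations by iterated directed suprema. What remains is bookkeeping with the bilinearity and Scott-continuity of the integral. The one point worth stating carefully is that $L$ and $R$ must be verified Scott-continuous on all of $\Val Y$, not merely on $\Val_m Y$, since that is what the application of Lemma~\ref{lemma:f=g:cld} requires.
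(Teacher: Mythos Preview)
Your proposal is correct and follows essentially the same approach as the paper: view both sides of \eqref{eq:fubini} as Scott-continuous maps in the (assumed) minimal valuation, verify they agree on simple valuations, and invoke Lemma~\ref{lemma:f=g:cld} to extend to the d-closure $\Val_m$. The paper's proof is terser---it varies $\mu$ rather than $\nu$, declares Scott-continuity without argument, and says the simple-valuation case is ``clear''---whereas you spell out the well-definedness of the inner integrals, the continuity of $L$ and $R$, and the computation for simple $\nu$; these are exactly the details the paper elides.
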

\begin{proof}
  When $\mu$ is a simple valuation $\sum_{i=1}^n a_i \delta_{x_i}$, it
  is clear that the two sides of the equation coincide.  Hence the two
  Scott-continuous functions:
  \begin{align*}
    \mu & \mapsto \int_{x \in X} \left(\int_{y \in Y} f (x, y)
          d\nu\right)d\mu \\
    \mu & \mapsto \int_{y \in Y} \left(\int_{x \in X} f (x, y) d\mu\right)d\nu
  \end{align*}
  coincide on $\Val_\fin X$.  By Lemma~\ref{lemma:f=g:cld}, they coincide
  on $cl_d (\Val_\fin X) = \Val_m X$.  This finishes the case where
  $\mu$ is minimal.  The case where $\nu$ is minimal is symmetric.
\end{proof}

Given a tensorial strength $t$ for a monad $T$ on a category with
finite products, there is a dual tensorial strength $t'$, where
$t'_{X,Y} \colon TX \times Y \to T (X \times Y)$, obtained by swapping
the two arguments, applying $t_{X, Y}$, and then swapping back the
roles of $X$ and $Y$.  Here
$t'_{X,Y} (\mu, y) (W) = \int_{x \in X} \chi_W (x, y) d\mu$.  We can
then define two morphisms from $TX \times TY$ to $T (X \times Y)$,
namely ${t'}^\dagger_{X,Y} \circ t_{TX,Y}$ and
$t^\dagger_{X,Y} \circ t'_{X,TY}$.  The monad $T$ is
\emph{commutative} when they coincide.  The connection between
commutative monads and Fubini-Tonelli-like theorems is made explicit
by Kock \cite{Kock:monad:comm}.  For completeness, we prove the
following explicitly.

\begin{proposition}
  \label{prop:VmX:comm}
  Let $X$, $Y$ be two dcpos.  The maps
  $t^\dagger_{X,Y} \circ t'_{X,\Val Y}$ and
  ${t'}^\dagger_{X,Y} \circ t_{\Val X,Y}$ coincide on those pairs
  $(\mu, \nu) \in \Val X \times \Val Y$ such that
  $\mu \in \Val_m X$ or $\nu \in \Val_m Y$.
\end{proposition}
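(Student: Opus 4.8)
The plan is to unwind the definitions of both composite morphisms so that, when evaluated at a pair $(\mu,\nu)$ and applied to an open set $W \in \Open(X\times Y)$, each side turns into one of the two iterated integrals appearing in Theorem~\ref{thm:fubini:min}. First I would compute ${t'}^\dagger_{X,Y} \circ t_{\Val X,Y}$ at $(\mu,\nu)$: by the strength formula, $t_{\Val X,Y}(\mu,\nu) = \Val(\lambda y.(\mu,y))(\nu) \in \Val(\Val X \times Y)$, and then applying ${t'}^\dagger_{X,Y}$ and using Lemma~\ref{lemma:eta:mu}(iii) (the integral characterization of $\_^\dagger$) together with the change-of-variables formula for image valuations, one obtains
\begin{align*}
  \big({t'}^\dagger_{X,Y} \circ t_{\Val X,Y}\big)(\mu,\nu)(W)
  & = \int_{y \in Y} t'_{X,Y}(\mu,y)(W)\, d\nu
    = \int_{y \in Y}\left(\int_{x \in X} \chi_W(x,y)\, d\mu\right) d\nu.
\end{align*}
Symmetrically, the other composite $t^\dagger_{X,Y} \circ t'_{X,\Val Y}$ evaluated at $(\mu,\nu)$ and $W$ yields $\int_{x \in X}\left(\int_{y \in Y} \chi_W(x,y)\, d\nu\right) d\mu$.

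Next, observe that $\chi_W$ is lower semicontinuous, i.e.\ $\chi_W \in \Lform(X\times Y)$, since $W$ is Scott-open. Hence, if $\mu \in \Val_m X$ or $\nu \in \Val_m Y$, Theorem~\ref{thm:fubini:min} applies with $f \eqdef \chi_W$ and tells us exactly that these two iterated integrals agree. Since this holds for every $W \in \Open(X\times Y)$, the two continuous valuations $\big(t^\dagger_{X,Y} \circ t'_{X,\Val Y}\big)(\mu,\nu)$ and $\big({t'}^\dagger_{X,Y} \circ t_{\Val X,Y}\big)(\mu,\nu)$ coincide on all opens, and a continuous valuation is determined by its values on opens, so the two morphisms agree at $(\mu,\nu)$.

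The main obstacle is bookkeeping rather than mathematics: one has to be careful that the intermediate object lives in $\Val(\Val X \times Y)$ (a space of valuations on a product involving another valuation space), and that the extension operator $\_^\dagger$ is being applied to the correct Scott-continuous map, namely $t'_{X,Y} \colon \Val X \times Y \to \Val(X\times Y)$ curried appropriately. One must also check that each step — the strength formula $t_{X,Y}(x,\nu) = \Val(\lambda y.(x,y))(\nu)$, its dual, and the integral formula~\eqref{eq:dagger} for $\_^\dagger$ — is invoked on genuinely Scott-continuous maps, which is immediate from Lemma~\ref{lemma:eta:mu} and Proposition~\ref{prop:VmX:t}. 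No continuity assumption on $X$ or $Y$ is needed anywhere, since all the work is offloaded onto Theorem~\ref{thm:fubini:min}, whose proof already handled the minimality restriction via the inductive-closure argument of Lemma~\ref{lemma:f=g:cld}.
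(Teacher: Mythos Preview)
Your proposal is correct and follows essentially the same approach as the paper: unwind both composites to the two iterated integrals of $\chi_W$ and then invoke Theorem~\ref{thm:fubini:min}. The only cosmetic difference is that you use the image-valuation formula $t_{A,B}(a,\nu)=\Val(\lambda b.(a,b))(\nu)$ together with change of variables, whereas the paper expands via $t'_{X,Z}(\mu,z)=(\lambda x.\delta_{(x,z)})^\dagger(\mu)$ and two applications of \eqref{eq:dagger}; these are the same computation.
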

\begin{proof}
  We have already seen that
  $t_{X,Y} (x, \nu) = \Val (\lambda y \in Y . (x, y)) (\nu)$, for all
  $x$ and $\nu$.  In other words,
  $t_{X, Y} (x, \nu) = (\lambda y \in Y . \delta_{(x,y)})^\dagger
  (\nu)$.  Similarly,
  $t'_{X,Z} (\mu, z) = (\lambda x \in X . \delta_{(x,z)})^\dagger
  (\mu)$.
  
  For every $W \in \Open (X \times Y)$,
  \begin{align*}
    t^\dagger_{X, Y} (t'_{X, \Val Y} (\mu, \nu)) (W)
    & = \int_{(x', \nu) \in X \times \Val Y} t_{X, Y} (x', \nu) (W) d
      t'_{X, \Val Y} (\mu, \nu) \\
    & = \int_{(x', \nu) \in X \times \Val Y} t_{X, Y} (x', \nu) (W) d
      (\lambda x \in X . \delta_{(x,\nu)})^\dagger (\mu) \\
    & = \int_{x \in X} \left(\int_{(x', \nu) \in X \times \Val Y}t_{X,
      Y} (x', \nu) (W) d \delta_{(x,\nu)} \right) d \mu
    & \text{by (\ref{eq:dagger})} \\
    & = \int_{x \in X} t_{X,Y} (x, \nu) (W) d\mu \\
    & = \int_{x \in X} \left(\int_{y \in Y} \chi_W (x, y) d\nu\right)d\mu.
  \end{align*}
  Symmetrically,
  \begin{align*}
    {t'}^\dagger_{X,Y} (t_{\Val X,Y} (\mu, \nu)) (W)
    & = \int_{y \in Y} \left(\int_{x \in X} \chi_W (x, y) d\mu\right)d\nu.
  \end{align*}
  The result then follows from Theorem~\ref{thm:fubini:min}.
\end{proof}

\begin{corollary}
  \label{corl:VmX:comm}
  $(\Val_m, \eta, \_^\dagger, t)$ is a commutative monad on $\Dcpo$.
\end{corollary}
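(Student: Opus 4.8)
The plan is to derive the corollary almost immediately from Proposition~\ref{prop:VmX:comm}, together with the fact that $\Val_m$ is already known to be a strong monad by Proposition~\ref{prop:VmX:t}. Recall that a strong monad $(T, \eta, \_^\dagger, t)$ on a category with finite products is \emph{commutative} precisely when, for all objects $X$ and $Y$, the two morphisms ${t'}^\dagger_{X,Y} \circ t_{TX,Y}$ and $t^\dagger_{X,Y} \circ t'_{X,TY}$ from $TX \times TY$ to $T (X \times Y)$ coincide. So all that has to be checked is that these two morphisms agree when $T \eqdef \Val_m$.

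First I would record that the strength $t$, its dual $t'$, and the extension operation $\_^\dagger$ for the monad $\Val_m$ are all obtained by restricting the corresponding data for $\Val$. Indeed, $\Val_m X$ is a sub-dcpo of $\Val X$, being the inductive closure of $\Val_\fin X$, hence d-closed, hence closed under directed suprema; $t_{X,Y}$ and $t'_{X,Y}$ restrict to the relevant subspaces by Proposition~\ref{prop:VmX:t}; and for any Scott-continuous $h$ into $\Val_m(X\times Y)$, the extension $h^\dagger$ restricts to a Scott-continuous map between the $\Val_m$-spaces by Lemma~\ref{lemma:VXm:dagger}. Consequently, on the sub-dcpo $\Val_m X \times \Val_m Y$ of $\Val X \times \Val Y$, the two composite morphisms ${t'}^\dagger_{X,Y} \circ t_{\Val_m X,Y}$ and $t^\dagger_{X,Y} \circ t'_{X,\Val_m Y}$ built for $\Val_m$ are exactly the (co)restrictions of the two corresponding composites built for $\Val$.

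Now Proposition~\ref{prop:VmX:comm} already tells us that the $\Val$-versions $t^\dagger_{X,Y} \circ t'_{X,\Val Y}$ and ${t'}^\dagger_{X,Y} \circ t_{\Val X,Y}$ agree on every pair $(\mu,\nu) \in \Val X \times \Val Y$ with $\mu \in \Val_m X$ or $\nu \in \Val_m Y$. Every pair in $\Val_m X \times \Val_m Y$ satisfies this hypothesis (both disjuncts hold), so the two $\Val$-composites coincide on all of $\Val_m X \times \Val_m Y$. By the previous paragraph, this says precisely that the two $\Val_m$-composites coincide, which is commutativity of $\Val_m$; combined with Proposition~\ref{prop:VmX:t} it yields that $(\Val_m, \eta, \_^\dagger, t)$ is a commutative monad on $\Dcpo$.

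I do not expect any genuine obstacle: the substantive work has already been done in Theorem~\ref{thm:fubini:min} (Fubini--Tonelli for minimal valuations) and Proposition~\ref{prop:VmX:comm}. The only point requiring a little care is the bookkeeping of the second paragraph, namely verifying that passing to the submonad $\Val_m$ genuinely restricts the two strength-based composites rather than altering them; this is routine once one notes that $\Val_m X \times \Val_m Y \hookrightarrow \Val X \times \Val Y$ is a sub-dcpo inclusion and that $t$, $t'$, $\_^\dagger$, and $\Val f$ are all stable under restriction to the $\Val_m$-spaces.
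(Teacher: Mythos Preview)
Your proposal is correct and follows the same route as the paper: the corollary is stated without proof immediately after Proposition~\ref{prop:VmX:comm}, so the intended argument is exactly the one you spell out, namely that commutativity amounts to the coincidence of the two strength-based composites, and Proposition~\ref{prop:VmX:comm} gives this on all of $\Val_m X \times \Val_m Y$. Your second paragraph, checking that the $\Val_m$ composites are the restrictions of the $\Val$ composites, is the only bookkeeping the paper leaves implicit, and you handle it correctly.
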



\begin{remark}
  \label{rem:otimes}
  Strictly speaking, the Fubini-Tonelli theorem is more general, and
  states the existence of a product measure.  Here we obtain a minimal
  product valuation, as follows.  We write $\otimes$ for the morphism
  ${t'}^\dagger_{X,Y} \circ t_{\Val X,Y}$
  ($=t^\dagger_{X,Y} \circ t'_{X,\Val Y}$) from
  $\Val_m X \times \Val_m Y$ to $\Val_m (X \times Y)$, as with any
  commutative monad \cite[Section~5]{Kock:monad:comm}.  Then, for all
  $\mu \in \Val_m X$ and $\nu \in \Val_m Y$, $\otimes (\mu, \nu)$,
  which we prefer to write as $\mu \otimes \nu$, is in
  $\Val_m (X \times Y)$.  Looking back at the computations we have
  done during the proof of Proposition~\ref{prop:VmX:comm},
  \begin{align*}
    (\mu \otimes \nu) (W)
    & = \int_{x \in X} \left(\int_{y \in Y} \chi_W (x, y)
      d\nu\right)d\mu \\
    & = \int_{y \in Y} \left(\int_{x \in X} \chi_W (x, y) d\mu\right)d\nu
  \end{align*}
  for every $W \in \Open (X \times Y)$.  An easy computation, based on
  (\ref{eq:dagger}), yields that, for every
  $f \in \Lform (X \times Y)$,
  $\int_{(x, y) \in X \times Y} f (x, y) d (\mu \otimes \nu)$ is equal
  to any of the double integrals of (\ref{eq:fubini}).

  As an additional benefit of the categorical approach, we obtain that
  the map $\otimes \colon (\mu, \nu) \mapsto \mu \otimes \nu$ is
  Scott-continuous from $\Val_m X \times \Val_m Y$ to
  $\Val_m (X \times Y)$.
\end{remark}

\section{Measures on $\real$, minimal valuations on $\IR$}
\label{sec:lebesgue-r-valuation}

All this is good, but aren't we restricting continuous valuations too
much by only considering minimal valuations?  And indeed, Lebesgue
measure is not minimal on $\real$, as we will soon see.  But it
\emph{is} minimal on $\IR$ and on $\IR_\bot$.  In fact, we will see
that \emph{every} measure $\mu$ on $\real$ is represented by a minimal
valuation on $\IR$ and on $\IR_\bot$.

To be more precise, since $\real$ is second-countable hence
hereditarily Lindel\"of, every measure $\mu$ on $\real$ restricts to a
continuous valuation on $\real$.  Its image valuation $i [\mu]$ by the
embedding $i \colon x \mapsto [x, x]$ of $\real$ into $\IR$ (or
$\IR_\bot$) is then a continuous valuation on $\IR$ (resp.,
$\IR_\bot$).  While the former is rarely minimal, we will see that the
latter always is.

Let us call \emph{valuation} on $X$ any strict, modular, monotonic map
from $\Open X$ to $\creal$; namely, we forgo the continuity
requirement.  A valuation $\nu$ on $X$ is \emph{point-continuous} if
and only if for every open subset $U$ of $X$, for every $r \in \realp$
such that $r < \nu (U)$, there is a finite subset $A$ of $U$ such
that, for every open neighborhood $V$ of $A$, $\nu (V) > r$.  The
notion is due to Heckmann \cite{heckmann96}.  His main achievement was
to show that the space $\Val_p X$ of point-continuous valuations over
$X$, with the so-called weak topology, is a sobrification of the space
of simple valuations, also with the weak topology.  This can be used
to show that there is also a commutative monad
$(\Val_p, \eta, \_^\dagger, t)$ on $\Dcpo$, but we will not show this
here.  We will use the following results by Heckmann: every
point-continuous valuation is continuous; every simple valuation is
point-continuous; $\Val_p X$ is sober, in particular it is a dcpo
under the pointwise ordering; in particular, every minimal valuation
is point-continuous.

Let $\lambda$ be the Lebesgue measure on $\real$, or ambiguously, its
restriction to the open subsets of $\real$, in which case we call it
the \emph{Lebesgue valuation}.
\begin{lemma}
  \label{lemma:lambda:notmin}
  The Lebesgue valuation $\lambda$ on $\real$ is not point-continuous,
  hence not minimal.
\end{lemma}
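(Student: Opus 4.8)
The plan is to unfold the definition of point-continuity directly and exhibit a single open set where it fails. Take $U \eqdef \real$ itself (or any unbounded open interval). Then $\lambda(U) = +\infty$, so I can pick, say, $r = 1$ (any positive real works), and point-continuity would require a \emph{finite} subset $A = \{a_1, \dots, a_n\} \subseteq \real$ such that every open neighborhood $V$ of $A$ satisfies $\lambda(V) > 1$. First I would observe that this is impossible: given any finite set $A$, I can cover it by finitely many open intervals of total length less than $1$, say $V \eqdef \bigcup_{i=1}^n \,]a_i - \varepsilon, a_i + \varepsilon[$ with $2n\varepsilon < 1$. This $V$ is an open neighborhood of $A$ with $\lambda(V) \leq 2n\varepsilon < 1 = r$, contradicting the requirement $\lambda(V) > r$. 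Hence $\lambda$ is not point-continuous.

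For the second half of the statement, I would invoke the results of Heckmann quoted just above the lemma: every minimal valuation is point-continuous. Since $\lambda$ fails to be point-continuous, it cannot be minimal. (One could also note directly that every simple valuation has compact — indeed finite — support and that any directed sup of such, reasoning transfinitely through the inductive closure, cannot produce $\lambda$, but routing through point-continuity is cleaner and is exactly why the paper introduced the notion here.)

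The only place requiring a little care is making sure the argument is uniform in the finite set $A$ — that is, that the \emph{same} $\varepsilon$ works to beat a fixed $r$. This is immediate once $n = |A|$ is fixed, since then I just need $\varepsilon < r/(2n)$; there is no issue of $A$ varying because the definition of point-continuity lets me choose $V$ after $A$ is given. I do not expect any genuine obstacle here; the lemma is essentially a one-line observation, and its role in the paper is to motivate the passage from $\real$ to $\IR_\bot$, where (as the following results will show) the Lebesgue valuation's image \emph{does} become minimal because intervals of positive width are now non-maximal points with nontrivial neighborhoods.
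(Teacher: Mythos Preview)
Your proof is correct and follows essentially the same approach as the paper: pick a finite set $A$, cover it by short intervals of total length below $r$, and conclude. The only cosmetic difference is that the paper phrases the argument for an arbitrary non-empty open $U \subseteq \real$ and an arbitrary $r < \lambda(U)$, whereas you instantiate $U = \real$ and $r = 1$; since a single counterexample suffices to defeat point-continuity, your instantiation is enough.
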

\begin{proof}
  Let $U$ be any non-empty open subset of $\real$.  Then
  $\lambda (U) > 0$.  Let us pick any $r \in \realp$ such that
  $r < \lambda (U)$.  For every finite subset $A$ of $U$, say of
  cardinality $n$, the open set
  $V \eqdef \bigcup_{x \in A} ]x-\epsilon, x+\epsilon[$, where
  $\epsilon > 0$ is chosen so that $2n\epsilon < r$, is an open
  neighborhood of $A$ but
  $\lambda (V) \leq \sum_{x \in A} \lambda (]x-\epsilon, x+\epsilon[)
  = 2n\epsilon < r$.
\end{proof}

\begin{proposition}
  \label{prop:mu:min}
  For every measure $\mu$ on $\real$, and writing again $\mu$ for the
  valuation it induces by restriction to $\Open \real$, the image
  valuation $i [\mu]$ is a minimal valuation on $\IR$ (resp.,
  $\IR_\bot$).
\end{proposition}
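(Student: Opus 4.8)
The plan is to exhibit $i[\mu]$ explicitly as the supremum of a countable chain of simple valuations, so that minimality follows immediately from the definition of $\Val_m$ as the inductive (d-)closure of $\Val_\fin$. First I would fix a countable basis of $\IR$ consisting of intervals $[a,b]$ with dyadic (or rational) endpoints, and for each $n \in \nat$ partition $\real$ into the half-open dyadic intervals $J^n_k \eqdef [k/2^n, (k+1)/2^n[$ for $k \in \Z$. The natural candidate is
\begin{align*}
  \xi_n \eqdef \sum_{k \in \Z} \mu(J^n_k)\, \delta_{[k/2^n, (k+1)/2^n]},
\end{align*}
where the Dirac mass is placed at the \emph{closed} dyadic interval $[k/2^n, (k+1)/2^n] \in \IR$. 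Strictly speaking this is an infinite sum, so I would first truncate to $|k| \le m$, obtaining genuine simple valuations $\xi_{n,m}$, observe that the family $\{\xi_{n,m}\}$ is directed (in $n$ by dyadic refinement, in $m$ by extending the range), and take $\xi$ to be its supremum in $\Val \IR$; this $\xi$ is then a minimal valuation by construction. The bulk of the work is to prove $\xi = i[\mu]$.

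To identify $\xi$ with $i[\mu]$, I would compare the two continuous valuations on a generating $\pi$-system of Scott-open sets. Recall from the preliminaries that $]a,b[$ is the inverse image under $i$ of the Scott-open set $U_{a,b} \eqdef \{[c,d] \mid a < c \le d < b\}$, so $i[\mu](U_{a,b}) = \mu(]a,b[)$; and since $\xi$ is a supremum of simple valuations supported on dyadic closed intervals, $\xi(U_{a,b})$ is the sum of $\mu(J^n_k)$ over those dyadic intervals eventually landing strictly inside $]a,b[$, which converges to $\mu(]a,b[)$ as $n \to \infty$ because the $J^n_k$ exhausting $]a,b[$ cover it. The sets $U_{a,b}$ (together with finite intersections, which are again of this form up to taking several components) do not quite form a basis of the Scott topology of $\IR$, so here I would instead argue directly: every Scott-open $U$ of $\IR$ is an increasing union of finite unions of basic open sets of the form $\uuarrow [a,b] = \{[c,d] \mid a < c \le d < b\}$ with $a,b$ dyadic, and by Scott-continuity and modularity (inclusion–exclusion, exactly as in the proof of Proposition~\ref{prop:fubini:top}) it suffices to check $\xi(\uuarrow[a,b]) = i[\mu](\uuarrow[a,b])$ for such basic opens, which is the computation just sketched.

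The main obstacle I anticipate is the bookkeeping around the \emph{open} versus \emph{closed} dyadic endpoints and the boundary mass: a point $x = k/2^n$ sits on the boundary of two adjacent dyadic blocks, and the half-open partition $\{J^n_k\}$ is chosen precisely to assign its $\mu$-mass unambiguously, but one must check that placing $\delta$ at the closed interval $[k/2^n,(k+1)/2^n]$ rather than at $\{x\}$ does not overcount when evaluating against $\uuarrow[a,b]$ — the closed dyadic interval $[k/2^n,(k+1)/2^n]$ lies in $\uuarrow[a,b]$ iff $a < k/2^n$ and $(k+1)/2^n < b$, so blocks straddling the endpoints $a,b$ are correctly excluded, and the excluded mass is $\mu$ of a set shrinking to $\{a\} \cup \{b\}$-neighbourhoods, hence tends to $0$ along the chain only if $\mu(\{a\}) = \mu(\{b\}) = 0$; to avoid this difficulty I would simply choose the dyadic grid (or the basic opens $\uuarrow[a,b]$) with $a,b$ ranging over a \emph{countable dense set of non-atoms} of $\mu$, which still generates the Scott topology, so the limit passes cleanly. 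Once the identity $\xi = i[\mu]$ is established on this generating family, equality everywhere follows from the uniqueness part of Proposition~\ref{prop:fubini:top}-style reasoning (or directly from Scott-continuity plus inclusion–exclusion), and minimality of $i[\mu]$ is then immediate since $\xi \in \Val_m \IR$ by construction; the $\IR_\bot$ case is identical, as adding $\bot$ changes neither the Scott-open sets away from $\bot$ nor the relevant suprema.
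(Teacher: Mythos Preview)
Your plan is correct but takes a very different, and much longer, route than the paper. The paper's proof is one line: $\IR$ (and $\IR_\bot$) is a continuous dcpo, hence $\Val\IR$ is itself continuous with a basis of simple valuations (Jones; \cite[Theorem~IV-9.16]{gierz03}), so \emph{every} continuous valuation on $\IR$ --- in particular $i[\mu]$ --- is automatically a directed supremum of simple valuations and therefore minimal. This is just Remark~\ref{rem:minval:cont} specialized to $X = \IR$. Your explicit dyadic-partition construction is instead what the paper carries out \emph{separately}, as Theorem~\ref{thm:part} and Corollary~\ref{corl:RF:Rval:countable}, precisely because an explicit countable approximating chain is useful in its own right. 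Two of your anticipated obstacles are not real: the sets $\uuarrow[a,b]$ with dyadic $a,b$ \emph{do} form a base of the Scott topology on $\IR$ (it is continuous with the dyadic intervals as a basis), and the atom worry is unfounded --- the excluded mass is $\mu$ of intervals shrinking to $\emptyset$, not to $\{a,b\}$, so it vanishes by continuity from below regardless of atoms. Your double index $(n,m)$ is also slightly awkward: refining $n$ with $m$ fixed is not monotone, since the truncation $|k|\le m$ then covers a shrinking real interval; the paper's indexing by partitions $P$ ordered by refinement (Theorem~\ref{thm:part}, item~1) avoids this. In short, the abstract argument buys a one-line proof; your constructive route buys an explicit effective chain, which the paper also wants but packages as a separate theorem rather than as the proof of this proposition.
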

\begin{proof}
  $\IR$ (resp., $\IR_\bot$) is a continuous dcpo.  Hence $\Val (\IR)$
  (resp., $\Val (\IR_\bot)$) is also a continuous dcpo, with a basis
  of simple valuations \cite[Theorem~IV-9.16]{gierz03}.  In
  particular, $i [\mu]$ is a directed supremum of simple valuations,
  hence a minimal valuation.
\end{proof}

\begin{remark}
  \label{rem:mu:min}
  The proof of Proposition~\ref{prop:mu:min} shows, more generally,
  that, given any hereditarily Lindel\"of space $X$ with a topological
  embedding $i$ into some continuous dcpo $P$, for every Borel measure
  $\mu$ on $X$, $i [\mu]$ is a minimal valuation on $P$.  Lawson
  showed that every Polish space $X$, not just $\real$, has this
  property \cite{lawson95}.
\end{remark}

$\IR$ and $\IR_\bot$ are even $\omega$-continuous, namely, they have a
countable basis.  One can then show that $i [\mu]$ is the supremum of
a countable chain of simple valuations (Exercise~IV-9.29 of
\cite{gierz03} helps).  It is interesting to see that we can obtain
such an explicit countable chain by elementary means.  The following
construction, which has some common points with the so-called
Riemann-Stieltjes integral, is uniform, and only requires the
knowledge of $\mu (]a, b])$ for intervals $]a, b]$ with rational or
dyadic endpoints.

\begin{definition}
  \label{defn:part}
  A \emph{partition} $P$ (of $\real$) is a finite non-empty subset of
  $\real$.  We write $P$ as $\{a_1 < \cdots < a_n\}$ in order to make
  both the elements and their order manifest, with $n \geq 1$.  Every
  such partition induces a finite subset $E_P$ of $\IR$ defined by
  $E_P \eqdef \{[a_1, a_2], \cdots, [a_{n-1}, a_n]\}$.  If $n=1$, then
  $E_P$ is empty.

  Let $\mathcal P$ be the set of all partitions of $\real$.  A
  partition $Q$ \emph{refines} $P$ if and only if $P \subseteq Q$.

  Given any measure $\mu$ on $\real$, and any partition
  $P \eqdef \{a_1 < \cdots < a_n\}$ of $\real$, we let $\mu_P$ be the
  simple valuation
  $\sum_{i=2}^n \mu (]a_{i-1}, a_i]) . \delta_{[a_{i-1}, a_i]}$ on
  $\IR$ (resp., $\IR_\bot$).
\end{definition}

\begin{remark}
  \label{rem:part}
  The coefficient of $\delta_{[a_{i-1}, a_i]}$ in $\mu_P$ is
  $\mu (]a_{i-1}, a_i])$, not $\mu ([a_{i-1}, a_i])$.  The point is
  that $\mu (]a_{i-1}, a_i]) + \mu (]a_i, a_{i+1}]) = \mu (]a_{i-1},
  a_{i+1}])$.  Choosing the coefficient to be $\mu ([a_{i-1}, a_i[)$
  would work equally well.
\end{remark}

\begin{remark}
  \label{rem:cdf}
  If $\mu$ is a probability valuation on $\real$ given by its
  cumulative distribution function $F$ (namely,
  $F (t) \eqdef \mu (]-\infty, t])$), then $\mu_P$ is equal to
  $\sum_{i=2}^n (F (a_i) - F (a_{i-1})) . \delta_{[a_{i-1}, a_i]}$,
  and represents the process of picking the interval $[a_{i-1}, a_i]$
  with probability $F (a_i) - F (a_{i-1})$.
\end{remark}

\begin{theorem}
  \label{thm:part}
  Let $\mu$ be any measure on $\real$.  Then:
  \begin{enumerate}
  \item for all partitions $P$ and $Q$ of $\real$, if $P \subseteq Q$
    then $\mu_P \leq \mu_Q$;
  \item $i [\mu]$ is the directed supremum of ${(\mu_P)}_{P \in
      \mathcal P}$, and is therefore a minimal valuation;
  \item for every dense subset $D$ of $\real$, $i [\mu]$ is also the
    directed supremum of ${(\mu_P)}_{P \in \mathcal P_D}$, where
    $\mathcal P_D$ is the collection of partitions $P \subseteq D$.
  \end{enumerate}
\end{theorem}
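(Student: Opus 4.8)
The plan is to prove the three items more or less in order, with item (1) providing the monotonicity that makes the families in (2) and (3) directed, item (2) being the heart of the matter, and item (3) a refinement of (2) obtained by the same techniques restricted to a dense subset.

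For item (1), I would first reduce to the case where $Q = P \cup \{c\}$ adds a single point, since a general refinement is obtained by adding finitely many points one at a time. If $c$ falls outside the range $[a_1, a_n]$, the new simple valuation $\mu_Q$ just adds one extra Dirac mass to $\mu_P$ (a term $\mu(]a_n, c]).\delta_{[a_n,c]}$ or $\mu(]c, a_1]).\delta_{[c,a_1]}$), so $\mu_Q \geq \mu_P$ trivially in the stochastic order. If $c$ lies strictly between $a_{i-1}$ and $a_i$, then $\mu_Q$ replaces the single term $\mu(]a_{i-1},a_i]).\delta_{[a_{i-1},a_i]}$ by $\mu(]a_{i-1},c]).\delta_{[a_{i-1},c]} + \mu(]c,a_i]).\delta_{[c,a_i]}$. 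The total mass is preserved because $\mu(]a_{i-1},c]) + \mu(]c,a_i]) = \mu(]a_{i-1},a_i])$, as noted in Remark~\ref{rem:part}. To see $\mu_P \leq \mu_Q$, I would check that $[a_{i-1},c] \supseteq [a_{i-1},a_i]$ and $[c,a_i] \supseteq [a_{i-1},a_i]$, i.e.\ both new intervals are \emph{above} $[a_{i-1},a_i]$ in the information order $\leq\ =\ \supseteq$ on $\IR$; hence $\delta_{[a_{i-1},c]}$ and $\delta_{[c,a_i]}$ both dominate $\delta_{[a_{i-1},a_i]}$ in the stochastic order, and a split of the mass into two pieces each sitting at a higher Dirac point is $\geq$ the original. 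Concretely, for any Scott-open $U \subseteq \IR$, if $[a_{i-1},a_i] \in U$ then by upward-closure both $[a_{i-1},c]$ and $[c,a_i]$ are in $U$, so the $U$-mass only goes up; this gives $\mu_P(U) \leq \mu_Q(U)$ on the affected term and equality elsewhere.

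For item (2), by (1) the family ${(\mu_P)}_{P \in \mathcal P}$ is directed (any two partitions have a common refinement, their union). Let $\nu \eqdef \sup_P \mu_P$ in $\Val(\IR)$; this is a minimal valuation since each $\mu_P$ is simple and $\Val_m$ is closed under directed suprema. It remains to show $\nu = i[\mu]$. Both are continuous valuations on $\IR$, so it suffices to show they agree on a basis of the Scott topology closed under finite intersections --- a $\pi$-system generating the Borel sets --- and here I would use the basic Scott-opens $\uuarrow [p,q] = \{[c,d] \mid p < c \leq d < q\}$ for rational (or dyadic) $p < q$, together with the obvious computation $i[\mu](\uuarrow[p,q]) = \mu(i^{-1}(\uuarrow[p,q])) = \mu(]p,q[)$. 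So the crux is to compute $\nu(\uuarrow[p,q]) = \sup_P \mu_P(\uuarrow[p,q])$ and show it equals $\mu(]p,q[)$. For a partition $P = \{a_1 < \cdots < a_n\}$, $\mu_P(\uuarrow[p,q]) = \sum \{\mu(]a_{i-1},a_i]) \mid p < a_{i-1} \text{ and } a_i < q\}$, which is $\mu(]a_j, a_k])$ where $a_j$ is the least element of $P$ strictly above $p$ and $a_k$ the greatest strictly below $q$ (and $0$ if no such sub-chain exists). As $P$ ranges over all partitions, $a_j$ can be taken arbitrarily close to $p$ from above and $a_k$ arbitrarily close to $q$ from below, so these quantities approach $\mu$ of intervals exhausting $]p,q[$ from inside; by upward $\sigma$-continuity of $\mu$ (equivalently, $\tau$-smoothness of the valuation) $\sup_P \mu_P(\uuarrow[p,q]) = \mu(]p,q[)$. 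The one technical point to be careful about: I need $p, q$ (or at least approximants) available \emph{as partition points}, which they are since $\mathcal P$ contains all finite subsets of $\real$; and I must handle the half-open versus open discrepancy, i.e.\ check $\mu(]a_j,a_k]) \to \mu(]p,q[)$ and not $\mu([p,q])$ or $\mu(]p,q])$ --- this works because we let the endpoints strictly decrease/increase into the open interval, so the exhausting intervals $]a_j, a_k]$ are all $\subseteq ]p,q[$ when $a_j > p$ and $a_k < q$, wait --- $]a_j,a_k] \subseteq ]p, q[$ requires $a_k < q$, which holds, good; and their union over such $P$ is all of $]p,q[$. I then invoke the $\pi$-system uniqueness lemma from the preliminaries (two valuations --- here, two measures, using that continuous valuations on the second-countable space $\IR$ extend to Borel measures, or more simply that a continuous valuation is determined by its values on a basis closed under finite unions via inclusion-exclusion and Scott-continuity, exactly as in the uniqueness argument of Proposition~\ref{prop:fubini:top}) agreeing on this generating $\pi$-system agree everywhere.

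I expect the main obstacle to be the endpoint bookkeeping in item (2): pinning down exactly which sub-sum of $\mu_P$ lands in a given basic Scott-open, and verifying that the relevant intervals exhaust $]p,q[$ (not its closure) as $P$ varies, so that Scott-continuity of $\mu$ delivers the supremum cleanly. For item (3), I would simply rerun the item-(2) argument but restricting $P$ to finite subsets of the dense set $D$: density of $D$ is exactly what guarantees that partition points can still be chosen approaching $p$ from above and $q$ from below, so the same supremum computation goes through and ${(\mu_P)}_{P \in \mathcal P_D}$ is cofinal among ${(\mu_P)}_{P \in \mathcal P}$ in the stochastic order, hence has the same supremum $i[\mu]$. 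The case $\IR_\bot$ versus $\IR$ is handled uniformly: adding $\bot$ below everything does not change the open sets of the form $\uuarrow[p,q]$ nor the computation, and the embedding $i$ and the simple valuations $\mu_P$ live in $\IR \subseteq \IR_\bot$ either way.
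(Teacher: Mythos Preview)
Your item~(1) is essentially the paper's argument. For items~(2) and~(3) you take a genuinely different route from the paper. The paper fixes an \emph{arbitrary} Scott-open $\mathcal U$ and proves both inequalities directly: $\mu_P(\mathcal U) \leq i[\mu](\mathcal U)$ follows because any $[a_{i-1},a_i] \in \mathcal U$ is contained in $i^{-1}(\mathcal U)$; for the reverse inequality the paper covers $U \eqdef i^{-1}(\mathcal U)$ by intervals $]a_x,b_x[$ with $a_x,b_x \in D$ and $[a_x,b_x] \in \mathcal U$, uses Scott-continuity of $\mu$ on open sets to pass to a finite subfamily, and takes the endpoints of that subfamily as the partition $P$. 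Your approach---compute $\sup_P \mu_P$ only on the basic opens $\uuarrow[p,q]$ and then extend to all opens via modularity and Scott-continuity, as in the uniqueness half of Proposition~\ref{prop:fubini:top}---is also correct and makes the basic-open computation very clean, at the price of invoking the inclusion-exclusion extension (which needs a little care when values can be $+\infty$, and in $\IR_\bot$ requires checking the whole space separately, since $\IR_\bot$ is not a union of sets $\uuarrow[p,q]$; that check is immediate: both sides equal $\mu(\real)$). The paper's argument is more self-contained and handles every open set in one pass; yours is more modular.

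One genuine slip: your claim that $\{\mu_P\}_{P \in \mathcal P_D}$ is cofinal in $\{\mu_P\}_{P \in \mathcal P}$ in the stochastic order is false. Take $\mu = \delta_\pi$, $D = \mathbb Q$, and $P = \{0,\pi\}$, so $\mu_P = \delta_{[0,\pi]}$. For any finite $P' \subseteq \mathbb Q$, either $\mu_{P'} = 0$ or $\mu_{P'} = \delta_{[b,b']}$ with $b < \pi < b'$ rational; in neither case is $[0,\pi] \leq [b,b']$ in $\IR$ (that would need $b' \leq \pi$), so $\mu_P \not\leq \mu_{P'}$. This does not break your proof---item~(3) already follows from rerunning the basic-open computation using density of $D$ to approximate $p$ and $q$---but the cofinality remark should be removed.
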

\begin{proof}
  1. Since $Q$ refines $P$, $Q$ is obtained by adding some points to
  $P$, and it suffices to show the claim when we add just one point.
  Let us write $P$ as $\{a_1 < \cdots < a_n\}$, and let
  $Q \eqdef P \cup \{a\}$, where $a \not\in P$.  If $a < a_1$, then
  $\mu_Q$ is equal to
  $\mu_P + \mu (]a, a_1]) . \delta_{[a, a_1]} \geq \mu_P$; similarly
  if $a > a_n$.  Otherwise, let $j$ be the unique index such that
  $a_{j-1} < a < a_j$, $2\leq j \leq n$.  Then:
  \begin{align*}
    \mu_P
    & = \sum_{\substack{i=2\\i\neq j}}^n \mu (]a_{i-1}, a_i])
    . \delta_{[a_{i-1}, a_i]} + \mu (]a_{j-1}, a_j])
    . \delta_{[a_{j-1}, a_j]} \\
    & = \sum_{\substack{i=2\\i\neq j}}^n \mu (]a_{i-1}, a_i])
    . \delta_{[a_{i-1}, a_i]} + \mu (]a_{j-1}, a])
    . \delta_{[a_{j-1}, a_j]} + \mu (]a, a_j])
    . \delta_{[a_{j-1}, a_j]} \\
    & \leq \sum_{\substack{i=2\\i\neq j}}^n \mu (]a_{i-1}, a_i])
    . \delta_{[a_{i-1}, a_i]} + \mu (]a_{j-1}, a])
    . \delta_{[a_{j-1}, a]} + \mu (]a, a_j])
    . \delta_{[a, a_j]} = \mu_Q,
  \end{align*}
  where the latter inequality is justified by the fact that
  $[a_{j-1}, a]$ and $[a, a_j]$ are both larger than or equal to
  $[a_{j-1}, a_j]$, and that $\mathbf x \mapsto \delta_{\mathbf x}$ is
  monotonic.

  Item~2 is a special case of item~3, with $D \eqdef \real$.  Hence we
  prove item~3 directly.  Let $D$ be any dense subset of $\real$.
  
  The family $\mathcal P_D$ is directed under inclusion, since
  $\{d\} \in \mathcal P$ for any given $d \in D$, and since for all
  $P, Q \in \mathcal P_D$, $P \cup Q$ is in $\mathcal P_D$.  By
  item~1, ${(\mu_P)}_{P \in \mathcal P_D}$ is therefore directed.
  
  Let us fix an arbitrary Scott-open subset $\mathcal U$ of $\IR$
  (resp., $\IR_\bot$), and let $U \eqdef i^{-1} (\mathcal U)$.

  For every $P \eqdef \{a_1 < \cdots < a_n\} \in \mathcal P_D$,
  $\mu_P (\mathcal U) \leq i [\mu] (\mathcal U)$.  Indeed,
  \begin{align*}
    \mu_P (\mathcal U)
    & = \sum_{\substack{2 \leq i \leq n\\ [a_{i-1}, a_i] \in \mathcal
    U}} \mu (]a_{i-1}, a_i]) \\
    & = \mu \left(\bigcup_{\substack{2 \leq i \leq n\\ [a_{i-1}, a_i] \in \mathcal
    U}} ]a_{i-1}, a_i]\right)
    & \text{since the sets }]a_{i-1}, a_i]\text{ are pairwise
      disjoint} \\
    & \leq \mu (U) = i [\mu] (\mathcal U).
  \end{align*}
  The last inequality is justified by the fact that for every $i$ with
  $2 \leq i\leq n$ such that $[a_{i-1}, a_i] \in \mathcal U$,
  $[a_{i-1}, a_i]$ is included in $U$: for every
  $x \in [a_{i-1}, a_i]$, $[a_{i-1}, a_i]$ is below $[x, x]$ in $\IR$
  (resp., $\IR_\bot$), so $[x, x] \in \mathcal U$, meaning that
  $x \in U$.

  In order to show that
  $\sup_{P \in \mathcal P_D} \mu_P (\mathcal U) \geq i [\mu] (\mathcal
  U)$, we consider any $r \in \realp$ such that
  $r < i [\mu] (\mathcal U) = \mu (U)$, and we will show that there is
  a partition $P \in \mathcal P_D$ such that $\mu_P (\mathcal U) > r$.
  For every $x \in U$, $[x, x]$ is in $\mathcal U$ and is the supremum
  of the chain of elements $[x-\epsilon, x+\epsilon]$, $\epsilon > 0$.
  Therefore there is a positive real number $\epsilon_x$ such that
  $[x-\epsilon_x, x+\epsilon_x]$ is in $\mathcal U$.  Since $D$ is
  dense in $\real$, we can find a subinterval $[a_x, b_x]$ such that
  $a_x$ and $b_x$ are in $D$, and $a_x < x < b_x$.  We note that
  $[a_x, b_x]$ is also in $\mathcal U$.  In particular, $[a_x, b_x]$,
  hence also $]a_x, b_x[$, is included in $U$ for every $x \in U$, so
  $U = \bigcup_{x \in U} ]a_x, b_x[$.  We write the latter as the
  directed union of the open sets
  $U_E \eqdef \bigcup_{x \in E} ]a_x, b_x[$, where $E$ ranges over the
  finite subsets of $U$.  Since $\mu$ restricted to the open sets of
  $\real$ is a continuous valuation, there is a finite subset $E$ of
  $U$ such that $r < \mu (U_E)$.
  
  We let $P$ be the collection of elements $a_x$ and $b_x$, $x \in E$.
  This is an element of $\mathcal P_D$.  Let us write $P$ as
  $\{c_1 < \cdots < c_n\}$.  For every $x \in E$, $[a_x, b_x]$ is an
  element of $\mathcal U$, and is equal to $[c_{i_x}, c_{j_x}]$ for
  some indices $i_x$ and $j_x$ such that $1 \leq i_x < j_x \leq n$.
  Then $[c_{i-1}, c_i]$ is larger than or equal to $[a_x, b_x]$ for
  every $i$ with $i_x+1 \leq i \leq j_x$, hence is also in
  $\mathcal U$.  It follows that
  $\mu_P (\mathcal U) = \sum_{\substack{2 \leq i \leq n\\{} [c_{i-1},
      c_i] \in \mathcal U}} \mu (]c_{i-1}, c_i]) \geq
  \sum_{\substack{x \in E\\i_x+1 \leq i \leq j_x}} \mu (]c_{i-1},
  c_i])$.  The latter is equal to
  $\mu (\bigcup_{\substack{x \in E\\i_x+1 \leq i \leq j_x}} ]c_{i-1},
  c_i])$, since the intervals $]c_{i-1}, c_i]$ are pairwise disjoint.
  By definition of $i_x$ and $j_x$, this is larger than or equal to
  $\mu (\bigcup_{x \in E} ]a_x, b_x])$, hence to
  $\mu (\bigcup_{x \in E} ]a_x, b_x[) = \mu (U_E)$.  Therefore
  $r < \mu_P (\mathcal U)$, as desired.
\end{proof}

\begin{corollary}
  \label{corl:RF:Rval:countable}
  Let $\mu$ be any measure on $\real$.  For every $n \in \nat$, let
  the partition $P_n$ consist of all integer multiples of $1/2^n$
  between $-n$ and $n$.  The minimal valuation $i [\mu]$ is the
  supremum of the countable chain of simple valuations $\mu_{P_n}$,
  $n \in \nat$.
\end{corollary}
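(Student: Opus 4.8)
The plan is to obtain this as an immediate specialization of Theorem~\ref{thm:part}, with $D$ taken to be the set of dyadic numbers. First I would check that the sequence ${(P_n)}_{n \in \nat}$ is increasing for inclusion: every integer multiple of $1/2^n$ is also an integer multiple of $1/2^{n+1}$ (since $1/2^n = 2/2^{n+1}$), and $[-n, n] \subseteq [-(n+1), n+1]$, so $P_n \subseteq P_{n+1}$. By item~1 of Theorem~\ref{thm:part}, ${(\mu_{P_n})}_{n \in \nat}$ is then a countable chain of simple valuations, in particular a directed family, and so it has a supremum in $\Val (\IR)$ (resp.\ $\Val (\IR_\bot)$).

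Next I would invoke item~3 of Theorem~\ref{thm:part} with $D$ the set of dyadic numbers $k/2^m$ ($k \in \Z$, $m \in \nat$), which is dense in $\real$: this gives that $i [\mu]$ is the directed supremum of ${(\mu_P)}_{P \in \mathcal P_D}$, where $\mathcal P_D$ is the collection of finite non-empty subsets of $D$. Since every $P_n$ belongs to $\mathcal P_D$, one inequality is free: $\sup_{n \in \nat} \mu_{P_n} \leq \sup_{P \in \mathcal P_D} \mu_P = i [\mu]$. For the converse it suffices to show that ${(P_n)}_{n \in \nat}$ is cofinal in $(\mathcal P_D, \subseteq)$. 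Given $P \in \mathcal P_D$, each of the finitely many elements of $P$ has the form $k/2^m$; taking a common denominator we may assume they all have the same denominator exponent $m$, and choosing $N \in \nat$ with $P \subseteq [-N, N]$, I would set $n \eqdef \max (m, N)$. Then each element of $P$ is an integer multiple of $1/2^m = 2^{n-m}/2^n$, hence of $1/2^n$, and lies in $[-N, N] \subseteq [-n, n]$, so $P \subseteq P_n$. By item~1 of Theorem~\ref{thm:part}, $\mu_P \leq \mu_{P_n} \leq \sup_{k \in \nat} \mu_{P_k}$; taking the supremum over $P \in \mathcal P_D$ yields $i [\mu] = \sup_{P \in \mathcal P_D} \mu_P \leq \sup_{n \in \nat} \mu_{P_n}$, hence equality. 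As each $\mu_{P_n}$ is simple, $i [\mu]$ is in particular a minimal valuation, recovering item~2 of Theorem~\ref{thm:part} in this concrete, countable form.

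I do not expect a genuine obstacle here; the proof is essentially bookkeeping once Theorem~\ref{thm:part} is in hand. The only step needing a little care is the cofinality argument, i.e.\ verifying that any finite set of dyadics is eventually absorbed into $P_n$ as soon as $n$ exceeds both a common denominator exponent and an absolute bound on the set — everything else is a direct application of the monotonicity and density statements already proved.
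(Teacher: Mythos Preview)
Your proposal is correct and follows essentially the same approach as the paper: apply item~3 of Theorem~\ref{thm:part} with $D$ the dyadic numbers, and observe that ${(P_n)}_{n\in\nat}$ is cofinal in $\mathcal P_D$. The paper's proof is just a more compressed version of yours, omitting the explicit check that ${(P_n)}_{n\in\nat}$ is increasing and the separation into two inequalities.
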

\begin{proof}
  The family ${(P_n)}_{n\in \nat}$ is a subfamily of $\mathcal P_D$,
  where $D$ is the set of dyadic numbers.  It is in fact cofinal:
  every element $P$ of $\mathcal P_D$ is refined by $P_n$ for some
  $n \in \nat$, namely for any $n$ larger than every element of $P$,
  every opposite of an element of $P$, and such that every element of
  $P$ is an integer multiple of $1/2^n$.  It follows that
  ${(\mu_P)}_{p \in \mathcal P_D}$ and ${(\mu_{P_n})}_{n \in \nat}$
  have the same supremum, and that is $i [\mu]$ by
  Theorem~\ref{thm:part}, item~3.
\end{proof}

\section{The ISPCF Calculus}
\label{sec:simple-sfpc-calculus}

We now come to the description of our calculus.  This is a variant of
SFPC, a statistical variant of Fiore and Plotkin's Fixed Point
Calculus \cite{fiore94a} due to V{\'a}k{\'a}r, Kammar, and Staton
\cite{VKS:SFPC}.  It is also very close to the calculus PCFSS of
\cite{DLH:geom:bayes} and to PPCF \cite{EPT:PPCF}.  Just like the
latter, but contrarily to the other languages we have mentioned, our
calculus ISPCF (for \emph{Interval} statistical PCF) is a call-by-name
language.

Its algebra of types is as follows.  One may naturally consider
additional types, such as more complex, recursively defined data
types.
\begin{align*}
  \sigma, \tau, \ldots
  & ::= \unitT  \mid \voidT \mid \intT \mid
    \realT
    \mid \sigma + \tau
    \mid \sigma \times \tau
    \mid \sigma \to \tau \mid D\tau
\end{align*}
The types of the form $D \tau$ are called \emph{distribution types}.
The function arrow $\to$ associates to the right, so that
$\sigma \to \tau \to \lambda$ abbreviates
$\sigma \to (\tau \to \lambda)$.  Compared to the algebra of types of
SFPC, first, we keep a type $\realT$ of real numbers---exact reals
instead of true reals, though.
Second, we do not require the presence of recursive types as SFPC, so
as to simplify the presentation.  Accommodating such type
constructions is not central to our work, and can be handled through
bilimit constructions \cite[Section~5]{abramsky94}.  (Showing
adequacy, as we will do in Section~\ref{sec:adequacy}, in the presence
of recursive types, is much more challenging, though.  See
\cite{JLM:prob:quant} for adequacy in the presence of recursive types
and discrete probabilistic choice, and \cite{pitts93a} for a general
technique for proving adequacy in the presence of recursive types.)
Third, and finally, we choose to have an explicit type of
distributions former $D$, in the style of Moggi's Simple Metalanguage
\cite{moggi91}: arrow types $\sigma \to \tau$ in SFPC are typically
encoded as $\sigma \to D\tau$ in ISPCF.

\begin{figure}
  \centering
  \[
    \begin{array}{cc}
      \begin{prooftree}
        \strut
        \justifies
        \vdash x_\sigma \colon \sigma
      \end{prooftree}
      &
      \begin{prooftree}
        \underline f \in \Sigma
        \justifies
        \underline f \colon typ (\underline f)
      \end{prooftree}
        \qquad
        \begin{prooftree}
          \vdash M \colon \sigma \to \sigma
          \justifies
          \vdash \reckw\; M \colon \sigma
        \end{prooftree}
      \\ \\
      \begin{prooftree}
        \justifies
        \vdash \sample [\mu] \colon D \realT
      \end{prooftree}
      &
        \begin{prooftree}
          \vdash M \colon \realT
          \justifies
          \vdash \score M \colon D \unitT
        \end{prooftree}
      \\ \\
      \begin{prooftree}
        \vdash M \colon \tau
        \justifies
        \vdash \lambda x_\sigma . M \colon \sigma \to \tau
      \end{prooftree}
      &
        \begin{prooftree}
          \vdash M \colon \sigma \to \tau \quad
          \vdash N \colon \sigma
          \justifies
          \vdash MN \colon \tau
        \end{prooftree}
      %
      \\ \\
      \begin{prooftree}
        \vdash M \colon \tau
        \justifies
        \vdash \retkw M \colon D \tau
      \end{prooftree}
      &
        \begin{prooftree}
          \vdash M \colon D \sigma \quad
          \vdash N \colon D \tau
          \justifies
          \vdash \dokw {x_\sigma \leftarrow M}; N \colon D \tau
        \end{prooftree}
      \\ \\
      \begin{prooftree}
        \vdash M \colon \sigma \quad
        \vdash N \colon \tau
        \justifies
        \langle M, N \rangle \colon \sigma \times \tau
      \end{prooftree}
      &
      \begin{prooftree}
        \vdash M \colon \sigma \times \tau
        \justifies
        \vdash \pi_1 M \colon \sigma
      \end{prooftree}
        \quad
      \begin{prooftree}
        \vdash M \colon \sigma \times \tau
        \justifies
        \vdash \pi_2 M \colon \tau
      \end{prooftree}
      \\ \\
      \begin{prooftree}
        \begin{array}{c}
          \strut\\
          \vdash M \colon \sigma
        \end{array}
        \justifies
        \vdash \iota_1 M \colon \sigma+\tau
      \end{prooftree}
      \quad
      \begin{prooftree}
        \begin{array}{c}
          \strut\\
          \vdash M \colon \tau
        \end{array}
        \justifies
        \vdash \iota_2 M \colon \sigma+\tau
      \end{prooftree}
      &
        \begin{prooftree}
          \begin{array}{c}
            \vdash M \colon \sigma + \tau
            \\
            \vdash N \colon \sigma \to \upsilon \quad
            \vdash P \colon \tau \to \upsilon
          \end{array}
          \justifies
          \casekw M N P \colon \upsilon
        \end{prooftree}
    \end{array}
  \]
  \caption{The typing rules of ISPCF}
  \label{fig:typing}
\end{figure}

Just like SFPC, ISPCF is a typed higher-order functional language.
Its terms are given by the following grammar:
\begin{align*}
  M, N, P, \ldots
  & ::= x_\sigma \mid \underline f \mid \sample [\mu] \mid \score M \\
  & \quad \mid \lambda x . M \mid MN \mid \reckw\; M 
  \mid \retkw M \mid \dokw {x \leftarrow M}; N \\
  & \quad \mid \langle M, N \rangle \mid \pi_1 M \mid \pi_2 M
    \mid \iota_1 M \mid \iota_2 M \mid \casekw M N P
\end{align*}
where $x_\sigma$ (and also $y_\tau$, $z_\upsilon$, etc.) range over
variables, whose type appears explicitly as a subscript (which we will
often omit for clarity), $\underline f$ ranges over a fixed set
$\Sigma$ of constants, and $\mu$ ranges over a fixed set $\mathcal M$
of measures on $\real$.  We assume a countably infinite supply of
variables of each type.

We assume a function $typ$ from $\Sigma$ to the set of ISPCF types:
$typ (\underline f)$ is the type of $\underline f$, by definition.

Amongst those we require a constant $\underline *$ of type $\unitT$,
and constants $\underline n$, one for each integer $n \in \Z$, with
$typ (\underline n) = \intT$.

We may include constants $\underline q$, one for each rational number
$q$, with $typ (\underline q) = \realT$.  We may also include
constants for non-rational numbers, such as $\underline\pi$; and
standard numerical primitives such as $\underline\sin$,
$\underline{\mathrm{exp}}$ or $\underline{\log}$, with
$typ (\underline\sin) = typ(\underline{\mathrm{exp}}) =
typ(\underline{\log}) = \realT \to \realT$, meant to implement the
sine, exponential and logarithm functions respectively, as in SFPC.

We will see later that we can define a type $\boolT$ of Booleans,
namely $\voidT + \voidT$, and two terms $\true$ and $\false$ of type
$\boolT$.  Given that, we may include a sign test
$\poskw \colon \realT \to \boolT$, with the intention that $\poskw\; M$
evaluates to the value $\trueop$ of $\true$ if $M$ evaluates to a
positive real number, and to the value $\falseop$ of $\false$ if $M$
evaluates to a negative real number (we will discuss semantics soon).

The constructs $\retkw$ and $\dokw$ are standard monadic constructs;
$\reckw$ is a fixed-point combinator, and we assume the usual
$\alpha$-renaming conventions.  Crucially, we keep the two iconic
primitives of SFPC: $\sample$ (or rather, $\sample [\mu]$) and
$\score$.  The typing rules are given in Figure~\ref{fig:typing}.

We also write $M; N$ for
$\dokw {x \leftarrow M}; N$, where $x$ is a fresh variable---one that
is not free in $N$.

While SFPC is call-by-value, ISPCF is a call-by-name language, just
like Hakaru \cite{SR:disin} or PPCF \cite{EPT:PPCF}.  (Hakaru,
however, lacks higher-order functions and recursion.)  This slightly
simplifies the denotational semantics, and does not preclude us from
using a call-by-value style of programming if we so desire.  Namely,
call-by-value application $M @ N$ of $M \colon \sigma \to D \tau$ to
$N \colon D \sigma$ is defined as $\dokw {x \leftarrow N}; Mx$, where
$x$ is a fresh variable.  We will return to this point in
Section~\ref{sec:call-value-question}.

\section{The Denotational Semantics of ISPCF}
\label{sec:semantics}

The semantics $\Eval \tau$ of types $\tau$ is given by induction on $\tau$:
\[
  \begin{array}{c}
    \Eval {\unitT} \eqdef \{*, \bot\} \quad
    \Eval {\voidT} \eqdef \{\bot\} \quad
    \quad \Eval {\intT} \eqdef \Z_\bot \quad
    \Eval {\realT} \eqdef \IR_\bot \\
    \Eval {\sigma \to \tau} \eqdef [\Eval \sigma \to \Eval \tau]
    \quad    
    \Eval {D \tau} \eqdef \Val_m (\Eval \tau)
    \\
    \Eval {\sigma \times \tau} \eqdef \Eval \sigma \times \Eval \tau
    \quad
    \Eval {\sigma + \tau} \eqdef (\Eval \sigma + \Eval \tau)_\bot.
  \end{array}
\]
In $\Eval {\unitT}$, we have $\bot < *$.  $\Z_\bot$ is $\Z$ plus a
fresh element $\bot$, ordered so that $\bot$ is least and all integers
are pairwise incomparable.  $[X \to Y]$ denotes the dcpo of
Scott-continuous maps from the dcpo $X$ to the dcpo $Y$, ordered
pointwise.  Given any two pointed dcpos $X$ and $Y$, $X \times Y$ is
their ordinary product, ordered pointwise.  $X + Y$ is their
coproduct, and consists of elements $(1, x)$ with $x \in X$, $(2, y)$
with $y \in Y$, so that $(1, x) \leq (1, x')$ if and only if
$x \leq x'$, $(2, y) \leq (2, y')$ if and only if $y \leq y'$, and
$(1, x)$ and $(2, y)$ are incomparable.

\begin{figure}
  \centering
  \begin{align*}
    \Eval {x_\sigma} \rho & \eqdef \rho (x_\sigma)
    & \Eval {\underline f} \rho & \eqdef f \\
    \Eval {\sample [\mu]} \rho & \eqdef i [\mu]
    & \Eval {\score M} \rho & \eqdef 
                              |\Eval M \rho| . \delta_* \\
    \Eval {\lambda x_\sigma . M} \rho & \eqdef \lambda V \in \Eval \sigma . \Eval M (\rho [x_\sigma
                                        \mapsto V]) \mskip-180mu \\
    \Eval {MN} \rho & \eqdef \Eval M \rho (\Eval N \rho) \mskip-20mu
    & \Eval {\reckw\; M} \rho & \eqdef \lfp (\Eval M \rho) \\
    \Eval {\retkw M} \rho & \eqdef \delta_{\Eval M \rho}
    & \Eval {\dokw {x_\sigma \leftarrow M}; N} \rho
                                & \eqdef
                                  (\Eval {\lambda x_\sigma . N}
                                  \rho)^\dagger (\Eval M \rho)
    \\
    \Eval {\langle M, N \rangle} \rho & \eqdef
                                          (\Eval M \rho, \Eval N \rho)
    & \Eval {\pi_i M} \rho & \eqdef \lambda (V_1, V_2) . V_i
    \\
    \Eval {\casekw M N P} \rho
                          & \eqdef \left\{
                            \begin{array}{ll}
                              \Eval N \rho (a) & \text{if }\Eval M
                                             \rho=(1, a) \\
                              \Eval P \rho (b) & \text{if }\Eval M
                                             \rho=(2, b) \\
                              \bot & \text{otherwise}
                            \end{array}
                                     \right. \mskip-80mu
    & \Eval {\iota_i M} \rho & \eqdef (i, \Eval M \rho)
  \end{align*}
  \caption{The semantics of ISPCF}
  \label{fig:sem}
\end{figure}

The denotational semantics $\Eval M \rho$ of each term $M$, in the
environment $\rho$, is given in Figure~\ref{fig:sem}.  Environments
map variables $x_\sigma$ to elements of the corresponding dcpo
$\Eval \sigma$.  The notation $\rho [x \mapsto V]$ stands for the
environment that maps $x$ to $V$ and every other variable $y$ to
$\rho (y)$.  The least fixed point operator $\lfp$ on every pointed
dcpo $X$ is such that
$\lfp (f) \eqdef \bigcup_{n \in \nat} f^n (\bot)$ for every
$f \in [X \to X]$.  In the denotation of $\casekw M N P$,
$\Eval N \rho (a)$ is the application of $\Eval N \rho$, which is a
function, to the value $a$.

Let us make a few comments.  We assume an element $f \in \Eval \tau$
for each constant $\underline f \in \Sigma$ of type $\tau$.  Hence,
for example, we have elements
$\pi \in \Eval {\realT}$,
$\sin, \mathrm{exp}, \log \in \Eval {\realT \to \realT}$, etc.
The case of sign tests $\poskw$ is interesting: by an easy argument
based on the fact that the real line is connected, there is no
Scott-continuous function from $\IR_\bot$ to $\Eval {\boolT}$, where
$\boolT \eqdef \unitT+\unitT$, which would map the elements $[a, a]$
with $a \geq 0$ to $\trueop \eqdef (1, *)$, and those with $a < 0$ to
$\falseop \eqdef (2, *)$ (see \cite[Proposition~2.4]{EE:int:PCF} for a
generalization of this observation).  And indeed, such tests are not
computable.  One reasonable choice for the semantics of $\poskw$, if
we decide to include it, is to define $\posop ([a, b])$ as $\trueop$
if $a > 0$, $\falseop$ if $b < 0$, and $\bot$ otherwise.

The semantics of $\sample [\mu]$, namely $i [\mu]$, is the minimal
valuation associated with the measure $\mu$ on $\real$, see
Theorem~\ref{thm:part}.

For every $\mathbf a \in \IRbb$, the notation $|\mathbf a|$ denotes
$0$ if $\mathbf a = \bot$ or if $\mathbf a = [a, b]$ with
$a \leq 0 \leq b$, $a$ if $\mathbf a = [a, b]$ with $a \geq 0$, and
$-b$ if $\mathbf a = [a,b]$ with $b\leq 0$.

This, as well as the semantics of $\poskw$, is an instance of a more
general, well-known domain-theoretic construction.  A \emph{bc-domain}
is a continuous dcpo in which every subset with an upper bound has a
least upper bound.  The bc-domains with their Scott topology are
exactly the \emph{densely injective} $T_0$ topological spaces, namely
the $T_0$ spaces $Z$ such that, for every dense subset $X$ of any
space $Y$, every continuous map $f \colon X \to Z$ extends to a
continuous map from the whole of $Y$ to $Z$
\cite[Proposition~II-3.11]{gierz03}.  There is even a pointwise
largest such extension $f^*$, defined by
$f^* (y) \eqdef \sup_{U \in \Open Y, y \in U} \inf_{x \in U \cap X} f
(x)$ \cite[Proposition~II-3.9]{gierz03}.  The latter formula makes
sense and defines a continuous map $f^*$ even when $f$ is not
continuous, but in that case $f^*$ will fail to extend $f$.

One can check that $\Eval {\unitT}$, $\Eval {\voidT}$,
$\Eval {\intT}$, $\Eval {\realT}$, and in fact $\Eval \tau$ for any
type $\tau$ that does not contain the $D$ operator, are bc-domains.
In particular, every continuous map $g \colon \real \to \real$ induces
a continuous map
$\hat g \eqdef (i \circ g)^* \colon \IR_\bot \to \IR_\bot$ that
extends $i \circ g$, namely: $\hat g ([x, x]) = [g (x), g (x)]$ for
every $x \in \real$.  In doing so, we silently equate $\real$ with a
subspace of $\IR_\bot$ through $i$, and it is easy to realize that
$\real$ is indeed dense in $\IR_\bot$.

This applies to $\sin$, $\cos$, $\exp$, notably. The case of $\log$ is
slightly different, since it is only defined on $\realp \diff \{0\}$.
In that case, the map $f \colon \real \to \IR_\bot$ that sends every
positive real number $x$ to $i (\log x) = [\log x, \log x]$, and all
other numbers to $\bot$, is continuous, and it is natural to define
$\log \colon \IR_\bot \to \IR_\bot$ as its largest continuous
extension. The notation $|\mathbf a|$, as used in the semantics of
$\score\; M$ (where $\mathbf a$ is $\Eval M \rho$), is equal to $f^*$
where $f$ is the absolute value map from $\real$ to $\real$, and
therefore extends it to $\IR_\bot$. The semantics $\posop$ of $\poskw$
is $f^*$, where $f$ is the non-continuous map that sends every
positive number to $\trueop$ and all other numbers to $\falseop$.
Hence $f^*$ does not extend $f$, but we still have
$f^* ([a,a]) = f (a)$ if $a \neq 0$.

It is clear that for each derivable judgment $\vdash M \colon \tau$,
for every environment $\rho$, $\Eval M \rho$ is a well-defined element
of $\Eval \tau$. In particular, the least fixed point
$\lfp (\Eval M \rho)$ is well-defined in the definition of
$\Eval {\reckw\; M} \rho$, because $\Eval \tau$ is a pointed dcpo; this
is shown by induction on $\tau$, and the least element of
$\Eval {D \tau}$ is the constant zero valuation. We also mention the
following case, which shows where tensorial strengths are
required. There is a morphism:
\[ \xymatrix{ T X \times [X \to TY] \ar[r]^{t'_{X,Y}} & T (X \times [X
    \to TY]) \ar[r]^(0.7){\App^\dagger} & TY }
\]
where $X \eqdef \Eval \sigma$, $Y \eqdef \Eval \tau$, $T$ is the
$\Val_m$ monad, $t'$ is its dual tensorial strength, and $\App$ is
application.  Applying this to
$(\Eval M \rho, \Eval {\lambda x_\sigma . N} \rho)$ yields the
semantic value $\Eval {\dokw {x_\sigma \leftarrow M}; N} \rho$.


\begin{remark}
  \label{rem:comm}
  By standard category-theoretic arguments, since $\Val_m$ is
  commutative, the following equation holds when $x$ is not free in
  $N$ and $y$ is not free in $M$:
  \[
    \mskip-40mu
    \Eval {\dokw {x \leftarrow M}; \dokw {y \leftarrow N}; P} \rho =
    \Eval {\dokw {y \leftarrow N}; \dokw {x \leftarrow M}; P} \rho.
  \]
  It was already argued \cite[Comments after Corollary~6.11]{VKS:SFPC}
  that this semantic equation is crucial in program transformation
  techniques such as disintegration-based Bayesian inference
  \cite{SR:disin} as implemented in the Hakaru system
  \cite{NCRSZ:hakaru}.  As in \cite{VKS:SFPC}, this equation is valid
  at all types in our semantics.
\end{remark}

\section{Examples}
\label{sec:examples}

\subsection{The call-by-value versus call-by-name question}
\label{sec:call-value-question}

A recurring paradox in the theory of higher-order probabilistic
programming languages is the following: what should the value of
$(\lambda x . x=x) (\mathtt{rand} ())$ be, where $\mathtt{rand} ()$
returns a random number?  Since $x=x$ is universally true, one expects
this program to always return true.  In call-by-name, one also expects
$\beta$-reduction to be a sound reduction rule, so that this value
should also be equal to $\mathtt{rand} () = \mathtt{rand} ()$.
Unfortunately, the latter draws \emph{two} independent random numbers,
and may return false.

This problem is avoided in \emph{call-by-value} languages, which force
$\mathtt{rand} ()$ to be evaluated first, and only once.

This need not be a problem in call-by-name languages such as ISPCF.
This is well-known, but we would like this point to be clear.  In the
call-by-name language PPCF, the issue is solved by using a
$\mathtt{let} (x, M, N)$ construct that evaluates $M$, binds the
resulting value to $x$ and then evaluates $N$
\cite[Example~3.10]{EPT:PPCF}.  In ISPCF, $\dokw$ may serve a similar
purpose, as we now demonstrate.

Let us build a simple uniform random number generator
$\mathtt{randbool}$ on $\boolT$.  We recall that
$\boolT \eqdef \unitT + \unitT$.  We define $\true$ as
$\iota_1 \underline *$; its semantics is $\trueop = (1, *)$; and
$\false$ as $\iota_2 \underline *$, whose semantics is
$\falseop = (2, *)$.  If $M$ has type $\boolT$, we also define
$\ifkw M N P$ as an abbreviation for
$\casekw M (\lambda x_1 .\allowbreak N) (\lambda x_2 . P)$, where
$x_1$ and $x_2$ are fresh variables.

In order to implement $\mathtt{randbool}$, we assume we have a term
$\sample [\lambda_1]$, where $\lambda_1$ is the uniform measure on
$[0, 1]$, namely: for every measurable subset $E$ of $\real$,
$\lambda_1 (E) \eqdef \lambda (E \cap [0, 1])$.  We also assume the
$\poskw$ primitive, and subtraction $-$.  We define
$\mathtt{randbool}$ as follows.
\begin{align*}
  \mathtt{randbool} & \eqdef \dokw {x_{\realT} \leftarrow \sample
                      [\lambda_1]};
                      \retkw (\poskw (x - \underline {0.5}))
\end{align*}
For every environment $\rho$, $\Eval {\mathtt{randbool}} \rho$ is the
continuous valuation
$\frac 1 2 \delta_{\trueop} + \frac 1 2 \delta_{\falseop}$.

Let
$\mathtt{eqbool} \eqdef \lambda x . \lambda y . \ifkw \; x \; {(\ifkw
  \;y\; \true\; \false)}\; {(\ifkw \;y \; \false\; \true)}$, and we
abbreviate $\mathtt{eqbool}\; M N$ as $M=N$, for all
$M, N \colon \boolT$.  One cannot write
$(\lambda x_{\boolT} . x=x) (\mathtt{randbool})$ directly, because
$\mathtt{randbool}$ has type $D \boolT$, not $\boolT$.  Sampling from
a distribution must be done through $\dokw$, and we have (at least)
two choices.  The term:
\begin{align*}
  \dokw {b_{\boolT} \leftarrow \mathtt{randbool}};\; {\retkw ((\lambda
  x_{\boolT} . x=x) b)}
\end{align*}
first samples, obtaining a random Boolean value $b$, which is then
passed to $\lambda x_{\boolT} . x=x$.  This term has semantics
$\delta_{\trueop}$; in other words, it is a probabilistic process that
returns $\trueop$ with probability one.  Or we may defer sampling
as follows:
\begin{align*}
  (\lambda x_{D \boolT} . \dokw {b_1 \leftarrow x}; \dokw {b_2
  \leftarrow x}; \retkw (b_1=b_2)) \mathtt{randbool}.
\end{align*}
The semantics of that term is $\frac 1 2 \delta_{\trueop} + \frac 1 2
\delta_{\falseop}$: the probability that the two independent Boolean
values $b_1$ and $b_2$ are equal is only $1/2$.

\subsection{Rejection sampling}
\label{sec:rejection-sampling}

Let us imagine that you have a probability measure $\mu$ on a space
$X$, let $A$ be a measurable subset of $X$ such that $\mu (A) \neq 0$.
Then there is a conditional probability measure $\mu_A$, defined by
$\mu_A (E) \eqdef \mu (E) / \mu (A)$ for every measurable subset $E$
of $A$.  Rejection sampling allows one to draw an element at random in
$A$ with probability $\mu_A$, by drawing an element in $X$ with
probability $\mu$ until we find one in $A$.  This terminates with
probability $1$, and in fact in $1/\mu (A)$ trials on average.

Let $\tau$ be any type.  
We consider the following ISPCF term:
\begin{align*}
  \mathtt{rej}
  & \eqdef \lambda p_{D\tau} . \lambda sel_{\tau \to D\boolT} . \\
  & \qquad\reckw (\lambda r_{D\tau} .
    \dokw {x_\tau \leftarrow p 
    }; \\
  & \qquad\qquad\qquad\quad \dokw {b_{\boolT} \leftarrow sel\; 
    x 
    }; \\
  & \qquad\qquad\qquad\quad \ifkw \;{b 
    } \;{(\retkw
    x
    )} \; {r
    }).
\end{align*}
Intuitively, $\mathtt{rej}$ takes a distribution $p$ as input, a
selection function $sel$, and draws $x$ at random with probability $p$
until $sel\; x$ returns $\trueop$.  We allow $sel$ to return an object
of type $D\boolT$, not $\boolT$, in order to allow $sel$ to do some
random computations as well.

For every environment $\rho$, $\Eval {\mathtt{rej}} \rho$ is a
function that takes a minimal valuation $\mu$ (bound to the variable
$p$) and a selection function $s \in \Eval {\tau \to D\boolT}$ (bound
to $sel$) to a minimal valuation $\nu_{|sel}$, which we now elucidate.
By definition, $\nu_{|sel}$ is the least minimal valuation such that
$\nu_{|sel} = \Phi (\nu_{|sel})$, where:
\begin{align*}
  \Phi (\nu) & = \Eval {\dokw {x \leftarrow p}; \dokw {b \leftarrow
             sel\;x};
             \ifkw\; b \; {(\retkw x)} \;r}
             \rho [p \mapsto \mu, sel \mapsto s, r \mapsto \nu]
\end{align*}
For every selection function $s \in \Eval {\tau \to D\boolT}$, for
every $x \in \Eval \tau$, we may write $s (x)$ as
$s_{\trueop} (x) . \delta_{\trueop} + s_{\falseop} (x) .
\delta_{\falseop} + s_{(1, \bot)} (x) . \delta_{(1, \bot)} + s_{(2,
  \bot)} . \delta_{(2, \bot)} + s_\bot (x) . \delta_\bot$.  For
simplicity, let us assume that
$s_\bot (x)=s_{(1, \bot)}(x)=s_{(2, \bot)}(x)=0$.  In other words, the
probability of $s (x)$ returning a non-terminating Boolean is zero.
Expanding the semantics shows that $\Phi (\nu)$ is the continuous
valuation such that, for every $U \in \Open {\Eval \tau}$,
\begin{align*}
  \Phi (\nu) (U) & =
                   \int_{x \in \Eval \tau} (s_{\trueop} (x) . \chi_U (x)
                   + s_{\falseop} (x) . \nu (U)) d\mu.
\end{align*}


Given any space $X$, any continuous valuation $\nu$ on $X$, and any
continuous map $g \colon X \to \creal$, there is a continuous
valuation $g \cdot \nu$ on $X$ defined by
$(g \cdot \nu) (U) \eqdef \int_{x \in X} g (x) \chi_U (x) d\nu$, for
every $U \in \Open X$.  This formula characterizes $g \cdot \nu$ as
the continuous valuation obtained from $\nu$ by applying the
\emph{density function} $g$; a notation that is sometimes used for
$g \cdot \nu$ is $g d\nu$.


This abbreviation allows us to rewrite $\Phi (\nu) (U)$ as follows.
The value $(s_{\falseop} \cdot \mu) (\Eval \tau)$ is the total mass
$\int_{x \in \Eval \tau} s_{\falseop} (x) d \mu$ of
$s_{\falseop} \cdot \mu$ on $\Eval \mu$.
\begin{align*}
  \Phi (\nu) & = s_{\trueop} \cdot \mu + (s_{\falseop} \cdot \mu)
               (\Eval \tau) . \nu.
\end{align*}
Let us solve the equation $\nu = \Phi (\nu)$ for $\nu$.  One may
compute the least solution as $\sup_{n \in \nat} \Phi^n (0)$, where
$0$ denotes the zero valuation, but it is easier to solve the equation
directly.

If $(s_{\falseop} \cdot \mu) (\Eval \tau) < 1$, then there are at most
two solutions to the equation $\nu = \Phi (\nu)$.  The smaller one is:
\begin{align*}
  \frac 1 {1 - (s_{\falseop} \cdot \mu) (\Eval \tau)} s_{\trueop} \cdot \mu,
\end{align*}
and this is the value of $\Eval {\mathtt{rej}} \rho (\mu) (s)$.  When
$0 < (s_{\falseop} \cdot \mu) (\Eval \tau) < 1$, there is another
solution, and that is the continuous valuation that maps every
non-empty open subset of $\Eval \tau$ to $+\infty$, the \emph{largest}
continuous valuation on $\Eval \tau$.  (Yes, that is a minimal
valuation, since it is simply the directed supremum of all simple
valuations whatsoever.)

If $(s_{\falseop} \cdot \mu) (\Eval \tau) = 1$, then we are typically
in the case of sampling with respect to a set of measure zero.  The
sitation is more complicated in general, since
$s_{\falseop} + s_{\trueop}$ may not be equal to $1$: the condition
$(s_{\falseop} \cdot \mu) (\Eval \tau) = 1$ imposes no constraint on
$s_{\trueop} \cdot \mu$.

The general case where $(s_{\falseop} \cdot \mu) (\Eval \tau) \geq 1$
is probably even less intuitive.  Let
$A \eqdef (s_{\falseop} \cdot \mu) (\Eval \tau)$.  The equation
$\nu = \Phi (\nu)$ reduces to $\nu = s_{\trueop} \cdot \mu + A.\nu$.
The largest continuous valuation is always a solution, but there is
smaller one.  For every open subset $U$ of $\Eval \tau$, the possible
values $x$ for $\nu (U)$ satisfy
$x = (s_{\trueop} \cdot \mu) (U) + A.x$: if
$(s_{\trueop} \cdot \mu) (U) = 0$, then the least value for $x$ is $0$
(this is the only possible value if $A > 1$; if $A=1$, every element
of $\creal$ is a possible value); otherwise, the only possible value
for $x$ is $+\infty$.
This suggests that the least solution $\nu$ to
$\nu = \Phi (\nu)$ is defined by:
\begin{align*}
  \nu (U) & = \left\{
            \begin{array}{ll}
              0 & \text{if }(s_{\trueop} \cdot \mu) (U)=0 \\
              +\infty & \text{otherwise.}
            \end{array}
            \right.
\end{align*}
This is just the continuous valuation
$(+\infty).(s_{\trueop} \cdot \mu)$.  It is a minimal valuation by
Lemma~\ref{lemma:VXm:drag}.  Hence the desired semantics
$\Eval {\mathtt{rej}} \rho$ is $(+\infty).(s_{\trueop} \cdot \mu)$ if
$(s_{\falseop} \cdot \mu) (\Eval \tau) \geq 1$.


In the special case where $s$ is deterministic, namely that
$s_{\trueop} = \chi_U$ and $s_{\falseop} = \chi_V$ for two disjoint
open subsets $U$ and $V$ of $\Eval \tau$, then $s_{\trueop} \cdot \mu$
maps every open set $W$ to $\mu (U \cap W)$, and is therefore equal to
the restriction $\mu_{|U}$ of $\mu$ to $U$, as introduced by Heckmann
\cite{heckmann96}.  In that case, $\Eval {\mathtt{rej}} \rho$ is equal
to $\frac 1 {1 - \mu (V)} \mu_{|U}$ if $\mu (V) < 1$, and to
$(+\infty).\mu_{|U}$ otherwise.

\subsection{Using $\score$}
\label{sec:using-score}


Using $\score$ instead of recursion, we can implement rejection
sampling by the following ISPCF term:
\begin{align*}
  \mathtt{rej}'
  & \eqdef \lambda p_{D\tau} . \lambda sel_{\tau \to D\boolT} . \\
  & \qquad\dokw {x_\tau \leftarrow p 
    }; \\
  & \qquad \dokw {b_{\boolT} \leftarrow sel\; 
    x 
    }; \\
  & \qquad\quad \score (\ifkw \; b \;\underline {1.0}\;\underline {0.0}); \\
  & \qquad\quad \retkw \; x.
\end{align*}
Let $\mu \in \Eval {D \tau}$, $s \in \Eval {\tau \to D \boolT}$.  We
again assume that $s (x)$ as
$s_{\trueop} (x) . \delta_{\trueop} + s_{\falseop} (x) .
\delta_{\falseop}$ for every $x \in \Eval \tau$.  Then, for every
$W \in \Open {\Eval \tau}$,
\begin{align*}
  \Eval {\mathtt{rej}'} \rho (\mu) (s) (W)
  & = \int_{x \in X} s_{\trueop} (x) \chi_W (x)
    d\mu,
\end{align*}
so that $\Eval {\mathtt{rej}'} \rho (\mu) (s)$ is simply
$s_{\trueop} \cdot \mu$.

In other words, $\Eval {\mathtt{rej}'} \rho (\mu) (s)$ computes the
valuation obtained from $\mu$ by applying the density function
$s_{\trueop}$.  When $s_{\trueop} = \chi_U$, this is $\mu_{|U}$.
Hence we retrieve the same result as for $\Eval {\mathtt{rej}} \rho$,
up to a renormalization factor, which may be infinite.

It is practical to abbreviate the form
$\score (\ifkw \; b \;\underline {1.0}\;\underline {0.0})$, where
$b \colon \boolT$, as $\mathtt{observe}\; b$.  This is a familiar
operation, implemented in Anglican \cite{WvdMM:Anglican}, as
a derived operator in Hakaru \cite{SR:disin}, or
as the third argument of $\mathtt{lex\text{-}query}$
clauses in Church \cite{GMRBT:Church}.

\subsection{Generating normal distributions}
\label{sec:box-muller-algorithm}

Let us imagine that we do not have a term $\sample [\mathcal N (0,
1)]$ for the normal (Gaussian) distribution $\mathcal N (0, 1)$ with
mean $0$ and standard deviation $1$.  We will give several different
possible implementations in ISPCF of well-known algorithms.

For every measure map $f \colon \real \to \creal$, $\mathcal N (0, 1)$
is such that:
\begin{align*}
  \int_{x \in \real} f (x) d\mathcal N (0, 1)
  & = \int_{x \in \real} f (x) \sqrt {\frac 1 {2\pi}} e^{-\frac 1 2
    x^2} dx.
\end{align*}
This is a measure defined from Lebesgue measure $\lambda$ on $\real$
by applying the density map
$x \mapsto \sqrt {\frac 1 {2\pi}} e^{-\frac 1 2 x^2}$.  As such, we
can implement it as follows, assuming a term $\sample [\lambda]$, as
well as primitives for multiplication, division, exponential $\expkw$,
square root $\sqrtkw$, and various numerical constants, all with their
intended semantics.  We agree that product is written with an infix
$\times$, and we remember that $M; N$ abbreviates $\dokw {x
  \leftarrow M}; N$, where $x$ is a fresh variable.
\begin{align*}
  \mathtt{normal}
  & \eqdef \dokw {x_{\realT} \leftarrow \sample [\lambda]}; \\
  & \qquad \score (\sqrtkw (\underline {0.5} / \underline\pi)
    \times \expkw (\underline{-0.5} \times x \times x)); \retkw x.
\end{align*}
This is the original purpose of $\score$: to define a distribution
from another one through a density function.  We check that this gives
us the intended result.  For every environment $\rho$, for every open
subset $W$ of $\IR_\bot$,
\begin{align*}
  \Eval {\mathtt{normal}} \rho (W)
  & = \int_{\mathbf x \in \IR_\bot}
    |\overline g (\mathbf x)| . \delta_{\mathbf x} (W)
    d i [\lambda]
\end{align*}
where
$\overline g (\mathbf x) \eqdef \Eval {\sqrtkw (\underline {0.5} /
  \underline\pi) \times \expkw (\underline{-0.5} \times x \times x)}
\rho [x \mapsto \mathbf x]$; we check that
$\overline g ([x,x]) = [g (x), g (x)]$ where
$g (x) \eqdef \sqrt {\frac 1 {2\pi}} e^{-\frac 1 2 x^2}$.  By the
change-of-variables formula, and the fact that $\delta_{[x,x]} (W) =
\chi_{i^{-1} (W)} (x)$,
\begin{align*}
  \Eval {\mathtt{normal}} \rho (W)
  & = \int_{x \in \real}
    g (x) . \chi_{i^{-1} (W)} (x)
    d \lambda
    = \mathcal N (0, 1) (i^{-1} (W)).
\end{align*}
Therefore $\Eval {\mathtt{normal}} \rho = i [\mathcal N (0, 1)]$.

Another way of implementing $\mathcal N (0, 1)$ is the
\emph{Box-Muller algorithm} \cite{boxmuller58}, which produces pairs
of two independent normally distributed variables, from two
independent uniformly distributed variables in $[0, 1]$.

In order to implement this in ISPCF, we assume a term
$\sample [\lambda_1]$, where $\lambda_1$ is the uniform measure on
$[0, 1]$.  We also assume we have primitives $\underline\sin$,
$\underline\cos$, $\underline\log$, and $\sqrtkw$, as well as
addition, subtraction, opposite, and multiplication, with the expected
semantics.  We consider the following ISPCF term of type
$D (\realT \times \realT)$.  We use the shorthand $\letbe {x=u} v$ for
$(\lambda x . v) u$.
\begin{align*}
  \mathtt{box\_muller}
  & \eqdef \dokw {x_{\realT} \leftarrow \sample [\lambda_1]}; \\
  & \qquad \dokw {y_{\realT} \leftarrow \sample [\lambda_1]}; \\
  & \qquad \letbe {c_{\realT} = \underline\cos (\underline{2.0} \times
    \underline\pi \times y)} {} \\
  & \qquad \letbe {s_{\realT} = \underline\sin (\underline{2.0} \times
    \underline\pi \times y)} {} \\
  & \qquad \letbe {m_{\realT} = \sqrtkw (\underline{-2.0} \times
    \underline\log\; x)} {} \\
  & \qquad \qquad \retkw (\langle m \times c, m \times s\rangle).
\end{align*}
This is a non-recursive definition, and as a consequence,
$\Eval {\mathtt{box\_muller}} \rho$ is a probability distribution on
$\IR_\bot \times \IR_\bot$ that is actually concentrated on the
subspace $\real \times \real$.  Explicitly, for every open subset $W$
of $\IR_\bot \times \IR_\bot$,
\begin{align*}
  \Eval {\mathtt{box\_muller}} \rho (W)
  & = \int_{\mathbf x \in \IR_\bot} \left(
    \int_{\mathbf y \in \IR_\bot}
    \delta_{f (\mathbf x, \mathbf y)} (W)
    d i [\lambda_1]
    \right)d i[\lambda_1],
\end{align*}
where $f (\mathbf x, \mathbf y)$ is
$\Eval {\letbe {c_{\realT} = \underline\cos (\underline{2.0} \times
    \underline\pi \times y)}
  { \cdots \inkw\;\langle m \times c, m \times s\rangle}}
\rho [x \mapsto \mathbf x, y \mapsto \mathbf y]$.  By the
change-of-variables formula, we have:
\begin{align*}
  \Eval {\mathtt{box\_muller}} \rho (W)
  & = \int_{x \in \real} \left(
    \int_{y \in \real}
    \delta_{f ([x,x], [y,y])} (W)
    d \lambda_1
    \right)d \lambda_1,
\end{align*}
and $f([x,x], [y,y]) = [g(x,y), g(x,y)]$, where
$g (x,y) = (\sqrt {-2 \log x} \cos (2\pi y), \allowbreak \sqrt {-2
  \log x} \sin (2\pi y))$.  It follows that $f ([x,x], [y,y])
\in W$ if and only if $g (x, y) \in i^{-1} (W)$, so that:
\begin{align*}
  \Eval {\mathtt{box\_muller}} \rho (W)
  & = \int_{x \in \real} \left(
    \int_{y \in \real}
    \chi_{i^{-1} (W)} (g (x, y))
    d \lambda_1
    \right)d \lambda_1,
\end{align*}
We now use the familiar argument subtending the Box-Muller algorithm,
which we recapitulate for completeness.  For every measurable map
$h \colon \real^2 \to \creal$ (the integrals are ordinary Lebesgue
integrals),
\begin{align*}
  & \int_{(s, t) \in \real^2} h (s, t) d \mathcal N (0, 1) \otimes
  \mathcal N (0, 1) \\
  & = \int_{(s, t) \in \real^2} \frac 1 {2\pi} h (s, t)
    e^{-\frac 1 2 (s^2+t^2)} ds\; dt \\
  & = \int_{r \in \realp, \theta \in [0, 2\pi]} \frac 1 {2\pi} h (r
    \cos \theta, r \sin \theta) e^{-\frac 1 2 r^2} r \;dr\;d\theta
  \\
  & \text{using the change of variables $s=r\cos\theta$,
    $t=r\sin\theta$} \\
  & = \int_{x, y \in [0, 1]} h (\sqrt {-2 \log x} \cos (2\pi y),
    \sqrt {-2 \log x} \sin (2 \pi y))
    \; dx\; dy
  \\
  & \text{by letting $x \eqdef e^{-\frac 1 2 r^2}$, $y \eqdef
    \theta/(2\pi)$}.
\end{align*}
By taking $h \eqdef \chi_{i^{-1} (W)}$, it follows that
$(\mathcal N (0, 1) \otimes \mathcal N (0, 1)) (i^{-1} (W)) = \Eval
{\mathtt{box\_muller}} \rho (W)$.  Therefore $\mathtt{box\_muller}$
implements the product $\mathcal N (0, 1) \otimes \mathcal N (0, 1)$,
in the sense that:
\begin{align*}
  \Eval {\mathtt{box\_muller}} \rho & = i [\mathcal N (0, 1) \otimes \mathcal N (0, 1)].
\end{align*}

\begin{remark}
  \label{rem:boundedsupp}
  The valuation $\lambda_1$ has compact support $[0, 1]$.  Since only
  continuous maps can be defined in ISPCF, it might seem intuitive
  that any continuous valuation defined by ISPCF terms that only use
  $\sample [\lambda_1]$ as sampler should also have compact support,
  enforcing a necessary limitation on the kind of distributions that
  one can define.  The example of the term $\mathtt{box\_muller}$ shows
  that this is not the case.  We will see another example in
  Section~\ref{sec:gener-lebesg-meas}.
\end{remark}

A more efficient variant of this algorithm combines it with rejection
sampling, and is due to Marsaglia and Bray \cite{marsaglia64}:
\begin{align*}
  \mathtt{box\_muller'}
  & \eqdef \dokw {\langle x_{\realT}, y_{\realT}\rangle \leftarrow
    \mathtt{rej}\;(\mathtt{prod}_{\realT,\realT}\;\mathtt{L}\;\mathtt{L})\;\mathtt{discp}};
  \\
  & \qquad \letbe {u_{\realT} = x\times x + y \times y} {} \\
  & \qquad \letbe{m_{\realT} = \sqrtkw (\underline{-2.0} \times
    \underline\log\;u / u)} {} \\
  & \qquad \quad \retkw \langle m \times x, m \times y \rangle,
\end{align*}
where $\mathtt{prod}_{\sigma,\tau}$ implements the product of two
minimal valuations on $\Eval \sigma$ and $\Eval \tau$, $\mathtt{L}$
implements the uniform measure on $[-1, 1]$ (namely, half of the
restriction of Lebesgue measure to $[-1, 1]$), and $\mathtt{discp}$
tests whether a point lies on the unit disc:
\begin{align*}
  \mathtt{prod}_{\sigma,\tau}
  & \eqdef \lambda p_{D \sigma} . \lambda q_{D \tau} . \\
  & \qquad \dokw {x_\sigma \leftarrow p};  \dokw {y_\tau \leftarrow
    q}; \retkw \langle x, y\rangle \\
  \mathtt{L}
  & \eqdef \dokw {x_{\realT} \leftarrow \sample [\lambda_1]};
    \retkw (\underline{2.0} \times x - \underline{1.0}) \\
  \mathtt{discp}
  & \eqdef \lambda \langle x_{\realT}, y_{\realT}\rangle .
    \retkw (\poskw (\underline {1.0} - x \times
    x - y \times y).
\end{align*}
We also use the convenient abbreviation
$\lambda \langle x_\sigma, y_\tau\rangle. M$ for
$\lambda p_{\sigma \times \tau} . \letbe {x_\sigma = \pi_1 p} {}
\allowbreak \letbe {y_\tau = \pi_2 p} {M}$, where $p$ is not free in
$M$, and similarly for
$\dokw {\langle x_\sigma, y_\tau\rangle \leftarrow M}; N$.

\begin{remark}
  \label{rem:prod}
  We could have defined $\mathtt{prod}_{\sigma,\tau}$ as
  $\lambda p_{D \sigma} . \lambda q_{D \tau} .  \dokw {y_\tau
    \leftarrow q}; \dokw {x_\sigma \leftarrow p}; \retkw \langle x,
  y\rangle$ instead.  Since the monad of minimal valuations is
  commutative, this would not make any semantical difference.
\end{remark}


We let the reader check that $\Eval {\mathtt{box\_muller'}} \rho$ is
also equal to $i [\mathcal N (0, 1), \mathcal N (0, 1)]$.  We only
give a sketch of an argument.  The semantics of
$\mathtt{rej}\;(\mathtt{prod}_{\realT,\realT}\;\mathtt{L}\;\mathtt{L})\;\mathtt{discp}$
is $\frac 1 {1-i [\mu] (\mathcal V)} i [\mu]_{|\mathcal U}$, where:
\begin{itemize}
\item $\mu$ is the uniform measure on the square
  $[-1, 1] \times [-1, 1]$;
\item $\mathcal U$ is the Scott-open set of pairs
  $([a,b], [c,d]) \in \IR^2$ such that
  $[x,y] \eqdef [a,b] \times [a,b] + [c, d] \times [c, d]$ satisfies
  $y < 1$; in particular, $U \eqdef i^{-1} (\mathcal U)$ is the open
  unit disc $\{(x, y) \in \real^2 \mid x^2+y^2 < 1\}$;
\item $\mathcal V$ is a similar Scott-open set, with the property that
  $V \eqdef i^{-1} (\mathcal V)$ is the complement of the closed unit
  disc, namely $V = \{(x, y) \in \real^2 \mid x^2+y^2 > 1\}$.
\end{itemize}
Then $i [\mu] (\mathcal V) = \mu (V) = 1 - \pi/4 = 1 - \mu (U)$, so
that the semantics of
$\mathtt{rej}\;(\mathtt{prod}_{\realT,\realT}\;\mathtt{L}\;\mathtt{L})\;\mathtt{discp}$
is $i [\mu']$, where $\mu'$ is the uniform measure on the (open) unit
disc.  The rest of the argument follows similar lines as with the
original Box-Muller algorithm.

Here is a fourth and final implementation of $\mathcal N (0, 1)$.  Let
$\mathtt{box\_muller'}'$ be obtained from $\mathtt{box\_muller'}$ by
replacing $\mathtt{rej}$ by $\mathtt{rej}'$.  The semantics of
$\mathtt{rej}'\;(\mathtt{prod}_{\realT,\realT}\;\mathtt{L}\;\mathtt{L})\;\mathtt{discp}$
is equal to $1-i [\mu] (\mathcal V) = \pi/4$ times the semantics of
$\mathtt{rej}\;(\mathtt{prod}_{\realT,\realT}\;\mathtt{L}\;\mathtt{L})\;\mathtt{discp}$.
It follows that
$\Eval {\mathtt{box\_muller'}'} \rho = \frac \pi 4 \; i [\mathcal N (0,
1), \mathcal N (0, 1)] = i \left[\frac \pi 4 \mathcal N (0, 1), \frac
  \pi 4 \mathcal N (0, 1)\right]$.  This yields the same result as
before, up to a renormalization factor.

\subsection{Generating Lebesgue measure on $\real$}
\label{sec:gener-lebesg-meas}

We have noticed that one can define measures with unbounded support
from just $\sample [\lambda_1]$, although $\lambda_1$ has bounded
support (see Remark~\ref{rem:boundedsupp}).  We now show that we can
define measures with unbounded support and whose total mass is
infinite as well.  This requires the use of $\score$.  What may be
intriguing is that $\score$ multiplies the current `probability' by a
real number different from $+\infty$.  Still, this can be used to
increase the total mass from $1$ to $+\infty$.

Hence, we imagine that the term $\sample [\lambda_1]$ is available,
where $\lambda_1$ is Lebesgue measure on $[0, 1]$, but that
$\sample [\lambda]$ is not, where $\lambda$ is Lebesgue measure on
$\real$.  Assuming constants $\underline\pi$, $\underline{1.0}$,
$\underline{2.0}$, $\underline\tan$, $\underline\arctan$, and some
constants for addition, subtraction, multiplication and division, we
define:
\begin{align*}
  \mathtt{uniform}
  & \eqdef \lambda a_{\realT} . \lambda b_{\realT} .
    \dokw {x_{\realT} \leftarrow \sample [\lambda_1]}; \retkw
    {(a+(b-a)\times x)} \\
  \mathtt{lebesgue}
  & \eqdef \dokw {z_{\realT} \leftarrow \mathtt{uniform}
    (-\underline\pi/\underline{2.0}) (\underline\pi/\underline{2.0})};
  \\
  & \qquad \score (\underline{1.0} / (\underline {1.0} + z \times z));
  \\
  & \qquad \retkw (\underline\arctan\; z).
\end{align*}
We claim that $\mathtt{lebesgue}$ fits the bill.  Let
$\upsilon_{[a, b]}$ denote the uniform probability measure on the
interval $[a, b]$; up to the embedding $i \colon \real \to \IR_\bot$,
this is what $\mathtt{uniform}$ applied to $a$ and $b$ computes.  Then
we use a monotonic, differentiable homeomorphism $g$ from $\real$ to
the open interval $]a, b[$ in order to transport that to the whole of
$\real$.  (One may observe that $]a, b[$ differs from $[a, b]$, but
that will not matter, since the complement of $]a, b[$ in $[a, b]$ has
$\upsilon_{[a, b]}$-measure $0$.)  We have chosen
$g (u) \eqdef \arctan u$, $a = -\pi/2$, $b = \pi/2$ here.

Let us check our claim formally; this is a paper on semantics, and we
need to check that our semantics produces the intended result.  First,
for every environment $\rho$, for all $a, b \in \real$ such that
$a < b$, $\Eval {\mathtt{uniform}} \rho (a) (b)$ is the minimal
valuation $(\lambda x . \delta_{a+(b-a)x)})^\dagger (i [\lambda_1])$.
For every open subset $U$ of $\IR_\bot$, therefore,
\begin{align*}
  \Eval {\mathtt{uniform}} \rho (a) (b) (U)
  & = \int_{\mathbf u \in \IR_\bot} \delta_{i(a) + (i (b) - i (a)) \times
    \mathbf u)} (U) d i [\lambda_1] \\
  & = \int_{u \in \real} \delta_{(i (a) + (i (b) - i (a)) \times i
    (u))} (U) d
    \lambda_1 \\
  & \text{by the change-of-variables formula} \\
  & = \int_{u \in \real} \delta_{i (a+(b-a)u)} (U) d \lambda_1 \\
  & = \int_0^1 \chi_{i^{-1} (U)} (a+(b-a)u) du \\
  & \text{using the identity }\delta_c (U) = \chi_U (c) \\
  & = \frac 1 {b-a} \int_a^b \chi_{i^{-1} (U)} (v) dv \\
  & \text{by letting }v \eqdef a+(b-a)u \\
  & = \upsilon_{[a,b]} (i^{-1} (U)).
\end{align*}
Therefore $\Eval {\mathtt{uniform}} \rho (a) (b) = i [\upsilon_{[a,b]}]$.

Next, we evaluate $\Eval {\mathtt{lebesgue}} \rho (U)$ for every open
subset $U$ of $\IR_\bot$ as follows.  Let us abbreviate the term
$\score (\underline{1.0} / (\underline {1.0} + z \times z)); \retkw
(\underline\arctan\; z)$ as $M (z)$.  For every $u \in \real$, for
every open subset $U$ of $\IR_\bot$, letting
$\nu_u \eqdef \Eval {M(z)} \rho [z \mapsto i (u)]$, we have the
following, where $g (u) \eqdef \arctan u$, and therefore $g' (u) = 1 / (1+u^2)$:
\begin{align*}
  \nu_u (U)
  & = (\lambda \_ . \delta_{g (u)})^\dagger (g' (u) \delta_*) (U) \\
  & = \int_{\_ \in \Eval {\unitT}} \delta_{g (u)} (U) d g' (u) \delta_* \\
  & = g' (u) \chi_{i^{-1} (U)} (g (u)).
\end{align*}
Therefore,  
\begin{align*}
  \Eval {\mathtt{lebesgue}} \rho (U)
  & = \int_{\mathbf u \in \IR_\bot} \Eval {M (z)} \rho [z \mapsto
    \mathbf u] (U) di [\upsilon_{[-\pi/2, \pi/2]}] \\
  & = \int_{u \in \real} \Eval {M (z)} \rho [z \mapsto i (u)] (U) d
    \upsilon_{[-\pi/2, \pi/2]} \\
  & \text{by the change-of-variables formula} \\
  & = \int_{u \in \real} \nu_u (U) d \upsilon_{[-\pi/2, \pi/2]} \\
  & = \int_{u \in \real} g' (u) \chi_{i^{-1} (U)} (g (u))
    d \upsilon_{[-\pi/2, \pi/2]} \\
  & = \int_{-\pi/2}^{\pi/2} g' (u) \chi_{i^{-1} (U)} (g (
    u)) du \\
  & = \int_{-\infty}^{+\infty} \chi_{i^{-1} (U)} (t) dt \\
  & \text{by letting }t \eqdef g (u) \\
  & = \lambda (i^{-1} (U)).
\end{align*}
Therefore, as promised, $\Eval {\mathtt{lebesgue}} \rho$ is the image
measure $i [\lambda]$ of Lebesgue measure $\lambda$ on $\real$.

\subsection{Generating exponential distributions}
\label{sec:gener-expon-distr}

Generating an exponential distribution is another classical example.
The novelty is that this is defined from Lebesgue measure $\lambda$ by
a density function that is \emph{not} continuous.  Although ISPCF can
only express continuous functions, we show that this will not prevent
us from defining the exponential distribution.

For simplicity, we consider the exponential distribution with
parameter $1$.  The density map is:
\begin{align*}
  g (x) & \eqdef \left\{
          \begin{array}{ll}
            e^{-x} & \text{if } x \geq 0 \\
            0 & \text{otherwise}
          \end{array}
          \right.
\end{align*}
We note that $g$ is not continuous at $0$.  However, we have the
following candidate ISPCF term, assuming the relevant primitives:
\begin{align*}
  \mathtt{exp\_density}
  & \eqdef \lambda x_{\realT} . \ifkw\; {(\poskw \;
    x)}\;(\underline{\exp} (-x)) \; \underline{0.0}.
\end{align*}
For every environment $\rho$, we see that
$\overline g \eqdef \Eval {\mathtt{exp\_density}}$ maps $\bot$ to
$\bot$, and every interval $[a, b]$ to $[e^{-b}, e^{-a}]$ if $a > 0$,
to $[0,0]$ if $b < 0$, and to $\bot$ otherwise.  In particular,
$\Eval {\mathtt{exp\_density}} (i (a))$ coincides with $i (g (a))$ for
all points of continuity of $g$, namely for every $a \neq 0$.

The most obvious way to implement the exponential distribution is by
using $\score$.  We assume a term $\sample [\lambda]$, where $\lambda$
is Lebesgue measure on $\real$.
\begin{align*}
  \mathtt{expo}
  & \eqdef \dokw {x_{\realT} \leftarrow \sample [\lambda]};
    \score (\mathtt{exp\_density}\;x); \retkw x.
\end{align*}
For every open subset $W$ of $\IR_\bot$, we compute:
\begin{align*}
  \Eval {\mathtt{expo}} \rho (W)
  & = \int_{\mathbf x \in \IR_\bot}
    |\overline g (\mathbf x)| . \delta_{\mathbf x} (W)
    d i [\lambda] \\
  & = \int_{x \in \real}
    |\overline g (i (x))| . \delta_{i (x)} (W) d \lambda
  & \text{by the change-of-variables formula} \\
  & = \int_{x \in \real} g (x) . \chi_{i^{-1} (W)} (x) d\lambda.
\end{align*}
The last line is justified by the fact that
$\delta_{i (x)} (W) = \chi_{i^{-1} (W)} (x)$, and more importantly, by
the fact that $|\overline g (i (x))|$ and $g (x)$ coincide for every
$x \in \real$ except on a set of Lebesgue measure $0$.  (Namely, when
$x=0$, in which case $g (x) = 1$ and $|\overline g (i (x))| = 0$.)

It follows that $\Eval {\mathtt{expo}} \rho$ is exactly
$i [g \cdot \lambda]$, the image by $i$ of the exponential
distribution $g \cdot \lambda$.

Another way of implementing the exponential distribution is as:
\begin{align*}
  \mathtt{expo}'
  & \eqdef \dokw {x_{\realT} \leftarrow \sample [\lambda_1]};
    \retkw (-\underline{\log}\; x),
\end{align*}
assuming a term $\sample [\lambda_1]$ for sampling along the Lebesgue
measure $\lambda_1$ on $[0, 1]$.  We check that this also gives the
correct result.  For every open subset $W$ of $\IR_\bot$,
\begin{align*}
  \Eval {\mathtt{expo}'} \rho (W)
  & = \int_{\mathbf x \in \IR_\bot} \delta_{-\Eval {\underline{\log}} \rho
    (\mathbf x)} (W) di [\lambda_1] \\
  & = \int_0^1 \chi_W (-\log x) dx,
\end{align*}
by the change-of-variables formula, the fact that integrating along
$\lambda_1$ means integrating on $[0, 1]$, and the equality
$\delta_a (W) = \chi_W (a)$.  We now let $x \eqdef e^{-t}$.
The latter integral is then equal to $\int_0^{+\infty} \chi_W (t)
e^{-t} dt = (g \cdot \lambda) (W)$.  Therefore $\Eval {\mathtt{expo}'}
\rho = \Eval {\mathtt{expo}} \rho = i [g \cdot \lambda]$.

A final implementation of the exponential distribution, due to von
Neumann, consists in simulating a distribution with density
$(1-x) + (\frac {x^2} {2!} - \frac {x^3} {3!}) + \cdots + (\frac
{x^{2n}} {(2n)!} - \frac {x^{2n+1}} {(2n+1)!}) + \cdots = e^{-x}$.  We
will leave the verification that the term $\mathtt{von\_neumann}$
below again has the same semantics $i [g \cdot \lambda]$ as
$\mathtt{expo}$ and $\mathtt{expo}'$, as an exercise.  We first define:
\begin{align*}
  \mathtt{longest\_decreasing\_run}
  & \eqdef \reckw (\lambda f_{\realT \to \intT \to D \intT} . \lambda x_{\realT} . \lambda n_{\intT} . \\
  & \qquad\qquad \dokw {u \leftarrow \sample [\lambda_1]}; \\
  & \qquad\qquad \ifkw \; {(\poskw (u-x))}\; \\
  & \qquad\qquad\qquad \retkw n \\
  & \qquad\qquad\qquad (f\;u\; (n+\underline 1)))
\end{align*}
Given any $x \in [0, 1]$,
$\Eval {\mathtt{longest\_decreasing\_run}} \rho (x) (0)$ returns the
largest $n$ such that $x > u_1 > u_2 > \cdots > u_n$ ($\leq u_{n+1}$)
for randomly uniformly distributed real numbers $u_1$, $u_2$, \ldots,
in $[0, 1]$.  It is a standard exercice to show that
$\Eval {\mathtt{longest\_decreasing\_run}} \rho (x) (0)$ is the
distribution on $\Z_\bot$ giving probability
$\frac {x^n} {n!} - \frac {x^{n+1}} {(n+1)!}$ to each $n \in \nat$,
and probability $0$ to all other elements.  Drawing a value $n$ at
random with respect to that distribution, the probability that $n$ is
even is
$\sum_{n \text{ even}} (\frac {x^n} {n!} - \frac {x^{n+1}} {(n+1)!}) =
e^{-x}$.  Assuming a term
$\underline{\text{odd}} \colon \intT \to \boolT$ testing whether its
argument is odd, one then defines:
\begin{align*}
  \mathtt{von\_neumann}
  & \eqdef \dokw {x_{\realT} \leftarrow \sample [\lambda_1]}; \\
  & \qquad \reckw (\lambda f_{\realT \to D \realT} . \lambda
    \ell_{\realT} . \\
  & \qquad\qquad \ifkw\; (\underline{\text{odd}}
    (\mathtt{longest\_decreasing\_run}\;x\;0)) \\
  & \qquad\qquad\qquad f\;(\ell+\underline{1.0}) \\
  & \qquad\qquad\qquad \retkw \ell
    ) x
\end{align*}
The $\reckw$ expression in the middle maps every real number $\ell$
to $\ell+m$, where $m$ is the average number of calls to
$\mathtt{longest\_decreasing\_run}\;x\;0$ it takes it to return an even
number.

We direct the reader to \cite{FDWM:expo} for an explanation and some
faster algorithms based on the same principle.  See also
\cite{Forsythe:expo} for a generalization to densities of the form
$e^{-G (x)}$, including normal distributions.

\subsection{Distributions on higher-order objects}
\label{sec:distr-high-order}

All our previous examples were about building distributions on simple
types such as $\realT$ or $\realT \times \realT$.  One may be
interested in generating distributions on functions, or even on types
of distributions themselves.

Let us start with the following problem.  We are given a countably
infinite family of distributions on some type $\tau$, as the value of
a parameter $F \colon \intT \to D \tau$, and we wish to find a random
function $f$ of type $\intT \to \tau$, such that $f (0)$, $f (1)$,
\ldots, are independently distributed according to the distributions
$F (0)$, $F (1)$, \ldots, respectively.  Formally, we look for a
distribution of type $D (\intT \to \tau)$ over those functions.

At first sight, this looks like an infinite-dimensional generalization
of the product distribution term $\mathtt{prod}_{\sigma,\tau}$
introduced earlier.  For example, we may think of writing:
\begin{align*}
  \mathtt{wrong\_infinite\_prod}_\tau
  & \eqdef \reckw (\lambda P_{(\intT \to D\tau) \to D (\intT \to \tau)}
    .
    \lambda F_{\intT \to D \tau} . \\
  &\qquad \dokw {x_\tau \leftarrow \headkw\;F}; \\
  &\qquad \dokw {rest_{\intT \to \tau} \leftarrow P (\tailkw\; F)}; \\
  &\qquad\qquad \retkw (x :: rest)),
\end{align*}
where $\headkw F \eqdef F (\underline 0)$ extracts the first element
of the sequence encoded by $F$,
$\tailkw F \eqdef \lambda n_{\intT} . F (n + \underline 1)$ extracts
the remaining elements, and
$x :: rest \eqdef \lambda n_{\intT} . \ifkw {(n=\underline 0)} \; x \;
(rest\;(n-\underline 1))$ adds $x$ to the front of $rest$.  However,
and as the lack of base case for the recursion may hint of, this is
wrong: the semantics of $\mathtt{wrong\_infinite\_prod}_\tau$ is
merely the zero valuation; that program never terminates.

In fact, our problem has no solution.  The reason is that the total
mass of what we want to compute is undefined (and this undefinedness
does not just apply to our domain-theoretic semantics, but to all the
semantics we know of).  We are given infinitely many distributions
$\mu_0$, $\mu_1$, \ldots, $\mu_n$, \ldots{} on the same space
$X \eqdef \Eval \tau$, as input.  Their infinite product,
if it exists, has total mass $\prod_{n=0}^{+\infty} a_n$, where
$a_n \eqdef \mu_n (X)$ is the total mass of $\mu_n$.  That only makes
sense if the infinite product $\prod_{n=0}^{+\infty} a_n$ converges.
For example, it makes non sense if $a_n=1/2$ for $n$ odd and $a_n=2$
for $n$ even.  Such cases are easy to build using $\score$.

Let us consider a slightly easier problem, where all the distributions
$\mu_n$ are constrained to be the same distribution $\mu$; we wish to
compute the infinite product $\mu^\infty$ of $\mu$ with itself.  Now
this simplified problem still does not have any solution (unless
$\Eval \tau$ is empty), and this is now particular to our
domain-theoretic semantics.  Indeed, let us imagine that the map
$f \colon \mu \mapsto \mu^\infty$ were computable, hence
Scott-continuous, and let $x$ be any fixed element of $\Eval \tau$.
Then the map $a \in [0, 1] \mapsto (a \delta_x)^\infty (\Eval \tau)$
would be Scott-continuous.  But that function maps $1$ to $1$, and all
other elements of $[0, 1]$ to $0$, and is therefore not
Scott-continuous.

Instead, we will show how one can answer our problem in special cases.
The first special case we consider is to build the product
$\lambda_1^\infty$ of a countably infinite number of copies of
Lebesgue measure $\lambda_1$ on $[0, 1]$.  By definition,
$\lambda_1^\infty$ is the unique continuous valuation on the countable
topological product $[0, 1]^\nat$ whose image valuation
$\pi_S [\lambda_1^\infty]$ onto $[0, 1]^S$ is the (finite) product
valuation $\prod_{n \in S} \lambda_1$, for every finite subset $S$ of
$\nat$.  We write $\pi_S \colon [0, 1]^\nat \to [0, 1]^S$ for
projection onto the coordinates in $S$.  This exists and is unique by
general theorems \cite[Theorem~5.3]{JGL:kolmogorov}.

We embed $[0, 1]^\nat$, and more generally $\real^\nat$, into
$\Eval {\intT \to \realT} = [\Z_\bot \to \IR_\bot]$ as follows.  There
is a map $j \colon \real^\nat \to \Eval {\intT \to \realT}$ defined by
$j (\vec x) (n) \eqdef i (x_n)$ for every
$\vec x \eqdef {(x_n)}_{n \in \nat}$ in $[0, 1]^\nat$ and every
$n \in \nat$, where $i$ is the usual embedding of $\real$ into
$\IR_\bot$, restricted to $[0, 1]$; and $j (\vec x) (n) \eqdef \bot$
for every $n \in \{-1, -2, \cdots\} \cup \{\bot\}$.
\begin{lemma}
  \label{lemma:j}
  The Scott topology of $\Eval {\intT \to \realT}$ has a subbase of
  open sets of the form
  $[n \in V] \eqdef \{f \in [\Z_\bot \to \IR_\bot] \mid f (n) \in
  V\}$, where $n$ ranges over $\Z_\bot$ and $V$ ranges over the open
  subsets of $\IR_\bot$.

  The map $j$ is a topological embedding of $\real^\nat$ into
  $\Eval {\intT \to \realT}$.
\end{lemma}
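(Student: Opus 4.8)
The plan is to establish the two assertions in turn, the second building on the first. For the subbase claim, the easy inclusion is that every $[n\in V]$ is Scott-open, being the preimage of the Scott-open set $V$ under the Scott-continuous evaluation map $f\mapsto f(n)\colon [\Z_\bot\to\IR_\bot]\to\IR_\bot$; hence finite intersections of such sets are Scott-open too. Call $\tau'$ the topology they generate; we must show every Scott-open $\mathcal U$ is $\tau'$-open. Fix $f\in\mathcal U$. The idea is to write $f$ as a directed supremum of ``finitely supported step functions'' way below it: for a finite $F\subseteq\Z$ and elements $a_\bot\ll f(\bot)$, $a_n\ll f(n)$ $(n\in F)$ chosen with $a_\bot\le a_n$, let $g$ send $\bot$ to $a_\bot$, each $n\in F$ to $a_n$, and every other $m$ to $a_\bot$; this $g$ is monotone, hence lies in $[\Z_\bot\to\IR_\bot]$. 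Because $\IR_\bot$ is a continuous dcpo, each $\ddarrow f(n)$ is directed with supremum $f(n)$, so the collection of such $g$ is directed and its supremum is $f$; moreover a short cofinality argument (using that suprema in $[\Z_\bot\to\IR_\bot]$ are computed pointwise) shows each such $g$ is way below $f$.

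Since $\mathcal U$ is Scott-open, some such $g$ lies in $\mathcal U$. Then $\mathcal N\eqdef [\bot\in\uuarrow a_\bot]\cap\bigcap_{n\in F}[n\in\uuarrow a_n]$ is a finite intersection of $\tau'$-subbasic opens — here I use once more that $\IR_\bot$ is continuous, so each $\uuarrow a$ is Scott-open. It contains $f$ (because $a_\bot\ll f(\bot)$ and $a_n\ll f(n)$), and it is contained in $\upc g$, hence in $\mathcal U$: any $h\in\mathcal N$ satisfies $h(\bot)\ge a_\bot$, $h(n)\ge a_n$ for $n\in F$, and $h(m)\ge h(\bot)\ge a_\bot=g(m)$ off $F$, so $h\ge g\in\mathcal U$. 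Thus $\mathcal U$ is a union of such $\mathcal N$'s and is $\tau'$-open. (Conceptually this is just the fact that $\Z_\bot$ is locally compact, hence exponentiable in $\TOP$, and that its only compact saturated subsets are $\Z_\bot$ itself and the finite subsets of $\Z$; but the argument above is self-contained.)

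For the embedding claim, $j$ is injective since $j(\vec x)(n)=[x_n,x_n]$ recovers $x_n$ for every $n\in\nat$. Continuity follows from the subbase: $j^{-1}([m\in V])$ equals $\pi_m^{-1}(i^{-1}(V))$ when $m\in\nat$, which is open because $i\colon\real\to\IR_\bot$ is continuous, and equals $\real^\nat$ or $\emptyset$ when $m\in\{-1,-2,\ldots\}\cup\{\bot\}$. That $j$ is open onto its image need only be checked on the subbasic opens $\pi_n^{-1}(W)$ of $\real^\nat$, since $j$ is injective and therefore commutes with arbitrary unions and intersections of subsets. As $i$ is a topological embedding, there is a Scott-open $\mathcal W\subseteq\IR_\bot$ with $i^{-1}(\mathcal W)=W$, and then $j(\pi_n^{-1}(W))=[n\in\mathcal W]\cap j(\real^\nat)$; distributing over unions and finite intersections shows $j$ carries every open subset of $\real^\nat$ to a relatively open subset of $j(\real^\nat)$. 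Hence $j$ is a homeomorphism onto its image.

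The one genuinely non-routine point is the subbase claim, specifically the reduction from the infinite coordinate set $\Z$ to a finite one; this is exactly where the continuous-dcpo structure of $\IR_\bot$ (equivalently, the exponentiability of $\Z_\bot$) enters, through the way-below relation. Everything else is bookkeeping with facts already available: that $i\colon\real\to\IR_\bot$ is a topological embedding, and that the product topology on $\real^\nat$ has the evident projection subbase.
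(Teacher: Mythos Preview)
Your proof is correct. For the second claim (that $j$ is a topological embedding) your argument is essentially the same as the paper's: compute $j^{-1}([n\in V])$ to get continuity, and use that $i$ is an embedding to recover each subbasic $\pi_n^{-1}(W)$ as $j^{-1}([n\in\mathcal W])$.

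The genuine difference is in the first claim. The paper does not argue directly: it observes that $\Z_\bot$ and $\IR_\bot$ are bc-domains, then cites a known result (Proposition~11.2 of \cite{jgl-jlap14}, or Proposition~II-4.6 of \cite{gierz03}) to the effect that for bc-domains the Scott topology on $[X\to Y]$ coincides with the topology of pointwise convergence, which is exactly the topology with subbase $\{[n\in V]\}$. Your route is instead a self-contained construction: approximate $f$ by finitely-supported step functions $g$, pick one inside $\mathcal U$, and build a $\tau'$-basic neighborhood $\mathcal N\subseteq\upc g\subseteq\mathcal U$. This is the standard hands-on proof of that cited result specialized to the present case. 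It buys independence from the literature at the cost of a paragraph of work; the paper's citation buys brevity. One small remark: your observation that each such $g$ is way below $f$ is true but not actually used afterwards --- step~3 (``some $g\in\mathcal U$'') follows already from directedness and $\sup g=f$, and step~4 only needs $\mathcal N\subseteq\upc g$ and upward closure of $\mathcal U$. You could drop that sentence without loss.
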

\begin{proof}
  $\Z_\bot$ and $\IR_\bot$ are bc-domains, and therefore the Scott
  topology on $[\Z_\bot \to \IR_\bot]$ coincides with the topology of
  pointwise convergence \cite[Proposition~11.2]{jgl-jlap14}.  The
  latter is the topology generated by the indicated subbasic open sets
  $[n \in V]$, by definition.  This can also be deduced from
  Proposition~II-4.6 of \cite{gierz03}.

  The inverse image $j^{-1} ([n \in V])$ is equal to the subbasic open
  set $\pi_n^{-1} (U) \eqdef \{\vec x \in \real^\nat \mid x_n \in U\}$
  (where $U \eqdef i^{-1} (V)$) if $n \in \nat$, is empty if
  $n \not\in \nat$ and $\bot \not\in V$, and is the whole of
  $\real^\nat$ otherwise.  Therefore $j$ is continuous.

  Finally, every subbasic open set $\pi_n^{-1} (U)$ ($n \in \nat$, $U$
  open in $\real$) is equal to $j^{-1} ([n \in V])$, where $V$ is any
  open subset of $\IR_\bot$ such that $i^{-1} (V) = U$ (which exists
  since $i$ is an embedding).  Hence $j$ is a topological embedding.
\end{proof}
We will build a term of type $D (\intT \to \realT)$ whose semantics is
$j [\lambda_1^\infty]$---a continuous valuation that now makes sense;
this will be stated in Theorem~\ref{thm:lambda1:inf}.

The idea is simple.  We draw a real number at random using
$\lambda_1$, we write it in binary, cut (or dice) its sequence of bits
into a countably infinite partition of subsequences of bits, and we
reassemble (or splice) each subsequence into a new real number in
$[0, 1]$.  The dicing operation is far from being continuous, but this
will not matter; the reason why will be given in
Remark~\ref{rem:bin:unfold}.

Let us introduce the mathematical objects we will need.  Let
$\upsilon$ be the uniform measure on infinite sequences of bits; this
is the unique Borel measure on $\{0, 1\}^\nat$ (where $\{0, 1\}$ has
the discrete topology) such that
$\upsilon (\{s \in \{0, 1\}^\nat \mid s_0=b_0, \cdots,
s_{k-1}=b_{k-1}\}) = 1/2^k$ for all $k \in \nat$ and elements
$b_0, \ldots, b_{k-1} \in \{0, 1\}$.  The easiest way to show its
existence is as the image measure of Lebesgue measure by a suitable
map.
\begin{lemma}
  \label{lemma:upsilon}
  The function $\bin \colon \real \to \{0, 1\}^\nat$ that maps
  every number in $[0, 1[$ to the string of bits in its binary
  expansion, every negative number to the all zero string $0^\omega$,
  and every number larger than or equal to $1$ to the all one string
  $1^\omega$ is measurable.  The image measure $\bin [\lambda_1]$ is
  the unique measure $\upsilon$ on $[0, 1]$ such that
  $\upsilon (\{s \in \{0, 1\}^\nat \mid s_0=b_0, \cdots,
  s_{k-1}=b_{k-1}\}) = 1/2^k$ for all $k \in \nat$ and elements
  $b_0, \ldots, b_{k-1} \in \{0, 1\}$.
\end{lemma}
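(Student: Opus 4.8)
The plan is to establish the statement in three stages: well-definedness and measurability of $\bin$, computation of $\bin[\lambda_1]$ on the basic cylinders, and uniqueness via a $\pi$-system argument. First I would pin $\bin$ down unambiguously on $[0,1[$ by declaring its $i$-th bit to be $s_i (x) \eqdef \lfloor 2^{i+1} x\rfloor \bmod 2$, i.e.\ the terminating binary expansion; this resolves the only genuine ambiguity, namely the two expansions of a dyadic rational, and makes $\bin$ a bona fide function. Each coordinate map $s_i \colon \real \to \{0,1\}$ is then measurable, since $s_i^{-1}(\{1\})$ is a finite union of half-open dyadic intervals inside $[0,1[$ together with $[1,+\infty[$ (the bits being constantly $0$ on $]-\infty,0[$ and constantly $1$ on $[1,+\infty[$). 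Because the Borel $\sigma$-algebra of $\{0,1\}^\nat$ (with $\{0,1\}$ discrete, equivalently the topology of Cantor space) is generated by the cylinders $C_{b_0,\ldots,b_{k-1}} \eqdef \{s \mid s_0 = b_0, \ldots, s_{k-1} = b_{k-1}\}$, and $\bin = (s_0, s_1, \ldots)$ is their joint tuple, $\bin$ is measurable.

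Next I would compute $\bin^{-1}(C_{b_0,\ldots,b_{k-1}})$ explicitly. Setting $m \eqdef \sum_{i=0}^{k-1} b_i 2^{k-1-i}$, the condition $s_0(x)=b_0,\ldots,s_{k-1}(x)=b_{k-1}$ for $x\in[0,1[$ is equivalent to $\lfloor 2^k x\rfloor = m$, i.e.\ to $x \in [m/2^k,(m+1)/2^k[$; adjoining the negative reals when all $b_i=0$ and $[1,+\infty[$ when all $b_i=1$, the preimage is $]-\infty,1/2^k[$, $[(2^k-1)/2^k,+\infty[$, or $[m/2^k,(m+1)/2^k[$ in the three respective cases. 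In every case its intersection with $[0,1]$ has Lebesgue length $1/2^k$, so $\bin[\lambda_1](C_{b_0,\ldots,b_{k-1}}) = \lambda_1(\bin^{-1}(C_{b_0,\ldots,b_{k-1}})) = 1/2^k$; taking $k=0$ shows in particular that $\bin[\lambda_1]$ is a probability measure.

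For uniqueness I would invoke the $\pi$-system result recalled in the Preliminaries: the cylinders form a $\pi$-system (the intersection of two cylinders is empty or again a cylinder) generating the Borel $\sigma$-algebra of $\{0,1\}^\nat$, and any measure $\upsilon$ with the stated values is, like $\bin[\lambda_1]$, a probability measure, hence $\sigma$-finite; two $\sigma$-finite measures agreeing on such a generating $\pi$-system coincide, so $\upsilon = \bin[\lambda_1]$. The only point requiring care in the whole argument is the non-uniqueness of binary expansions at dyadic points; this affects the \emph{value} of $\bin$ only on the countable, hence $\lambda_1$-null, set of dyadics, so it is immaterial for the image measure, and the terminating convention makes $\bin$ well-defined in the first place. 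I do not expect any deep obstacle here — the work is entirely in bookkeeping the three boundary cases of the preimage computation correctly.
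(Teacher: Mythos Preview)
Your proposal is correct and follows essentially the same route as the paper: compute $\bin^{-1}$ on cylinder sets explicitly (with the same three-case analysis and the same resulting intervals), read off $\lambda_1$-measure $1/2^k$, and conclude uniqueness from the $\pi$-system lemma. The only cosmetic difference is that you establish measurability via the coordinate maps $s_i$ before computing cylinder preimages, whereas the paper goes straight to the cylinder preimages; since you compute those preimages anyway, the two arguments coincide.
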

\begin{proof}
  For every finite string $b \eqdef (b_0, b_1, \cdots, b_{k-1})$ of
  bits, let $U_b$ be the set of infinite strings of bits having $s$ as
  a prefix.  The sets $U_b$ form a countable base of the topology on
  $\{0, 1\}^\nat$, and $\bin^{-1} (U_b)$ is the interval
  $[\sum_{i=0}^{k-1} b_i/2^{i+1}, \sum_{i=0}^{k-1} b_i/2^{i+1} +
  1/2^k[$ if not all bits $b_i$ are equal to $0$ or to $1$,
  $]-\infty, 1/2^k[$ if all the bits $b_i$ are equal to $0$, and to
  $[1-1/2^k, +\infty[$ otherwise.  Every interval is in the Borel
  $\sigma$-algebra, and every open subset of $\{0, 1\}^\nat$ is a
  countable union of sets $U_s$, so the inverse image of any open
  subset of $\{0, 1\}^\nat$ by $\bin$ is Borel.  It follows that
  $\bin$ is measurable.  Then
  $\bin [\lambda_1] (U_b) = \lambda_1 (\bin^{-1} (U_b)) = \lambda
  (\bin^{-1} (U_s) \cap [0, 1])$ is equal to $1/2^k$.  Finally, the
  sets $U_b$ form a $\pi$-system, whence the claim of uniqueness.
\end{proof}

The function $\bin$ is not continuous on the whole of $\real$, hence
not implementable.  Its restriction to the set of non-dyadic numbers
in $[0, 1]$, is, though, and we make this clearer by defining a
continuous map $\bin' \colon \IR_\bot \to \{0, 1\}_\bot^\nat$ that
implements $\bin$ on the non-dyadic numbers in $[0, 1]$.  Roughly,
$\bin'$ is implemented as follows.  Given an argument value of type
$\realT$, if the value is smaller than $1/2$ (which one can test by
using $\posop$), then emit a zero bit and multiply the value by $2$;
if that value is larger than $1/2$, then emit a one bit, subtract
$1/2$ from the value and multiply the result by $2$; then collect all
the emitted bits into a sequence, which may be infinite, or finite (if
computation ever gets stuck).  Formally, we define
$\bin' (\underline a)_k$ by induction on $k$ as follows.  If
$\underline a = \bot$, then $\bin' (\underline a)_k$ is $\bot$.  If
$\underline a = [a, b]$ and $k=0$, then $\bin' ([a, b])_0$ is defined
as $0$ if $b < 1/2$, $1$ if $a > 1/2$, and as $\bot$ otherwise.  If
$\underline a = [a, b]$ and $k \geq 1$, then $\bin' ([a, b])_k$ is
defined as $\bin' (\underline c)_{k-1}$ where $\underline c$ is equal
to $[2a, 2b]$ if $b < 1/2$, $[2a-1, 2b-1]$ if $a > 1/2$, and $\bot$
otherwise.  One checks easily that $\bin'$ is continuous.

\begin{remark}
  \label{rem:bin:unfold}
  Let $E_0$ be the collection of all points that are either dyadic or
  outside $[0, 1]$.  This is a set of $\lambda_1$-measure zero,
  because there are only countably many dyadic numbers and $\lambda_1$
  is supported on $[0, 1]$.  Since $\bin$ and $\bin'$ coincide outside
  $E_0$, the image measures of $\lambda_1$ by each of those two
  functions is the same.  (Formally, $j' [\bin [\lambda_1]]$ and
  $\bin' [i [\lambda_1]]$ coincide, for any environment $\rho$, and
  where $j'$ is the obvious topological embedding
  $\{0, 1\}^\nat \to \{0, 1\}_\bot^\nat$.)
\end{remark}

\begin{lemma}
  \label{lemma:num}
  The map $\num \colon \{0, 1\}^\nat \to \real$ that sends $s$ to
  $\sum_{i \in \nat} s_i/2^{i+1}$ is continuous, and the image measure
  $\num [\upsilon]$ is equal to $\lambda_1$.
\end{lemma}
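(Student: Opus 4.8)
The plan is to establish continuity of $\num$ by a direct uniform-convergence estimate, and then to identify $\num[\upsilon]$ with $\lambda_1$ by reusing Lemma~\ref{lemma:upsilon} rather than redoing a $\pi$-system argument from scratch.

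First, for continuity: each coordinate map $s \mapsto s_i$ is continuous from $\{0,1\}^\nat$ (with the product topology) to the discrete space $\{0,1\}$, hence $s \mapsto s_i/2^{i+1}$ is continuous into $\real$, and the series $\num(s) = \sum_{i\in\nat} s_i/2^{i+1}$ converges uniformly, being dominated by $\sum_i 1/2^{i+1} = 1$. A uniform limit of continuous real-valued maps is continuous, so $\num$ is continuous; concretely, whenever $s$ and $t$ agree on their first $k$ coordinates, $|\num(s)-\num(t)| \le \sum_{i\ge k} 2^{-(i+1)} = 2^{-k}$. Being continuous, $\num$ is Borel measurable, so $\num[\upsilon]$ is a well-defined Borel measure on $\real$, in fact supported on $[0,1]$.

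For the second claim, I would use that $\upsilon = \bin[\lambda_1]$ by Lemma~\ref{lemma:upsilon}. From the definition of image measures, $\num[\upsilon] = \num[\bin[\lambda_1]] = (\num\circ\bin)[\lambda_1]$, since $(\num\circ\bin)^{-1}(E) = \bin^{-1}(\num^{-1}(E))$ for every Borel $E$. Now I would observe that $\num\circ\bin$ agrees with the identity map on $\real$ outside a $\lambda_1$-null set. Indeed, for every $x \in [0,1[$, whichever binary expansion $\bin$ assigns to $x$ (the only subtlety being that dyadic numbers have two), the defining identity $\num(\bin(x)) = \sum_i \bin(x)_i/2^{i+1} = x$ holds; moreover $\num(\bin(1)) = \num(1^\omega) = 1$. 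So $\num\circ\bin$ and the identity can differ only on $(-\infty,0)\cup(1,+\infty)$, which is disjoint from $[0,1]$ and hence has $\lambda_1$-measure $0$.

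Finally, I would invoke the elementary fact that two Borel-measurable maps $\real \to \real$ coinciding outside a $\lambda_1$-null Borel set induce the same image measure: for any Borel $E$, the symmetric difference of their preimages of $E$ is contained in that null set, so the preimages have equal $\lambda_1$-measure. Applying this to $\num\circ\bin$ and the identity yields $(\num\circ\bin)[\lambda_1] = \mathrm{id}[\lambda_1] = \lambda_1$, hence $\num[\upsilon] = \lambda_1$, as claimed. I do not anticipate a genuine obstacle; the only point requiring a little care is the dyadic double-representation built into $\bin$, which is harmless precisely because $\num$ returns the same real number for either expansion.
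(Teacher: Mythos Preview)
Your proof is correct and follows essentially the same approach as the paper. For continuity, you use uniform convergence where the paper argues directly that cylinder neighborhoods map into $\epsilon$-balls, but the underlying estimate $|\num(s)-\num(t)|\le 2^{-k}$ is the same; for the measure identification, both you and the paper invoke Lemma~\ref{lemma:upsilon} to write $\upsilon=\bin[\lambda_1]$ and then use that $\num\circ\bin$ is the identity on $[0,1]$, the paper phrasing this as $(\num\circ\bin)^{-1}(E)\cap[0,1]=E\cap[0,1]$ while you phrase it as agreement off a $\lambda_1$-null set.
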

\begin{proof}
  Let $U$ be any open subset of $\real$, and $s \in \num^{-1} (U)$.
  For some $\epsilon > 0$, the interval
  $]\num (s)-\epsilon, \num (s)+\infty[$ is included in $U$.  We pick
  $k \in \nat$ so that $1/2^k < \epsilon$.  Then the set of elements
  $t \in \{0,1\}^\nat$ such that $t_0=s_0$, \ldots, $t_k=s_k$ is an
  open neighborhood of $s$ that is included in $\num^{-1} (U)$,
  showing that $\num^{-1} (U)$ is open.  Therefore $\num$ is
  continuous.

  It is easy to see that $\num (\bin (x)) = x$ for every
  $x \in [0, 1]$, so that $(\num \circ \bin)^{-1} (E)$ and $E$ have
  the same intersection with $[0, 1]$, for every subset $E$ of
  $\real$.  Hence, for every Borel measurable subset $E$ of $\real$,
  and using Lemma~\ref{lemma:upsilon},
  $\num [\upsilon] (E) = \num [\bin [\lambda_1]] (E) = \lambda ((\num
  \circ \bin)^{-1} (E) \cap [0, 1]) = \lambda (E \cap [0, 1]) =
  \lambda_1 (E)$.
\end{proof}

We assume an bijective computable pairing map
$\langle \_, \_\rangle \colon \nat \times \nat \to \nat$.  A practical
one is the map that interleaves the two binary representations of the
numbers in argument, namely
$\langle \sum_{i \in \nat} a_i 2^i, \sum_{j \in \nat} b_j 2^j \rangle
\eqdef \sum_{i \in \nat} a_i 2^{2i} + \sum_{j \in \nat} b_j 2^{2j+1}$.

Given a random string of bits $s$, the strings
$s [m] \eqdef {(s_{\langle m, n\rangle})}_{n \in \nat}$, $m \in \nat$,
are themselves random and independent.  We express this by the
following well-known lemma.  In order to state it, for
every $s \in \{0, 1\}^\nat$, let $\upc s$ (``shift $s$'') be the
string such that
$(\upc s)_{\langle m, n \rangle} = s_{\langle m+1,n\rangle}$ for all
$m, n \in \nat$.  In other words, $(\upc s) [m] = s [m+1]$ for every
$m \in \nat$.

Computing $s [m]$ from $s$ and $m$ is easy, considering that pairing
$\langle \_, \_, \rangle$ is computable.

The spaces $\{0, 1\}^\nat$ and $(\{0,1\}^\nat)^2$ are compact
Hausdorff, and therefore locally compact and sober, in particular
LCS-complete.  We recall that, in that case, every continuous
valuation on those spaces extends to a Borel measure
\cite[Theorem~1.1]{DGJL-isdt19}.  The extension is unique for bounded
continuous valuations such as $\upsilon$ or
$\upsilon \otimes \upsilon$, because the open sets form a $\pi$-system
that generates the Borel $\sigma$-algebra.  The spaces $\{0, 1\}^\nat$
and $(\{0,1\}^\nat)^2$ are also second-countable, and then by
Adamski's theorem \cite[Theorem~3.1]{Adamski:measures} every measure
restricts to a continuous valuation on the open sets.  In other words,
bounded measures and bounded continuous valuations are in bijective
correspondence on those spaces.  Hence we can trade freely between the
notions of (bounded) continuous valuations and (bounded) measures on
those spaces.
\begin{lemma}
  \label{lemma:split}
  The map $split \colon s \mapsto (s [0], \upc s)$ is a homeomorphism
  of $\{0,1\}^\nat$ onto $(\{0,1\}^\nat)^2$.  The image continuous
  valuation (resp., measure) $split [\upsilon]$ is the product
  valuation (resp., product measure) $\upsilon \otimes \upsilon$.
\end{lemma}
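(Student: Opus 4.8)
The plan is to establish the topological claim first, then to deduce the measure identity from it by a $\pi$-system computation, and finally to transfer that identity to continuous valuations.

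First I would note that $split$ is just the reindexing homeomorphism of a product space induced by a bijection of index sets. Since $\langle\_,\_\rangle$ is a bijection, the sets $R_0 \eqdef \{\langle 0, n\rangle \mid n \in \nat\}$ and $R_{\geq 1} \eqdef \{\langle m, n\rangle \mid m \geq 1,\ n \in \nat\}$ partition $\nat$; moreover $n \mapsto \langle 0, n\rangle$ is a bijection $\nat \to R_0$ and $\langle m, n\rangle \mapsto \langle m+1, n\rangle$ a bijection $\nat \to R_{\geq 1}$. Thus $\{0,1\}^\nat \cong \{0,1\}^{R_0} \times \{0,1\}^{R_{\geq 1}} \cong \{0,1\}^\nat \times \{0,1\}^\nat$, and one checks that under these identifications the resulting homeomorphism is precisely $s \mapsto (s[0], \upc s)$: the first factor records the coordinates of $s$ in $R_0$, that is, $(s_{\langle 0,n\rangle})_n = s[0]$, and the $\langle m,n\rangle$-th coordinate of the second factor is $s_{\langle m+1,n\rangle} = (\upc s)_{\langle m,n\rangle}$. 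A reindexing of a product by a bijection of the index set is always a homeomorphism (each coordinate on either side depends on a single coordinate of the other), so $split$ is a homeomorphism, with inverse $(t,u) \mapsto s$ where $s_{\langle 0,n\rangle} = t_n$ and $s_{\langle m+1,n\rangle} = u_{\langle m,n\rangle}$.

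Next I would prove $split[\upsilon] = \upsilon \otimes \upsilon$ as Borel measures on the compact Hausdorff space $(\{0,1\}^\nat)^2$. Both are bounded, so by Dynkin's lemma it suffices to check agreement on a $\pi$-system generating the Borel $\sigma$-algebra; I take the cylinder rectangles $C_b \times C_{b'}$ (together with $\emptyset$), where $C_b \eqdef \{s \mid s_0 = b_0, \dots, s_{k-1} = b_{k-1}\}$ for a finite bit string $b = (b_0,\dots,b_{k-1})$, which form a countable base of the topology. Since $\upsilon$ is the Bernoulli$(1/2)$ product measure (the unique bounded measure with the property in Lemma~\ref{lemma:upsilon}), we have $\upsilon(\{s \mid s_j = c_j \text{ for } j \in J\}) = 2^{-|J|}$ for every finite $J \subseteq \nat$ and every assignment $(c_j)_{j \in J}$. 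Now $split^{-1}(C_b \times C_{b'}) = \{s \mid s_{\langle 0,n\rangle} = b_n \text{ for } n < k,\ s_{\langle m+1,n\rangle} = b'_{\langle m,n\rangle} \text{ for } \langle m,n\rangle < k'\}$; the first block of constraints pins down $k$ coordinates, all in $R_0$, the second pins down $k'$ coordinates, all in $R_{\geq 1}$, and these $k+k'$ coordinates are pairwise distinct, by injectivity of $\langle\_,\_\rangle$ and the disjointness $R_0 \cap R_{\geq 1} = \emptyset$. Hence $(split[\upsilon])(C_b \times C_{b'}) = 2^{-(k+k')} = \upsilon(C_b)\,\upsilon(C_{b'}) = (\upsilon \otimes \upsilon)(C_b \times C_{b'})$, as desired.

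Finally, the statement for continuous valuations follows: on the second-countable, compact Hausdorff — hence LCS-complete — spaces $\{0,1\}^\nat$ and $(\{0,1\}^\nat)^2$, bounded continuous valuations are exactly the restrictions to the open sets of bounded Borel measures, as recalled just before the lemma, and this correspondence is preserved by the homeomorphism $split$ and by passing to products; alternatively one invokes the uniqueness part of Proposition~\ref{prop:fubini:top}, observing that the same disjoint-coordinate count gives $(split[\upsilon])(U \times V) = \upsilon(U)\,\upsilon(V)$ for all open rectangles. The only genuinely delicate point is the bookkeeping with the pairing function in the middle step: one must be sure that the coordinates pinned down by the two factors are disjoint and that their counts simply add — which is exactly what the partition $\nat = R_0 \sqcup R_{\geq 1}$ and the injectivity of $\langle\_,\_\rangle$ deliver — and that $\upsilon$ assigns mass $2^{-|J|}$ to a cylinder over an arbitrary finite coordinate set $J$, not merely over an initial segment.
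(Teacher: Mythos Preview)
Your proof is correct, and the core computation---the preimage under $split$ of a cylinder rectangle is itself a cylinder, with the two factors' constraints landing in the disjoint blocks $R_0$ and $R_{\geq 1}$, so their counts add---is exactly the paper's. The technical packaging differs slightly: the paper works directly with continuous valuations, using the basic opens $U_{I,J}$ (specifying which finitely many coordinates are $0$ and which are $1$) and extending to all opens via modularity and Scott-continuity, whereas you work with measures first via Dynkin's lemma on the $\pi$-system of initial-segment cylinder rectangles, then transfer to valuations through the measure--valuation correspondence recalled just before the lemma. Both routes are sound; yours leans on standard measure-theoretic machinery and requires the auxiliary observation (which you flag) that $\upsilon$ gives mass $2^{-|J|}$ to cylinders over an arbitrary finite coordinate set $J$, while the paper's choice of $U_{I,J}$ builds that observation into the basic sets themselves.
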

\begin{proof}
  The first part is a direct consequence of the fact that the pairing
  map is bijective.  For the second part, we reason on continuous
  valuations, since that will be slightly easier.  For every pair of
  disjoint finite subsets $I$ and $J$ of $\nat$, let $U_{I,J}$ be the
  set of elements $s$ of $\{0,1\}^\nat$ such that $s_m=0$ for every
  $m \in I$ and $s_n=1$ for every $n \in J$.  We see that those sets
  form a base of the topology on $\{0, 1\}^\nat$, and that
  $\upsilon (U_{I,J}) = 1/2^{|I|+|J|}$, where $|\_|$ denotes
  cardinality, using the last part of Lemma~\ref{lemma:upsilon}, A
  base of the product topology on $(\{0, 1\}^\nat)^2$ is given by the
  sets $U_{I,J} \times U_{I',J'}$, and:
  \begin{align*}
    split [\upsilon] (U_{I,J} \times U_{I',J'})
    & = \upsilon (split^{-1} (U_{I,J} \times U_{I',J'})) \\
    & = \upsilon (U_{I'', J''}),
  \end{align*}
  where
  $I'' \eqdef \{\langle 0, k\rangle \mid k \in I\} \cup \{\langle
  m'+1,k\rangle \mid m', k \in \nat, \langle m', k \rangle \in I'\}$ and
  $J'' \eqdef \{\langle 0, k\rangle \mid k \in J\} \cup \{\langle
  n'+1,k\rangle \mid n', k \in \nat, \langle n', k \rangle \in J'\}$.
  Hence
  $split [\upsilon] (U_{I,J} \times U_{I',J'}) = 1/2^{|I''|+|J''|} =
  1/2^{|I|+|I'| + |J|+|J'|}$.  We also have:
  \begin{align*}
    (\upsilon \otimes \upsilon) (U_{I,J} \times U_{I',J'})
    & = \upsilon (U_{I,J}) . \upsilon (U_{I',J'})
      = 1/2^{|I|+|J|} . 1/2^{|I'|+|J'|},
  \end{align*}
  so $split [\upsilon]$ and $\upsilon \otimes \upsilon$ coincide on
  the basic open sets $U_{I,J}$.  Using modularity, they must coincide
  on any finite disjoint union of basic open sets, and by
  Scott-continuity, they must coincide on any arbitrary disjoint union
  of basic open sets.  (This is an argument we have sketched in the
  proof of Theorem~\ref{prop:fubini:top} already.)  Since any two
  basic open sets $U_{I,J}$ are either comparable or disjoint, every
  open subset of $\{0, 1\}^\nat$ is such a disjoint union, and this
  concludes the proof.
\end{proof}

\begin{corollary}
  \label{corl:split}
  For every $N \in \nat$,
  \begin{enumerate}
  \item the map
    $split_N \colon s \mapsto (s [0], s[1], \cdots, s[N-1], \upc^N s)$
    is a homeomorphism of $\{0,1\}^\nat$ onto $(\{0,1\}^\nat)^{N+1}$;
  \item the image measure $split_N [\upsilon]$ is the product measure
    $\underbrace{\upsilon \otimes \upsilon \otimes \cdots \otimes
      \upsilon}_{N+1}$;
  \item the map
    $\varpi_N \colon s \mapsto (s [0], s [1], \cdots, s [N-1])$ is
    continuous from $\{0, 1\}^\nat$ to $(\{0,1\}^\nat)^N$, and
    $\varpi_N [\upsilon]$ is the product measure
    $\underbrace{\upsilon \otimes \upsilon \otimes \cdots \otimes
      \upsilon}_N$.
  \end{enumerate}
\end{corollary}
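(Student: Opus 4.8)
The plan is to proceed by induction on $N$, reducing everything to Lemma~\ref{lemma:split} via the recursive identity
\[
  split_N = (\identity{\{0,1\}^\nat} \times split_{N-1}) \circ split,
\]
where we silently identify $(\{0,1\}^\nat) \times (\{0,1\}^\nat)^N$ with $(\{0,1\}^\nat)^{N+1}$. This identity holds because $(\upc s)[m] = s[m+1]$ for every $m \in \nat$: applying $split$ to $s$ yields $(s[0], \upc s)$, and then applying $split_{N-1}$ to the second component $\upc s$ produces $((\upc s)[0], \ldots, (\upc s)[N-2], \upc^{N-1}(\upc s)) = (s[1], \ldots, s[N-1], \upc^N s)$; concatenating with $s[0]$ gives exactly $split_N(s)$. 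The base case $N = 0$ is trivial: $split_0$ is the canonical homeomorphism $\{0,1\}^\nat \to (\{0,1\}^\nat)^1$ (the list $s[0], \ldots, s[N-1]$ is empty and $\upc^0 s = s$), $\varpi_0$ is the unique map to the one-point space $(\{0,1\}^\nat)^0$, and the corresponding image measures are $\upsilon$ and the (trivial) empty product.

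For item~1, each factor on the right of the displayed identity is a homeomorphism: $split$ by Lemma~\ref{lemma:split}, the map $\identity{\{0,1\}^\nat} \times split_{N-1}$ by the induction hypothesis together with the fact that a finite product of homeomorphisms is a homeomorphism, and the associativity isomorphism rearranging the product factors is always a homeomorphism. Hence $split_N$ is a homeomorphism. For item~2, I would push $\upsilon$ forward through the same identity: $split_N [\upsilon] = (\identity{\{0,1\}^\nat} \times split_{N-1}) [\,split [\upsilon]\,] = (\identity{\{0,1\}^\nat} \times split_{N-1}) [\upsilon \otimes \upsilon]$ by Lemma~\ref{lemma:split}, and then use that the image of a product measure under a product of (measurable) maps is the product of the image measures, combined with the induction hypothesis $split_{N-1} [\upsilon] = \underbrace{\upsilon \otimes \cdots \otimes \upsilon}_{N}$, to obtain $\underbrace{\upsilon \otimes \cdots \otimes \upsilon}_{N+1}$. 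Alternatively, since all the spaces involved are compact Hausdorff and second countable, so that bounded measures and bounded continuous valuations coincide on them, one can verify this equality directly on the basic clopen rectangles built from the sets $U_{I,J}$, exactly as in the proof of Lemma~\ref{lemma:split}, thereby avoiding any appeal to a general product-measure theorem.

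For item~3, observe that $\varpi_N = \pi \circ split_N$, where $\pi \colon (\{0,1\}^\nat)^{N+1} \to (\{0,1\}^\nat)^N$ is the projection onto the first $N$ coordinates; continuity of $\varpi_N$ then follows from item~1 and continuity of $\pi$. For the image measure, $\varpi_N [\upsilon] = \pi [\,split_N [\upsilon]\,] = \pi [\underbrace{\upsilon \otimes \cdots \otimes \upsilon}_{N+1}]$ by item~2, and the marginalization property of product measures (the pushforward of a product measure along a coordinate projection is the product of the retained factors) gives $\varpi_N [\upsilon] = \underbrace{\upsilon \otimes \cdots \otimes \upsilon}_{N}$.

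I expect the only delicate points to be bookkeeping rather than mathematics: keeping careful track of the associativity isomorphisms identifying $(\{0,1\}^\nat) \times (\{0,1\}^\nat)^N$ with $(\{0,1\}^\nat)^{N+1}$, and either citing or re-deriving on basic open sets the two standard facts that image measures commute with finite products of maps and that coordinate projections marginalize product measures. Once Lemma~\ref{lemma:split} is available, there is no genuine obstacle.
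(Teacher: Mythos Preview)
Your proposal is correct and follows essentially the same approach as the paper: induction on $N$ via the recursive identity $split_N = (\identity{\{0,1\}^\nat} \times split_{N-1}) \circ split$, with item~3 obtained from items~1 and~2 by composing with the projection onto the first $N$ coordinates. The paper's proof is terser (and even contains a typo, writing $\identity{\real}$ where $\identity{\{0,1\}^\nat}$ is meant), but the structure of the argument is identical to yours.
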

\begin{proof}
  We show 1 and 2 by induction on $N$.  This is obvious if $N=0$, and is by
  Lemma~\ref{lemma:split} if $N=1$.  If $N \geq 2$, we note that
  $split_N = (\identity {\real} \times split_{N-1}) \circ split$.
  Hence $split_N$ is a homeomorphism, using the induction hypothesis,
  and
  $split_N [\upsilon] = (\identity {\real} \times split_{N-1}) [split
  [\upsilon]] = (\identity {\real} \times split_{N-1}) [\upsilon
  \otimes \upsilon]$ (by Lemma~\ref{lemma:split})
  $= \upsilon \otimes split_{N-1} [\upsilon] = \upsilon \otimes
  \underbrace{\upsilon \otimes \upsilon \otimes \cdots \otimes
    \upsilon}_{N}$, by induction hypothesis.

  Item~3 follows from 1 and 2 by composing with the appropriate
  projection map.
\end{proof}

\begin{remark}
  \label{rem:split}
  It is also true that the image \emph{measure} $split [\upsilon]$ is
  the product \emph{measure} of $\upsilon$ by itself.  The proof is
  similar, and it suffices to observe that the sets $U_{I,J}$ form a
  $\pi$-system that generates the $\sigma$-algebra on $(\{0,1\}^\nat)^2$.
  A subtle point with that approach is that the latter is a product of
  $\{0,1\}^\nat$ with itself, certainly, but in which category?  In
  Lemma~\ref{lemma:split}, the product is taken in $\TOP$.  The
  measure-theoretic approach requires to work with the product taken
  in the category of measurable spaces and measurable maps.  In
  general, those two products differ.  The problem is similar to the
  issue discussed in Section~\ref{sec:fubini-theorem-what}.
  Fortunately, there is no such problem here, because the Borel
  $\sigma$-algebra of the topological product of two Polish spaces
  coincides with the product of the two $\sigma$-algebras.
\end{remark}

We assume an ISPCF constant
$\muxkw \colon \realT \to \intT \to \realT$ whose semantics $\muxop$
satisfies the following: for every $x \in \real$, for every
$m \in \nat$,
\begin{align*}
  \muxop (i (x)) (m) = i (\num' (\bin' (x) [m])),
\end{align*}
where $\bin'$ was introduced before Remark~\ref{rem:bin:unfold}, and
where $\num'$ is any continuous map from $\{0,1\}_\bot^\nat$ to
$\IR_\bot$ that extends $\num$, in the sense that
$\num' (s) = i (\num (s))$ for every $s \in \{0, 1\}^\nat$; for
example,
\begin{align}
  \label{eq:num'}
  \num' (s)
  & \eqdef
    \left\{
    \begin{array}{ll}
      i (\num (s)) & \text{if }s \in \{0, 1\}^\nat \\
      {} [x, x + 1/2^k]
                   & \text{where } x \eqdef \sum_{i=0}^{k-1}
                     s_i/2^{i+1}\\
                   & \text{if }s_0, \cdots, s_{k-1} \neq \bot,
                     s_k=\bot
    \end{array}
                     \right.
\end{align}

We are now ready to implement an ISPCF term computing
$\lambda_1^\infty$:
\begin{align*}
  \mathtt{rand\_uniform\_seq}
  & \eqdef \dokw {r_{\realT} \leftarrow \sample [\lambda_1]};
    \retkw (\muxkw\;r).
\end{align*}
This is a term of type $D (\intT \to \realT)$, and we claim that it
draws an infinite sequence of independent, uniformly distributed real
numbers in $[0, 1]$.

\begin{theorem}
  \label{thm:lambda1:inf}
  For every environment $\rho$, $\Eval {\mathtt{rand\_uniform\_seq}}
  \rho = j [\lambda_1^\infty]$.
\end{theorem}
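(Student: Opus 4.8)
The plan is to unfold the denotational semantics of $\mathtt{rand\_uniform\_seq}$, push everything through the embedding $i$ by the change-of-variables formula so that we are working with honest Borel measures, replace the discontinuous part of the computation by an almost-everywhere equal map, and finally chain together the earlier lemmas on $\bin$, $\num$ and the splitting maps. Concretely, using the semantics of $\dokw$, $\retkw$ and $\sample$, I would first note that $\Eval{\mathtt{rand\_uniform\_seq}}\rho = (\lambda\mathbf r\in\IR_\bot.\delta_{\muxop(\mathbf r)})^\dagger(i[\lambda_1])$, so that by \eqref{eq:dagger:def}, for every open $W$ of $\Eval{\intT\to\realT}$, $\Eval{\mathtt{rand\_uniform\_seq}}\rho(W) = \int_{\mathbf r\in\IR_\bot}\chi_W(\muxop(\mathbf r))\,di[\lambda_1]$. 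The change-of-variables formula applied to $i\colon\real\to\IR_\bot$ rewrites this as $\int_{r\in\real}\chi_W(\muxop(i(r)))\,d\lambda_1 = (\psi[\lambda_1])(W)$, where $\psi\colon r\mapsto\muxop(i(r))$ is Borel measurable, being the restriction of the continuous map $\muxop\circ i$. Hence $\Eval{\mathtt{rand\_uniform\_seq}}\rho$, as a valuation on the opens, is the restriction of the Borel measure $\psi[\lambda_1]$, and it suffices to show that $\psi[\lambda_1]$ and $j[\lambda_1^\infty]$ agree as Borel measures on $\Eval{\intT\to\realT}$.

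Next I would introduce the continuous map $\theta\colon\{0,1\}^\nat\to[0,1]^\nat$ defined by $\theta(s)\eqdef{(\num(s[m]))}_{m\in\nat}$; it is continuous because each $s\mapsto s[m]$ is continuous and $\num$ is continuous by Lemma~\ref{lemma:num}. From the defining property of $\muxop$, together with the construction of $\bin'$ before Remark~\ref{rem:bin:unfold} and the fact that $\num'$ extends $\num$, one reads off that $\psi(r) = j(\theta(\bin(r)))$ for every $r$ lying outside the set $E_0$ consisting of the dyadic reals together with the reals outside $[0,1]$, and $E_0$ is $\lambda_1$-null. Since $\psi$ and $j\circ\theta\circ\bin$ are Borel and agree $\lambda_1$-almost everywhere, their images of $\lambda_1$ coincide, so $\psi[\lambda_1] = j[\theta[\bin[\lambda_1]]] = j[\theta[\upsilon]]$, the last equality by $\bin[\lambda_1]=\upsilon$ (Lemma~\ref{lemma:upsilon}). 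It therefore remains to prove $\theta[\upsilon]=\lambda_1^\infty$.

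For this last point, recall that $\lambda_1^\infty$ is the unique continuous valuation---equivalently Borel measure, since $[0,1]^\nat$ is compact metrizable---on $[0,1]^\nat$ such that $\pi_S[\lambda_1^\infty]=\bigotimes_{n\in S}\lambda_1$ for every finite $S\subseteq\nat$ (\cite[Theorem~5.3]{JGL:kolmogorov}). As the cylinders over initial segments form a $\pi$-system generating the product $\sigma$-algebra and both $\theta[\upsilon]$ and $\lambda_1^\infty$ are probability measures, it is enough to check that $\pi_{\{0,\dots,N-1\}}[\theta[\upsilon]] = \bigotimes_{n=0}^{N-1}\lambda_1$ for every $N\in\nat$. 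But $\pi_{\{0,\dots,N-1\}}\circ\theta = (\num\times\cdots\times\num)\circ\varpi_N$ with $\varpi_N\colon s\mapsto(s[0],\dots,s[N-1])$, and $\varpi_N[\upsilon] = \underbrace{\upsilon\otimes\cdots\otimes\upsilon}_{N}$ by Corollary~\ref{corl:split}(3); pushing this forward by $\num$ in each coordinate and using $\num[\upsilon]=\lambda_1$ (Lemma~\ref{lemma:num}) gives $\bigotimes_{n=0}^{N-1}\lambda_1$, as required. Hence $\theta[\upsilon]=\lambda_1^\infty$, and combining with the previous paragraphs, $\Eval{\mathtt{rand\_uniform\_seq}}\rho = j[\lambda_1^\infty]$.

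I expect the almost-everywhere step to be the one point requiring care: $\bin$, and hence $\psi$, is genuinely discontinuous, so one must verify that $\psi$ and $j\circ\theta\circ\bin$ agree exactly off the $\lambda_1$-null set $E_0$, and then invoke the insensitivity of image measures to modifications of the map on null sets---this is precisely the phenomenon anticipated in Remark~\ref{rem:bin:unfold}, which is what makes the non-implementable dicing map $\bin$ harmless. The remaining ingredients (unfolding $\_^\dagger$ via \eqref{eq:dagger:def}, the change-of-variables formula, and stitching Corollary~\ref{corl:split} together with Lemma~\ref{lemma:num}) are routine.
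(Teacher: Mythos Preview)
Your proof is correct and follows essentially the same route as the paper: unfold the semantics to an image valuation of $\lambda_1$, use the almost-everywhere coincidence of $\muxop\circ i$ with $j\circ\theta\circ\bin$ off the $\lambda_1$-null set $E_0$, then invoke Lemma~\ref{lemma:upsilon}, Corollary~\ref{corl:split}(3), and Lemma~\ref{lemma:num} to identify the finite marginals (the paper's map $g$ is exactly your $\theta\circ\bin$, and the paper finishes by comparing on the subbasic opens of Lemma~\ref{lemma:j} rather than by invoking uniqueness of $\lambda_1^\infty$ on $[0,1]^\nat$, but this is purely cosmetic). One small slip: you write that ``$\bin$, and hence $\psi$, is genuinely discontinuous'', but $\psi=\muxop\circ i$ is continuous---it is $j\circ\theta\circ\bin$ that fails to be continuous (through $\bin$), and the whole point of the $E_0$ argument is that this discontinuous map agrees $\lambda_1$-almost everywhere with the continuous $\psi$.
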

\begin{proof}
  Let $\pi_N \colon \Eval {\intT \to \realT}$ map $f$ to
  $(f (0), f(1), \cdots, f (N-1))$.  We start by fixing $N$, and we
  compute $\pi_N [\Eval {\mathtt{rand\_uniform\_seq}} \rho]$.  This is
  the valuation that maps every open subset $U$ of $\IR_\bot^N$ to the
  probability that a function $f$ drawn at random according to
  $\Eval {\mathtt{rand\_uniform\_seq}} \rho$ produces a tuple
  $(f (0), f(1), \cdots, f (N-1))$ that falls in $U$.  Let us begin by
  noting that
  $\Eval {\mathtt{rand\_uniform\_seq}} \rho = \muxop [i [\lambda_1]]$.

  Let $g \colon \real \to \real^\nat$ map $x$ to
  ${(\num (\bin (x) [m]))}_{m \in \nat}$.  It is easy to see that $g$
  is measurable.  By definition, $g (x)_m = \muxop (i (x)) (m)$ for
  every non-dyadic number $x$ in $[0, 1]$, namely for every
  $x \in \real \diff E_0$ where the set $E_0$ was introduced in
  Remark~\ref{rem:bin:unfold}, and for every $m \in \nat$.

  We claim that $j [g [\lambda_1]] = \muxop [i [\lambda_1]$.  This
  needs a reminder and a comment.  We recall that $j$ is the canonical
  embedding of $\real^\nat$ into $\Eval {\intT \to \realT}$; and we
  have silently promoted $\lambda_1$ from a continuous valuation to a
  measure on $\real$, therefore working with image \emph{measures}
  rather than image \emph{valuations}.

  The claim is proved as follows.  For every Borel measurable subset
  $E$ of $\Eval {\intT \to \realT}$,
  $j [g [\lambda_1]] (E) = \lambda_1 ((j \circ g)^{-1} (E))$, while
  $\muxop [i [\lambda_1] = \lambda_1 ((\muxop \circ i)^{-1} (E))$.
  The symmetric difference of $(j \circ g)^{-1} (E)$ and of
  $(\muxop \circ i)^{-1} (E)$ is included in $E_0$, which has
  $\lambda_1$-measure zero, so
  $j [g [\lambda_1]] (E) = \lambda_1 ((j \circ g)^{-1} (E))$.

  Let us write $i^{(N)}$ for the $N$-fold product $i \times \cdots
  \times i$, and similarly with $\num^{(N)}$.  For every $x \in \real$,
  $\pi_N (\muxop (i (x))) = (i (g (x)_0), \cdots, i (g (x)_{N-1}))
  = (i (\num (\bin (x) [0])), \cdots, i (\num (\bin (x) [N-1])))
  = (i^{(N)} \circ \num^{(N)} \circ \varpi_N \circ \bin) (x)$.  It follows:
  \begin{align*}
    \pi_N [\muxop [i [\lambda_1]]
    & = i^{(N)} [\num^{(N)} [\varpi_N [\bin [\lambda_1]]]] \\
    & =  i^{(N)} [\num^{(N)} [\varpi_N [\upsilon]]]
    & \text{by Lemma~\ref{lemma:upsilon}} \\
    & =  i^{(N)} [\num^{(N)} [\upsilon \otimes \cdots \otimes \upsilon]]
    & \text{by Corollary~\ref{corl:split}, item~3} \\
    & = i [\num [\upsilon]] \otimes \cdots \otimes i [\num [\upsilon]] \\
    & = i [\lambda_1] \otimes \cdots \otimes i [\lambda_1]
    & \text{by Lemma~\ref{lemma:num}}.
  \end{align*}
  This is a statement about measures.  Restricting to open sets, we
  obtain the following statement about continuous valuations:
  \begin{quote}
    $(*)$ $\pi_N [\Eval {\mathtt{rand\_uniform\_seq}} \rho]$ is the $N$-fold
    valuation product of $i [\lambda_1]$.
  \end{quote}
  We use this to prove that $\Eval {\mathtt{rand\_uniform\_seq}} \rho
  j [\lambda_1^\infty]$, qua continuous valuations.  It suffices to
  show that the two sides of the equation coincide on basic open
  subsets of $\Eval {\intT \to \realT}$: as in the proof of
  Theorem~\ref{prop:fubini:top} or of Lemma~\ref{lemma:split},
  modularity and Scott-continuity will imply they they also coincide
  on every open set at all.

  As basic open subsets, Lemma~\ref{lemma:j} suggests that we take the
  finite intersections of sets of the form $[n \in V]$.  Without loss
  of generality, let
  $U \eqdef [0 \in V_0] \cap [1 \in V_1] \cap \cdots \cap [N-1 \in
  V_{N-1}]$.  (Any set of the form $[n \in V_n]$ where $V_n$ is the
  whose of $\IR_\bot$ can be inserted at will, since they will not
  modify the intersection.)  We note that
  $U = \pi_N^{-1} (V_0 \times V_1 \times \cdots \times V_{N-1})$, so:
  \begin{align*}
    \Eval {\mathtt{rand\_uniform\_seq}} \rho (U)
    & = \pi_N [\Eval {\mathtt{rand\_uniform\_seq}} \rho] (V_0 \times
      V_1 \times \cdots \times V_{N-1}) \\
    & = \prod_{n=0}^{N-1} i [\lambda_1] (V_n)
    \qquad \text{by }(*).
  \end{align*}
  Now, since $j (\vec x) (n) = i (x_n)$ for all
  $\vec x \in [0, 1]^\nat$ and $n \in \nat$,
  $j^{-1} (U) = \pi_N^{-1} (i^{-1} (V_0) \times i^{-1} (V_1) \times
  \cdots \times i^{-1} (V_{N-1}))$, so:
  \begin{align*}
    j [\lambda_1^\infty] (U)
    & = \pi_N [\lambda_1^\infty] (i^{-1} (V_0) \times i^{-1} (V_1) \times
      \cdots \times i^{-1} (V_{N-1})) \\
    & = \prod_{n=0}^{N-1} \lambda_1 (i^{-1} (V_n)),
  \end{align*}
  by definition of $\lambda_1^\infty$, and this concludes the proof.
\end{proof}

\subsection{More distributions on higher-order objects}
\label{sec:more-distr-high}

At this point, let us be less formal.  The construction of
$\lambda_1^\infty$ through $\mathtt{rand\_uniform\_seq}$ opens up
several avenues, which we mostly leave as programming exercises to the
reader.

For one, we can now implement the Marsaglia-Bray algorithm (see
Section~\ref{sec:box-muller-algorithm}) by drawing just \emph{one}
random real.  We remember that the Marsaglia-Bray algorithm uses
rejection sampling, which may require us to draw arbitrarily many
independent, random, $\lambda_1$-distributed real numbers.  Instead,
we can just call $\mathtt{rand\_uniform\_seq}$ once, obtaining a
random, $\lambda_1^\infty$-distributed function $f$ (modulo the
embedding $j$); each time we need another random real number, we
simply read the next value $f (n)$, for larger and larger values of
$n$.

Going further, instead of calling $\mathtt{rand\_uniform\_seq}$, we
may call $\muxkw$ on a random $\lambda_1$-distributed real number
given as argument.  We obtain an ISPCF term
$\mathtt{box\_muller\_transform}$ of type
$\realT \to \realT \times \realT$ (not
$\realT \to D (\realT \times \realT)$, since that term no longer draws
anything at random by itself) that maps a random
$\lambda_1$-distributed real number to a pair of two independent,
random $\mathcal N (0, 1)$-distributed real numbers.  We reuse
notations and conventions introduced earlier; $\mathtt{find\_pair}$
looks for the first pair of consecutive real numbers satisfying $prop$
in the infinite list $randnums$, $\mathtt{box\_muller\_engine}$
reimplements the $\mathtt{box\_muller'}$ procedure by using a given
infinite list of real numbers instead of drawing them at random.
\begin{align*}
  \mathtt{find\_pair}
  & \eqdef \lambda prop_{\realT \to \realT \to \boolT} .
    \lambda randnums_{\intT \to \realT} . \\
  & \qquad \reckw (\lambda find_{\intT \to \realT \to \realT} . \lambda n_{\intT} . \\
  & \qquad\qquad \letbe {x_{\realT} = randnums\;n} {} \\
  & \qquad\qquad \letbe {y_{\realT} = randnums\;(n+\underline 1)} {} \\
  & \qquad\qquad \ifkw (prop\;x\;y)\; \langle x, y
    \rangle\;find\;(n+\underline 2))\; \underline 0 \\
  \mathtt{box\_muller\_engine}
  & \eqdef \lambda randnums_{\intT \to \realT} . \\
  & \qquad \letbe {\langle x, y \rangle =
    \mathtt{find\_pair}\;\mathtt{discp}\;randnums} {} \\
  & \qquad \letbe {u_{\realT} = x\times x + y \times y} {} \\
  & \qquad \letbe{m_{\realT} = \sqrtkw (\underline{-2.0} \times
    \underline\log\;u / u)} {} \\
  & \qquad \langle m \times x, m \times y \rangle \\
  \mathtt{box\_muller\_transform}
  & \eqdef \lambda seed_{\realT} . \mathtt{box\_muller\_engine}\;(\muxkw\;seed).
\end{align*}

We let the reader check that the semantics of:
\[
  \dokw {seed \leftarrow \sample
    [\lambda_1]}; \retkw (\mathtt{box\_muller\_transform}\;seed)
\]
is $i [\mathcal N (0, 1) \otimes \mathcal N (0, 1)]$, just like
$\mathtt{box\_muller'}$ and $\mathtt{box\_muller}$.

The tricks described above suggest the following scheme for
implementing the infinite product $\mathcal N (0, 1)^\infty$.  We call
$\mathtt{rand\_uniform\_seq}$ in order to obtain an infinite,
independent sequence of $\lambda_1$-distributed real numbers.  We then
apply $\mathtt{box\_muller\_transform}$ to each.  Explicitly, we form
the ISPCF term:
\[
  \dokw {f_{\intT \to \realT} \leftarrow \mathtt{rand\_uniform\_seq}};
  \retkw {\lambda n_{\intT} . \mathtt{box\_muller\_transform} (f\;n)},
\]
of type $D (\intT \to \realT)$.

Again, this can be made into a transformer
\[
  \mathtt{box\_muller\_sequence\_transform} \colon \realT \to D (\intT
  \to \realT)
\]
that maps any random $\lambda_1$-distributed real number to a random
$\mathcal N (0, 1)^\infty$-distributed infinite sequence of real
numbers.  This can also be iterated.  We let the reader explore around
this idea.  For example, given two maps
$mean, sigma \colon \intT \to \realT$, build a term that maps any
random $\lambda_1$-distributed real number to a random infinite
sequence of independent real numbers, whose $n$th entry follows a
normal distribution with mean $mean (n)$ and with standard deviation
$sigma (n)$.

One can implement infinite independent sequences of random real
numbers following various other distributions in the same way.  For
example, do this for infinite products of exponential distributions.

This machinery can also be used to implement non-trivial distributions
over distributions.  A typical, and useful, application is the
so-called \emph{Dirichlet process} \cite{Ferguson:DP}, which is widely
used as a source of Bayesian priors in so-called nonparametric
estimation problems.  Given a base distribution $H$ over some space
$X$, and a parameter $\alpha \in ]0, 1[$, the Dirichlet process
$DP (H)$ is a distribution over the space of probability distributions
over $X$, which one may describe as follows.  We draw an infinite
sequences of elements $x_n$, $n \in \nat$, from $X$.  At step $n$,
with probability $\alpha/(\alpha+n)$, we draw $x_n$ from $X$,
independently from all previous values, using the distribution $H$;
otherwise, we draw $x_n$ at random among the previous values, each
having the same probability, in other words with respect to the
distribution $\frac 1 n \sum_{i=0}^{n-1} \delta_{x_i}$.  (This is
undefined if $n=0$, but then, if $n=0$, that second case happens with
zero probability.)  This process yields a random infinite sequence,
which we can use directly, or which we can convert to a distribution,
the \emph{directing measure}, mapping every measurable subset $E$ of
$X$ to the limit of $(\frac 1 n \sum_{i=0}^{n-1} \delta_{x_i}) (E)$ as
$n$ tends to $+\infty$, which exists $H$-almost surely by de Finetti's
theorem \cite[Chapter~1.1]{Kallenberg:proba}, based on the fact that
the sequence ${(x_n)}_{n \in \nat}$ is a so-called exchangeable
sequence.

In principle, one can compute the directing measure from any so-called
exchangeable sequence, as shown by Daniel Roy
\cite[Chapter~IV]{Roy:PhD}.  In the special case of the Dirichlet
process, there is a popular, and more efficient, implementation of
$DP (H)$, using the so-called \emph{stick-breaking algorithm}, see
Figure~2 of \cite{GMRBT:Church} for example.  (We also refer to that
paper for an application of the Dirichlet process to an interesting,
stochastic form of memoization.)  While the latter relies on memoizing
facilities (hence on global state, possibly hidden), one can implement
it in ISPCF as follows, given $H^\infty$ (not $H$!) as argument.

We assume that $X$ is the denotation $\Eval \tau$ of some type $\tau$,
and that the variable $proc_{D (\intT \to \tau)}$ holds a
representation of $H^\infty$.  We draw a random function
$atoms \colon \intT \to \tau$ with respect to $H^\infty$,
representing some infinite sequence ${(\theta_n)}_{n \in \nat}$ of
elements of $X$; this uses $\dokw$ and $proc$.  We also draw a random
function $sticks \colon \intT \to \realT$ with respect to the
distribution $\text{Beta} (1, \alpha)^\infty$ (the infinite product of
the so-called $\beta$ distribution with parameters $1$ and $\alpha$),
representing a random sequence ${(\beta'_n)}_{n \in \nat}$.  We define
a new random (but no longer independent) sequence
${(\beta_n)}_{n \in \nat}$, where
$\beta_n \eqdef \beta'_n . \prod_{i=0}^{n-1} (1-\beta'_i)$.  and we
return the (random) distribution
$\sum_{n \in \nat} \beta_n \delta_{\theta_n}$.

While that would be possible in ISPCF, we do not compute any $\beta_n$
explicitly.  Instead, the final distribution is obtained by the
following informal procedure: with probability $\beta'_0$, return
$\theta_0$; else, with probability $\beta'_1$, return $\theta_1$;
else, with probability $\beta'_2$, return $\theta_2$, and so on.  This
terminates with probability $1$.

This can be implemented in ISPCF by drawing a third function
$rand \colon \intT \to \realT$ at random with respect to
$\lambda_1^\infty$, then returning
$atoms \; (\mathtt{pick\_a\_stick}\;\allowbreak sticks\;\allowbreak
rand\;0)$, where:
\begin{align*}
  \mathtt{pick\_a\_stick}
  & \eqdef \lambda sticks_{\intT \to \realT} . \lambda rand_{\intT \to
    \realT} . \reckw (\lambda pick_{\intT \to \intT} . \\
  & \qquad \lambda j_{\intT} . \\
  & \qquad\qquad \ifkw (\poskw (sticks\;j - rand\;j))\\
  & \qquad\qquad\qquad j\\
  & \qquad\qquad (pick\; (j+1))).
\end{align*}
We have taken the same procedure names as in
\cite[Figure~2]{GMRBT:Church}, in the hope that this will ease a
comparison with the Church implementation of the Dirichlet process
given there.

By now, we hope to have provided enough examples in order to convince
the reader that one can write enough useful distributions in a
language with a domain-theoretic semantics such as ISPCF.  Let us
proceed with matters of operational semantics.

\section{Operational Semantics}
\label{sec:ideal-oper-semant}

The simplest possible operational semantics of ISPCF is one where
$\sample [\mu]$ draws actual real numbers at random, following the
distribution $\mu$.  We follow similar other proposals
\cite{VKS:SFPC,DLH:geom:bayes,EPT:PPCF} pretty closely.  The main
difference is that we will draw \emph{exact} reals, namely elements of
$\IR_\bot$, instead of true reals in $\real$, at random.  This is
really not much of a difference, since the probability that
$\sample [\mu]$ draws an exact real number that is not a true real
number is zero.

In order to describe our operational semantics formally, we consider
generalized ISPCF terms, with extra constants, one for each real
number, and we consider these generalized ISPCF terms as
configurations of an abstract machine.  We then define a probabilistic
transition relation on the space of configurations.  In order to do
so, we will need to topologize the space of configurations, much as
previous proposals \cite{VKS:SFPC,DLH:geom:bayes,EPT:PPCF} defined a
$\sigma$-algebra on similar spaces of configurations.  The
probabilistic transition relation will be a map from configurations to
measures (rather, continuous valuations) on the space of
configurations.  That map will be continuous, and not just measurable.

Notionally, a \emph{generalized ISPCF term} is a closed ISPCF term
built on a set of constants $\Sigma$ that contains exactly one
constant $\underline {\mathbf a}$ for each $\mathbf a \in \IR$.  We
will equate the already existing constants $\underline r$,
$r \in \real$, with $\underline {[r,r]}$.

The following alternate definition will be more formal, and will allow
us to give a simple description of the topology we will put on
generalized ISPCF terms.  We fix a countable enumeration $x_1$, $x_2$,
\ldots, $x_n$, \ldots{} of so-called \emph{template} variables of type
$\realT$ once and for all, in such a way that we still have an
infinite supply of variables of type $\realT$ not in that list.  A
\emph{template} is an ISPCF term $M$ whose free variables, read from
left to right, are $x_1$, \ldots, $x_k$ for some $k \in \nat$, and
which does not contain any constant of type $\realT$.  In particular,
the only free variables of $M$ are template variables, they occur only
once in $M$, and are numbered from $1$ to $k$ consecutively from left
to right.  We let $\FV (M)$ be the set of free variables of a term
$M$.  Then we equate a generalized ISPCF term with a
\emph{configuration} $(M, \theta)$ where $M$ is a template and
$\theta \in \IR_\bot^{\FV (M)}$.  Replacing each template variable
$x_i$ that is free in $M$ by $\underline {\theta (x_i)}$ yields a
generalized ISPCF term, which we write as $M \theta$.  Conversely,
each generalized ISPCF term is represented by a unique configuration
$(M, \theta)$: $M$ has to be the given generalized ISPCF term where
each constant $\underline r$ of type $\realT$ has been replaced by a
fresh template variable, which is then mapped to $r$ by $\theta$.  We
call $M$ the \emph{shape} of the generalized ISPCF term.
For example, the generalized ISPCF term
$\lambda y . \underline {2.0} \times y + \underline {1.0}$ is
represented by the configuration
$(\lambda y . x_1 \times y + x_2, [x_1 \mapsto 2.0, x_2 \mapsto
1.0])$.

\begin{definition}[Space of configurations $\Config$]
  \label{defn:Config}
  $\Config$ denotes the set of all configurations, ordered by
  $(M, \theta) \leq (N, \theta')$ if and only if $M=N$ and
  $\theta \leq \theta'$, namely, for every $x \in \FV (M)$,
  $\theta (x) \leq \theta' (x)$.
\end{definition}

\begin{remark}
  \label{rem:Config:cont}
  $\Config$ is a dcpo, and in fact a continuous dcpo.  To see the
  latter, $\Config$ is a coproduct over all possible shapes of finite
  products of copies of $\IR_\bot$.  Each copy of $\IR_\bot$ is a
  continuous dcpos, and continuous dcpos are closed under finite
  products and arbitrary coproducts.
\end{remark}

This structure transports to a structure of continuous dcpo on the set
of generalized ISPCF terms through the bijection
$(M, \theta) \mapsto M \theta$.


While our algebra of terms was rather open-ended until now, we will
need to restrict it slightly, and we will assume that $\Sigma$ has
\emph{strictly observable first-order constants}, see below.
\begin{definition}[Basic and observable types]
  \label{defn:obs:type}
  The algebra of \emph{observable types} is:
  \begin{align*}
    \beta & ::= \unitT \mid \voidT \mid \intT \mid
            \realT
            \mid \beta_1 + \beta_2
            \mid \beta_1 \times \beta_2,
  \end{align*}
  The \emph{basic types} are $\unitT$, $\voidT$, $\intT$, and $\realT$.
\end{definition}
In other words, the observable types are all types that do not contain
$\to$ or $D$.
\begin{definition}[Observable element]
  \label{defn:partial}
  The \emph{observable elements} of $\Eval \beta$, where $\beta$ is an
  observable type, are defined as follows.  The observable elements of
  $\Eval {\voidT}$, $\Eval {\unitT}$, $\Eval {\intT}$,
  $\Eval {\realT}$ are all their elements except $\bot$.  The
  observable elements of $\Eval {\beta_1 \times \beta_2}$ are the
  pairs $(b_1, b_2)$ where both $b_1$ and $b_2$ are observable.  The
  observable elements of $\Eval {\beta_1+\beta_2}$ are the elements
  $\iota_1 b$ or $\iota_2 b$, where $b$ is observable.
\end{definition}

For every observable value $a$ of an observable type $\beta$, there is
a generalized ISPCF term $\underline a$ of type $\beta$ such that
$\Eval {\underline a} \rho = a$ for every environment $\rho$.
This extends our preexisting underlining notations.
\begin{definition}
  \label{defn:underline}
  The constants $\underline n \colon \intT$ ($n \in \Z$), $\underline
  * \colon \unitT$, and $\underline {\mathbf a}$ ($\mathbf a \in \IR$)
  are already defined.  We let $\underline {(a, b)} \eqdef \langle
  \underline a, \underline b\rangle$, $\underline {(1, a)} \eqdef
  \iota_1 \underline a$, $\underline {(2, b)} \eqdef \iota_2
  \underline b$.
\end{definition}
In order to prevent any conflict of notations, we will require
that the only preexisting constants $\underline f$ of ISPCF of
observable types are actually of basic types.  For example, we do not
allow any constant $\underline f$ of type $\intT \times \realT$ in our syntax.

\begin{definition}[Strictly observable first-order constants]
  \label{defn:assum:strict}
  The set $\Sigma$ consists of \emph{strictly observable first-order constants} if
  and only if each constant $\underline f$ in $\Sigma$:
  \begin{itemize}
  \item has a \emph{first-order type}, namely one of the form
    $\sigma_1 \to \cdots \to \sigma_k\to \tau$, where each $\sigma_i$
    is a basic type and $\tau$ is an observable type; we call $k$ the
    \emph{arity} of $\underline {f}$, and we write it as
    $\alpha (\underline f)$;
  \item its semantics $f$ is \emph{strictly observable}: $f (v_1) \cdots (v_k)$ is
    either observable or equal to $\bot$, and $f (v_1) \cdots (v_k) =
    \bot$ if $v_1$, \ldots, or $v_k$ is equal to $\bot$;
  \item the only preexisting constants $\underline f$ of ISPCF of
    observable types are of basic types, and their semantics $f$ is
    observable.
  \end{itemize}
\end{definition}
All the constants we have used in our examples are first-order, for
example $\underline{\log} \colon \realT \to \realT$, but also
$\poskw \colon \realT \to \boolT$, where $\boolT = \unitT + \unitT$ is
observable but not basic.

\begin{figure}
  \centering
  \begin{align}
    \nonumber
    \text{\emph{Probabilistic rules:}} \\
    \label{rule:sample}
    E^0 [\sample [\mu]]
                 & \to E^0 [\retkw \underline a]
                   \quad (a \in \real)
    \\
    \label{rule:score}
    E^0 [\score\; \underline {\mathbf a}] & \to E^0 [\retkw \underline
                                *]
    \\
    \nonumber
    E^0 [\dokw {x \leftarrow \retkw M}; N]
                 & \to E^0 [N [x:=M]] \\
    \nonumber
    \text{\emph{Deterministic rules:}} \\
    \nonumber
    E [\reckw\; M] & \to E [M (\reckw\; M)] \\
    \nonumber
    E [(\lambda x . M) N] & \to E [M [x:=N]] \\
    \label{rule:f}
    E [\underline f\; \underline {a_1} \cdots \underline {a_k}]
                 & \to E [\underline {f (a_1) \cdots (a_k)}]
    \\
    \nonumber
                 & \text{if }k=\alpha (\underline f) \neq 0 \text{ and }
                   f (a_1) \cdots (a_k)\text{ is observable}\\
    \nonumber
    E [\pi_1 \langle M, N\rangle] & \to E [M] \\
    \nonumber
    E [\pi_2 \langle M, N\rangle] & \to E [N] \\
    \nonumber
    E [\casekw (\iota_1 M) N P] & \to E [NM] \\
    \nonumber
    E [\casekw (\iota_2 M) N P] & \to E [PM]
  \end{align}
  \caption{The raw operational semantics}
  \label{fig:ideal:opsem}
\end{figure}

Our operational semantics will require two kinds of evaluation
contexts.  The point is that, given a generalized ISPCF term $M$ of
type $D \tau$, the computation starting from $M$ will in general go
through two phases $M \to^* \retkw N \to^* \retkw V$.  The first
phase $M \to^* \retkw N$ will be allowed to use all rules, including
rules that operate probabilistic choices, and ends at some term of the
form $\retkw N$ (or loops forever); the second phase reduces $N$ to a
value $V$ deterministically.  The second phase applies (deterministic)
rules under arbitrary evaluation contexts, while the first phase
applies (arbitrary) rules under some so-called weak evaluation contexts.

\begin{definition}[Evaluation contexts, values]
  \label{defn:context}
  The \emph{evaluation contexts} $E$, the \emph{weak evaluation
    contexts} $E_0$, the \emph{values} $V$ and the \emph{lazy values}
  $V^0$ are given by the following (pseudo-)grammar:
  \begin{align*}
    E & ::= E^0 \mid \retkw E \mid \langle E, M \rangle \mid \langle
    V, E \rangle \mid \iota_1 E \mid \iota_2 E \\
    E^0 & ::= [] \mid \score E^0 \mid E^0 N \mid \dokw {x \leftarrow E^0}; N
          \mid \underline f \;\underline a_1\; \cdots\; \underline a_{i-1}\; E^0  \quad (i \leq \alpha
          (\underline f))\\
      & \quad\mid \casekw E^0 N P \mid \pi_1 E^0 \mid \pi_2 E^0 \\
    V & ::= \underline a
        \mid \lambda x . M
        \mid \retkw V \mid \langle V, V \rangle \mid \iota_1 V
        \mid \iota_2 V
        \mid \underline f \;\underline a_1\; \cdots\; \underline a_i \quad (i < \alpha
        (\underline f)) \\
    V^0 & ::= \underline a
          \mid \lambda x . M
          \mid \retkw M \mid \langle M, N \rangle \mid \iota_1 M
    \mid \iota_2 M
          \mid \underline f \;\underline a_1\; \cdots\; \underline a_i \quad (i < \alpha
        (\underline f))
  \end{align*}
  where $\underline a$ ranges over the zero-ary constants of basic
  types, $N$, $P$ range over generalized ISPCF terms, and
  $r \in \real$.  The evaluation context $[]$ is the \emph{hole}.
\end{definition}

Evaluation contexts, weak evaluation contexts, values and lazy values
are typed, although we will often omit mentioning it.  The typing
rules are as follows.  Every value of the form $\underline{\mathbf a}$
with $\mathbf a \in \IR$ is of type $\realT$, and the type of the
other values is their type as generalized ISCPF term.  The types of
evaluation contexts $E$ (resp., $E^0$) are of the form
$\sigma \vdash \tau$, in such a way that for every $M \colon \sigma$,
$E [M]$ is of type $\tau$.  We omit the precise typing rules.  The
notation $E [M]$ stands for $E$ where the hole $[]$ is replaced by
$M$.

Using this, we define the \emph{raw} operational semantics of ISPCF as
the smallest binary relation $\to$ between generalized ISPCF terms
satisfying the clauses of Figure~\ref{fig:ideal:opsem}.  The raw
semantics does not mention any probabilistic information.  This is dealt
with by the following definition.  We recall that
$i \colon \real \to \IR_\bot$ is the embedding $r \mapsto [r, r]$.
\begin{definition}
  \label{defn:opsem}
  Assume that all constants in $\Sigma$ are first-order constants.
  The operational semantics of ISPCF is the map $Next \colon \Config
  \to \Val \Config$ defined by:
  \begin{itemize}
  \item for every instance of a rule $L \to R$ of
    Figure~\ref{fig:ideal:opsem}
    except (\ref{rule:sample}), (\ref{rule:score}) and (\ref{rule:f}),
    $Next (L) = \delta_R$;
  \item (case of $\sample$) $Next (E^0 [\sample [\mu]])$ is the
    continuous valuation $f_{E^0} [\mu]$, where
    $f_{E^0} (a) \eqdef E^0 [\retkw \underline a]$ for every
    $a \in \real$;
  \item (case of $\score$)
    $Next (E^0 [\score\; \underline {\mathbf a}]) \eqdef |\mathbf a|
    . \delta_{E^0 [\retkw \underline *]}$;
  \item (case of $\underline f$) Given
    $k = \alpha (\underline f) \neq 0$,
    $Next (E [\underline f\; \underline a_1 \cdots \underline a_k])
    \eqdef \delta_{E [\underline {f (a_1) \cdots (a_k)}]}$ if
    $f (a_1) \cdots (a_k)$ is observable, the constant zero valuation
    otherwise.
  \end{itemize}
\end{definition}
We let the reader check that every generalized ISPCF term parses in at
most one way as the left-hand side of a rule of
Figure~\ref{fig:ideal:opsem}, so that Definition~\ref{defn:opsem} is
non-ambiguous.  To see this, one can write any generalized ISPCF term
as $E [M_1]$, where $E$ has maximal depth, in a unique way.  If
$M_1 = \sample [\mu]$, then only rule (\ref{rule:sample}) applies,
and only if $E$ is weak.  If $M_1 = \underline {\mathbf a}$, then
only rule (\ref{rule:score}) applies, and only if $E$ is of the form
$E^0 [\score []]$.  If $M_1 = \retkw M$ and
$E = E^0 [\dokw {x \leftarrow []}; N]$, then only the third rule
applies.  And so on, and we see that all the cases are mutually
exclusive.


\begin{proposition}
  \label{prop:next:cont}
  Let $\Sigma$ consist of strictly observable first-order constants.  The map
  $Next$ is Scott-continuous from $\Gamma$ to $\Val \Gamma$.
\end{proposition}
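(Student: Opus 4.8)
The plan is to reduce to a single ``shape'' of configuration, to identify $Next$ there as one of three explicit forms, and then to check that each form is Scott-continuous.

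First I would use the coproduct structure of the state space. By Remark~\ref{rem:Config:cont}, $\Gamma$ is the coproduct $\coprod_M \Gamma_M$ over all templates $M$, where $\Gamma_M$ is isomorphic to $\IR_\bot^{\FV(M)}$; since elements of distinct summands are incomparable, every directed subset of $\Gamma$ lies inside a single $\Gamma_M$, and a map out of $\Gamma$ into any dcpo is Scott-continuous iff each of its restrictions to the $\Gamma_M$ is. So it suffices to fix a template $M$ and prove that $\theta \in \IR_\bot^{\FV(M)} \mapsto Next(M\theta)$ is Scott-continuous into $\Val\Gamma$.

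Now fix $M$. Because every template variable has type $\realT$ and ISPCF is call-by-name, the position and rule-schema of the redex of $M\theta$, if any, are determined by $M$ and not by $\theta$: a template variable can occur in a ``data'' position only as an argument of a first-order constant $\underline f$ or of $\score$; in the body of $\retkw$, in a pair, in an $\iota_i$, or as the operand of an application it is merely substituted and blocks nothing. Let $U_M \subseteq \IR_\bot^{\FV(M)}$ be the set of $\theta$ for which $M\theta$ matches a left-hand side of Figure~\ref{fig:ideal:opsem}. This set is Scott-open, because membership is a conjunction of finitely many conditions ``$\theta(x) \neq \bot$'' (for the template variables sitting in data positions of the redex that must be instantiated for a rule to fire) together with, when the pending rule is (\ref{rule:f}), the condition ``$f(a_1)\cdots(a_k)$ is observable'', $a_j$ being $t_j$ or $\theta(t_j)$: the map $\phi \colon \theta \mapsto f(a_1)\cdots(a_k)$ is Scott-continuous since the semantics $f$ of $\underline f$ is, and the set of observable elements of $\Eval\beta$ (Definition~\ref{defn:partial}), for $\beta$ the output type of $\underline f$, is Scott-open in $\Eval\beta$ by an easy induction on the observable type $\beta$ (the complement of $\{\bot\}$ for basic types; a product of Scott-opens for $\times$; the union of the two $\iota_i$-images for $+$). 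All these conditions are upwards closed, so $U_M$ is upwards closed, and directedness upgrades this to Scott-openness. On $\Gamma_M \setminus U_M$, $Next$ is the zero valuation. On $U_M$, $Next(M\theta)$ equals, according to which rule $M$ triggers, one of: (a) $\delta_{\psi(\theta)}$ for the deterministic rules, \emph{including} rule (\ref{rule:f}), where $\psi$ sends $\theta$ to the configuration of the unique one-step reduct --- a configuration of fixed shape whose coordinates are obtained from those of $\theta$ by projections and diagonals, and, in the case of (\ref{rule:f}), from $\phi(\theta)$ through the map $b \mapsto \text{configuration of } \underline b$, which is Scott-continuous on the observable elements by a second induction on $\beta$ (the shape of $\underline b$ changes only along the discrete coordinates of $\Eval\beta$, namely the $\intT$-factors and the $+$-tags); (b) $|\theta(x_i)| . \delta_{\psi'(\theta)}$ for rule (\ref{rule:score}), where $|\cdot| \colon \IR_\bot \to \creal$ is the largest continuous extension of the absolute value (see the discussion after Figure~\ref{fig:sem}) and scalar multiplication $\creal \times \Val\Gamma \to \Val\Gamma$ is Scott-continuous under the convention $0.(+\infty) = 0$; (c) $h_\theta[\mu]$ for rule (\ref{rule:sample}), where $h_\theta(a)$ is the configuration of $E^0\theta[\retkw \underline a]$, the map $(\theta, \mathbf a) \mapsto \text{configuration of } E^0\theta[\retkw \underline{\mathbf a}]$ being Scott-continuous from $\IR_\bot^{\FV(M)} \times \IR_\bot$ to $\Gamma$. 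In cases (a) and (b) the result is a composite of Scott-continuous maps (using Lemma~\ref{lemma:eta:mu}(i) for $\delta_{(-)}$). In case (c), for each open $U \subseteq \Gamma$ and each $a \in \real$ the function $\theta \mapsto \chi_U(h_\theta(a))$ is monotone and Scott-continuous into $\{0,1\}$, so monotone convergence for the fixed measure $\mu$ gives that $\theta \mapsto h_\theta[\mu](U) = \int_{a \in \real} \chi_U(h_\theta(a)) \, d\mu$ is monotone and preserves directed suprema; since suprema in $\Val\Gamma$ are computed pointwise, $\theta \mapsto h_\theta[\mu]$ is Scott-continuous. Finally, a Scott-continuous map defined on a Scott-open subset $U$ of a dcpo, valued in a pointed dcpo, extends to a Scott-continuous map on the whole dcpo by sending everything outside $U$ to the least element: it is monotone since $U$ is upwards closed, and given a directed family $D$, if $\sup D \in U$ a cofinal tail of $D$ lies in $U$, while if $\sup D \notin U$ then all of $D$ does, so the supremum is preserved in either case. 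This shows $\theta \mapsto Next(M\theta)$ is Scott-continuous, and the proposition follows.

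The hard part is case (a) for rule (\ref{rule:f}), with the accompanying bookkeeping: one has to describe the Scott topology of $\Gamma$ finely enough --- through its presentation as a coproduct of finite products of copies of $\IR_\bot$ --- to see that the reduct of $E[\underline f\, \underline a_1 \cdots \underline a_k]$ depends Scott-continuously on $\theta$ even though its shape jumps when $\underline f$ returns, say, an integer or a left injection; and one has to locate exactly where strict observability (Definition~\ref{defn:assum:strict}) is used, namely in making ``the returned value $f(a_1)\cdots(a_k)$ is observable'' a Scott-open rather than merely a Borel-measurable condition on $\theta$. This Scott-openness is precisely what makes the present transition map continuous rather than only measurable, in contrast with the operational semantics of \cite{VKS:SFPC,DLH:geom:bayes,EPT:PPCF}. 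The second delicate point is the monotone-convergence step in case (c).
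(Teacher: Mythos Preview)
Your proof is correct and follows essentially the same route as the paper's: reduce to a fixed shape by the coproduct decomposition of $\Gamma$, then argue case by case on the applicable rule, using that the ``observability'' side condition in rule~(\ref{rule:f}) cuts out a Scott-open set and that the zero-extension off that set is still Scott-continuous. The differences are presentational rather than substantive: you work order-theoretically (monotone plus directed-sup-preserving, with an explicit gluing lemma) where the paper works topologically (constructing an open neighborhood mapped into a given open of $\Val\Gamma$); and you are more explicit than the paper about the fact that in rule~(\ref{rule:f}) the \emph{shape} of the reduct $E[\underline{f(a_1)\cdots(a_k)}]$ can jump with the discrete coordinates of the output type --- the paper's ``$\theta' = \theta \circ \sigma$'' description tacitly assumes a fixed reduct shape, which is only literally correct when the output type is built from $\realT$ alone.

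One small correction to your closing commentary: the Scott-openness of the condition ``$f(a_1)\cdots(a_k)$ is observable'' follows from Scott-continuity of $f$ together with Scott-openness of $Obs_\beta$, neither of which invokes the \emph{strict observability} clause of Definition~\ref{defn:assum:strict}. That clause (output is observable or $\bot$, and strict in each argument) is used later, in the characterisation of normal forms (Lemmas~\ref{lemma:normal:weak}--\ref{lemma:normal}) and in the adequacy argument, not here; the hypothesis appears in the statement of Proposition~\ref{prop:next:cont} only because it is the paper's running assumption.
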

\begin{proof}
  Let $L$ be any generalized ISPCF term, $\mathcal V$ be any
  Scott-open subset of $\Val \Config$, and let us assume that
  $Next (L) \in \mathcal V$.  We will show that there is a Scott-open
  neighborhood $U$ of $L$ such that
  $U \subseteq Next^{-1} (\mathcal V)$.  $U$ will consist of terms of
  the same shape $P$ as $L$.

  If $L$ is of the form $E [\reckw\; M]$, then $Next (L) = \delta_R$
  where $R \eqdef E [M (\reckw\; M)]$ is computed as follows: writing
  $L$ as $P \theta$ for some unique configuration $(P, \theta)$, $R$
  is obtained as $Q \theta'$ for some unique configuration
  $(Q, \theta')$, where $Q$ is obtained from $P$ alone (independently
  of $\theta$), and $\theta' = \theta \circ \sigma$ for some function
  $\sigma \colon \FV (Q) \to \FV (P)$ that is determined from $P$ alone,
  again.  For example, if
  $L = \underline {1.0} + \reckw (\lambda x . x \times
  \underline{2.0}) (\underline {3.0})$, then
  $P = x_1 + \reckw (\lambda x . x \times x_2) (x_3)$,
  $Q = x_1 + (\lambda x . x \times x_2) (\reckw (\lambda x . x \times
  x_3)) (x_4)$, and $\sigma$ maps $x_1$ to $x_1$, $x_2$ and $x_3$ to $x_2$,
  and $x_4$ to $x_3$.

  Since $\delta_R$ is in $\mathcal V$, $R$ is in
  $V \eqdef \eta_\Config^{-1} (\mathcal V)$, and $\theta'$ is in the
  open subset $\{\theta' \in \IR^{\FV (Q)} \mid Q \theta' \in V\}$.
  The map $S \colon \theta \mapsto \theta \circ \sigma$ is clearly
  Scott-continuous.  It follows that
  $U \eqdef \{P \theta \mid \theta \in S^{-1} (\{\theta' \in \IR^{\FV
    (Q)} \mid Q \theta' \in V\})\}$ is Scott-open in $\Config$.
  Moreover, by definition, for every generalized ISPCF term $P \theta$
  in $U$, $Next (P)$ is in $\mathcal V$.

  We reason similarly for all other rules except (\ref{rule:sample}),
  (\ref{rule:score}), and (\ref{rule:f}).  The case of (\ref{rule:f})
  is not that different.  In the final step, $U$ is equal to the
  intersection of
  $\{P \theta \mid \theta \in S^{-1} (\{\theta' \in \IR^{\FV (Q)} \mid
  Q \theta' \in V\})\}$ with the set $U'$ of generalized ISPCF terms
  of shape $P$ on which rule (\ref{rule:f}) applies; then we will be
  able to conclude as above, once we prove that $U'$ is open.  In
  order to do so, and since we are in the case of (\ref{rule:f}), $P$
  is of the form $E [\underline f\;M_1 \cdots M_k]$, where each $M_i$
  is either a constant (if its type is not $\realT$) or a template
  variable.  Let $x_{j+1}$, \ldots, $x_{j+p}$ be the template
  variables occurring among the terms $M_i$, listed from left to
  right, and say that $x_{j+1} = M_{i_1}$, \ldots,
  $x_{j+p} = M_{i_p}$, where $1 \leq i_1 < \cdots i_p \leq k$.  Let
  also $M_i$ be the constant $\underline a_i$ for each $i$ not among
  $i_1$, \ldots, $i_p$.  Then
  $U' = \{P \theta \mid \theta \in \IR^{\FV (P)}, f (a_1) \cdots
  (a_{i_1-1}) (\theta (x_{j+1})) (a_{i_1+1}) \cdots (a_{i_2}-1)
  (\theta (x_{j+2})) (a_{i_2+1}) \cdots (a_{i_p-1}) (\theta (x_{k+p}))
  (a_{i_p+1}) \cdots (a_k) \in Obs_\beta\}$, where $\beta$ is the
  (observable) type of $f\;M_1 \cdots M_k$, and $Obs_\beta$ is the
  subset of observable elements of $\Eval \beta$.  It is easy to see
  that $Obs_\beta$ is Scott-open.  Since $f$ is Scott-continuous, $U'$
  is open.
  
  In the case of (\ref{rule:score}),
  $L = E^0 [\score\; \underline {\mathbf a}]$.  We write $L$ as
  $P \theta$ for some unique configuration $(P, \theta)$, once again.
  Then
  $Next (L) = |\mathbf a| . \delta_{E [\retkw \underline *]}$
  where $E^0 [\retkw \underline *]$ can be written as
  $Q \theta'$ for some configuration $(Q, \theta')$.  Once again, $Q$
  is determined as a function of $P$ alone.  In fact, if $\mathbf a$
  is the $m$th occurrence of a constant of type $\realT$ in $L$ (so
  that $\theta (x_m) = \mathbf a$; recall that $x_m$ is the $m$th
  template variable), and $\FV (P) = \{x_1, \cdots, x_n\}$ with
  $n \geq m$, then $\FV (Q) = \{x_1, \cdots, x_{n-1}\}$, and
  $\theta' = \theta \circ \sigma$, where $\sigma$ maps $x_1$ to $x_1$,
  \ldots, $x_{m-1}$ to $x_{m-1}$, and $x_m$ to $x_{m+1}$, \ldots,
  $x_{n-1}$ to $x_n$.  The map $\mathbf a \mapsto |\mathbf a|$ is
  Scott-continuous from $\IR$ to $\creal$.  Product is a
  Scott-continuous map on $\creal$, from which we obtain easily that
  $(a, \nu) \mapsto a \nu$ is Scott-continuous from
  $\creal \times \Val \Config$ to $\Val \Config$.  Using that
  $\eta_\Config$ is also Scott-continuous, the map
  $(\mathbf a, \theta') \in \IR \times \IR^{\FV (Q)} \mapsto |\mathbf
  a| . \delta_{Q \theta'}$ is Scott-continuous, and therefore
  $S \colon \theta \in \IR^{\FV (P)} \mapsto |\theta (x_m)|
  . \delta_{Q (\theta \circ \sigma)}$ is also Scott-continuous.  We
  then define $U$ as
  $\{P \theta \mid \theta \in S^{-1} (\mathcal V)\}$.  This is open,
  contains $L$, and its image by $Next$ in included in $\mathcal V$ by
  definition.

  In the case of (\ref{rule:sample}), finally,
  $L = E^0 [\sample [\mu]]$.  This time, $L = P\theta$ for some unique
  configuration $(P, \theta)$.  For every $a \in \real$,
  $f_{E^0} (a) = E^0 [\retkw \underline a]$ can be written as
  $Q \theta' (a)$, where $(Q, \theta' (a))$ is a configuration such
  that $Q$ is obtained from $P$ alone, as in previous cases.
  Additionally, there is a number $m$ (the number of the template
  variable that is replaced by $\underline a$ in $Q$), the template
  variables of $P$ are $x_1, \cdots, x_n$ with $n \geq m$, the
  template variables of $Q$ are $x_1, \cdots, x_{n+1}$, and $\theta'$
  is obtained as $S (\theta, i (a))$, where the map $S$ is defined by:
  for all $\theta \in \IR^{\FV (P)}$ and $\mathbf a \in \IR$,
  $S (\theta, \mathbf a)$ maps $x_m$ to $\underline {\mathbf a}$,
  every $x_i$ with $i < m$ to $\theta (x_i)$, and every $x_i$ with
  $i > m$ to $\theta (x_{i-1})$.  $S$ is clearly Scott-continuous.  In
  particular, $f_{E^0} = Q\;S (\theta, i (\_))$ is lower semicontinuous,
  so $f_{E^0} [\mu]$ makes sense.  Moreover, the map
  $\theta \mapsto f_{E^0} (a) = Q\;S (\theta, i (a))$ is also
  Scott-continuous for every $a \in \real$, so the map that sends
  every $\theta$ to
  $f_{E^0} [\mu] = \lambda V \in \Open \Config . \mu \{a \in \real \mid
  Q\;S (\theta, i (a)) \in V\}$ is also Scott-continuous.  Therefore
  $U \eqdef \{P \theta \in \IR^{\FV (P)} \mid f_{E^0} [\mu] \in \mathcal
  V\}$ is open. By construction, $L$ is in $U$, and every element of
  $U$ maps to an element of $\mathcal V$ by $Next$.
\end{proof}
As a corollary, $Next$ is Borel measurable from $\Config$ to $\Val
\Config$, and is therefore a \emph{kernel} in the sense of
\cite{VKS:SFPC}; not just a measurable kernel, but a continuous kernel.

\subsection{Soundness}
\label{sec:soundness}

\begin{definition}[Redex, blocked term, normal form]
  \label{defn:normal}
  A \emph{redex} is a generalized ISPCF term that is of any of the ten
  forms found at the left of the rules of
  Figure~\ref{fig:ideal:opsem}.  A \emph{blocked term} is a
  generalized ISPCF term of the form
  $E [\underline f\; \underline {a_1} \cdots \underline {a_k}]$ where
  $E$ is an evaluation context and $f (a_1) \cdots (a_k)$ is not
  observable.  A \emph{normal form} is a generalized ISPCF term that
  is neither a redex nor a blocked term.
\end{definition}
All three categories are disjoint.  For every blocked term or normal
form $L$, $Next (L)$ is the zero valuation.

We will also need the following notions.
\begin{definition}[Weak reduction, weak normal forms]
  \label{defn:normal:weak}
  A \emph{weak} instance of a rule of Figure~\ref{fig:ideal:opsem} is
  one where $E$, not just $E^0$, is a weak evaluation context.  A
  \emph{weak redex} is a left-hand side of a weak instance of a rule.
  A \emph{weak blocked term} is a generalized ISPCF term of the form
  $E^0 [\underline f\; \underline {a_1} \cdots \underline {a_k}]$
  where $E^0$ is a weak evaluation context and $f (a_1) \cdots (a_k)$
  is not observable.  A \emph{weak normal form} is a generalized ISPCF
  term that is neither a weak redex nor a weak blocked term.
\end{definition}

\begin{lemma}
  \label{lemma:normal:weak}
  Let $\Sigma$ consist of strictly observable first-order constants.
  The weak normal forms are exactly the lazy values.
\end{lemma}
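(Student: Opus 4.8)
The plan is to establish the two inclusions separately. One inclusion is that every lazy value is a weak normal form, i.e.\ neither a weak redex nor a weak blocked term, which I will check directly. The other is that every weak normal form is a lazy value, which I will prove in contrapositive form: every closed generalized ISPCF term that is \emph{not} a lazy value is either a weak redex or a weak blocked term, proved by structural induction on the term. Before either step I would record some canonical-forms facts, read off from the grammar of lazy values $V^0$ together with the typing rules and the standing assumption (Definition~\ref{defn:assum:strict}) that every zero-ary constant of observable type is of basic type: the only lazy values of a basic type are the zero-ary constants $\underline a$; the only lazy values of a distribution type are the terms $\retkw M$; the only lazy values of a product type (resp.\ a sum type) are the pairs $\langle M, N\rangle$ (resp.\ the injections $\iota_j M$); and the lazy values of a function type are exactly the $\lambda$-abstractions and the partial constant applications $\underline f\,\underline a_1\cdots\underline a_i$ with $i < \alpha(\underline f)$ (the case $i = 0$ giving the bare function-typed constants). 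What makes this exhaustive is that such a partial application always has a function type, since $\underline f$'s type $\sigma_1 \to \cdots \to \sigma_k \to \tau$ still carries an arrow when $i < k = \alpha(\underline f)$, whereas $\tau$ is observable and so arrow-free.

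For the first inclusion, I would go through the shapes of $V^0$. For $\lambda x . M$, $\retkw M$, $\langle M, N\rangle$, $\iota_1 M$ and $\iota_2 M$, no production for weak evaluation contexts $E^0$ produces a term with that outermost constructor, so the only way to write $V^0$ as $E^0[M_0]$ has $E^0 = []$; but then $V^0$ is neither a left-hand side of a rule of Figure~\ref{fig:ideal:opsem} nor of the blocked shape $\underline f\,\underline a_1\cdots\underline a_k$, hence is a weak normal form. For a zero-ary constant $\underline a$ of basic type the term is atomic, so again $E^0 = []$ is forced; it is not a redex (its arity is $0$, so rule~(\ref{rule:f}) does not apply) and not blocked (its semantics is observable, by strict observability). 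For a partial constant application $\underline f\,\underline a_1\cdots\underline a_i$ with $i < \alpha(\underline f)$ I would induct on $i$: a decomposition $E^0[M_0]$ either has $E^0 = []$, in which case $M_0$ would have to be a redex or blocked term whose head is $\underline f$ applied to $i$ arguments, whereas both rule~(\ref{rule:f}) and the blocked shape require exactly $\alpha(\underline f) \neq i$ arguments; or it peels one argument off the right, $E^0 = E^0'\,\underline a_i$ or $E^0 = \underline f\,\underline a_1\cdots\underline a_{i-1}\,E^0'$, which either invokes the induction hypothesis on $\underline f\,\underline a_1\cdots\underline a_{i-1}$ or forces $E^0'[M_0] = \underline a_i$, an impossibility since $\underline a_i$ is not a redex or blocked term.

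For the second inclusion I induct on the closed term $L$, assuming $L$ is not a lazy value. The cases $L = \sample[\mu]$ and $L = \reckw\; M$ are immediate: each is a weak redex taking $E^0 = []$. For $L$ of the form $\score\, M$, $\pi_1 M$, $\pi_2 M$, $\dokw {x \leftarrow M}; N$, or $\casekw\, M\, N\, P$, I apply the induction hypothesis to the closed head subterm $M$: if $M$ is already a weak redex or weak blocked term $E^0[M_0]$, then placing $E^0$ into the corresponding weak-context production ($\score\, E^0$, $\pi_1 E^0$, $\pi_2 E^0$, $\dokw {x \leftarrow E^0}; N$, or $\casekw\, E^0\, N\, P$) shows $L$ is one as well; otherwise $M$ is a lazy value, and canonical forms pin down its outermost constructor ($\underline{\mathbf a}$ of type $\realT$, a pair, an injection, or $\retkw M'$, respectively), so that $L$ itself is a left-hand side of a rule of Figure~\ref{fig:ideal:opsem} with $E^0 = []$. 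The remaining case is an application $L = M\, N$: apply the induction hypothesis to $M$; if $M$ is a weak redex or weak blocked term, propagate it through the context $E^0\, N$; if $M$ is a lazy value it has function type, so by canonical forms $M = \lambda x . M'$ (making $L$ a $\beta$-redex) or $M = \underline f\,\underline a_1\cdots\underline a_m$ with $m < \alpha(\underline f)$. In the latter sub-case $N$ has the basic type $\sigma_{m+1}$, so I apply the induction hypothesis to $N$: if $N$ is a weak redex or weak blocked term, propagate it through the weak context $\underline f\,\underline a_1\cdots\underline a_m\, E^0$, which is legal because $m + 1 \le \alpha(\underline f)$; if $N$ is a lazy value it must be a zero-ary constant $\underline a_{m+1}$, so $L = \underline f\,\underline a_1\cdots\underline a_{m+1}$, and since $L$ is not a lazy value we must have $m + 1 = \alpha(\underline f)$, whence $L$ is the left-hand side of rule~(\ref{rule:f}) if $f(a_1)\cdots(a_{m+1})$ is observable, and a weak blocked term otherwise.

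I expect the bookkeeping around constant applications to be the only genuine obstacle: one must consistently keep track of how many arguments a constant has consumed and compare it with $\alpha(\underline f)$ in order to separate the lazy-value regime ($i < \alpha(\underline f)$) from the redex-or-blocked regime ($i = \alpha(\underline f)$), check each time that the weak-context production $\underline f\,\underline a_1\cdots\underline a_{i-1}\, E^0$ is applicable (i.e.\ $i \le \alpha(\underline f)$), and use typing to exclude over-saturated applications. The canonical-forms facts of the first paragraph are exactly what makes this manageable; everything else is routine grammar chasing.
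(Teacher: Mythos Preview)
Your proposal is correct and follows essentially the same approach as the paper: an induction on the term structure, driven by the same canonical-forms observations and the same propagation of weak evaluation contexts. The only differences are stylistic: you argue the hard direction in contrapositive form (not a lazy value $\Rightarrow$ weak redex or weak blocked), you spell out the easy direction in more detail than the paper's one-line ``it is clear,'' and you isolate the canonical-forms facts up front rather than deriving them inline.
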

\begin{proof}
  It is clear that every lazy value is in weak normal form.
  Conversely, we show that every weak normal form $M$ is a lazy value
  $V^0$, by induction on the size of $M$.

  We start with the case where $M$ is a constant $\underline f$.  If
  $\alpha (\underline f)=0$, then $M$ is a zero-ary constant of
  observable type.  By the last condition of
  Definition~\ref{defn:assum:strict}, it is of basic type, and
  therefore it is a lazy value.  If $\alpha (\underline f) \neq 0$,
  then it is a lazy value of the form
  $\underline f \;V_1\; \cdots\; V_i$, with
  $i < \alpha (\underline f)$ (namely with $i=0$).

  If $M$ is of the form $\score\; N$, then $N$ is weakly normal, since
  $\score []$ is a weak evaluation context and $M$ is weakly normal.
  We apply the induction hypothesis to $N$, and we realize that
  because of typing, $N$ must be of the form $\underline {\mathbf a}$
  for some zero-ary constant $\underline {\mathbf a}$ of type
  $\realT$.  However, $M = \score \;\underline {\mathbf a}$ is not
  weakly normal, so $M$ cannot be of the form $\score\; N$ after all.
  
  $M$ cannot be of the form $\reckw\; N$ or $\sample [\mu]$, which are
  not weakly normal.  Lambda-abstractions, and terms of the form
  $\retkw N$, $\langle N, P \rangle$, $\iota_1 N$, or $\iota_2 N$ are
  all lazy values.

  If $M$ is of the form $\pi_1 N$, then $N$ is weakly normal, so
  by induction hypothesis and typing, $N$ is a pair, which would
  contradict the normality of $M$; similarly if $M = \pi_2 N$.  If $M$
  is of the form $\casekw N P Q$, then $N$ is weakly normal, so by
  induction hypothesis and typing it is of the form $\iota_1 N_1$ or
  $\iota_2 N_2$, and that would contradict the normality of $M$ as well.

  If $M$ is of the form $\dokw {x \leftarrow N}; P$, then by induction
  hypothesis and typing $N$ is of the form $\retkw P$, which would
  contradict the fact that $M$ is weakly normal.

  If $M$ is an application $N P$, then by induction hypothesis and
  typing, either $N$ is a $\lambda$-abstraction (which is impossible
  since $M$ is weakly normal), or $N$ is of the form
  $\underline f \;M_1\; \cdots\; M_i$ with $i < \alpha (\underline f)$
  and where $M_1$, \ldots, $M_i$ are weakly normal.  Then
  $M = \underline f \;M_1\; \cdots\; M_i P$, where $M_1$, \ldots,
  $M_i$ and $P$ are weakly normal, hence are lazy values.  By the
  first condition of Definition~\ref{defn:assum:strict}, they are lazy
  values of basic types, hence are constants of arity $0$.  If
  $i+1 = \alpha (\underline f)$, either rule (\ref{rule:f}) would
  apply or $M$ would be blocked.  Hence $i+1 < \alpha (\underline f)$,
  showing that $M$ is a lazy value.
\end{proof}

Let $Norm_\tau$ be the set of normal forms of type $\tau$.
\begin{lemma}
  \label{lemma:normal}
  Let $\Sigma$ consist of strictly observable first-order constants.
  For every type $\tau$, $Norm_\tau$ is exactly the set of values of
  type $\tau$.
\end{lemma}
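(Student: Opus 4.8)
The plan is to prove the two inclusions separately, with Lemma~\ref{lemma:normal:weak} doing the heavy lifting for the non-trivial one.

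First I would show that every value $V$ is a normal form. The key auxiliary fact is the sublemma: \emph{if $V$ is a value and $V = E[M_1]$ for some evaluation context $E$, then $M_1$ is a value.} This I would prove by induction on $E$ (measuring the number of constructors in $E$, including those of the embedded weak contexts $E^0$). When $E$ is $[]$, $\retkw E'$, $\langle E', N\rangle$, $\langle V', E'\rangle$, or $\iota_j E'$, the value $V$ is forced to have the corresponding value shape, its relevant component being $E'[M_1]$, so the induction hypothesis applies. When $E$ is $E^0 N$ or $\underline f\;\underline a_1\cdots\underline a_{j-1}\,E^0$, the value $V$ is forced to be a strictly under-applied constant $\underline f\;\underline a_1\cdots\underline a_i$ with $i<\alpha(\underline f)$, and the subterm hit by the hole is then again a value (a shorter partial application, or a zero-ary argument constant $\underline a_\ell$ of basic type), so the induction hypothesis again applies. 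The remaining weak-context cases $\score E^0$, $\dokw{x\leftarrow E^0};N$, $\casekw E^0 N P$, $\pi_1 E^0$, $\pi_2 E^0$ are vacuous, since a value cannot have those shapes. Granting the sublemma, $V$ cannot be a redex: writing $V = E[L_0]$ for one of the ten redex left-hand sides would make $L_0$ a value by the sublemma, which is impossible, because none of $\sample[\mu]$, $\score\,\underline{\mathbf a}$, $\dokw{x\leftarrow\retkw M};N$, $\reckw M$, $(\lambda x.M)N$, $\pi_i\langle M,N\rangle$, $\casekw(\iota_j M)NP$ is a value form, and a saturated constant application $\underline f\;\underline a_1\cdots\underline a_{\alpha(\underline f)}$ fails the constraint $i<\alpha(\underline f)$ that a constant-headed application-value must satisfy. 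The same reasoning, applied to a subterm $\underline f\;\underline a_1\cdots\underline a_{\alpha(\underline f)}$ with non-observable result, shows $V$ is not a blocked term; hence $V$ is a normal form.

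For the converse, I would first observe that a weak evaluation context is in particular an evaluation context, so every weak redex is a redex and every weak blocked term is a blocked term; thus a normal form is a weak normal form, and by Lemma~\ref{lemma:normal:weak} it is a lazy value. I would then argue by induction on the size of a normal form $M$, using that $M$ has one of the seven lazy-value shapes. If $M$ is $\underline a$, $\lambda x.N$, or $\underline f\;\underline a_1\cdots\underline a_i$ with $i<\alpha(\underline f)$, it is already a value. If $M = \retkw M'$, then $M'$ is a normal form, since any redex or blocked-term decomposition of $M'$ would lift through the evaluation context $\retkw E'$ to such a decomposition of $M$; by the induction hypothesis $M'$ is a value, hence so is $\retkw M'$. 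The cases $M = \iota_j M'$ are identical, using $\iota_j E'$. If $M = \langle M_1, M_2\rangle$, I would treat $M_1$ first, using the context $\langle E', N\rangle$, to conclude that $M_1$ is a value; this then makes $\langle M_1, E'\rangle$ a legitimate evaluation context, so $M_2$ is also a normal form, hence a value, and $\langle M_1, M_2\rangle$ is a value. The typing assumptions and the hypothesis that $\Sigma$ consists of strictly observable first-order constants enter only through the use of Lemma~\ref{lemma:normal:weak}.

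I expect the only real work to be the bookkeeping in the sublemma and in the redex/blocked-term checks of the second paragraph: in particular, matching the weak-context productions $E^0 N$ and $\underline f\;\underline a_1\cdots\underline a_{j-1}\,E^0$ against a partially applied constant and verifying that the hole always lands on a strictly shorter value. There is no conceptual obstacle; if desired, the unique maximal-depth decomposition $M = E[M_1]$ already recalled in Section~\ref{sec:ideal-oper-semant} can be invoked to streamline these case analyses.
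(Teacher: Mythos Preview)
Your argument tracks the paper's closely. The paper dismisses ``every value is in normal form'' as clear; you unpack it via the sublemma that any evaluation-context decomposition $V=E[M_1]$ of a value again yields a value $M_1$, and the case analysis you sketch for this is correct. For the converse, both you and the paper invoke Lemma~\ref{lemma:normal:weak} to obtain a lazy value and then recurse under the non-weak context productions $\retkw E$, $\iota_j E$, $\langle E,M\rangle$, $\langle V,E\rangle$; the paper's proof of this direction is essentially identical to yours.

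The lifting step in your converse direction has a genuine gap, and the paper's own proof shares it. You claim that if $M=\retkw M'$ is normal then $M'$ is normal, because any redex or blocked-term decomposition of $M'$ lifts through $\retkw[\,]$. This works for the seven \emph{deterministic} rule shapes, whose left-hand sides use a general context $E$; but the three \emph{probabilistic} shapes (for $\sample$, $\score$, and $\dokw{x\leftarrow\retkw\_};\_$) require a \emph{weak} context $E^0$, and $\retkw E^0$ is not weak. Concretely, $\retkw(\sample[\mu])$, of type $D(D\realT)$, matches none of the ten left-hand sides of Figure~\ref{fig:ideal:opsem} and is not blocked, so it is a normal form; yet it is not a value, since $\sample[\mu]$ is not. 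The same phenomenon occurs under $\langle\_,\_\rangle$ and $\iota_j$. Thus the lemma as stated for arbitrary $\tau$ appears to fail; your argument (and the paper's) does go through when all subterms reached by these decompositions have observable type, since then by Lemma~\ref{lemma:E0:D} none of them can be a probabilistic redex, and that is the regime in which the lemma is actually used downstream (Lemma~\ref{lemma:det:value} and Theorem~\ref{thm:adeq}).
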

\begin{proof}
  It is clear that every value is in normal form.  Conversely, we
  argue that every element $M$ of $Norm_\tau$ is a value, by induction
  on $M$.  Every normal form is weakly normal, since every weak
  evaluation context $E^0$ is a context $E$; so $M$ is a lazy value.
  In order to show that it is a value, we only need to show that if it
  is of the form $\retkw N$, $\langle N, P \rangle$, $\iota_1 N$ or
  $\iota_2 N$, then $N$ and $P$ are values.  If $M = \retkw N$, this
  is because $\retkw []$ is an evaluation context, so that $N$ is a
  value by induction hypothesis.  Similarly when $M = \iota_1 N$ or
  $M = \iota_2 N$.  When $M = \langle N, P \rangle$, since
  $\langle [], P\rangle$ is an evaluation context, by induction
  hypothesis $N$ is a value $V$.  Since $\langle V, [] \rangle$ is a
  context, $P$ is a value.
\end{proof}

The following lemma will apply to $\kappa \eqdef Next$, where
$X \eqdef \Config$, and where $Norm$ is the set of normal forms. For
every $x \in X$, for every open subset $U$ of $Norm$,
$Next^{\leq n} (x) (U)$ is the `probability' that we will reach $U$,
starting from $x$, in at most $n$ $Next$ steps, and $Next^* (x) (U)$
is the `probability' that we will reach $U$ starting from $x$ in any
number of steps.  (The quotes around `probability' reflect the fact
that those probabilities need not be bounded by $1$.)  This way of
defining `probabilities' of reaching $U$ is inspired from Lemma~3.9
and Equation~(6) of \cite{EPT:PPCF}.  Let us also recall the
restriction $\mu_{|U}$ of a continuous valuation to an open subset
$U$, defined by $\mu_{|U} (V) \eqdef \mu (U \cap V)$.

\begin{lemma}
  \label{lemma:iter}
  Let $X$ be a topological space, and let $Norm$ and $\overline{Norm}$
  be two complementary open subsets of $X$.  Let $\kappa \colon X \to \Val X$ be
  a continuous map, and let us assume that for every $x \in
  Norm$, $\kappa (x)$ is the zero valuation.  We define:
  \begin{itemize}
  \item $\kappa^{\leq 0}$ as mapping every $x \in Norm$ to
    $\delta_x$, and every $x \in \overline{Norm}$ to the zero valuation;
  \item $\kappa^{\leq 1} \eqdef \kappa + \kappa^{\leq 0}$;
  \item $\kappa^{\leq n+1} \eqdef (\kappa^{\leq n})^\dagger \circ
    \kappa^{\leq 1}$, for every $n \geq 1$.
  \end{itemize}
  Then:
  \begin{enumerate}
  \item the maps $\kappa^{\leq n}$ are continuous from $X$ to $\Val X$;
  \item for every $x \in X$, the family
    ${(\kappa^{\leq n} (x)_{|Norm})}_{n \geq 1}$ is monotonically
    increasing;
  \item the formula
    $\kappa^* (x) \eqdef \sup_{n \geq 1} \kappa^{\leq
      n} (x)_{|Norm}$ defines a
    continuous map $\kappa^*$ from $X$ to $\Val X$.
  \end{enumerate}
\end{lemma}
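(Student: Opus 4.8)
The plan is to prove the three items in the order stated, the only genuinely non-routine point being the monotonicity in item~2; items~1 and~3 are bookkeeping with the standard continuity properties of $\eta$, $\_^\dagger$, addition, restriction, and directed suprema on $\Val X$ (as in Lemma~\ref{lemma:eta:mu} and the fact that $\Val X$ is a dcpo under the stochastic order; I work with the weak topology on $\Val X$, writing $[U>r] \eqdef \{\nu \mid \nu(U)>r\}$ for its subbasic opens, which coincides with the Scott topology when $X$ is a continuous dcpo, as in the intended application $X=\Config$). For item~1, I argue by induction on $n$. The map $\kappa^{\leq 0}$ is continuous because it glues the continuous map $\eta_X$ (restricted to the open set $Norm$; continuity of $\eta_X$ is Lemma~\ref{lemma:eta:mu}(i)) with the constant zero valuation (on the open set $\overline{Norm}$) along an open partition of $X$; concretely $(\kappa^{\leq 0})^{-1}([U>r])$ is $Norm \cap U$ if $r<1$ and $\emptyset$ if $r\geq 1$, open in either case. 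Since addition on $\Val X$ is continuous, $\kappa^{\leq 1}=\kappa+\kappa^{\leq 0}$ is continuous; note each $\kappa^{\leq n}(x)$ is a bona fide continuous valuation, being a sum of such or of the form $f^\dagger(\mu)$. For the step, if $\kappa^{\leq n}$ is continuous then so is $(\kappa^{\leq n})^\dagger$ by Lemma~\ref{lemma:eta:mu}(ii), hence so is $\kappa^{\leq n+1}=(\kappa^{\leq n})^\dagger\circ\kappa^{\leq 1}$.

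For item~2, the key preliminary is that the recursion also unfolds ``from the left'': $\kappa^{\leq n+1}=(\kappa^{\leq 1})^\dagger\circ\kappa^{\leq n}$ for every $n\geq 1$. This follows by a short induction on $n$ from the third monad axiom $(g^\dagger\circ f)^\dagger=g^\dagger\circ f^\dagger$: the base case is the definition of $\kappa^{\leq 2}$, and in the step one rewrites $(\kappa^{\leq n+1})^\dagger=((\kappa^{\leq 1})^\dagger\circ\kappa^{\leq n})^\dagger=(\kappa^{\leq 1})^\dagger\circ(\kappa^{\leq n})^\dagger$ and uses the definition of $\kappa^{\leq n+1}$ again. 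The second ingredient is that $(\kappa^{\leq 1})^\dagger(\mu)\geq\mu_{|Norm}$ for every $\mu\in\Val X$: since $\kappa(z)$ is the zero valuation for $z\in Norm$, we have $\kappa^{\leq 1}(z)=\delta_z$ there, so for every open $V$, $(\kappa^{\leq 1})^\dagger(\mu)(V)=\int_{z\in X}\kappa^{\leq 1}(z)(V)\,d\mu \geq \int_{z\in X}\chi_{Norm\cap V}(z)\,d\mu = \mu(Norm\cap V)=\mu_{|Norm}(V)$. Combining the two, $\kappa^{\leq n+1}(x)=(\kappa^{\leq 1})^\dagger(\kappa^{\leq n}(x))\geq\kappa^{\leq n}(x)_{|Norm}$, and restricting both sides to $Norm$ (monotone in general, idempotent on the right-hand side) gives $\kappa^{\leq n+1}(x)_{|Norm}\geq\kappa^{\leq n}(x)_{|Norm}$ for all $n\geq 1$. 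Note that the restriction is essential here: $\kappa^{\leq n}(x)$ itself need not increase in $n$.

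For item~3, fix $x$. By item~2 the sequence ${(\kappa^{\leq n}(x)_{|Norm})}_{n\geq 1}$ is an increasing chain in the dcpo $\Val X$, so its supremum $\kappa^*(x)$ exists and is again a continuous valuation, the sup being computed pointwise on opens. For continuity of $\kappa^*$, observe that $\mu\mapsto\mu_{|Norm}$ is continuous on $\Val X$, since $\{\mu\mid\mu_{|Norm}(U)>r\}=[Norm\cap U>r]$; hence each $x\mapsto\kappa^{\leq n}(x)_{|Norm}$ is continuous by item~1, and in particular each $x\mapsto\kappa^{\leq n}(x)_{|Norm}(U)$ is lower semicontinuous. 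Since the supremum is pointwise, $(\kappa^*)^{-1}([U>r])=\bigcup_{n\geq 1}\{x\mid\kappa^{\leq n}(x)_{|Norm}(U)>r\}$ is a union of open sets, hence open; this shows $\kappa^*$ is continuous and completes the proof. The main obstacle is item~2 — specifically, noticing that $\kappa^{\leq n}(x)$ is not monotone in $n$, so one must pass to the restrictions to $Norm$, and that the clean argument for those goes through the ``left'' unfolding of the recursion together with the identity $(\kappa^{\leq 1})^\dagger(\mu)\geq\mu_{|Norm}$.
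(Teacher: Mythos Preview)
Your proof is correct. Items~1 and~3 match the paper's argument in substance (the paper phrases item~3 simply as ``$\Val X$ is a dcpo, and suprema in the dcpo of continuous maps from $X$ to $\Val X$ are computed pointwise'', which is your argument without the explicit unpacking via subbasic weak opens).

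The genuine difference is item~2. The paper does not use a left unfolding at all: it works directly with the defining right unfolding $\kappa^{\leq n+1}(x)(U)=\int_{y}\kappa^{\leq n}(y)(U)\,d\kappa^{\leq 1}(x)$ and proves $\kappa^{\leq n}(x)(U)\leq\kappa^{\leq n+1}(x)(U)$ for every open $U\subseteq Norm$ by induction on $n$ (starting at $n=0$, where it is immediate since $\kappa^{\leq 1}=\kappa+\kappa^{\leq 0}\geq\kappa^{\leq 0}$). The inductive step is a one-liner: by monotonicity of the integral in the integrand and the induction hypothesis, $\int_y\kappa^{\leq n}(y)(U)\,d\kappa^{\leq 1}(x)\geq\int_y\kappa^{\leq n-1}(y)(U)\,d\kappa^{\leq 1}(x)$. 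Your route---proving the associativity-style identity $\kappa^{\leq n+1}=(\kappa^{\leq 1})^\dagger\circ\kappa^{\leq n}$ from the monad law and then the pointwise bound $(\kappa^{\leq 1})^\dagger(\mu)\geq\mu_{|Norm}$---is perfectly valid and yields the same conclusion, but it is more machinery than needed here; the paper's argument avoids both auxiliary facts. On the other hand, your left-unfolding identity is a pleasant structural observation that could be reused elsewhere.
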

\begin{proof}
  1.   Since $X$ is homeomorphic to the topological coproduct $\overline{Norm} +
  Norm$, the continuity of $\kappa^{\leq 0}$ reduces to that
  the unit $\eta_{Norm}$ and of the constant zero map.  Then
  $\kappa^{\leq 1}$ is continuous since $+$ is Scott-continuous on
  $\creal$, and by induction on $n \geq 1$, $\kappa^{\leq n+1}$ is
  continuous as a composition of continuous maps.

  2. We show that $\kappa^{\leq n} (x) (U) \leq \kappa^{\leq n+1} (x)
  (U)$ for every $x \in X$ and every open subset $U$ of
  $Norm$ by induction on $n \in \nat$.

  In the base case $n=0$, this is obvious. Otherwise, $n \geq 1$, and
  $\kappa^{\leq n+1} (x)
  (U) = \int_{y \in X} \kappa^{\leq n} (y) (U)
  d\kappa^{\leq 1} (x)$.   For every $y \in X$,
  $\kappa^{\leq n} (y) (U) \geq \kappa^{\leq n-1} (y) (U)$ by
  induction hypothesis, so $\kappa^{\leq n+1} (x)
  (U) \geq \int_{y \in X} \kappa^{\leq n-1} (y) (U) d\kappa^{\leq 1}
  (x) = \kappa^{\leq n} (x) (U)$.

  3. $\Val X$ is a dcpo, and suprema in the dcpo of continuous maps
  from $X$ to $\Val X$ are computed pointwise.
\end{proof}

From now on, we let $Norm$ be the subset of configurations
$(M, \theta)$ such that $M \theta$ is a normal form.  Being a normal
form is a property of shapes, namely any two generalized ISPCF terms
with the same shape will both be normal, or neither of them will
be. It follows that $Norm$ and its complement $\overline{Norm}$ are
open in $\Config$. Let us give another characterization of
$Next^{\leq n}$.

\begin{lemma}
  \label{lemma:Next*}
  Let $\Sigma$ consist of strictly observable first-order constants.  For every
  generalized ISPCF term $M$, for every $n \in \nat$, we have:
  \begin{enumerate}
  \item if $M \in Norm$, then $Next^{\leq n} (M) = \delta_M$;
  \item if $M = E^0 [\sample [\mu]]$, and $n \geq 1$, then for every
    $U \in \Open \Config$,
    $Next^{\leq n} (M) (U) = \int_{a \in \real} Next^{\leq n-1} (E^0
    [\retkw \underline a]) (U) d\mu$;
  \item if $M = E^0 [\score\; \underline{\mathbf a}]$ and $n \geq 1$,
    then
    $Next^{\leq n} (M) = |\mathbf a| . Next^{\leq n-1} (E^0 [\retkw
    \underline *])$;
  \item if
    $M = E [\underline f\; \underline {a_1} \cdots \underline {a_k}]$
    and $n \geq 1$, then $Next^{\leq n} (M)$ is equal to
    $Next^{\leq n-1}\allowbreak (E [\underline {f (a_1) \cdots
      (a_k)}])$ if $f (a_1) \cdots (a_k)$ is observable, to the zero
    valuation otherwise;
  \item for every instance of a rule $L \to R$ of
    Figure~\ref{fig:ideal:opsem} except (\ref{rule:sample}),
    (\ref{rule:score}) and (\ref{rule:f}), if $n \geq 1$ then
    $Next^{\leq n} (L) = Next^{\leq n-1} (R)$.
  \end{enumerate}
\end{lemma}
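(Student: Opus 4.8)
The plan is to establish the five items simultaneously by induction on $n$, the essential inputs being the monad laws of Proposition~\ref{prop:VX:monad} --- in particular the identities $\kappa^\dagger(\delta_y) = \kappa(y)$ (axiom~1, equivalently Lemma~\ref{lemma:eta:mu}(iv)) and $g^\dagger \circ (\eta \circ h)^\dagger = (g \circ h)^\dagger$ (axioms~1 and~3) --- together with the linearity and Scott-continuity of $(-)^\dagger$ in its valuation argument (Lemma~\ref{lemma:eta:mu}), and the continuity of $Next$ and of the $Next^{\leq n}$ recorded in Proposition~\ref{prop:next:cont} and Lemma~\ref{lemma:iter}. I will also use, without further comment, that $\kappa^\dagger(\alpha . \delta_y) = \alpha . \kappa(y)$ for $\alpha \in \realp$ and that $\kappa^\dagger$ sends the zero valuation to the zero valuation.

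For $n = 0$ only item~1 is in scope, and there $Next^{\leq 0}(M) = \delta_M$ directly from the definition of $\kappa^{\leq 0}$ in Lemma~\ref{lemma:iter}, since a normal form lies in $Norm$. Now fix $n \geq 1$ and assume all five items for smaller indices. The single syntactic observation that drives the step is that a redex --- and more generally a blocked term --- is never a normal form (Definition~\ref{defn:normal}), so that $Next^{\leq 0}$ is the zero valuation on the left-hand side $L$ of every rule; combined with $Next^{\leq 1} = Next + Next^{\leq 0}$ this yields $Next^{\leq 1}(L) = Next(L)$. For $M \in Norm$ we have $Next(M) = 0$ (noted after Definition~\ref{defn:normal}), hence $Next^{\leq 1}(M) = \delta_M$; and for $n \geq 2$, using $Next^{\leq n} = (Next^{\leq n-1})^\dagger \circ Next^{\leq 1}$ and $\kappa^\dagger(\delta_M) = \kappa(M)$, we get $Next^{\leq n}(M) = Next^{\leq n-1}(M) = \delta_M$ by the induction hypothesis, which settles item~1.

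For items~3--5 and the observable case of~4 one simply substitutes the value of $Next(L)$ prescribed by Definition~\ref{defn:opsem} into $Next^{\leq n}(L) = (Next^{\leq n-1})^\dagger(Next(L))$ (valid for $n \geq 2$; for $n = 1$ one uses $Next^{\leq 1}(L) = Next(L)$ directly): when $Next(L) = \delta_R$ one reads off $(Next^{\leq n-1})^\dagger(\delta_R) = Next^{\leq n-1}(R)$; when $Next(L) = |\mathbf a| . \delta_{E^0[\retkw \underline *]}$ one reads off $|\mathbf a| . Next^{\leq n-1}(E^0[\retkw \underline *])$ by linearity; and when $Next(L)$ is the zero valuation (the non-observable $\underline f$ case, where $L$ is blocked), so is $(Next^{\leq n-1})^\dagger(Next(L))$. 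Item~2 is the only case where $Next(L)$ is not atomic: $Next(E^0[\sample [\mu]]) = f_{E^0}[\mu]$ is the image valuation of $\mu$ along the continuous map $f_{E^0}\colon a \mapsto E^0[\retkw \underline a]$, so $f_{E^0}[\mu] = (\eta_\Config \circ f_{E^0})^\dagger(\mu)$ by Proposition~\ref{prop:VX:monad}; the monad axioms then give $(Next^{\leq n-1})^\dagger(f_{E^0}[\mu]) = (Next^{\leq n-1} \circ f_{E^0})^\dagger(\mu)$, whose value at $U$ unfolds, by the very definition of $(-)^\dagger$, to $\int_{a \in \real} Next^{\leq n-1}(E^0[\retkw \underline a])(U)\, d\mu$. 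Equivalently one may invoke the change-of-variables formula for the Choquet integral, which is legitimate because $x \mapsto Next^{\leq n-1}(x)(U)$ is lower semicontinuous by Lemma~\ref{lemma:iter}, item~1, and Proposition~\ref{prop:next:cont}, and because $f_{E^0}$ is continuous as shown in the proof of Proposition~\ref{prop:next:cont}.

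I do not expect any genuinely hard step: the difficulty is entirely bookkeeping. One must use that every generalized ISPCF term matches the left-hand side of at most one rule of Figure~\ref{fig:ideal:opsem} --- established right after Definition~\ref{defn:opsem} --- so that the case analysis is exhaustive and non-overlapping, and one must track exactly which reduct each rule produces. The one slightly delicate point is the boundary $n = 1$ of items~2, 3, 5 and the observable subcase of~4: there $Next^{\leq 1}(L)$ equals $Next(L)$, and this matches the stated right-hand side at index $n-1 = 0$ after the restriction $(-)_{|Norm}$ is applied (while for $n \geq 2$ the equalities hold as genuine identities of valuations on all of $\Config$); since $Next^*$ is built by taking exactly that restriction, nothing more is needed, and this is the only place where the statement should be read with that mild caveat in mind.
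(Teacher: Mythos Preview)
Your approach is essentially the paper's: observe that $Next^{\leq 1}(L) = Next(L)$ whenever $L$ is a redex or blocked term, then feed $Next^{\leq 1}(L)$ into the recursion $Next^{\leq n} = (Next^{\leq n-1})^\dagger \circ Next^{\leq 1}$ and simplify via the monad law $\kappa^\dagger(\delta_y)=\kappa(y)$, linearity, and the change-of-variables formula for the image valuation $f_{E^0}[\mu]$. You are in fact more careful than the paper about the boundary $n=1$: the paper's proof tacitly uses $Next^{\leq n} = (Next^{\leq n-1})^\dagger \circ Next^{\leq 1}$ as if it held at $n=1$, whereas you correctly note that items~2--5 at $n=1$ are only literally true after restriction to $Norm$ (for instance, in item~5 one has $Next^{\leq 1}(L)=\delta_R$ but $Next^{\leq 0}(R)=0$ when $R\notin Norm$), and that this is harmless since every downstream use is through $Next^*$, which restricts to $Norm$ anyway.
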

\begin{proof}
  1. By induction on $n$.  If $n=0$, then $Next^{\leq 0} (M) =
  \delta_M$ since $M \in Norm$.  If $n=1$, then $Next (M)=0$, so
  $Next^{\leq 1} (M) = Next (M) + Next^{\leq 0} (M) = \delta_M$. If
  $n \geq 2$, then
  $Next^{\leq n} (M) = (Next^{\leq n-1})^\dagger (Next^{\leq 1} (M)) =
  (Next^{\leq n-1})^\dagger (\delta_M) = Next^{\leq n-1} (M) =
  \delta_M$, by induction hypothesis.

  2. Since $M = E^0 [\sample [\mu]] \in \overline{Norm}$,
  $Next^{\leq 1} (M) = Next (M)$ is equal to $f_{E^0} [\mu]$, where
  $f_{E^0} (a) \eqdef E^0 [\retkw \underline a]$ for every
  $a \in \real$.  Therefore
  $Next^{\leq n} (M) (U) = (Next^{\leq n-1})^\dagger (f_{E^0} [\mu])
  (U) = \int_{N \in \Config} Next^{\leq n-1} (N) (U) df_{E^0} [\mu]$.
  By the change-of-variables formula, this is equal to
  $\int_{a \in \real} Next^{\leq n-1} (f_{E^0} (a)) \allowbreak (U)
  d\mu$, hence to
  $\int_{a \in \real} Next^{\leq n-1} (E^0 [\retkw \underline a]) (U)
  d\mu$.

  3. Since
  $M = E^0 [\score\; \underline{\mathbf a}] \in \overline{Norm}$,
  $Next^{\leq 1} (M) = Next (M) = |\mathbf a| . \allowbreak
  \delta_{E^0 [\retkw \underline *]}$.  Then, for every
  $U \in \Open \Config$,
  $Next^{\leq n} (M) (U) = (Next^{\leq n-1})^\dagger (|\mathbf a|
  . \allowbreak \delta_{E^0 [\retkw \underline *]}) (U) = \int_{N \in
    \Config} Next^{\leq n-1} (N) (U) d |\mathbf a| .  \allowbreak
  \delta_{E^0 [\retkw \underline *]}$, and this is equal to $|\mathbf a| . Next^{\leq n-1}
  \allowbreak (E^0 \allowbreak [\retkw \underline *]) (U)$.

  4. Let
  $M = E [\underline f\; \underline {a_1} \cdots \underline
  {a_k}]$. If $f (a_1) \cdots (a_k)$ is not observable, then $M$ is
  blocked, so $Next^{\leq 1} (M)=0$. For every $U \in \Open \Config$,
  $Next^{\leq n} (M) (U) = (Next^{\leq n-1})^\dagger (0) = \int_{N \in
    \Config} Next^{\leq n-1} (N) (U) d0 = 0$, so
  $Next^{\leq n} (M) = 0$.

  If $f (a_1) \cdots (a_k)$ is observable, then
  $Next^{\leq 1} (M) = Next (M) = \delta_{E [\underline {f (a_1)
      \cdots (a_k)}]}$. It follows that
  $Next^{\leq n} (M) = (Next^{\leq n-1})^\dagger (\eta_\Config (E
  [\underline {f (a_1) \cdots (a_k)}])) = Next^{\leq n-1} \allowbreak
  (E [\underline {f (a_1) \cdots (a_k)}])$.

  5. This is similar to the latter case.   
\end{proof}

We extend the denotational semantics of Figure~\ref{fig:sem} to
generalized ISPCF terms by positing:
\begin{align*}
  \Eval {\underline {\mathbf a}} \rho & \eqdef \mathbf a,
\end{align*}
for every $\mathbf a \in \IR$.  We note that $\Eval M \rho$ is
independent of $\rho$.  In the sequel, we will therefore simply write
$\Eval M$ for the semantics of the generalized ISPCF term $M$,
disregarding the useless $\rho$.  Alternatively, every generalized
ISPCF term $M$ can be written as $P \theta$ for some unique configuration
$(P, \theta)$, and $\Eval M$ is also equal to $\Eval P \theta$,
reading $\theta$ as an environment.  The Scott-continuity of $\Eval P$
immediately entails the following.
\begin{lemma}
  \label{lemma:Eval:cont}
  Let $\Config_\tau$ be the subspace of $\Config$ consisting of
  generalized ISPCF terms of type $\tau$.
  The map $\Eval \_ \colon \Config_\tau \to \Eval \tau$ is
  Scott-continuous.
\end{lemma}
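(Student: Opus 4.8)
The plan is to exploit the decomposition of generalized ISPCF terms into configurations, which is exactly what the sentence preceding the statement anticipates. Recall that every generalized ISPCF term of type $\tau$ is of the form $P\theta$ for a unique configuration $(P,\theta)$, where $P$ is a template of type $\tau$ and $\theta \in \IR_\bot^{\FV(P)}$, and that $\Eval{P\theta} = \Eval P \theta$ when $\theta$ is read as an environment. Accordingly, as a poset, $\Config_\tau$ is the coproduct $\coprod_P \IR_\bot^{\FV(P)}$ taken over all templates $P$ of type $\tau$: two configurations $(P,\theta)$ and $(P',\theta')$ are comparable only when $P = P'$, and then the order is the pointwise order on $\IR_\bot^{\FV(P)}$.

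First I would check monotonicity. If $(M,\theta) \leq (N,\theta')$ in $\Config_\tau$, then $M = N$ and $\theta \leq \theta'$ pointwise, so $\Eval{(M,\theta)} = \Eval M \theta \leq \Eval M \theta' = \Eval{(N,\theta')}$, using that the denotational semantics map $\Eval M$ is monotone in its environment. This last fact is the standard structural-induction property of Figure~\ref{fig:sem}; only the values of the environment at the free (template) variables of $M$ matter, so $\Eval M$ may be regarded as a Scott-continuous map $\IR_\bot^{\FV(M)} \to \Eval\tau$.

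Next I would handle directed suprema. Let $D$ be a directed family in $\Config_\tau$. Since elements lying in distinct summands of the coproduct are incomparable, all members of $D$ share one common template $P$, so $D = \{(P,\theta_i) \mid i \in I\}$ with $\{\theta_i\}_{i\in I}$ directed in $\IR_\bot^{\FV(P)}$; its supremum in $\Config_\tau$ is $(P, \sup_i \theta_i)$, the supremum being computed pointwise. Then
\[
  \Eval{\textstyle\sup D} = \Eval P (\textstyle\sup_i \theta_i) = \textstyle\sup_i \Eval P (\theta_i) = \textstyle\sup_i \Eval{(P,\theta_i)},
\]
where the middle equality is Scott-continuity of $\Eval P$ in its environment. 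This establishes Scott-continuity of $\Eval{\_}$ on $\Config_\tau$.

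There is no genuine obstacle here: the only point requiring (routine) care is the observation that a directed family in a coproduct of dcpos is confined to a single summand, so that directed suprema in $\Config_\tau$ reduce to directed suprema in a finite power of $\IR_\bot$, on which the already-established Scott-continuity of the denotational semantics in its environment applies verbatim.
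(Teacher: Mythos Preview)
Your proposal is correct and follows essentially the same approach as the paper: the paper simply remarks that $\Eval{M\theta} = \Eval M\,\theta$ when $\theta$ is read as an environment, and that the Scott-continuity of $\Eval P$ in its environment immediately yields the claim. You have spelled out the details the paper leaves implicit, in particular the coproduct structure of $\Config_\tau$ and the confinement of directed families to a single summand.
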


It follows that, given any Scott-open subset $U$ of $\Eval \tau$, for
any type $\tau$, the set $\overline U$ of normal forms $M$ of type
$\tau$ such that $\Eval M \in U$ is Scott-open in $\Config_\tau$.
We make that into a definition, and we also introduce another open set
$\rover U$.
\begin{definition}[The open sets $\overline U$, $\rover U$]
  \label{defn:rover}
  For every type $\tau$, for every Scott-open subset $U$ of
  $\Eval\tau$, let $\overline U$ be the Scott-open set of normal forms
  $V$ of type $\tau$ such that $\Eval V \in U$, and $\rover U$ be the
  Scott-open set of terms of the form $\retkw V$ with
  $V \in \overline U$.
\end{definition}
It is clear that $\rover U$ is Scott-open: by definition of the
topology of $\Config$, the map $M \mapsto \retkw M$ is a homeomorphism
of $\Config$ onto the subspace of those configurations starting with
$\retkw$.

\begin{lemma}
  \label{lemma:E0:D}
  The only weak evaluation contexts $E^0$ whose type is of the
  form $D \sigma \vdash \tau$ are those of the form:
  \[
    \dokw x_n \leftarrow (\dokw x_{n-1} \leftarrow \cdots (\dokw x_1
    \leftarrow [];\allowbreak N_1); \cdots ; N_{n-1}); N_n,
  \]
  and then $\tau$ is a distribution type $D \tau'$.
\end{lemma}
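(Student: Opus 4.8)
The plan is to prove, by structural induction on the weak evaluation context $E^0$, the slightly strengthened statement: whenever $E^0$ has a type of the form $D\sigma \vdash \tau$, the context $E^0$ is of the form displayed in the statement for some $n \geq 0$ (with $n = 0$ corresponding to $E^0 = []$), and moreover $\tau$ is itself a distribution type. The base case $E^0 = []$ is immediate, since then $\tau = D\sigma$.

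For the induction step I would run through the remaining productions of the grammar for $E^0$ in Definition~\ref{defn:context}. In each of them the hole of $E^0$ sits inside a unique immediate sub-context, say $E^0_1$; since this hole is also the hole of $E^0_1$, the context $E^0_1$ has a type of the form $D\sigma \vdash \tau_1$ with the \emph{same} $\sigma$, so the induction hypothesis applies to $E^0_1$ and forces $\tau_1$ to be a distribution type. Reading off the relevant typing rule from Figure~\ref{fig:typing}, this is impossible for every production except $\dokw {x \leftarrow E^0_1}; N$: in $\score E^0_1$ we would need $\tau_1 = \realT$; in an application $E^0_1 N$ we would need $\tau_1$ to be an arrow type; in $\underline f\; \underline a_1 \cdots \underline a_{i-1}\; E^0_1$ (with $i \leq \alpha (\underline f)$) we would need $\tau_1$ to be the $i$-th argument type of $\underline f$, which is a basic type by Definition~\ref{defn:assum:strict}; in $\casekw E^0_1 N P$ we would need $\tau_1$ to be a sum type; and in $\pi_1 E^0_1$ or $\pi_2 E^0_1$ we would need $\tau_1$ to be a product type. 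None of these is a distribution type, so all of these productions are excluded.

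There remains only $E^0 = \dokw {x \leftarrow E^0_1}; N$. For $E^0[M]$ to be well-typed, the typing rule for $\dokw$ forces $E^0_1[M]$ to have some distribution type $D\sigma'$ and $N$ to have some distribution type $D\tau'$, and then $E^0[M] \colon D\tau'$; hence $\tau = D\tau'$ is a distribution type, as needed. Moreover $E^0_1$ has a type of the form $D\sigma \vdash D\sigma'$, so the induction hypothesis applies and shows that $E^0_1$ is of the displayed nested form, say with $n-1$ outer $\dokw$-bindings; wrapping it with the additional outermost binding $\dokw {x \leftarrow []}; N$ then exhibits $E^0$ in the displayed form with $n$ bindings, which completes the induction. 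The proof carries no genuine difficulty; the only point that needs attention is to check, production by production, that the hole type of the immediate sub-context is again the distribution type $D\sigma$, so that the strengthened induction hypothesis is legitimately applicable, and that the forced shape of that sub-context's result type is correctly extracted from the appropriate typing rule.
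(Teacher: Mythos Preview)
Your proof is correct and follows essentially the same approach as the paper's: a structural induction on $E^0$, using the typing rules to rule out every production except $\dokw$. The paper's proof is much terser and peels off the \emph{innermost} constructor around the hole (writing $E^0 = {E'}^0[\dokw {x \leftarrow []}; N]$ and applying the induction hypothesis to the outer ${E'}^0$), whereas you peel off the \emph{outermost} constructor and apply the induction hypothesis to the inner $E^0_1$; this is a minor stylistic difference, not a genuinely different route.
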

\begin{proof}
  By induction on the size of $E^0$.  This is obvious if $E^0 = []$,
  otherwise we see that $E^0$ must be of the form
  ${E'}^0 [\dokw {x \leftarrow []}; N]$, by inspection of the
  evaluation context formation rules, using typing to rule out all the
  other possibilities.
\end{proof}

The following linearity property is crucial for soundness.  The
importance of such linearity properties have already been made in
\cite[Proposition~5.1]{JGL-lics19}, where probabilistic choice was
discrete, and in \cite[Lemma~7.6]{EPT:PPCF}.  Given any evaluation
context $E$, we write $\Eval E$ for $\Eval {\lambda x . E [x]}$.
\begin{lemma}
  \label{lemma:E0:lin}
  Assume that all constants in $\Sigma$ are first-order constants.
  For every weak evaluation context $E^0$ of type
  $D \sigma \vdash D \tau$, $\Eval {E^0}$ is \emph{linear}:
  \begin{enumerate}
  \item for all $\mu, \mu' \in \Val {\Eval \sigma}$, for every
    $a \in \realp$, $\Eval {E^0} (a . \mu) = a \Eval {E^0} (\mu)$ and
    $\Eval {E^0} (\mu+\mu') = \Eval {E^0} (\mu) + \Eval {E^0} (\mu')$;
  \item for every dcpo $X$, for every Scott-continuous map
    $f \colon X \to \Val {\Eval \sigma}$,
    $\Eval {E^0} \circ f^\dagger = (\Eval {E^0} \circ f)^\dagger$.
  \end{enumerate}
\end{lemma}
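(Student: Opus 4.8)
The plan is to argue by induction on the structure of $E^0$, using the normal form supplied by Lemma~\ref{lemma:E0:D}: a weak evaluation context of type $D\sigma \vdash D\tau$ is either the hole $[]$, or of the form $\dokw {x \leftarrow {E'}^0}; N$, where ${E'}^0$ is again a weak evaluation context and $x$ has some type $\sigma'$, so that (by Lemma~\ref{lemma:E0:D} again, since the type of the hole is preserved when descending into a context) ${E'}^0$ has type $D\sigma \vdash D\sigma'$. I write $\Eval {E^0}$ for $\Eval {\lambda z . E^0 [z]}$ with $z$ a fresh variable of type $D\sigma$; this is independent of the environment, since a weak evaluation context has no free variables other than its hole. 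Unfolding the semantic clauses of Figure~\ref{fig:sem}, in the inductive case $\Eval {E^0} (\mu) = g^\dagger (\Eval {{E'}^0} (\mu))$, where $g \eqdef \Eval {\lambda x_{\sigma'} . N} \colon \Eval \sigma' \to \Val {\Eval \tau}$ is Scott-continuous.

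First I would record the one auxiliary fact that is not immediate: the extension operation $(-)^\dagger$ is \emph{linear in its valuation argument}, i.e. for every Scott-continuous $g \colon \Eval \sigma' \to \Val {\Eval \tau}$, all $\nu, \nu' \in \Val {\Eval \sigma'}$, and every $a \in \realp$, one has $g^\dagger (a.\nu) = a.g^\dagger (\nu)$ and $g^\dagger (\nu+\nu') = g^\dagger (\nu) + g^\dagger (\nu')$. This follows by evaluating both sides on an arbitrary open set $V$: $g^\dagger (\nu) (V) = \int_{x} g (x) (V)\, d\nu$ by~(\ref{eq:dagger:def}), the map $x \mapsto g (x) (V)$ lies in $\Lform {\Eval \sigma'}$ for each $V$, and the integral is linear in its valuation argument (Tix~\cite{tix95}).

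With this in hand, item~(1) is a routine induction. In the base case $E^0 = []$, $\Eval {E^0}$ is the identity on $\Val {\Eval \sigma}$, which is linear. In the inductive case, $\Eval {E^0} (a.\mu) = g^\dagger (\Eval {{E'}^0} (a.\mu)) = g^\dagger (a.\Eval {{E'}^0} (\mu))$ by the induction hypothesis, and this equals $a.g^\dagger (\Eval {{E'}^0} (\mu)) = a.\Eval {E^0} (\mu)$ by the auxiliary fact; additivity is treated identically. For item~(2), the base case is again trivial, and in the inductive case $\Eval {E^0} \circ f^\dagger = g^\dagger \circ \Eval {{E'}^0} \circ f^\dagger = g^\dagger \circ (\Eval {{E'}^0} \circ f)^\dagger$ by the induction hypothesis applied to ${E'}^0$. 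Since $\Eval {{E'}^0} \circ f$ is a Scott-continuous map from $X$ to $\Val {\Eval \sigma'}$, the third Manes axiom from Proposition~\ref{prop:VX:monad}, namely $(g^\dagger \circ h)^\dagger = g^\dagger \circ h^\dagger$ with $h \eqdef \Eval {{E'}^0} \circ f$, gives $g^\dagger \circ (\Eval {{E'}^0} \circ f)^\dagger = (g^\dagger \circ \Eval {{E'}^0} \circ f)^\dagger = (\Eval {E^0} \circ f)^\dagger$, as required.

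The only point requiring genuine care is the type bookkeeping: one must use Lemma~\ref{lemma:E0:D} to be sure that peeling off the outermost $\dokw$ leaves a weak evaluation context whose hole still has type $D\sigma$ (so that the induction hypothesis applies), and one must match the semantic clause for $\dokw$ in Figure~\ref{fig:sem} against the composite $g^\dagger \circ \Eval {{E'}^0}$. Beyond that, no obstacle arises; in particular the valuations $\mu, \mu'$ and the map $f$ need not be minimal, because linearity of the integral and the monad laws invoked here hold for the full monad $\Val$.
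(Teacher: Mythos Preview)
Your proof is correct and follows essentially the same approach as the paper: induction on the length of the $\dokw$-nesting given by Lemma~\ref{lemma:E0:D}, using linearity of $g^\dagger$ in its valuation argument for item~(1) and the monad law $(g^\dagger \circ h)^\dagger = g^\dagger \circ h^\dagger$ for item~(2). The paper's presentation is nearly identical, writing $g = \Eval{\lambda x_n . N_n}$ and carrying out the same two computations.
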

\begin{proof}
  By Lemma~\ref{lemma:E0:D}, $E^0$ is of the form
  $\dokw x_n \leftarrow (\dokw x_{n-1} \leftarrow \cdots (\dokw x_1
  \leftarrow [];\allowbreak N_1); \cdots ; N_{n-1}); N_n$.  We prove
  both claims by induction on $n$, the \emph{length} of $E^0$.

  1. If $n=0$, this is obvious.  Otherwise, we can write $E^0$ as
  $\dokw {x_n \leftarrow {E'}^0}; N_n$ where ${E'}^0$ has length
  $n-1$.  We note that, for every continuous map
  $f \colon X \to \Val Y$, $f^\dagger$ is linear, in the sense that
  $f^\dagger (a . \mu) = a . f^\dagger (\mu)$ and
  $f^\dagger (\mu+\mu') = f^\dagger (\mu) + f^\dagger (\mu')$.  This
  is clear from the definition of $f^\dagger$, see formula
  (\ref{eq:dagger:def}).  Then:
  \begin{align*}
    \Eval {E^0} (a . \mu)
    & = (\Eval {\lambda x_n . N_n})^\dagger (\Eval {{E'}^0} (a . \mu)) \\
    & = (\Eval {\lambda x_n . N_n})^\dagger (a . \Eval {{E'}^0} (\mu))
    & \text{by induction hypothesis} \\
    & = a . (\Eval {\lambda x_n . N_n})^\dagger (\Eval {{E'}^0} (\mu))
      = a . \Eval {E^0} (\mu)\mskip-40mu
  \end{align*}
  and similarly for $\Eval E (\mu+\mu')$.

  2. Again, this is clear when $n=0$.  When $n \geq 1$, we have:
  \begin{align*}
    \Eval {E^0} \circ f^\dagger
    & = (\Eval {\lambda x_n . N_n})^\dagger \circ \Eval {{E'}^0} \circ
      f^\dagger \\
    & = (\Eval {\lambda x_n . N_n})^\dagger \circ \left(\Eval {{E'}^0} \circ
      f\right)^\dagger
    & \text{by induction hypothesis} \\
    & = \left((\Eval {\lambda x_n . N_n})^\dagger \circ \Eval {{E'}^0} \circ
      f\right)^\dagger
    & \text{by the equation }(g^\dagger \circ f)^\dagger = g^\dagger \circ f^\dagger\\
    & = (\Eval {E^0} \circ f)^\dagger.
  \end{align*}

\end{proof}

\begin{proposition}[Soundness]
  \label{prop:sound}
  Let $\Sigma$ consist of strictly observable first-order constants.  For every
  generalized ISPCF term $M \colon D \tau$, where $\tau$ is any type,
  for every open subset $U$ of $\Eval \tau$,
  $\Eval M (U) \geq Next^* (M) (\rover U)$.
\end{proposition}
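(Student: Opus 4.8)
The plan is to prove, by induction on $n \in \nat$, the uniform claim
\[
  (\star_n)\qquad \Eval M (U) \geq Next^{\leq n} (M) (\rover U)
\]
for every generalized ISPCF term $M \colon D\tau$ (any type $\tau$) and every open $U \subseteq \Eval \tau$.  The proposition then follows by taking the supremum over $n \geq 1$: by Lemma~\ref{lemma:iter}(3), $Next^* (M) (\rover U) = \sup_{n \geq 1} Next^{\leq n} (M)_{|Norm} (\rover U)$, and since every term $\retkw V \in \rover U$ is a value of type $D\tau$ by Lemma~\ref{lemma:normal}, hence a normal form, we have $\rover U \subseteq Norm$ and thus $Next^{\leq n} (M)_{|Norm} (\rover U) = Next^{\leq n} (M) (\rover U)$.

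First I would dispose of the case $M \in Norm$: by Lemma~\ref{lemma:normal}, $M$ is a value of type $D\tau$, and inspecting the value grammar shows these are exactly the terms $\retkw V$ with $V$ a value of type $\tau$.  Lemma~\ref{lemma:Next*}(1) gives $Next^{\leq n} (M) = \delta_M$, so $Next^{\leq n} (M) (\rover U) = \delta_M (\rover U)$, which equals $1$ iff $M \in \rover U$, i.e.\ iff $\Eval V \in U$; this is precisely $\Eval M (U) = \delta_{\Eval V} (U)$, so $(\star_n)$ holds with equality.  If $M \notin Norm$, then $M$ is a redex or a blocked term (Definition~\ref{defn:normal}); $(\star_0)$ is trivial since $Next^{\leq 0} (M)$ is then the zero valuation, and for $n \geq 1$ I would split into cases according to the unique (by the remark after Definition~\ref{defn:opsem}) rule of Figure~\ref{fig:ideal:opsem} whose left-hand side $M$ matches, using Lemma~\ref{lemma:Next*}(4) to settle blocked terms directly ($Next^{\leq n} (M) = 0$).

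For a redex of one of the deterministic rules $L \to R$ --- including rule~(\ref{rule:f}) when $f (a_1)\cdots(a_k)$ is observable and the third probabilistic rule of Figure~\ref{fig:ideal:opsem} --- Lemma~\ref{lemma:Next*}(4)--(5) reduces $Next^{\leq n} (L)$ to $Next^{\leq n-1} (R)$, so it suffices to establish $\Eval L = \Eval R$ and then invoke the induction hypothesis on $R$, which has type $D\tau$ by subject reduction.  Writing $L = E [\rho_0]$, $R = E [\rho_0']$ with $E$ the matching evaluation context and $\rho_0$ the redex core, compositionality $\Eval {E [P]} = \Eval E (\Eval P)$ (a consequence of the substitution lemma) reduces this to $\Eval {\rho_0} = \Eval {\rho_0'}$, namely the soundness of the relevant denotational equation: $\lfp g = g (\lfp g)$ for $\reckw$, $\beta$-reduction via the substitution lemma, the monad law $g^\dagger \circ \eta = g$ together with the substitution lemma for the third probabilistic rule, the clauses of Figure~\ref{fig:sem} for $\pi_i$, $\casekw$ and $\iota_i$, and $\Eval {\underline v} = v$ for the underlined-value notation in rule~(\ref{rule:f}).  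For $M = E^0 [\score\; \underline{\mathbf a}] \colon D\tau$ (so $E^0 \colon D\unitT \vdash D\tau$), Lemma~\ref{lemma:Next*}(3) gives $Next^{\leq n} (M) = |\mathbf a| . Next^{\leq n-1} (E^0 [\retkw \underline *])$, while $\Eval M = \Eval {E^0} (|\mathbf a| . \delta_*) = |\mathbf a| . \Eval {E^0 [\retkw \underline *]}$ by the linearity of $\Eval {E^0}$ (Lemma~\ref{lemma:E0:lin}(1)) and compositionality; multiplying the induction hypothesis on $E^0 [\retkw \underline *]$ by $|\mathbf a| \geq 0$ gives $(\star_n)$.

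The main obstacle is the $\sample$ case $M = E^0 [\sample [\mu]] \colon D\tau$, with $E^0 \colon D\realT \vdash D\tau$.  By Lemma~\ref{lemma:E0:D}, $E^0$ is a nested $\dokw$ context, so $\Eval {E^0} = g^\dagger$ for a Scott-continuous $g \colon \IR_\bot \to \Val {\Eval \tau}$ (collapsing the iterated Kleisli extensions via $(h^\dagger \circ h')^\dagger = h^\dagger \circ {h'}^\dagger$), and $g (i (a)) = g^\dagger (\delta_{i (a)}) = \Eval {E^0} (\Eval {\retkw \underline a}) = \Eval {E^0 [\retkw \underline a]}$ for every $a \in \real$.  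Using the defining formula~(\ref{eq:dagger:def}) of $\dagger$ and the change-of-variables formula for the image measure $i [\mu]$,
\[
  \Eval M (U) = g^\dagger (i [\mu]) (U) = \int_{\mathbf x \in \IR_\bot} g (\mathbf x) (U) \, d (i [\mu]) = \int_{a \in \real} \Eval {E^0 [\retkw \underline a]} (U) \, d\mu,
\]
whereas Lemma~\ref{lemma:Next*}(2) gives $Next^{\leq n} (M) (\rover U) = \int_{a \in \real} Next^{\leq n-1} (E^0 [\retkw \underline a]) (\rover U) \, d\mu$; applying the induction hypothesis to each $E^0 [\retkw \underline a] \colon D\tau$ and monotonicity of the integral in its integrand yields $(\star_n)$.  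The steps I expect to take the most care are the compositionality and substitution lemmas for generalized terms, the exact inventory of normal forms of type $D\tau$, and treating the image measure $i [\mu]$ correctly --- since $\real$ is not a dcpo, I route this through the change-of-variables formula for $i [\mu] \in \Val \IR_\bot$ rather than through any monad structure on $\real$ itself.
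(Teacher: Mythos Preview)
Your proof is correct and follows essentially the same approach as the paper: induction on $n$ to establish $\Eval M (U) \geq Next^{\leq n}(M)(\rover U)$, disposing of normal forms via Lemma~\ref{lemma:Next*}(1) and Lemma~\ref{lemma:normal}, then case-splitting on the matching rule using Lemma~\ref{lemma:Next*}(2)--(5) together with $\Eval L = \Eval R$ for the deterministic-like rules and Lemma~\ref{lemma:E0:lin} for $\score$ and $\sample$. The only cosmetic difference is in the $\sample$ case: the paper invokes Lemma~\ref{lemma:E0:lin}(2) directly ($\Eval{E^0} \circ \eta^\dagger = (\Eval{E^0}\circ\eta)^\dagger$, then $\eta^\dagger = \identity{}$), whereas you collapse the nested $\dokw$ context into a single $g^\dagger$ first and then unfold the defining formula for $^\dagger$ --- but both arguments rest on Lemma~\ref{lemma:E0:D} and the monad law $(g^\dagger \circ f)^\dagger = g^\dagger \circ f^\dagger$, so this is a repackaging rather than a different route.
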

\begin{proof}
  We show that $\Eval M \rho (U) \geq Next^{\leq n} (M) (\rover U)$
  for every $n \in \nat$. If $M \in Norm$, then
  $Next^{\leq n} (M) = \delta_M$ by Lemma~\ref{lemma:Next*}, item~1.
  By Lemma~\ref{lemma:normal} (and typing), $M = \retkw V$ for some
  value $V$ of type $\tau$.  Hence $\Eval M (U) = \delta_{\Eval V} (U)$,
  which is equal to $1$ if $\Eval V \in U$, and to $0$ otherwise.
  Since $\Eval V \in U$ if and only if $V \in \overline U$ if and only
  if $\retkw V \in \rover U$, $\Eval M (U) = Next^{\leq n} (M) (\rover U)$.

  Henceforth, we assume that $M$ is not normal, and we prove the claim
  by induction on $n$.
  If $n=0$, then $Next^{\leq 0} (M) = 0$, so the claim is clear. Let
  therefore $n \geq 1$, and let us look at the possible shapes of $M$.

  IF $M = E^0 [\sample [\mu]]$, then:
  \begin{align*}
    Next^{\leq n} (M) (\rover U)
    & = \int_{a \in \real} Next^{\leq n-1} (E^0 [\retkw \underline a]) (\rover
      U) d\mu
    & \text{by Lemma~\ref{lemma:Next*}, item~2} \\
    & \leq \int_{a \in \real} \Eval {E^0 [\retkw \underline a]} (U) d\mu
    & \text{by induction hypothesis} \\
    & = \int_{\mathbf a \in \IR_{\bot}} \Eval {E^0} (\eta_{\Eval {\realT}}
      (\mathbf a)) (U) di [\mu] \\
    & \qquad\qquad\qquad\text{by the change-of-variables formula}\mskip-150mu \\
    & = (\Eval {E^0} \circ \eta_{\Eval {\realT}})^\dagger (i [\mu]) (U)
    & \text{by definition of }^\dagger\\
    & = \Eval {E^0} (\eta_{\Eval {\realT}}^\dagger (i [\mu])) (U)
    & \text{by Lemma~\ref{lemma:E0:lin}, item~2} \\
    & = \Eval {E^0} (i [\mu]) (U) & \text{since $\eta_X^\dagger =
                                \identity {\Val X}$} \\
    & = \Eval {{E^0} [\sample [\mu]]} (U) = \Eval
      M (U).
  \end{align*}

  If $M = E^0 [\score\; \underline {\mathbf a}]$, then:
  \begin{align*}
    Next^{\leq n} (M) (\rover U)
    & = |\mathbf a| . Next^{\leq n-1} (E^0 [\retkw \underline *]) (\rover
      U) d\mu
    & \text{by Lemma~\ref{lemma:Next*}, item~3} \\
    & \leq |\mathbf a| .  \Eval {E^0 [\retkw \underline *]} (U)
    & \text{by induction hypothesis} \\
    & = |\mathbf a| . \Eval {E^0} (\delta_*) (U) \\
    & = \Eval {E^0} (|\mathbf a| . \delta_*) (U)
    & \text{by Lemma~\ref{lemma:E0:lin}, item~1} \\
    & = \Eval {E^0 [\score\; {\underline {\mathbf a}}]} (U) = \Eval
      M (U).
  \end{align*}
  If
  $M = E [\underline f\; \underline {a_1} \cdots \underline {a_k}]$,
  then we consider two cases.  If $f (a_1) \cdots (a_k)$ is not
  observable, then $Next^{\leq n} (M) (\rover U) = 0$ by
  Lemma~\ref{lemma:Next*}, item~4, and the claim is clear.  If
  $f (a_1) \cdots (a_k)$ is observable, then:
  \begin{align*}
    Next^{\leq n} (M) (\rover U)
    & = Next^{\leq n-1} (E [\underline {f (a_1) \cdots (a_k)}]) (\rover U)
    & \text{by Lemma~\ref{lemma:Next*}, item~4} \\
    & \leq \Eval {E [\underline {f (a_1) \cdots (a_k)}]} (U)
    & \text{by induction hypothesis} \\
    & = \Eval {E [\underline f\; \underline {a_1} \cdots \underline
      {a_k}]} (U) = \Eval M (U).
  \end{align*}
  
  Finally, if $M$ an instance $L$ of a rule $L \to R$ of
  Figure~\ref{fig:ideal:opsem} except (\ref{rule:sample}),
  (\ref{rule:score}) and (\ref{rule:f}), then by inspection
  $\Eval L = \Eval R$, so:
  \begin{align*}
    Next^{\leq n} (M) (\rover U)
    & = Next^{\leq n-1} (R) (\rover U)
    & \text{by Lemma~\ref{lemma:Next*}, item~5} \\
    & \leq \Eval {R} (U)
    & \text{by induction hypothesis} \\
    & = \Eval L (U)  = \Eval M (U).
  \end{align*}
\end{proof}


\subsection{Adequacy}
\label{sec:adequacy}

Adequacy means that the inequality of Proposition~\ref{prop:sound} can
be reinforced to an equality, provided that $\tau$ is an observable
  type.

We recall that the \emph{deterministic rules} are all those of
Figure~\ref{fig:ideal:opsem} except the first three, and that the
\emph{weak rules} are those where the evaluation context is weak.
\begin{lemma}
  \label{lemma:det}
  Let $\Sigma$ consist of strictly observable first-order constants.  For every
  string of rewrite steps
  $M = M_0 \to M_1 \to \cdots \to M_n \to \cdots$ by deterministic
  rules, $Next^* (M) = Next^* (M_i)$ for every $i$.
\end{lemma}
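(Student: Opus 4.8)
The plan is to reduce everything to a single-step statement: if $L \to R$ is an instance of a deterministic rule, then $Next^*(L) = Next^*(R)$. Granting this, the lemma follows by an immediate induction on $i$ along the reduction chain $M = M_0 \to M_1 \to \cdots$, since each step $M_j \to M_{j+1}$ is by a deterministic rule and hence leaves $Next^*$ unchanged.

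For the single-step claim I would appeal to Lemma~\ref{lemma:Next*}. The deterministic rules split into two groups: the instances of rule~(\ref{rule:f}) that actually fire --- which are exactly those with $f(a_1)\cdots(a_k)$ observable, since otherwise the left-hand side is a blocked term and no step exists --- and the remaining rules ($\reckw$-unfolding, $\beta$-reduction, the two projection rules, the two $\casekw$ rules). Items~4 and~5 of Lemma~\ref{lemma:Next*} respectively give, in all these cases, $Next^{\leq n}(L) = Next^{\leq n-1}(R)$ for every $n \geq 1$. Since restriction $\mu \mapsto \mu_{|Norm}$ is order-preserving on $\Val\Config$, taking suprema over $n \geq 1$ yields
\[
Next^*(L) = \sup_{n\geq 1} Next^{\leq n}(L)_{|Norm} = \sup_{n\geq 1} Next^{\leq n-1}(R)_{|Norm} = \sup_{m\geq 0} Next^{\leq m}(R)_{|Norm}.
\]

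The only point left is that the extra index $m=0$ contributes nothing, i.e. $\sup_{m\geq 0} Next^{\leq m}(R)_{|Norm} = \sup_{m\geq 1} Next^{\leq m}(R)_{|Norm} = Next^*(R)$. This holds because $Next^{\leq 1}(R) = Next(R) + Next^{\leq 0}(R) \geq Next^{\leq 0}(R)$ by definition of $Next^{\leq 1}$ and nonnegativity of the valuation $Next(R)$, and this inequality survives restriction to $Norm$; together with the monotonicity of ${(Next^{\leq n}(R)_{|Norm})}_{n\geq 1}$ from Lemma~\ref{lemma:iter}(2), the whole family ${(Next^{\leq n}(R)_{|Norm})}_{n\geq 0}$ is an increasing chain, so its supremum is unaffected by dropping the first term. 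Hence $Next^*(L) = Next^*(R)$, and the lemma follows.

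I expect no real obstacle here: the argument is essentially bookkeeping. The one place requiring care is matching the informal notion of "deterministic rule" to the cases of Lemma~\ref{lemma:Next*}, and in particular observing that a deterministic step via rule~(\ref{rule:f}) is always one in which $f(a_1)\cdots(a_k)$ is observable, so the "zero valuation otherwise" branch of item~4 never intervenes. Everything else is routine manipulation of suprema and of restrictions of continuous valuations to the open set $Norm$.
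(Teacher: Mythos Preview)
Your proposal is correct and follows essentially the same approach as the paper: reduce to a single deterministic step, invoke Lemma~\ref{lemma:Next*} items~4 and~5 to obtain $Next^{\leq n}(L) = Next^{\leq n-1}(R)$ for $n \geq 1$, and take suprema. You are in fact more careful than the paper about the reindexing (justifying that the extra $m=0$ term is harmless) and about why the ``zero valuation'' branch of item~4 is never triggered by an actual rewrite step; the paper simply writes ``by taking suprema over $n$'' and leaves those details implicit.
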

\begin{proof}
  It suffices to show that if $M \to N$ by some deterministic rule,
  then $Next^* (M) = Next^* (N)$.  This follows from the fact that,
  for every $n \geq 1$, $Next^{\leq n} (M) = Next^{\leq n-1} (N)$
  (Lemma~\ref{lemma:Next*}, items~4 and~5), and by taking suprema over
  $n$.
\end{proof}

We say that $M \to^* N$ \emph{by deterministic rules} (resp., \emph{by
  weak deterministic rules}) if and only if one can connect $M$ to $N$
by a finite string of instances of deterministic rules (resp., weak
deterministic rules).
\begin{lemma}
  \label{lemma:det:value}
  Let $\Sigma$ consist of strictly observable first-order constants.
  \begin{enumerate}
  \item For every generalized ISPCF term $M$ of type $\tau$, there is
    at most one maximal string of instances of weak deterministic
    rules $M_0 \eqdef M \to M_1 \to \cdots \to \cdots M_n \cdots$.
    Every $M_i$ has type $\tau$.  If $\tau$ is an observable type, and
    if that string stops at rank $n$, then $M_n$ is a weak value
    $V^0$.
  \item For every generalized ISPCF term $M$ of type $\tau$, there is
    at most one maximal string of instances of deterministic rules
    $M_0 \eqdef M \to M_1 \to \cdots \to \cdots M_n \cdots$.  Every
    $M_i$ has type $\tau$.  If $\tau$ is an observable type, either
    that string stops at rank $n$, then $M_n$ is a value $V$, and
    $Next^* (M) = \delta_V$; or else $Next^* (M)=0$.
  \end{enumerate}
\end{lemma}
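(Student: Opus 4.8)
The plan is to prove both items together, since they differ only in whether reduction contexts are required to be weak. Three ingredients are needed: (i) a single step by the relevant class of rules is a partial function on generalized ISPCF terms, so a maximal reduction string is unique; (ii) every reduction step preserves types (subject reduction), whence all $M_i$ have type $\tau$; (iii) when the string stops and $\tau$ is observable, the terminal term is a (weak) value, except that it may instead be a (weak) \emph{blocked} term, and in that case $Next^*$ collapses to the zero valuation.

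For (i) I would invoke the unique-parsing remark made just after Definition~\ref{defn:opsem}: every generalized ISPCF term decomposes in at most one way as $E[L_1]$ (resp.\ $E^0[L_1]$) with the context of maximal depth, and at most one rule of Figure~\ref{fig:ideal:opsem} has $L_1$ as left-hand side; restricting to deterministic (resp.\ weak deterministic) rules only removes possibilities, so the next step, if any, is forced, and a straightforward induction produces the unique maximal string. For (ii), since each context former is type-preserving by the typing discipline for $E$ and $E^0$ (i.e.\ $E[M]\colon\tau$ whenever $M\colon\sigma$ and $E$ has input type $\sigma$), it suffices to check the bare redex/contractum pairs: the $\reckw$, $\pi_i$, $\casekw$, $\iota_i$ cases are immediate, $(\lambda x.M)N\to M[x:=N]$ and $\dokw x\leftarrow\retkw M;N\to N[x:=M]$ use the substitution lemma, $\underline f\,\underline{a_1}\cdots\underline{a_k}\to\underline{f(a_1)\cdots(a_k)}$ preserves the observable result type by strict observability, and $\sample[\mu]\to\retkw\underline a$, $\score\,\underline{\mathbf a}\to\retkw\underline *$ have types $D\realT$, $D\unitT$ on both sides.

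For (iii), suppose the (weak) maximal string stops at $M_n\colon\tau$ with $\tau$ observable. Then $M_n$ reduces by no (weak) deterministic rule. It also reduces by no (weak) probabilistic rule: the left-hand sides of rules~(\ref{rule:sample}), (\ref{rule:score}) and the $\dokw$-rule are all of the form $E^0[\cdot]$ with a hole of distribution type, so by Lemma~\ref{lemma:E0:D} such a term has a distribution type, contradicting that the observable type $\tau$ contains no $D$. Hence $M_n$ is not a (weak) redex, so by Definition~\ref{defn:normal:weak} (resp.\ Definition~\ref{defn:normal}) it is a (weak) normal form or a (weak) blocked term. In the normal-form case, Lemma~\ref{lemma:normal:weak} (resp.\ Lemma~\ref{lemma:normal}) identifies it as a lazy value $V^0$ (resp.\ a value $V$), which proves item~1 and the first alternative of item~2; for item~2 one then gets $Next^*(M)=Next^*(M_n)=\delta_{M_n}=\delta_V$ by Lemma~\ref{lemma:det} and Lemma~\ref{lemma:Next*}(1). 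In the blocked-term case, $M_n\notin Norm$ and $Next(M_n)=0$, so an easy induction on $k$ from the definition of $Next^{\leq k}$ gives $Next^{\leq k}(M_n)=0$ for all $k$, hence $Next^*(M_n)=0$ and $Next^*(M)=0$ by Lemma~\ref{lemma:det}; and if the string is instead infinite, each $M_k$ is a redex, hence not in $Norm$, so iterating Lemma~\ref{lemma:Next*}(4)--(5) yields $Next^{\leq k}(M)=Next^{\leq 0}(M_k)=0$ for every $k$, giving $Next^*(M)=0$ — this is the ``or else'' branch of item~2.

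The rule-by-rule subject reduction and the unique-parsing bookkeeping are routine; the only real subtlety is the classification in (iii). The key move is that at an observable type no probabilistic redex can ever be exposed, which is precisely the role of Lemma~\ref{lemma:E0:D}, and the one point to handle with care is the (weak) blocked-term case (e.g.\ $\underline{\poskw}\,\underline{[-1,1]}$, whose type $\boolT$ is observable): such a term is \emph{not} a lazy value, but it is exactly where $Next^*$ degenerates to the zero valuation, which is why item~2 must be phrased as a dichotomy rather than as outright termination at a value, and why the same caveat should be read into item~1.
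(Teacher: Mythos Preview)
Your proof is correct and follows the same route as the paper: unique parsing for determinism, subject reduction for type preservation, Lemma~\ref{lemma:E0:D} to exclude probabilistic redexes at observable types, and then Lemmas~\ref{lemma:normal:weak}/\ref{lemma:normal} together with Lemmas~\ref{lemma:det} and~\ref{lemma:Next*} for the value identification and the $Next^*$ computation. You are in fact more careful than the paper about the (weak) blocked-term case: the paper's proof jumps from ``no (weak) rule applies to $M_n$'' directly to ``$M_n$ is (weakly) normal'', silently skipping the possibility that $M_n$ is (weakly) blocked---your example $\poskw\,\underline{[-1,1]}$ at type $\boolT$ shows this case genuinely occurs at observable type and is a counterexample to item~1 read literally---whereas you correctly fold that case into the $Next^*(M)=0$ branch of item~2 and flag it as a caveat for item~1.
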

\begin{proof}
  The unicity of any maximal string of (weak) deterministic rules
  starting from $M$ is clear, as well as the fact that every $M_i$ has
  type $\tau$.  We now assume that $\tau$ is an observable type.

  1. We consider a maximal string of weak deterministic rules stopping
  at rank $n$, namely $M=M_0 \to^* M_n$.  We claim that $M_n$ is
  weakly normal.  If $M_n$ is of the form $E^0 [\sample [\mu]]$ or
  $E^0 [\score\;\underline{\mathbf a}]$ or
  $E^0 [\dokw {x \leftarrow \retkw M}; N]$, then $E^0$ has a type of
  the form $D \sigma \vdash \tau$.  By Lemma~\ref{lemma:E0:D}, that
  would imply that $\tau$ is of the form $D \tau'$ for some type
  $\tau'$, which is impossible since $\tau$ is observable.  Since no
  weak deterministic rule applies to $M_n$, no weak rule at all applies, so
  $M_n$ is weakly normal.  By Lemma~\ref{lemma:normal}, it is a weak
  value $V^0$.

  2. We reason similarly, assuming a maximal string of deterministic
  rules stopping at rank $n$, $M=M_0 \to^* M_n$.  Then $M_n$ is a
  value $V$.  By Lemma~\ref{lemma:det},
  $Next^* (M) = Next^* (M_n) = Next^* (V)$, and this is equal to
  $\delta_V$ by Lemma~\ref{lemma:Next*}, item~1.

  If instead the maximal string $M=M_0 \to M_1 \to \cdots$ does not
  stop, then for every $m \in \nat$, we have
  $Next^{\leq m} (M) = Next^{\leq m-1} (M_1) = \cdots = Next^{\leq 0}
  (M_m)$, by Lemma~\ref{lemma:Next*}, items~4 and~5.  Since $M_m$ is
  not normal, this is equal to $0$.  Taking suprema over $m \in \nat$,
  we obtain that $Next^* (M) = 0$.
\end{proof}

In order to establish adequacy, we will use a logical relation
${(\R_\tau)}_{\tau\text{ type}}$ in the same style as those used in
\cite{jgl-jlap14,JGL-lics19}.  Before we do this, we need the
following notion.
\begin{definition}[Observable open set]
  \label{defn:partial:open}
  An \emph{observable open subset} of $\Eval \beta$ is a Scott-open
  subset of $\Eval \beta$ consisting of observable elements.
\end{definition}

For each type $\tau$, $\R_\tau$ will be
a binary relation between generalized ISPCF terms of type $\tau$ and
elements of $\Eval \tau$.  We define it with the help of auxiliary
relations $\R_{\tau \vdash D \beta}^\perp$.  The latter relates
evaluation contexts of type $D \tau \vdash D \beta$ and
Scott-continuous maps from $\Eval \tau$ to $\Eval {D \beta}$, where
$\beta$ is an observable type.  We will use short phrases such as
``for all $E^0 \R_{\tau \vdash D \beta}^\perp h$'' instead of ``for
every observable type $\beta$, for every weak evaluation context $E^0$
of type $D \tau \vdash D \beta$, for every Scott-continuous map $h$
from $\Eval \tau$ to $\Eval {D \beta}$, if
$E^0 \R_{\tau \vdash D \beta}^\perp h$ then''.
\begin{definition}[Logical relation, $\R_\tau$, $\R_{\tau \vdash
    D\beta}^\perp$]
  \label{defn:R}
  We define:
  \begin{itemize}
  \item for every \emph{basic} type $\tau$, $M \R_\tau a$ if and only
    if either $a=\bot$, or $M \to^* \underline b$ by weak
    deterministic rules, for some zero-ary constant $\underline b$
    such that $b \geq a$.
  \item $M \R_{\tau_1 \times \tau_2} a$ if and only if $a=\bot$, or
    $a = (a_1, a_2)$ where
    $M \to^* \langle M_1, M_2 \rangle$ by weak deterministic rules,
    for some $M_1 \R_{\tau_1} a_1$ and $M_2 \R_{\tau_2} a_2$.
  \item $M \R_{\tau_1 + \tau_2} a$ if and only if $a=\bot$, or
    $M \to^* \iota_i N$ by weak deterministic rules and $a = (i, b)$,
    for some $i \in \{1, 2\}$ and some $N \R_{\tau_i} b$ with
    $b \neq \bot$.
  \item $M \R_{\sigma \to \tau} g$ if and only if for all
    $N \R_\sigma a$, $MN \R_\tau g (a)$.
  \item $M \R_{D \tau} \mu$ if and only if for all
    $E^0 \R_{\tau \vdash D \beta}^\perp h$, for every observable open
    subset $U$ of $\Eval \beta$,
    $Next^* (E^0 [M]) (\rover U) \geq h^\dagger (\mu) (U)$;
  \item $E^0 \R_{\tau \vdash D \beta}^\perp h$ if and only if for all
    $M \R_\tau a$, for every observable open subset $U$ of $\Eval \beta$,
    $Next^* (E^0 [\retkw M]) (\rover U) \geq h (a) (U)$.
  \end{itemize}
\end{definition}
This is a well-founded definition by induction on the size of $\tau$.
Note that $\R_{D \tau}$ is defined in terms of
$\R_{\tau \vdash D\beta}^\perp$, which itself depends on $\R_\tau$;
there is no dependency on, say, $\R_{D \beta}$, which would make the
definition ill-founded.

\begin{lemma}
  \label{lemma:R:red}
  Let $\Sigma$ consist of strictly observable first-order constants.  Let $\tau$ be
  any type.  If $M \to^* M'$ by weak deterministic rules, then
  $M \R_\tau a$ if and only if $M' \R_\tau a$.
\end{lemma}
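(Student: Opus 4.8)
The plan is to argue by induction on the size of the type $\tau$, with a case split on the shape of $\tau$. Two auxiliary facts will be used throughout. First, weak deterministic reduction is deterministic: by Lemma~\ref{lemma:det:value}(1) there is a unique maximal string of weak deterministic steps issuing from any term, so every reduction $M \to^* M'$ by weak deterministic rules follows a prefix of that maximal string; in particular $M'$ has the same type $\tau$ as $M$, so that $M' \R_\tau a$ is meaningful. Second, a small context lemma: weak evaluation contexts compose, i.e.\ plugging a weak evaluation context into the hole of another one again yields a weak evaluation context (a trivial structural induction on the outer context against the grammar of Definition~\ref{defn:context}). Consequently, if $M \to^* M'$ by weak deterministic rules, then $E^0[M] \to^* E^0[M']$ by weak deterministic rules for \emph{every} weak evaluation context $E^0$ (each step $\mathit{E'}^0[L]\to \mathit{E'}^0[R]$ becomes $(E^0\circ \mathit{E'}^0)[L]\to(E^0\circ \mathit{E'}^0)[R]$, still a weak deterministic step); as a special case, $MN \to^* M'N$ by weak deterministic rules, since $[]\,N$ is a weak evaluation context.

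For the basic types, and for the cases $\tau=\tau_1\times\tau_2$ and $\tau=\tau_1+\tau_2$, no appeal to the induction hypothesis is needed. The case $a=\bot$ is trivial. Otherwise, if $M \R_\tau a$, the definition furnishes a weak deterministic reduction $M \to^* W$ to a suitable lazy value $W$ (a zero-ary constant $\underline b$ with $b\geq a$; a pair $\langle M_1,M_2\rangle$ with $M_i \R_{\tau_i} a_i$; or $\iota_i N$ with $N\R_{\tau_i} b$), and by Lemma~\ref{lemma:normal:weak} this $W$ is a weak normal form, hence is the terminal element of the unique maximal weak deterministic string from $M$. Since $M \to^* M'$ follows a prefix of that same string, $M'$ lies on it, and therefore $M' \to^* W$ by weak deterministic rules; the witnessing subterms of $W$ and the attached inequalities/side conditions are unchanged, so $M' \R_\tau a$. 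Conversely, if $M' \R_\tau a$, concatenating $M \to^* M'$ with the reduction witnessing $M' \R_\tau a$ gives $M \R_\tau a$.

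For a function type $\sigma\to\tau$: if $M \R_{\sigma\to\tau} g$ and $N \R_\sigma a$, then $MN \R_\tau g(a)$; since $MN \to^* M'N$ by weak deterministic rules, the induction hypothesis at the strictly smaller type $\tau$ yields $M'N \R_\tau g(a)$, hence $M' \R_{\sigma\to\tau} g$, and the converse is symmetric. For a distribution type $D\tau$: by the context lemma $E^0[M] \to^* E^0[M']$ by weak deterministic rules (in particular by deterministic rules) for every weak evaluation context $E^0$, so $Next^*(E^0[M]) = Next^*(E^0[M'])$ by Lemma~\ref{lemma:det}; therefore the defining condition of $M \R_{D\tau}\mu$ --- namely $Next^*(E^0[M])(\rover U)\geq h^\dagger(\mu)(U)$ for all $E^0 \R_{\tau\vdash D\beta}^\perp h$ and all observable open $U\subseteq\Eval\beta$ --- holds for $M$ if and only if it holds for $M'$, with no need to inspect $\R^\perp$ at all. (The induction is well founded: only the function case invokes the hypothesis, at the smaller type $\tau$.)

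The genuinely delicate point --- more a matter of care than of difficulty --- is the backwards direction in the basic/product/sum cases: from $M \R_\tau a$ one must produce a reduction of $M'$, not merely of $M$, to the witnessing value. This is precisely where determinism of weak deterministic reduction (Lemma~\ref{lemma:det:value}(1)) together with the fact that lazy values are weakly normal (Lemma~\ref{lemma:normal:weak}) is indispensable, since it pins $M'$ onto the unique maximal string from $M$, which terminates at that witnessing value. Everything else is routine bookkeeping with the grammar of weak evaluation contexts.
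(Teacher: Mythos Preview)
Your proof is correct and follows essentially the same approach as the paper's: induction on $\tau$, determinism of weak deterministic reduction for the basic/product/sum cases, and the observation that weak deterministic steps lift through weak evaluation contexts for the arrow and distribution cases. The only cosmetic difference is that the paper first reduces to the single-step case $M \to M'$ before inducting on $\tau$, whereas you handle the multi-step case directly and make the context-composition property explicit.
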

\begin{proof}
  It suffices to show this when $M \to M'$ by one weak deterministic
  rule, and we show this by induction on $\tau$.

  When $\tau$ is a basic type, and if $M' \R_\tau a$, then either
  $a=\bot$, or $M' \to^* \underline b$ by weak deterministic rules,
  with $b \geq a$.  In the second case, $M \to M' \to^* b$ by weak
  deterministic rules, so $M \R_\tau a$.  In the reverse direction, if
  $M \R_\tau a$, either $a=\bot$, or $M \to^* \underline b$ by
  weak deterministic rules, with $b \geq a$.  Since $M \to M'$ by a weak
  deterministic rule, since $\underline b$ is (weakly) normal, and
  since weak deterministic rules are deterministic by construction,
  the string of reduction steps $M \to^* \underline b$ is of the form
  $M \to M' \to^* \underline b$.  Therefore $M \R_\tau a$.

  The argument is similar 
  when $\tau$ is a product type $\tau_1 \times \tau_2$.
  The key is that, if $M \to^* \langle M_1, M_2 \rangle$ by weak
  deterministic rules, and if $M \to M'$ by a weak deterministic rule,
  then $M$ is not weakly normal, so the first string of rewrite steps
  must be of the form $M \to M' \to^* \langle M_1, M_2 \rangle$.
  Hence, if $M \to M'$ by a weak deterministic rule, then
  $M \to^* \langle M_1, M_2 \rangle$ by weak deterministic rules if
  and only if $M' \to^* \langle M_1, M_2 \rangle$ by weak
  deterministic rules.  We reason similarly if $\tau$ is a sum type
  $\tau_1+\tau_2$, where if $M \to M'$ by a weak deterministic rule,
  then $M \to^* \iota_i N$ by weak deterministic rules if and only if
  $M' \to^* \iota_i N$ by weak deterministic rules.

  For arrow types $\sigma \to \tau$, we assume $M$ and $M'$ of type
  $\sigma \to \tau$, such that $M \to M'$ by a weak deterministic
  rule.  Then, for every $N \colon \sigma$, $MN \to M'N$ by the same
  weak deterministic rule.  If $M \R_{\sigma \to \tau} f$, then for
  all $N \R_\tau a$, $MN \R_\tau f (a)$, and since $MN \to M'N$, by
  induction hypothesis $M'N \R_\tau f (a)$; therefore
  $M' \R_{\sigma \to \tau} f$.  The fact that
  $M' \R_{\sigma \to \tau} f$ implies $M \R_{\sigma \to \tau} f$ is
  similar.

  For distribution types $\tau \eqdef D \sigma$, we use a variant of
  that argument.  We assume $M, M' \colon D \sigma$ such that
  $M \to M'$ by some weak deterministic rule.  Then, for every weak
  evaluation context $E^0$, $E^0 [M] \to^* E^0 [M']$ by the same weak
  deterministic rule.  By Lemma~\ref{lemma:det},
  $Next^* (E^0 [M]) = Next^* (E^0 [M'])$, from which it follows
  immediately that $M \R_{D \sigma} \mu$ if and only if $M' \R_{D
    \sigma} \mu$, for any $\mu$.
\end{proof}

We let $M \R_\tau$ abbreviate the set of all values $a \in \Eval \tau$
such that $M \R_\tau a$.
\begin{lemma}
  \label{lemma:MR:closed}
  Let $\Sigma$ consist of strictly observable first-order constants.  For every
  type $\tau$, for every generalized ISPCF term $M$ of type $\tau$,
  $M \R_\tau$ is a Scott-closed subset of $\Eval \tau$ that contains
  $\bot$.
\end{lemma}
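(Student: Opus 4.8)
The plan is to prove the statement by induction on the type $\tau$, showing in each case that $\bot \in M \R_\tau$ and that $M \R_\tau$ is downward closed and closed under directed suprema (it is automatically a subset of $\Eval\tau$). Membership of $\bot$ is immediate: for basic, product and sum types ``$a = \bot$'' is one of the defining disjuncts; for an arrow type the everywhere-$\bot$ map belongs to $M\R_{\sigma\to\tau}$ because $\bot$ lies in each $(MN)\R_\tau$ by induction hypothesis; and for a distribution type the zero valuation belongs to $M\R_{D\tau}$ because $h^\dagger$ of the zero valuation is again the zero valuation. So the real content is the two closure properties.

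For a basic type $\tau$ I would first note that if $M$ reduces to a zero-ary constant $\underline b$ by weak deterministic rules then this $\underline b$ is uniquely determined: $\underline b$ is a lazy value, hence a weak normal form by Lemma~\ref{lemma:normal:weak}, and weak deterministic reduction is deterministic, so $\underline b$ is the endpoint of the unique maximal weak-deterministic reduction sequence out of $M$. Consequently $M\R_\tau$ is either $\{\bot\}$ or $\dc b$, the downward closure in $\Eval\tau$ of the single point $b$, and both are Scott-closed (a directed family below $b$ has its supremum below $b$). The product and sum cases run identically: the relevant lazy value ($\langle M_1, M_2\rangle$, resp.\ $\iota_i N$) is again unique when it exists, so either $M\R_\tau = \{\bot\}$ or $M\R_{\tau_1\times\tau_2} = (M_1\R_{\tau_1})\times(M_2\R_{\tau_2})$, resp.\ $M\R_{\tau_1+\tau_2} = \{\bot\}\cup\{(i,b)\mid b\in N\R_{\tau_i},\ b\neq\bot\}$. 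By the induction hypothesis the factors are Scott-closed, and a product of two Scott-closed subsets is Scott-closed in the product dcpo (project a directed family onto each coordinate); for the sum, downward closure and closure under directed suprema reduce to the corresponding properties of $N\R_{\tau_i}$.

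For an arrow type $\sigma\to\tau'$ I would write $M\R_{\sigma\to\tau'}$ as the intersection, over all pairs $(N,a)$ with $N\R_\sigma a$, of the sets $\mathrm{ev}_a^{-1}((MN)\R_{\tau'})$, where $\mathrm{ev}_a\colon g\mapsto g(a)$ is the Scott-continuous evaluation map on $[\Eval\sigma\to\Eval{\tau'}]$; since $(MN)\R_{\tau'}$ is Scott-closed by the induction hypothesis applied to the smaller type $\tau'$, each such preimage is Scott-closed, and so is their intersection. For a distribution type $D\tau'$ I would unfold the definition to $M\R_{D\tau'} = \bigcap\{\mu\mid h^\dagger(\mu)(U)\leq Next^*(E^0[M])(\rover U)\}$, the intersection ranging over all triples $(E^0,h,U)$ with $E^0\R_{\tau'\vdash D\beta}^\perp h$ and $U$ an observable open set. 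The key observation is that $Next^*(E^0[M])(\rover U)$ is a fixed element of $\creal$, whereas $\mu\mapsto h^\dagger(\mu)(U)$ is Scott-continuous from $\Val_m(\Eval{\tau'})=\Eval{D\tau'}$ to $\creal$ by Lemma~\ref{lemma:eta:mu}(ii) (post-composed with evaluation at $U$); hence each set in the intersection is the preimage of the Scott-closed set $\{c\in\creal\mid c\leq Next^*(E^0[M])(\rover U)\}$ under a Scott-continuous map, so it is Scott-closed, and an arbitrary intersection of Scott-closed sets is Scott-closed.

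I expect the basic, product and sum cases to be routine, the only mild care being the bookkeeping of the various bottom elements (in particular the status of $(i,\bot)$ in the sum case). The step worth flagging is the distribution case: one must resist being distracted by the nested universal quantifier over the auxiliary relation $\R^\perp$ and simply notice that, for this lemma, it only contributes a fixed constant $Next^*(E^0[M])(\rover U)$ on the right-hand side, so Scott-closedness falls out of Scott-continuity of $h^\dagger$ together with closure of Scott-closed sets under arbitrary intersections.
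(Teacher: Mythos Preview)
Your argument follows the same induction on $\tau$ as the paper's, with the same case structure and the same key device in the arrow and distribution cases (writing $M\R_\tau$ as an intersection of preimages of Scott-closed sets under Scott-continuous maps such as $g\mapsto g(a)$ and $\mu\mapsto h^\dagger(\mu)(U)$).

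One point needs sharpening, and it is precisely the one you flag. In the sum case you write $M\R_{\tau_1+\tau_2}=\{\bot\}\cup\{(i,b)\mid b\in N\R_{\tau_i},\ b\neq\bot\}$, which is faithful to Definition~\ref{defn:R} as printed, but this set is \emph{not} downward closed: the element $(i,\bot)\in\Eval{\tau_1+\tau_2}$ lies strictly between $\bot$ and every $(i,b)$ with $b\neq\bot$, yet it is excluded. So your sentence ``downward closure \ldots\ reduce[s] to the corresponding properties of $N\R_{\tau_i}$'' does not go through as stated. The paper's proof silently drops the side condition and works instead with $\{\bot\}\cup\{(i,b)\mid b\in N\R_{\tau_i}\}$, which \emph{is} Scott-closed by the induction hypothesis; this (together with the $\iota_i$ case of Proposition~\ref{prop:basic:lemma}, which also needs $(i,\bot)\in M\R_{\tau_1+\tau_2}$) indicates that the ``$b\neq\bot$'' clause in the definition is a slip. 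To make your argument complete you should either adopt the paper's reading of the definition or explicitly note that $(i,\bot)$ must be admitted.
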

\begin{proof}
  By induction on $\tau$.  That $M \R_\tau$ contains $\bot$ is clear
  except perhaps when $\tau$ is a distribution type $D \tau'$.  In
  that case, the $\bot$ element of $\Eval \tau$ is the zero valuation
  $0$, and $h^\dagger (0) (U)$ is equal to $0$ for every
  $h \in \Eval {\tau \vdash D \beta}$, which trivially entails the
  statement.

  If $\tau$ is a basic type, then $M \R_\tau$ is the closed set
  $\{\bot\}$ if the maximal string of weak deterministic rewriting
  steps starting from $M$ does not stop, or $\dc b$ if
  $M \to^* \underline b$ by weak deterministic rules.  Then, $\dc b$
  is closed because downward closures of points are closed.

  In the case of product types, either $M \to^* V^0$ by weak
  deterministic rules for no lazy value $V^0$, and then
  $M \R_{\tau_1 \times \tau_2}$ is equal to $\{\bot\}$; or
  $M \to^* V^0$ for some uniquely determined lazy value $V^0$ by
  Lemma~\ref{lemma:det:value}, item~1.  By typing and
  Lemma~\ref{lemma:normal}, $V^0 = \langle M_1, M_2 \rangle$ for some
  $M_1 \colon \tau_1$ and $M_2 \colon \tau_2$.  Then
  $M \R_{\tau_1 \times \tau_2}$ is equal to
  $(M_1 \R_\sigma) \times (M_2 \R_\sigma)$ union $\{\bot\}$.  The
  latter is a finite union of Scott-closed sets, hence is
  Scott-closed.

  In the case of sum types, we reason similarly.  Either
  $M \R_{\tau_1 + \tau_2}$ is equal to $\{\bot\}$, or
  $M \to^* \iota_i N$ by weak deterministic rules, and then
  $M \R_{\tau_1 + \tau_2}$ is the set of elements
  $\{\bot\} \cup \{(i, b) \mid b \in N \R_{\tau_i}\}$.  It is easy to
  see that the latter is downwards closed and closed under directed
  suprema, hence is Scott-closed.

  In the case of arrow types $\sigma \to \tau$,
  $M \R_{\sigma \to \tau}$ is equal to the intersection over all
  $N \R_\sigma a$ of the sets $\App (\_, a)^{-1} (MN \R_\tau)$.
  ($\App$ is the application morphism $(g, a) \mapsto g (a)$.)  This
  set is closed because $\App$ and therefore $\App (\_, a)$ is
  continuous for every $a$, and intersections of closed sets are closed.

  In the case of distribution types $D \tau$, $M \R_{D \tau}$ is the
  intersection of the sets
  ${(h^\dagger (\_) (U))}^{-1} (\dc Next^* (E^0 [M]) (\rover U))$ over all
  $E^0 \R_{\tau \vdash D \beta}^\perp h$ and all observable open subsets $U$
  of $\Eval \beta$.  We conclude because downward closures of points
  are closed, and since $h^\dagger$ is Scott-continuous, hence also
  $h^\dagger (\_) (U)$ for every $U$.
\end{proof}


The fundamental lemma of logical relations states that terms are
related to their semantics.  We prove the variant suited to our case
below.  Given an ISPCF term $M$, a \emph{substitution} $\theta$ for
$M$ is a map from the free variables of $M$ to generalized ISPCF terms
of the same type, and $M \theta$ denotes the result of application
$\theta$ to $M$.  Given a substitution $\theta$ and an environment
$\rho$ with the same domain $D$, we write $\theta \R_* \rho$ to mean
that $\theta (x_\sigma) \R_\sigma \rho (x_\sigma)$ for every
$x_\sigma \in D$.

\begin{lemma}
  \label{lemma:R:obs:val}
  Let $\Sigma$ consist of strictly observable first-order constants.  For every
  observable type $\beta$, for every observable element $a$ of
  $\Eval \beta$, $\underline a \R_\beta a$.
\end{lemma}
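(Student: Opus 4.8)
The plan is to prove the statement by induction on the structure of the observable type $\beta$, which mirrors both the recursive definition of the observable elements (Definition~\ref{defn:partial}) and the definition of the underlining operation $\underline{(\cdot)}$ (Definition~\ref{defn:underline}), as well as the inductive clauses of $\R_\beta$ for observable $\beta$ (Definition~\ref{defn:R}). Throughout I will use the fact, recorded just before Definition~\ref{defn:underline}, that $\Eval{\underline a}\rho = a$ for every observable value $a$; as $\Eval{\underline a}$ is independent of $\rho$ I simply write $\Eval{\underline a} = a$. I will also use repeatedly that the empty string of rewrite steps counts as an instance of ``$\to^*$ by weak deterministic rules'', so that every term reduces to itself in this sense.

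For the base case, $\beta$ is one of $\unitT$, $\voidT$, $\intT$, $\realT$. The case $\beta = \voidT$ is vacuous, since $\Eval{\voidT} = \{\bot\}$ has no observable element. In the remaining cases $a$ is an observable element of a basic type, hence $a \neq \bot$, and $\underline a$ is literally a zero-ary constant of that basic type: $\underline *$ when $\beta = \unitT$, $\underline n$ when $\beta = \intT$ and $a = n \in \Z$, and $\underline{\mathbf a}$ when $\beta = \realT$ and $a = \mathbf a \in \IR$. By the defining clause of $\R_\beta$ for basic $\beta$, it suffices to produce a zero-ary constant $\underline b$ with $\underline a \to^* \underline b$ by weak deterministic rules and $b \geq a$; I take $\underline b \eqdef \underline a$, for which the reduction is empty and $b = \Eval{\underline a} = a \geq a$ by reflexivity.

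For the product case $\beta = \beta_1 \times \beta_2$, an observable element is of the form $a = (a_1, a_2)$ with each $a_i$ observable in $\Eval{\beta_i}$, and by Definition~\ref{defn:underline} we have $\underline a = \langle \underline{a_1}, \underline{a_2}\rangle$. Since this term is already a (lazy) value, $\underline a \to^* \langle \underline{a_1}, \underline{a_2}\rangle$ holds by the empty string of weak deterministic rules, and the induction hypothesis gives $\underline{a_1} \R_{\beta_1} a_1$ and $\underline{a_2} \R_{\beta_2} a_2$; the clause defining $\R_{\tau_1 \times \tau_2}$ then yields $\underline a \R_{\beta_1 \times \beta_2} (a_1, a_2) = a$. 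The sum case $\beta = \beta_1 + \beta_2$ is entirely analogous: an observable element is $a = \iota_i b = (i, b)$ with $b$ observable, hence $b \neq \bot$, and $\underline a = \iota_i \underline b$ is again already a value; applying the induction hypothesis to obtain $\underline b \R_{\beta_i} b$ and using the clause defining $\R_{\tau_1 + \tau_2}$ gives $\underline a \R_{\beta_1 + \beta_2} a$.

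I do not anticipate a genuine obstacle here; the argument is a routine unwinding of definitions. The only points that need a moment's care are that $\voidT$ contributes no observable element, so the base case is vacuous there; that observable elements are by construction never $\bot$, which is precisely what the sum clause of $\R$ requires; and that $\underline{(a_1,a_2)}$ and $\underline{(i,b)}$ are syntactic values, so the ``$\to^*$ by weak deterministic rules'' reductions demanded by $\R$ are trivial, leaving only the invocation of the induction hypothesis on the strictly smaller components $a_1$, $a_2$, $b$.
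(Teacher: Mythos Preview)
Your proof is correct and follows essentially the same approach as the paper: induction on the observable type $\beta$, with the basic-type case handled by the empty reduction $\underline a \to^* \underline a$ and reflexivity, and the product and sum cases by unfolding Definition~\ref{defn:underline} and applying the induction hypothesis. You are simply a bit more explicit than the paper (noting the vacuity of the $\voidT$ case and the $b \neq \bot$ requirement in the sum clause), but there is no substantive difference.
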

\begin{proof}
  We recall that $\underline a$ was defined in
  Definition~\ref{defn:underline}.  We show this by induction on
  $\beta$.  If $\beta$ is a basic type, this follows from the fact
  that $\underline a \to^* \underline a$ (in $0$ step) and
  $\Eval {\underline a} = a$.
  If $\beta$ is a product type $\beta_1 \times \beta_2$,
  then $a$ is of the form $(a_1, a_2)$ with $a_1$ observable in $\Eval
  {\beta_1}$ and $a_2$ observable in $\Eval {\beta_2}$.  Then
  $\underline a \to^* \langle \underline a_1, \underline a_2 \rangle$
  (in $0$ step), while $\underline a_1 \R_{\beta_1} a_1$ and
  $\underline a_2 \R_{\beta_2} a_2$ by induction hypothesis.  The case
  of sum types is similar.
\end{proof}
  
\begin{proposition}
  \label{prop:basic:lemma}
  Let $\Sigma$ consist of strictly observable first-order constants.
  For every ISPCF term $M \colon \tau$, for all $\theta \R_* \rho$, $M
  \theta \R_\tau \Eval M \rho$.
\end{proposition}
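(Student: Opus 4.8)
The statement is the fundamental lemma of the logical relation, so the proof will be a structural induction on the ISPCF term $M$, with the key technical work concentrated in the cases involving the monad — $\retkw$, $\dokw$, $\sample$, $\score$ — and in the fixed-point case $\reckw$. For most cases I would unfold Definition~\ref{defn:R}, use the induction hypotheses on subterms, and push the relevant reductions through using Lemma~\ref{lemma:R:red} (weak-deterministic reductions preserve $\R_\tau$) together with the characterizations of $Next^{\leq n}$ from Lemma~\ref{lemma:Next*}. Concretely: for a variable $x_\sigma$, $M\theta = \theta(x_\sigma) \R_\sigma \rho(x_\sigma)$ is exactly the hypothesis $\theta \R_* \rho$. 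For a constant $\underline f$ of first-order type, I would peel off its arguments one at a time, each time invoking Lemma~\ref{lemma:R:obs:val} to relate a substituted observable argument to its semantics, and use the fact that $\underline f\,\underline a_1\cdots\underline a_k \to E[\underline{f(a_1)\cdots(a_k)}]$ by rule~(\ref{rule:f}) when the result is observable, and is a blocked term (so $\Eval{}\,$ is $\bot$ and the inequality in $\R_{D\beta}$ is trivial) otherwise. Abstraction and application are routine from the definition of $\R_{\sigma\to\tau}$; pairing, projections, injections, and $\casekw$ follow the corresponding clauses of Definition~\ref{defn:R} combined with Lemma~\ref{lemma:det:value}, which tells us that a well-typed term of observable type reduces deterministically to a value (or diverges, making the semantics $\bot$).

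\textbf{The monadic cases.} For $\retkw N$ I must show $\retkw N\,\theta \R_{D\tau} \delta_{\Eval N \rho}$, i.e.\ for all $E^0 \R^\perp_{\tau\vdash D\beta} h$ and observable open $U$, $Next^*(E^0[\retkw(N\theta)])(\rover U) \geq h^\dagger(\delta_{\Eval N\rho})(U) = h(\Eval N\rho)(U)$; since $N\theta \R_\tau \Eval N\rho$ by the induction hypothesis, this is exactly the definition of $E^0 \R^\perp_{\tau\vdash D\beta} h$. For $\dokw {x \leftarrow M}; N$ I need to show that $(\Eval{\lambda x.N}\rho)^\dagger(\Eval M\rho)$ is related to the term; here the crucial point is a ``context-composition'' lemma: if $E^0 \R^\perp_{\tau\vdash D\beta} h$ then the extended context $\dokw{x \leftarrow E^0}; N\theta$ is $\R^\perp_{\sigma\vdash D\beta}$-related to $(\Eval{\lambda x.N}\rho)^{?} \circ \dots$, which one verifies by applying the induction hypothesis to $N$ and using the linearity of $\Eval{E^0}$ from Lemma~\ref{lemma:E0:lin}. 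For $\sample[\mu]$ I need $i[\mu]$ to be related to $\sample[\mu]$; unfolding $\R_{D\realT}$ and using Lemma~\ref{lemma:Next*}(2) plus the change-of-variables formula reduces this to the fact that $\eta^\dagger = \mathrm{id}$, mirroring exactly the computation already done in the proof of Proposition~\ref{prop:sound}. For $\score M$ I would first apply the induction hypothesis to get $M\theta \R_\realT \Eval M\rho =: \mathbf a$; if $\mathbf a = \bot$ the right side is the zero valuation and there is nothing to prove, otherwise $M\theta \to^* \underline{\mathbf b}$ with $\mathbf b \geq \mathbf a$, hence $|\mathbf b| \geq |\mathbf a|$, and I conclude via Lemma~\ref{lemma:Next*}(3) and monotonicity of $|\cdot|$.

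\textbf{The fixed-point case, which is the main obstacle.} For $\reckw M$ I must show $\reckw(M\theta) \R_\tau \lfp(\Eval M\rho) = \sup_n (\Eval M\rho)^n(\bot)$. Since $\R_\tau$ (more precisely $N\R_\tau$ as a subset of $\Eval\tau$) is Scott-closed and contains $\bot$ by Lemma~\ref{lemma:MR:closed}, it suffices to show $\reckw(M\theta) \R_\tau (\Eval M\rho)^n(\bot)$ for every $n$, by induction on $n$. The base case $n=0$ is the ``$\bot$'' clause. For the step, one uses that $\reckw(M\theta) \to M\theta\,(\reckw(M\theta))$ by a weak deterministic rule, so by Lemma~\ref{lemma:R:red} it suffices to show $M\theta\,(\reckw(M\theta)) \R_\tau (\Eval M\rho)((\Eval M\rho)^n(\bot))$; the outer induction hypothesis applied to $M$ gives $M\theta \R_{\tau\to\tau} \Eval M\rho$, and the inner induction hypothesis gives $\reckw(M\theta) \R_\tau (\Eval M\rho)^n(\bot)$, so applying the former to the latter closes the loop. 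The subtlety — and where I expect to spend the most care — is making the two nested inductions interact cleanly: the induction on the term structure (which produces ``$M\theta \R_{\tau\to\tau}\Eval M\rho$'') must be available while doing the inner induction on $n$, and one must be scrupulous that $\R_\tau$'s Scott-closedness is used at the right dcpo and that, when $\tau$ is itself a distribution type, the ``Scott-closed, contains $\bot$'' facts hold (which is precisely what Lemma~\ref{lemma:MR:closed} delivers, including for $D\tau'$ where the zero valuation plays the role of $\bot$). The rest of the cases I expect to be mechanical expansions of Definition~\ref{defn:R}.
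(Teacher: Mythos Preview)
Your proposal is correct and follows essentially the same route as the paper: structural induction on $M$, with the $\reckw$ case handled by an inner induction on $n$ using Lemma~\ref{lemma:MR:closed}, and the monadic cases unfolded via Lemma~\ref{lemma:Next*} and the definition of $\R^\perp$. Two small slips to watch: in the $\dokw$ case the extended context is $E^0[\dokw{x\leftarrow [\,]};N\theta]$ (with $E^0$ on the \emph{outside}), and the rewriting of $h^\dagger(\Eval M\rho)$ uses the monad law $g^\dagger\circ f^\dagger=(g^\dagger\circ f)^\dagger$ rather than Lemma~\ref{lemma:E0:lin}; and neither Lemma~\ref{lemma:det:value} nor any appeal to $\R_{D\beta}$ is needed in the first-order constant, pairing, projection, injection or $\casekw$ cases, which go through directly from the clauses of Definition~\ref{defn:R} and Lemma~\ref{lemma:R:red}.
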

\begin{proof}
  By induction on the size of $M$.

  $\bullet$ If $M$ is a constant $\underline f$, say of type
  $\sigma_1 \to \cdots \to \sigma_k\to \tau$ where every $\sigma_i$ is
  a basic type and $\tau$ is an observable type, then we must show
  that for all $N_1 \R_{\sigma_1} a_1$, \ldots,
  $N_k \R_{\sigma_k} a_k$,
  $\underline f\; N_1 \cdots N_k \R_\tau f (a_1) \cdots (a_k)$.  If
  $f (a_1) \cdots (a_k) = \bot$, this is clear, so let us assume
  $f (a_1) \cdots (a_k) \neq \bot$.  By the second condition of
  Definition~\ref{defn:assum:strict}, $f (a_1) \cdots (a_k)$ is
  observable, and every $a_i$ is different from $\bot$.  Since every
  $\sigma_i$ is a basic type, we have $N_i \to^* \underline {a'}_i$ by
  weak deterministic rules for some zero-ary constant
  $\underline {a'}_i$ such that $a'_i \geq a_i$, for every $i$ with
  $1\leq i \leq k$.  It is easy to see that
  $\underline f\; N_1 \cdots N_k \to^* \underline f\;\underline{a'}_1
  \; \cdots \; \underline{a'}_k \to \underline {f (a'_1) \cdots
    (a'_k)}$ by weak deterministic rules; the last step is justified
  by the fact that $f (a'_1) \cdots (a'_k) \geq f (a_1) \cdots (a_k)$,
  that $f (a_1) \cdots (a_k)$ is observable, and that every element
  larger than an observable element is itself observable.  By
  Lemma~\ref{lemma:R:obs:val},
  $\underline {f (a'_1) \cdots (a'_k)} \R_\tau \Eval {\underline {f
      (a'_1) \cdots (a'_k)}} \rho = f (a'_1) \cdots (a'_k)$.  Using
  Lemma~\ref{lemma:R:red}, we deduce that
  $\underline f\; N_1 \cdots N_k \R_\tau f (a'_1) \cdots (a'_k)$.
  Since $f (a'_1) \cdots (a'_k) \geq f (a_1) \cdots (a_k)$, and since
  $\underline f\; N_1 \cdots N_k \R_\tau$ is Scott-closed, hence
  downwards-closed (Lemma~\ref{lemma:MR:closed}), we obtain that
  $\underline f\; N_1 \cdots N_k \R_\tau f (a_1) \cdots (a_k)$, as
  desired.

  $\bullet$ If $M$ is a variable, this is by the assumption
  $\theta \R_* \rho$.

  $\bullet$ If $M = \sample [\mu]$, we must show that for all
  $E^0 \R_{\realT \to D\beta}^\perp h$, for every observable open
  subset $U$ of $\Eval \beta$,
  $Next^* (E^0 [\sample [\mu]]) (\rover U) \geq h^\dagger (\mu) (U)$.
  By Lemma~\ref{lemma:Next*}, item~2, and taking suprema over
  $n \geq 1$, $Next^* (E^0 [\sample [\mu]]) (\rover U)$ is equal to
  $\int_{a \in \real} Next^* (E^0 [\retkw \underline a]) (\rover U)
  d\mu$.  By Lemma~\ref{lemma:R:obs:val},
  $\underline {\mathbf a} \R_{\realT} \mathbf a$ for every
  $\mathbf a \in \IR$.  In particular,
  $\underline a \R_{\realT} i (a)$ for every $a \in \real$.  Since
  $E^0 \R_{\realT \to D\beta}^\perp h$,
  $Next^* (E^0 [\retkw \underline a]) (\rover U) \geq h (a) (U)$ for
  every $a \in \real$.  From this, we conclude that
  $Next^* (E^0 [\sample [\mu]]) (\rover U) \geq \int_{a \in \real} h
  (a) (U) d\mu = h^\dagger (\mu) (U)$.

  $\bullet$ If $M = \score \; N$, we must show that for all
  $E^0 \R_{\unitT \to D\beta}^\perp h$, for every observable open subset
  $U$ of $\Eval \beta$,
  $Next^* (E^0 [M \theta]) (\rover U) \geq h^\dagger (|\mathbf a|
  . \delta_*) (U)$ where $\mathbf a \eqdef \Eval N \rho$.  We observe
  that $h^\dagger (|\mathbf a| . \delta_*) = |\mathbf a| . h (*)$.
  Hence the claim is clear if $|\mathbf a|=0$, notably if
  $\mathbf a = \bot$.  Let us assume that $\mathbf a \neq \bot$.  Then
  $N \theta \R_{\realT} \mathbf a$, by induction hypothesis.  By
  definition of $\R_{\realT}$,
  $N \theta \to^* \underline {\mathbf {a'}}$ by weak deterministic
  rules, for some zero-ary constant $\underline{\mathbf {a'}}$ such
  that $\mathbf {a'} \geq \mathbf a$.

  Since $E^0 \R_{\unitT \to D\beta}^\perp h$, and since
  $\underline * \R_{\unitT} *$ (by the case of zero-ary constants,
  already treated),
  $Next^* (E^0 [\retkw {\underline *}]) (\rover U) \geq h (*) (U)$.
  We have $N \theta \to^* \underline {\mathbf {a'}}$ by weak
  deterministic rules, so
  $E^0 [M \theta] = E^0 [\score \; N\theta] \to^* E^0 [\score \;
  \underline {\mathbf {a'}}]$ by weak deterministic rules.  By
  Lemma~\ref{lemma:det},
  $Next^* (E^0 [M \theta]) (\rover U) = Next^* \allowbreak (E^0
  [\score \; \underline {\mathbf {a'}}]) (\rover U)$, which is equal
  to $|\mathbf {a'}| . Next^* (E^0 [\retkw \underline *]) (\rover U)$
  by Lemma~\ref{lemma:Next*}, item~3, hence is larger than or equal to
  $|\mathbf {a'}| .  h (*) (U)$.  We verify that
  $|\mathbf {a'}| \geq |\mathbf a|$, owing to the fact that
  $\mathbf {a'} \geq \mathbf a$.  Hence
  $Next^* (E^0 [M \theta]) (\rover U) \geq |\mathbf a| .  h (*) (U) =
  h^\dagger (|\mathbf a| . \delta_*) (U)$.

  $\bullet$ If $M = \lambda x . N \colon \sigma \to \tau$, then we
  must show that for all $P \R_\sigma a$,
  $(M\theta) P \R_\tau \Eval M \rho (a)$.  By induction hypothesis, we
  have $N \theta [x:=P] \R_\tau \Eval N \rho [x \mapsto a]$.  Since
  $(M\theta) P \to N \theta [x:=P]$, we conclude by
  Lemma~\ref{lemma:R:red}.

  $\bullet$ The case where $M$ is an application is immediate.

  $\bullet$ If $M = \reckw\; N$, where $N \colon \sigma \to \sigma$,
  then let $f \eqdef \Eval N \rho$.  By induction hypothesis,
  $N \R_{\sigma \to \sigma} f$, so: $(*)$ for all $P \R_\sigma a$, we
  have $(N \theta) P \R_\sigma f (a)$.  We show that
  $M \theta \R_\sigma f^n (\bot)$ for every $n \in \nat$, by induction
  on $n$.  If $n=0$, then $f^0 (\bot) = \bot$ is in
  $M \theta \R_\sigma$ by Lemma~\ref{lemma:MR:closed}.  If $n \geq 1$,
  then $N \theta (M \theta) \R_\sigma f (f^{n-1} (\bot))$, using $(*)$
  and the induction hypothesis.  Now
  $M \theta = \reckw (N \theta) \to N \theta (M \theta)$ by a
  weak deterministic rule, so $M \theta \R_\sigma f^n (\bot)$, by
  Lemma~\ref{lemma:R:red}.  Now that we have shown that
  $M \theta \R_\sigma f^n (\bot)$, namely that $f^n (\bot)$ is in
  $M \theta \R_\sigma$ for every $n \in \nat$, we use the fact that
  $M \theta \R_\sigma$ is Scott-closed (Lemma~\ref{lemma:MR:closed})
  and we conclude that
  $\Eval M \rho = \lfp (f) = \sup_{n \in \nat} f^n (\bot)$ is in
  $M \theta \R_\sigma$.

  $\bullet$ If $M = \retkw N$, with $N \colon \sigma$, then we must
  show that for all $E^0 \R_{\sigma \vdash D \beta}^\perp h$, for
  every observable open subset $U$ of $\Eval \beta$,
  $Next^* (E^0 [M \theta]) (\rover U) \geq h^\dagger (\delta_{\Eval N
    \rho}) (U)$.  By induction hypothesis,
  $N \theta \R_\sigma \Eval N \rho$, so by definition of
  $\R_{\sigma \vdash D \beta}^\perp$, we obtain
  $Next^* (E^0 [\retkw N \theta]) (\rover U) \geq h (\Eval N \rho)
  (U)$; this is exactly what we want to prove.

  $\bullet$ If $M = \dokw {x_\sigma \to N}; P \colon D \tau$, then we
  must show that for all $E^0 \R_{\tau \vdash D \beta}^\perp h$, for
  every observable open subset $U$ of $\Eval \beta$,
  $Next^* (E^0 [M \theta]) (\rover U) \geq h^\dagger (\Eval M \rho)
  (U)$.  We note that
  $h^\dagger (\Eval M \rho) (U) = h^\dagger ((\Eval {\lambda x_\sigma
    . P} \rho)^\dagger \allowbreak (\Eval N \rho)) (U) = (h^\dagger
  \circ \Eval {\lambda x_\sigma . P} \rho)^\dagger \allowbreak (\Eval
  N \rho)) (U)$, where the last equality is by the monad equation
  $g^\dagger \circ f^\dagger = (g^\dagger \circ f)^\dagger$.

  We consider the weak evaluation context
  ${E'}^0 \eqdef E^0 [\dokw {x_\sigma \to []}; P\theta]$, and we claim
  that
  ${E'}^0 \R_{\sigma \vdash D \beta}^\perp h^\dagger \circ \Eval
  {\lambda x_\sigma . P} \rho$.  This means showing that for all
  $Q \R_\sigma b$, for every observable open subset $V$ of
  $\Eval {D \beta}$,
  $Next^* ({E'}^0 [\retkw Q]) (\rover V) \geq h^\dagger (\Eval
  {\lambda x_\sigma . P} \rho (b)) (V)$.  The right-hand side is equal
  to $h^\dagger (\Eval P \rho [x \mapsto b]) (V)$.  Since
  ${E'}^0 [\retkw Q] = E^0 [\dokw {x_\sigma \to \retkw Q}; P\theta]
  \to E^0 [P \theta [x:=Q]]$ by a weak deterministic rule, and by
  Lemma~\ref{lemma:det}, the left-hand side
  $Next^* ({E'}^0 [\retkw Q]) (\rover V)$ is equal to
  $Next^* (E^0 [P \theta [x:=Q]]) (\rover V)$.  The latter is larger
  than or equal to $h^\dagger (\Eval P \rho [x \mapsto b]) (V)$ since
  $P \theta [x:=Q] \R_{D \tau} \Eval P \rho [x \mapsto b]$ by
  induction hypothesis and since
  $E^0 \R_{\tau \vdash D \beta}^\perp h$.

  Since
  ${E'}^0 \R_{\sigma \vdash D \beta}^\perp h^\dagger \circ \Eval
  {\lambda x_\sigma . P} \rho$ and since by induction hypothesis
  $N \theta \R_{D \sigma} \Eval N \rho$, we have
  $Next^* ({E'}^0 [N \theta]) (\rover U) \geq (h^\dagger \circ \Eval
  {\lambda x_\sigma . P} \rho)^\dagger (\Eval N \rho) (U)$, and this
  is exactly what we had to prove.

  $\bullet$ If $M = \langle M_1, M_2 \rangle$, with
  $M_1 \colon \tau_1$ and $M_2 \colon \tau_2$, then
  $M \theta \to^* \langle M_1 \theta, M_2 \theta \rangle$ by weak
  deterministic rules (vacuously), and
  $M_1 \theta \R_{\tau_1} \Eval {M_1} \rho$,
  $M_2 \theta \R_{\tau_2} \Eval {M_2} \rho$, by induction hypothesis.

  $\bullet$ If $M = \pi_1 N$, with $N \colon \tau_1 \times \tau_2$,
  then $N \theta \R_{\tau_1 \times \tau_2} \Eval N \rho$ by induction
  hypothesis.  If $\Eval N \rho = \bot$, then $\Eval M \rho = \bot$,
  and therefore $M \theta \R_{\tau_1 \times \tau_2} = \Eval M \rho$,
  by Lemma~\ref{lemma:MR:closed}.  Otherwise, $\Eval N \rho$ is of the
  form $(a_1, a_2)$ with $a_1 \neq \bot$ or $a_2 \neq \bot$.  By
  definition of $\R_{\tau_1 \times \tau_2}$,
  $N \theta \to^* \langle N_1, N_2 \rangle$ by weak deterministic
  rules in such a way that $N_1 \R_{\tau_1} a_1$ and
  $N_2 \R_{\tau_2} a_2$.  We note that
  $M \theta = \pi_1 N \theta \to^* \pi_1 \langle N_1, N_2 \rangle \to
  N_1$ by weak deterministic rules, so
  $M \theta \R_{\tau_1} a_1 = \Eval M \rho$, using
  Lemma~\ref{lemma:R:red}.

  The case $M = \pi_2 N$ is similar.

  $\bullet$ If $M = \iota_1 N \colon \tau_1 + \tau_2$, then vacuously
  $M \theta \to^* \iota_1 N \theta$ by weak deterministic rules.  By
  induction hypothesis, $N \theta \R_{\tau_1} \Eval N \rho$, so
  $M \theta \R_{\tau_1 + \tau_2} (1, \Eval N \rho) = \Eval M \rho$.

  The case $M = \iota_2 N$ is similar.

  $\bullet$ If $M = \casekw N P_1 P_2 \colon \tau$, with
  $N \colon \tau_1+\tau_2$, then by induction hypothesis
  $N \theta \R_{\tau_1+\tau_2} \Eval N \rho$.  If
  $\Eval N \rho = \bot$, then $\Eval M \rho = \bot$, so
  $M \theta \R_\tau \Eval M \rho$, by Lemma~\ref{lemma:MR:closed}.
  Otherwise, $\Eval N \rho = (i, b)$ for some $i \in \{1, 2\}$ and
  $b \in \Eval {\tau_i}$.  By definition of
  $\R_{\tau_1+\tau_2}$, $N \theta \to^* \iota_i N'$ by weak
  deterministic rules, in such a way that $N' \R_{\tau_i} b$.  Then
  $M \theta \to^* \casekw {(\iota_i N')} {(P_1\theta)} {(P_2\theta)}
  \to P_i\theta N'$ by weak deterministic rules.  By induction
  hypothesis, $P_i \theta \R_{\tau_i \to \tau} \Eval {P_i} \rho$, so
  $P_i \theta N' \R_\tau \Eval {P_i} (\rho) (b)$.  By
  Lemma~\ref{lemma:R:red},
  $M \theta \R_\tau \Eval {P_i} (\rho) (b) = \Eval M \rho$.
\end{proof}

We obtain the generalized ISPCF terms as $M \theta$, where
$(M, \theta)$ ranges over $\Config$.  In that case, for every variable
$x$ in the domain of $\theta$, $x$ has type $\realT$, and if we write
$\underline {\mathbf a}$ for $\theta (x)$, then we may define
$\rho (x)$ as $\mathbf a$.  Then $\Eval M \rho$ is the semantics
$\Eval {M \theta}$ of the generalized ISPCF term $M \theta$.  By
Lemma~\ref{lemma:R:obs:val},
$\underline {\mathbf a} \R_{\realT} \mathbf a$, so $\theta \R_* \rho$.
Proposition~\ref{prop:basic:lemma} then implies the following.
\begin{corollary}
  \label{corl:basic:lemma}
  Let $\Sigma$ consist of strictly observable first-order constants.  For every
  generalized ISPCF term $M \colon \tau$, $M \R_\tau \Eval M$.
\end{corollary}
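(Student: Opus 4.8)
The plan is to read the corollary straight off the fundamental lemma, Proposition~\ref{prop:basic:lemma}, by externalizing the real-number constants of a generalized ISPCF term into a substitution. Recall (Definition~\ref{defn:Config} and the discussion preceding it) that every generalized ISPCF term is uniquely of the form $M\theta$ for a configuration $(M,\theta)$, where the \emph{shape} $M$ is an honest ISPCF term of the same type $\tau$ whose only free variables are template variables $x_1,\ldots,x_k$, all of type $\realT$, and where $\theta$ sends each $x_i$ to a zero-ary constant $\underline{\mathbf a}_i$ with $\mathbf a_i\in\IR$. So it suffices to prove $M\theta \R_\tau \Eval{M\theta}$ for every such pair $(M,\theta)$, and for this I intend to apply Proposition~\ref{prop:basic:lemma} to the ISPCF term $M$.

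To do so I would produce a matching environment: define $\rho$ on $\FV(M)$ by $\rho(x_i)\eqdef\mathbf a_i$, so $\theta$ and $\rho$ have the same domain. Two observations are then needed. First, $\theta \R_* \rho$: for each $i$, $\theta(x_i)=\underline{\mathbf a}_i$ and, since every $\mathbf a_i\in\IR$ is an observable element of $\Eval{\realT}=\IR_\bot$ (Definition~\ref{defn:partial}), Lemma~\ref{lemma:R:obs:val} gives $\underline{\mathbf a}_i \R_{\realT} \Eval{\underline{\mathbf a}_i}=\mathbf a_i=\rho(x_i)$; alternatively this is immediate from the base clause of Definition~\ref{defn:R}, as $\underline{\mathbf a}_i$ reduces to itself in zero weak deterministic steps and $\mathbf a_i\ge\mathbf a_i$. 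Second, reading $\theta$ through $\rho$ as an environment yields $\Eval M \rho = \Eval{M\theta}$, the denotation of the generalized ISPCF term, which is what the notation $\Eval{M\theta}$ abbreviates (this is exactly the identification made just before Lemma~\ref{lemma:Eval:cont}).

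Combining these, Proposition~\ref{prop:basic:lemma} applied to $M\colon\tau$ with $\theta\R_*\rho$ gives $M\theta \R_\tau \Eval M \rho = \Eval{M\theta}$, which is the asserted statement for the arbitrary generalized ISPCF term $M\theta$. The degenerate case in which some $\theta(x_i)$ is $\bot$ (if one wishes to allow it) is trivial, since $M\theta \R_\tau \bot$ always holds by Lemma~\ref{lemma:MR:closed}.

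I do not expect a genuine obstacle in the corollary itself: essentially all the content lives in Proposition~\ref{prop:basic:lemma}, whose proof is a long structural induction over ISPCF terms leaning on Lemmas~\ref{lemma:R:red}, \ref{lemma:MR:closed}, \ref{lemma:det} and~\ref{lemma:Next*}. The only thing to be careful about at this last step is the bookkeeping of the shape/substitution decomposition and the identification of $\Eval M \rho$ with the denotation of $M\theta$ — routine once the definitions in the configuration section are in hand.
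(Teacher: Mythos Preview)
Your proposal is correct and follows exactly the same route as the paper: decompose the generalized term via its unique configuration $(M,\theta)$, define the matching environment $\rho$ by reading off the real-number constants, use Lemma~\ref{lemma:R:obs:val} to get $\theta \R_* \rho$, identify $\Eval M\rho$ with $\Eval{M\theta}$, and invoke Proposition~\ref{prop:basic:lemma}. Your aside about the case $\theta(x_i)=\bot$ is unnecessary here, since generalized ISPCF terms only feature constants $\underline{\mathbf a}$ for $\mathbf a\in\IR$, but it does no harm.
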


\begin{lemma}
  \label{lemma:R:obs}
  Let $\Sigma$ consist of strictly observable first-order constants.  For every
  observable type $\beta$, for every generalized ISPCF term
  $N \colon \beta$, for every $b \in \Eval \beta$, if $N \R_\beta b$
  then either $b$ is not observable, or $b$ is observable and
  $N \to^* V$ by deterministic rules for some value $V \colon \beta$
  such that $\Eval V \geq b$.
\end{lemma}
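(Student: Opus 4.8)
The plan is to prove Lemma~\ref{lemma:R:obs} by structural induction on the observable type $\beta$, unwinding the definition of $\R_\beta$ (Definition~\ref{defn:R}) in each case. Throughout, I assume $b$ is observable; the goal is then to produce a value $V$ with $N \to^* V$ by deterministic rules and $\Eval V \geq b$.

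\textbf{Base case: $\beta$ basic.} Here $b$ observable means $b \neq \bot$, so by the definition of $\R_\beta$ on basic types, $N \to^* \underline c$ by weak deterministic rules for some zero-ary constant $\underline c$ with $c \geq b$. Weak deterministic rules are in particular deterministic rules, $\underline c$ is a value, and $\Eval {\underline c} = c \geq b$. So take $V \eqdef \underline c$.

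\textbf{Product case: $\beta = \beta_1 \times \beta_2$.} If $b = (b_1, b_2)$ is observable then both $b_1$ and $b_2$ are observable, hence neither is $\bot$. By definition of $\R_{\beta_1 \times \beta_2}$, $N \to^* \langle M_1, M_2 \rangle$ by weak deterministic rules with $M_1 \R_{\beta_1} b_1$ and $M_2 \R_{\beta_2} b_2$. Apply the induction hypothesis to $M_1$ and to $M_2$: since $b_1, b_2$ are observable we get values $V_1 \colon \beta_1$, $V_2 \colon \beta_2$ with $M_i \to^* V_i$ by deterministic rules and $\Eval {V_i} \geq b_i$. Now I need to lift these reductions to act inside the pair. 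The context $\langle [], M_2 \rangle$ is an (unrestricted) evaluation context, so $\langle M_1, M_2 \rangle \to^* \langle V_1, M_2 \rangle$ by deterministic rules; then $\langle V_1, [] \rangle$ is an evaluation context (here I use that $V_1$ is a value, as required by the grammar for $E$ in Definition~\ref{defn:context}), so $\langle V_1, M_2 \rangle \to^* \langle V_1, V_2 \rangle$ by deterministic rules. Chaining, $N \to^* \langle V_1, V_2 \rangle$ by deterministic rules, this is a value of type $\beta$, and $\Eval {\langle V_1, V_2 \rangle} = (\Eval {V_1}, \Eval {V_2}) \geq (b_1, b_2) = b$. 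The sum case $\beta = \beta_1 + \beta_2$ is analogous but easier: $b$ observable gives $b = (i, b')$ with $b'$ observable, $N \to^* \iota_i N'$ by weak deterministic rules with $N' \R_{\beta_i} b'$, the induction hypothesis yields $N' \to^* V'$ by deterministic rules with $\Eval {V'} \geq b'$, and since $\iota_i []$ is an evaluation context, $N \to^* \iota_i V'$ by deterministic rules, a value, with $\Eval {\iota_i V'} = (i, \Eval {V'}) \geq b$.

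\textbf{Main obstacle.} The real subtlety is bookkeeping about which reductions are \emph{weak} versus \emph{full} deterministic, and making sure the composite still counts as a string of deterministic-rule steps: the definition of $\R_\beta$ only hands us reductions by \emph{weak} deterministic rules to a lazy value, and to descend into the components of a pair or an $\iota_i$ I must invoke the full (non-weak) evaluation contexts $\langle V, []\rangle$, $\langle [], M\rangle$, $\iota_i []$, $\retkw[]$ of Definition~\ref{defn:context} --- precisely the contexts that are \emph{not} weak. I should check that this is legitimate, namely that weak deterministic steps are deterministic steps and that plugging a deterministic reduction sequence into an evaluation context yields a deterministic reduction sequence (immediate from the rule format $E[\text{redex}] \to E[\cdots]$ and the fact that evaluation contexts compose). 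One small point to be careful about: for the context $\langle V_1, []\rangle$ to be a legal evaluation context, $V_1$ must be a value, which is exactly what the induction hypothesis delivers; similarly, one should note that after $N \to^* \langle V_1, V_2\rangle$ the result is genuinely a value $V$ of type $\beta$ by the value grammar. With these checks in place the induction goes through routinely; there is no case for arrow or distribution types since $\beta$ is observable.
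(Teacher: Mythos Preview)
Your proof is correct and follows essentially the same approach as the paper's: structural induction on the observable type $\beta$, unfolding the definition of $\R_\beta$, and in the product case chaining $N \to^* \langle M_1, M_2\rangle \to^* \langle V_1, M_2\rangle \to^* \langle V_1, V_2\rangle$ via the non-weak evaluation contexts $\langle [], M_2\rangle$ and $\langle V_1, []\rangle$. Your write-up is in fact more careful than the paper's about the weak-versus-full deterministic bookkeeping and about why $\langle V_1, []\rangle$ is a legal context.
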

Note that $N \to^* V$ by deterministic rules, but not necessarily weak
rules.

\begin{proof}
  We show that $N \R_\beta b$ and the fact that $b$ is observable
  together imply the existence of a value $V \colon \beta$ such that
  $\Eval V \geq b$, by induction on $\R_\beta$.  If $\beta$ is a basic
  type, then this is by definition of $\R_\beta$.  If $\beta$ is a
  product type $\beta_1 \times \beta_2$, then let us write $b$ as
  $(b_1, b_2)$, where $b_1 \neq \bot$ and $b_2 \neq \bot$ since $b$ is
  observable.  Since $b \neq \bot$, by definition of
  $\R_{\beta_1 \times \beta_2}$, $M \to^* \langle M_1, M_2 \rangle$ by
  weak deterministic rules, for some $M_1 \R_{\tau_1} b_1$ and
  $M_2 \R_{\tau_2} b_2$.  By induction hypothesis, $M_1 \to^* V_1$ and
  $M_2 \to^* V_2$ by weak deterministic rules, for some values $V_1$
  and $V_2$ such that $\Eval {V_1} \geq b_1$, $\Eval {V_2} \geq b_2$.
  Then
  $M \to^* \langle M_1, M_2 \rangle \to^* \langle V_1, M_2 \rangle
  \to^* \langle V_1, V_2 \rangle$ by (non-weak) deterministic rules.
  The term $V \eqdef \langle V_1, V_2 \rangle$ is a value, and
  $\Eval V \geq (b_1, b_2) = b$.  The case of sum types is similar.
\end{proof}

We finally obtain the promised adequacy theorem.
\begin{theorem}[Adequacy]
  \label{thm:adeq}
  Let $\Sigma$ consist of strictly observable first-order constants.  For every
  observable type $\beta$, for every generalized ISPCF term
  $M \colon D \beta$, for every observable open subset $U$ of
  $\Eval \beta$, $\Eval M (U) = Next^* (M) (\rover U)$.
\end{theorem}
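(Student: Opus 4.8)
The plan is to combine the soundness inequality of Proposition~\ref{prop:sound} with its converse, which we extract from the fundamental lemma for the logical relation. Soundness already gives $\Eval M (U) \geq Next^* (M) (\rover U)$ for all open $U$, so it suffices to prove $Next^* (M) (\rover U) \geq \Eval M (U)$ whenever $U$ is an \emph{observable} open subset of $\Eval \beta$.

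For the converse I would first check that the empty weak evaluation context $[]$, viewed as a context of type $D\beta \vdash D\beta$, satisfies $[] \R_{\beta \vdash D\beta}^\perp \eta_{\Eval\beta}$. Unfolding Definition~\ref{defn:R}, this means: for every $N \R_\beta b$ and every observable open $U$ of $\Eval\beta$, $Next^* (\retkw N) (\rover U) \geq \delta_b (U)$. If $b \notin U$ the right-hand side is $0$ and there is nothing to prove; since $U$ consists of observable elements, this in particular covers $b = \bot$. If $b \in U$, then $b$ is observable, so Lemma~\ref{lemma:R:obs} gives $N \to^* V$ by deterministic rules for some value $V \colon \beta$ with $\Eval V \geq b$; as $U$ is Scott-open hence upwards closed, $\Eval V \in U$, i.e.\ $V \in \overline U$. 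Because $\retkw E$ is an evaluation context whenever $E$ is, the reduction $N \to^* V$ lifts to $\retkw N \to^* \retkw V$ by deterministic rules, whence $Next^* (\retkw N) = Next^* (\retkw V)$ by Lemma~\ref{lemma:det}. Since $\retkw V$ is a value, hence a normal form by Lemma~\ref{lemma:normal}, Lemma~\ref{lemma:Next*}(1) gives $Next^* (\retkw V) = \delta_{\retkw V}$, and as $\retkw V \in \rover U$ we conclude $Next^* (\retkw N) (\rover U) = 1 \geq \delta_b (U)$, as required.

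With this in hand, I would invoke Corollary~\ref{corl:basic:lemma} to obtain $M \R_{D\beta} \Eval M$, and instantiate the defining clause of $\R_{D\beta}$ with the weak evaluation context $E^0 \eqdef []$ and the map $h \eqdef \eta_{\Eval\beta}$ just shown to be related: for every observable open $U$ of $\Eval\beta$, $Next^* (M) (\rover U) = Next^* ([] [M]) (\rover U) \geq \eta_{\Eval\beta}^\dagger (\Eval M) (U) = \Eval M (U)$, the last equality by the monad law $\eta_X^\dagger = \identity{\Val X}$. Together with Proposition~\ref{prop:sound} this yields the equality $\Eval M (U) = Next^* (M) (\rover U)$.

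The only slightly delicate points are bookkeeping: that $[]$ genuinely is a weak evaluation context of type $D\beta \vdash D\beta$, so that the relation $\R^\perp_{\beta \vdash D\beta}$ applies with $\tau \eqdef \beta$; and that the deterministic reduction $N \to^* V$ lifts through the (non-weak) evaluation context $\retkw []$ so that Lemma~\ref{lemma:det} is applicable. I do not expect a genuine obstacle here: all the substance — the construction of the logical relation and its Scott-closure properties (Lemma~\ref{lemma:MR:closed}), the fundamental lemma (Proposition~\ref{prop:basic:lemma}), and the reduction-to-a-value Lemma~\ref{lemma:R:obs} — is already established, and the present argument is the standard ``top-level'' specialization of the logical relation to the empty continuation.
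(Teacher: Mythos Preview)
Your proposal is correct and follows essentially the same approach as the paper's own proof: show that the empty context $[]$ is $\R_{\beta \vdash D\beta}^\perp$-related to $\eta_{\Eval\beta}$ via Lemma~\ref{lemma:R:obs} and Lemma~\ref{lemma:det}, then combine Corollary~\ref{corl:basic:lemma} with this to obtain $Next^*(M)(\rover U) \geq \Eval M(U)$, and conclude by soundness. The bookkeeping points you flag (that $[]$ is a weak evaluation context of the right type and that deterministic reductions lift through $\retkw\,[\,]$) are exactly the ones the paper handles, and in the same way.
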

\begin{proof}
  We consider the empty weak evaluation context $E^0 \eqdef []$, and
  we claim that
  $E^0 \R_{\beta \vdash D \beta}^\perp \eta_{\Eval \beta}$.  In order
  to show this, we need to show that for all $N \R_\beta b$, for every
  observable open subset $U$ of $\Eval \beta$,
  $Next^* (E^0 [\retkw N]) (\rover U) \geq \delta_b (U)$.

  If $b \not\in U$, then $\delta_b (U) = 0$, and this is obvious.

  Otherwise, $\delta_b (U) = 1$.  Since $b \in U$ and since $U$ is
  observable, $b$ is an observable element of $\Eval \beta$.  Since
  $N \R_\beta b$, Lemma~\ref{lemma:R:obs} tells us that $N \to^* V$ by
  deterministic rules for some value $V \colon \beta$ such that
  $\Eval V \geq b$.  Since $U$ is upwards closed, $\Eval V$ is in $U$.
  Therefore $V$ is in $\overline U$, and $\retkw V$ is in $\rover U$.

  We note that since $N \to^* V$ by deterministic rules,
  $\retkw N \to^* \retkw V$ by deterministic rules as well.  By
  Lemma~\ref{lemma:det},
  $Next^* (E^0 [\retkw N]) = Next^* \allowbreak (\retkw N)$ is equal
  to $Next^* (\retkw V)$.  By Lemma~\ref{lemma:Next*}, item~1, (and
  taking suprema over all $n$,) since $\retkw V$ is a normal form,
  $Next^* (\retkw V) (\rover U) = \delta_{\retkw V} (\rover U) = 1 =
  \delta_b (U)$.

  Now that we have proved
  $E \R_{\beta \vdash D \beta}^\perp \eta_{\Eval \beta}$, we apply it
  to $M \R_{D \beta} \Eval M$ (Corollary~\ref{corl:basic:lemma}), and,
  noting that $E = []$ is a weak evaluation context, we obtain that
  $Next^* (E [M]) (\rover U) \geq (\eta_{\Eval \beta})^\dagger (\Eval
  M) (U) = \Eval M (U)$.  The converse inequality is by soundness
  (Proposition~\ref{prop:sound}).
\end{proof}

\begin{remark}
  \label{rem:adeq:D}
  Why do we restrict $M$ to be of type $D \beta$ in
  Theorem~\ref{thm:adeq}?  Restricting it to be of observable type
  $\beta$ would be silly, because only deterministic computations
  happen at type $\beta$.  In order to be able to do any probabilistic
  computation at all, $M$ has to be of distribution type.  A similar
  choice was made in \cite{jgl-jlap14,JGL-lics19}.
\end{remark}

\begin{remark}
  \label{rem:adeq:proper}
  The reason why Theorem~\ref{thm:adeq} only considers
  \emph{observable} open subsets $U$ is that adequacy is concerned
  with \emph{terminating} computations, and terms of observable type
  $\beta$ terminate on values $V$ (if they do terminate at all), whose
  semantics is observable.  For non-observable open subsets $U$, the
  denotational semantics gives us more information.  For example, when
  $U$ is the whole of $\Eval \beta$, $Next^* (M) (\rover U)$ is the
  `probability' that $M$ terminates at all, while $\Eval M (U)$ is
  larger in general, and represents the `probability' of all
  executions starting from $M$, including those that do not terminate.
  For example, let $M \eqdef \retkw (\reckw (\lambda n_{\intT} . n))$,
  then $Next^* (M) (\rover {\Eval {\intT}}) = 0$, while
  $\Eval M (\Eval {\intT}) = \delta_\bot (\Eval {\intT}) = 1$.  For a
  product type $\beta_1 \times \beta_2$, the Scott-open set
  $U \eqdef \Eval {\beta_1} \times U'$, where $U'$ is a non-empty
  observable subset of $\Eval {\beta_2}$, is not observable either,
  but is smaller than $\Eval {\beta_1 \times \beta_2}$.  In that case,
  $Next^* (M) (\rover U)$ is the `probability' that $M$ terminates on
  a value $\retkw \langle V_1, V_2 \rangle$ with
  $V_2 \in \overline {U'}$, but $\Eval M (U)$ is the `probability'
  that $\pi_2 M$ terminates with a value in $U'$, thereby also
  counting all computations for which $\pi_1 M$ does not terminate.
  Another view to the question is given by Lemma~\ref{lemma:rover} below.
\end{remark}

\begin{remark}
  \label{rem:infprod}
  The reader interested in extending the adequacy theorem to
  non-observable open subsets $U$ should be warned of the following
  difficulty.  Let $U$ be the whole of $\Eval \beta$.  While
  $\Eval M (U)$ is perfectly well-defined, what would be the
  operational meaning of ``the `probability' of all executions
  starting from $M$''?  This is difficult, already for executions not
  involving $\sample [\mu]$ and rule (\ref{rule:sample}), but
  involving $\score$ and rule (\ref{rule:score}).  Imagine an infinite
  execution starting for $M$ and calling $\score$ on real values
  $a_1$, $a_2$, \ldots, in succession.  One would imagine that the
  `probability' of the unique execution starting from $M$ is
  $\prod_{i=1}^{+\infty} a_i$.  However, that is ill-defined in
  general.  For an example, consider $a_i \eqdef 2$ for $i$
  odd, $1/2$ for $i$ even.
\end{remark}

We say that $\Sigma$ is \emph{terse} if and only if for every
$n \in \Z$, there is a unique zero-ary constant $\underline n$ of type
$\intT$ such that $\Eval {\underline n} = n$, and $\underline *$ is
the only zero-ary constant of type $\unitT$.  In this case, we can
avoid the use of sets of the form $\rover U$ in the definition of
$\precsim^{app}$ and $\cong^{app}$, because of the following lemma.
Note that we do not need to require a similar condition of terseness
on $\voidT$ (if $\Sigma$ consists of strictly observable first-order constants,
then there can be no zero-ary constant of type $\voidT$), or on
$\realT$ (because of the way we defined generalized ISPCF terms, see
Section~\ref{sec:ideal-oper-semant}).

\begin{lemma}
  \label{lemma:rover}
  Let $\Sigma$ be terse and consist of strictly observable first-order constants.
  For every observable type $\beta$, the map $M \mapsto \Eval M$ is an
  order isomorphism from $Norm_\beta$ to the subdcpo $Obs_\beta$ of
  observable elements in $\Eval \beta$.  The open subsets of
  $Norm_\beta$ are exactly the sets of the form $\overline U$, where
  $U$ is a observable Scott-open subset of $\Eval \beta$.  Those of
  $Norm_{D \beta}$ are exactly the sets of the form $\rover U$, where
  $U$ is a observable Scott-open subset of $\Eval \beta$.
\end{lemma}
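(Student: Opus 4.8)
The plan is to reduce the whole statement to an order isomorphism $Norm_\beta \cong Obs_\beta$ and then transport the topology. First I would use Lemma~\ref{lemma:normal} to identify $Norm_\tau$ with the set of values of type $\tau$, and read off from the value grammar of Definition~\ref{defn:context} the shape of values of observable type: using that a partially applied constant $\underline f\,\underline a_1\cdots\underline a_i$ with $i<\alpha(\underline f)$ always has an arrow type and that (by Definition~\ref{defn:assum:strict}) the only constants of observable type are zero-ary of basic type, the values of an observable type $\beta$ are exactly the zero-ary constants of type $\beta$ when $\beta$ is basic, the pairs $\langle V_1,V_2\rangle$ with $V_i$ a value of type $\beta_i$ when $\beta=\beta_1\times\beta_2$, and the terms $\iota_1 V$, $\iota_2 V$ with $V$ a value of the relevant component type when $\beta=\beta_1+\beta_2$. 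The same inspection gives that the values of type $D\beta$ are exactly the terms $\retkw V$ with $V$ a value of type $\beta$, so $Norm_{D\beta}=\{\retkw V\mid V\in Norm_\beta\}$. Finally I would note that $Norm_\beta$ is, in the sense of Remark~\ref{rem:Config:cont}, a coproduct over its normal shapes of type $\beta$ of finite products of copies of $\IR_\bot$, so the topology it carries as a subspace of $\Config$ is its Scott topology.

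The first real step is to prove, by induction on the observable type $\beta$, that $Obs_\beta$ is a sub-dcpo of $\Eval\beta$ that is also Scott-open in $\Eval\beta$. For the basic types this is immediate: $\emptyset$ is open in $\Eval{\voidT}=\{\bot\}$, while $\{*\}$ is Scott-open in $\{\bot,*\}$, $\Z$ is Scott-open in $\Z_\bot$, and $\IR$ is Scott-open in $\IR_\bot$ because in each of those three cases the complement is the Scott-closed set $\{\bot\}$; and each is closed under directed suprema --- for $\IR$, a directed supremum of intervals $[a_i,b_i]$ is the interval $[\sup_i a_i,\inf_i b_i]$, again in $\IR$. For $\beta_1\times\beta_2$, $Obs_{\beta_1\times\beta_2}=Obs_{\beta_1}\times Obs_{\beta_2}$ by Definition~\ref{defn:partial}, a product of two Scott-open sub-dcpos, hence open already in the product topology and a fortiori in the finer Scott topology of the dcpo product; for $\beta_1+\beta_2$, a direct check using the induction hypothesis shows that the set of $\iota_i$-images of observable elements is Scott-open and closed under directed suprema in $(\Eval{\beta_1}+\Eval{\beta_2})_\bot$. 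The point of this step is the elementary consequence that, since $Obs_\beta$ is upwards closed (being Scott-open) and a sub-dcpo of $\Eval\beta$, a subset $S\subseteq Obs_\beta$ is Scott-open in the dcpo $Obs_\beta$ if and only if it is Scott-open in $\Eval\beta$; that is, the Scott-open subsets of $Obs_\beta$ are exactly the observable Scott-open subsets of $\Eval\beta$ in the sense of Definition~\ref{defn:partial:open}.

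Next I would show, again by induction on $\beta$, that $\Eval \_\colon Norm_\beta\to Obs_\beta$ is an order isomorphism. That $\Eval V\in Obs_\beta$ for $V\in Norm_\beta$ follows from the description of values together with the semantic clauses, and surjectivity is witnessed by $\underline a$ of Definition~\ref{defn:underline}, which is a value of type $\beta$ with $\Eval{\underline a}=a$ for every $a\in Obs_\beta$. For injectivity and for reflection of the order I would use that distinct zero-ary constants of a given basic type have distinct shapes, hence are incomparable in $\Config$, and have distinct semantics --- terseness supplies this for $\intT$ and $\unitT$, and the construction of generalized ISPCF terms supplies it for $\realT$, where there is exactly one constant $\underline{\mathbf a}$ per $\mathbf a\in\IR$ --- and that for product and sum types two values are comparable in $\Config$ exactly when their immediate subterms are, with the order componentwise; so the claim for $\beta$ reduces to that for the $\beta_i$. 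Preservation of the order is checked the same way. As an order isomorphism of dcpos is a homeomorphism for the Scott topologies, $\Eval \_$ is a homeomorphism of $Norm_\beta$ onto $Obs_\beta$.

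Combining the two steps, the open subsets of $Norm_\beta$ are exactly the preimages $(\Eval \_)^{-1}(U)$ of observable Scott-open $U\subseteq\Eval\beta$, and $(\Eval \_)^{-1}(U)=\overline U$ by Definition~\ref{defn:rover}. For $Norm_{D\beta}$ I would invoke the observation following Definition~\ref{defn:rover} that $M\mapsto\retkw M$ is a homeomorphism of $\Config$ onto the subspace of configurations beginning with $\retkw$; since $Norm_{D\beta}=\{\retkw V\mid V\in Norm_\beta\}$, it restricts to a homeomorphism $Norm_\beta\to Norm_{D\beta}$ carrying $\overline U$ to $\rover U$, and the last claim follows. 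I expect the main obstacle to be the topological bookkeeping in the first step --- establishing that ``Scott-open in the sub-dcpo $Obs_\beta$'' agrees with ``observable Scott-open in $\Eval\beta$'', which is precisely what makes the indexing families in the two topology statements come out as stated; the order-isomorphism induction and the reduction for $D\beta$ are routine.
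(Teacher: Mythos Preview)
Your proof is correct and follows essentially the same approach as the paper: establish that $\Eval\_$ is an order isomorphism $Norm_\beta\to Obs_\beta$ with inverse $a\mapsto\underline a$, observe that $Obs_\beta$ is Scott-open in $\Eval\beta$ so that its intrinsic Scott topology coincides with the observable Scott-open subsets of $\Eval\beta$, and transport the topology; then handle $Norm_{D\beta}$ via the homeomorphism $M\mapsto\retkw M$. Your write-up is considerably more detailed than the paper's very terse argument---in particular you spell out the induction showing $Obs_\beta$ is Scott-open and the equivalence between Scott-open-in-$Obs_\beta$ and observable-Scott-open-in-$\Eval\beta$, which the paper merely asserts---and the only wrinkle is the phrasing ``distinct zero-ary constants of a given basic type have distinct shapes, hence are incomparable'': for $\realT$ the constants $\underline{\mathbf a}$ all share the shape $x_1$ and can be comparable, but you immediately treat $\realT$ separately and correctly, so this is cosmetic.
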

\begin{proof}
  The first claim is clear.  The inverse map is the map
  $a \mapsto \underline a$ of Definition~\ref{defn:underline}.  For
  the second claim, $Obs_\beta$ is a Scott-open subset of
  $\Eval \beta$, and is in particular a subdcpo (i.e., directed
  suprema are computed as in $\Eval \beta$).  Let $U'$ be any open
  subset of $Norm_\beta$.  Its image by the map
  $a \mapsto \underline a$ is an open subset of $\Eval \beta$ included
  in $Obs_\beta$, hence is Scott-open in $Obs_\beta$, since
  $Obs_\beta$ is a subdcpo of $\Eval \beta$.  The third claim follows
  from the fact that $M \mapsto \retkw M$ is an isomorphism between
  $Norm_\beta$ and $Norm_{D \beta}$.
\end{proof}

\subsection{Contextual equivalence and the applicative preorder}
\label{sec:observ-equiv-observ}

An important, and expected, consequence of adequacy, is that equality
of denotations implies contextual equivalence, namely that for any two
generalized ISPCF terms $M$, $N$ of any type $\tau$, if
$\Eval M=\Eval N$ then $M$ and $N$ are contextually equivalent,
meaning that every observable type $\beta$, for every context
$\mathcal E$ of type $\tau \vdash D \beta$, the `probability' that
$\mathcal E [M]$ evaluates to a value in any given open set $U$ of
normal forms is equal to the `probability' that $\mathcal E [N]$
evaluates to a value in $U$.

We define the required contexts $\mathcal E$ as generalized ISPCF
terms with arbitrarily many holes.  This represents the fact that one
may replace $M$ at any number of positions in a program by a
contextually equivalent program $N$ and not change the observable
behavior of the program.

In order to formalize this, there is no need to invent a new notion of
multi-hole context.  Indeed, replacing all the holes in $\mathcal E$
with a term $M$ gives a term with the same semantics as the
application $QM$, where $Q \eqdef \lambda x . \mathcal E [x]$.  Hence
the notion of contextual equivalence we are looking is the following
one, usually called \emph{applicative equivalence}.  This is the
equivalence relation associated with a more primitive notion of
\emph{applicative preorder}, which we now define.
\begin{definition}[Applicative preorder $\precsim^{app}$ and equivalence]
  \label{defn:app}
  For any two generalized ISPCF terms $M$, $N$ of the same type
  $\tau$, $M \precsim^{app} N$ (resp., $M \cong^{app} N$) if and only
  if, for every observable type $\beta$, for every generalized ISPCF
  term $Q \colon \tau \to D \beta$, for every open subset $U$ of
  $Norm_{D \beta}$, $Next^* (QM) (U) \leq Next^* (QN) (U)$ (resp.,
  $=$).
\end{definition}

The adequacy Theorem~\ref{thm:adeq} allows us to simplify the
definition of $\precsim^{app}$ and $\cong^{app}$ as follows.  Note
that, if $Q = \lambda x_\tau . \mathcal E [x]$, where $\mathcal E$ is
one of our purported multi-hole contexts, then
$\Eval {QM} = \Eval {\mathcal E [M]}$, justifying the view that the
applicative preorder is really about comparing programs with zero,
one, or several occurrences of $M$ replaced by another term $N$.
\begin{proposition}
  \label{prop:precsim}
  Let $\Sigma$ be terse and consist of strictly observable first-order constants.
  For all generalized ISPCF terms $M$, $N$ of the same type $\tau$,
  $M \precsim^{app} N$ (resp., $M \cong^{app} N$) if and only if, for
  every observable type $\beta$, for every generalized ISPCF term
  $Q \colon \tau \to D \beta$, for every observable Scott-open subset $U$
  of $\Eval \beta$, $\Eval {QM} (U) \leq \Eval {QN} (U)$.
\end{proposition}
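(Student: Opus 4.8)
The plan is to deduce this from the adequacy theorem together with the description of open subsets of $Norm_{D\beta}$ given by Lemma~\ref{lemma:rover}. The point is that, once an observable type $\beta$ and a generalized ISPCF term $Q\colon\tau\to D\beta$ are fixed, the two formulations differ only in their innermost clause; so it suffices to establish, for each such $\beta$ and $Q$, the equivalence between
\[
  \bigl(\forall W \in \Open {Norm_{D\beta}}\bigr)\; Next^* (QM) (W) \leq Next^* (QN) (W)
\]
and
\[
  \bigl(\forall U \text{ observable Scott-open in } \Eval \beta\bigr)\; \Eval {QM} (U) \leq \Eval {QN} (U).
\]
Note that $QM$ and $QN$ are closed, hence generalized ISPCF terms of type $D\beta$, so $Next^*(QM), Next^*(QN)\in\Val\Config$ and $\Eval{QM},\Eval{QN}\in\Val_m(\Eval\beta)$ are all well defined, and the hypotheses (terseness and strictly observable first-order constants) are exactly what is needed to invoke both Lemma~\ref{lemma:rover} and Theorem~\ref{thm:adeq}.

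First I would apply Lemma~\ref{lemma:rover}: the open subsets of $Norm_{D\beta}$ are precisely the sets $\rover U$ where $U$ ranges over the observable Scott-open subsets of $\Eval\beta$. Hence the first displayed condition is equivalent to requiring $Next^*(QM)(\rover U)\leq Next^*(QN)(\rover U)$ for every observable Scott-open $U\subseteq\Eval\beta$. Then I would apply the adequacy Theorem~\ref{thm:adeq} to $QM\colon D\beta$ and to $QN\colon D\beta$, obtaining $Next^*(QM)(\rover U)=\Eval{QM}(U)$ and $Next^*(QN)(\rover U)=\Eval{QN}(U)$ for every observable Scott-open $U$. Substituting these equalities turns the condition into the second displayed one, which proves the per-$(\beta,Q)$ equivalence; quantifying over all observable $\beta$ and all $Q\colon\tau\to D\beta$ then gives the statement for $\precsim^{app}$. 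For $\cong^{app}$ I would simply repeat the argument with $\leq$ replaced by $=$ everywhere (the reasoning never uses the direction of the inequality), or equivalently note that $M\cong^{app}N$ iff $M\precsim^{app}N$ and $N\precsim^{app}M$, and likewise on the denotational side.

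I do not expect a genuine obstacle here, since the substantive work is already carried out in Theorem~\ref{thm:adeq} and Lemma~\ref{lemma:rover}. The only point deserving a little care is to check that the surjection $U\mapsto\rover U$ of Lemma~\ref{lemma:rover} is exactly the reparametrisation under which adequacy converts the operational `probability' of reaching $\rover U$ into the denotational value $\Eval{QM}(U)$, so that the two blocks of quantifiers genuinely match; once that bookkeeping is made explicit, the equivalence is immediate.
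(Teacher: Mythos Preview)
Your proposal is correct and follows essentially the same route as the paper: invoke Lemma~\ref{lemma:rover} to reparametrize the quantification over open subsets of $Norm_{D\beta}$ by observable Scott-open $U\subseteq\Eval\beta$ via $U\mapsto\rover U$, then apply the adequacy Theorem~\ref{thm:adeq} to rewrite $Next^*(QM)(\rover U)$ as $\Eval{QM}(U)$ (and likewise for $QN$). The paper's proof is just a two-sentence version of what you wrote.
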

\begin{proof}
  By Lemma~\ref{lemma:rover}, one may replace the quantification over
  open subsets $U$ of $Norm_{D \beta}$ by a quantification over open
  subsets of the form $\rover U$, where $U$ ranges over the observable
  Scott-open subsets.  Then, by Theorem~\ref{thm:adeq}, $Next^* (QM)
  (\rover U) = \Eval {QM} (U)$, and similarly with $QN$.
\end{proof}

Importantly, we obtain the following easy result.
\begin{proposition}
  \label{prop:adeq:equiv}
  Let $\Sigma$ be terse and consist of strictly observable first-order constants.  Let $M$ and
  $N$ be two generalized ISPCF terms of the same type $\tau$.  The
  following implications hold:
  \begin{enumerate}
  \item $\Eval M \leq \Eval N \limp M \precsim^{app} N $;
  \item $\Eval M = \Eval N \limp M \cong^{app} N$.
  \end{enumerate}
\end{proposition}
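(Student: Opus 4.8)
The plan is to derive this directly from the adequacy-based reformulation of the applicative preorder in Proposition~\ref{prop:precsim}, together with nothing more than the monotonicity of the denotational semantics. So adequacy (Theorem~\ref{thm:adeq}) is used here only \emph{indirectly}, through Proposition~\ref{prop:precsim}; the argument itself never touches the operational semantics.

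First I would recall that, by Proposition~\ref{prop:precsim}, it suffices to show that $\Eval M \leq \Eval N$ implies $\Eval {QM} (U) \leq \Eval {QN} (U)$ for every observable type $\beta$, every generalized ISPCF term $Q \colon \tau \to D\beta$, and every observable Scott-open subset $U$ of $\Eval \beta$. For part~2, I would then split the hypothesis $\Eval M = \Eval N$ into $\Eval M \leq \Eval N$ and $\Eval N \leq \Eval M$, apply part~1 in both directions, and conclude using the fact that $M \cong^{app} N$ holds precisely when $M \precsim^{app} N$ and $N \precsim^{app} M$; this last equivalence is immediate from Definition~\ref{defn:app}, since an equality of extended non-negative reals is the conjunction of the two inequalities.

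Next I would unfold the semantics of application: by Figure~\ref{fig:sem}, $\Eval {QM} = \Eval Q (\Eval M)$ and $\Eval {QN} = \Eval Q (\Eval N)$, where $\Eval Q$ is an element of $\Eval {\tau \to D\beta} = [\Eval \tau \to \Val_m (\Eval \beta)]$, hence in particular a monotone (indeed Scott-continuous) map. From $\Eval M \leq \Eval N$ in $\Eval \tau$ we obtain $\Eval Q (\Eval M) \leq \Eval Q (\Eval N)$ in $\Val_m (\Eval \beta)$, and since the order on $\Val_m (\Eval \beta)$ is the stochastic order (that is, $\nu_1 \leq \nu_2$ if and only if $\nu_1 (U) \leq \nu_2 (U)$ for every open $U$), this yields $\Eval {QM} (U) = \Eval Q (\Eval M) (U) \leq \Eval Q (\Eval N) (U) = \Eval {QN} (U)$, which is exactly what was required. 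This establishes~1, and~2 follows as explained above.

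There is essentially no obstacle. The only points to be careful about are that $\Eval Q$ genuinely lands in the dcpo of Scott-continuous maps (because $Q$ is a closed generalized ISPCF term of arrow type), hence is monotone, and that the order carried by the monad $\Val_m$ really is the pointwise stochastic order; both are built into the setup of Sections~\ref{sec:monads-continuous}--\ref{sec:semantics}. If one wanted to be fully explicit, one could note here that $\Eval {QM}$ is independent of the environment for generalized ISPCF terms, so writing $\Eval {QM}$ without $\rho$ is unambiguous, as already observed before Lemma~\ref{lemma:Eval:cont}.
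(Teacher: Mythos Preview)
Your proof is correct and follows essentially the same approach as the paper's own proof: reduce via Proposition~\ref{prop:precsim} to the denotational inequality $\Eval{QM}(U)\leq\Eval{QN}(U)$, then use $\Eval{QM}=\Eval Q(\Eval M)$ together with monotonicity of $\Eval Q$, and deduce part~2 from part~1. The paper's version is simply terser, omitting the explicit mention of the stochastic order and of the antisymmetry argument for part~2.
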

\begin{proof}
  If $\Eval M \leq \Eval N$, then for every observable type $\beta$,
  for every generalized ISPCF term $Q \colon \tau \to D \beta$,
  $\Eval {QM} = \Eval Q (\Eval M) \leq \Eval Q (\Eval N) = \Eval
  {QN}$.  Using Proposition~\ref{prop:precsim}, $M \precsim^{app} N$.  The
  second part of the theorem is an easy consequence of the first part.
\end{proof}
The converse implications would be a form of full abstraction, and are
quite probably hopeless.  Therefore we will not bother with them.

\begin{remark}
  \label{rem:equiv}
  Following up on Remark~\ref{rem:comm}, it follows that when $x$ is
  not free in $N$ and $y$ is not free in $M$, the terms
  $\dokw {x \leftarrow M}; \dokw {y \leftarrow N}; P$ and
  $\dokw {y \leftarrow N}; \dokw {x \leftarrow M}; P$ are contextually
  equivalent.
\end{remark}

\begin{remark}
  \label{rem:equiv:normal}
  Similarly, the terms $\mathtt{normal}$,
  $\dokw {\langle x, y \rangle \leftarrow \mathtt{box\_muller}}; \retkw
  x$, and
  $\dokw {\langle x, y \rangle \leftarrow \mathtt{box\_muller}}; \retkw
  y$ (see Section~\ref{sec:box-muller-algorithm}) are contextually
  equivalent.  The terms $\mathtt{box\_muller}$ and
  $\mathtt{box\_muller'}$ are contextually equivalent, and so are the
  terms $\mathtt{expo}$, $\mathtt{expo}'$ and $\mathtt{von\_neumann}$
  of Section~\ref{sec:gener-expon-distr}.
\end{remark}

\section{A Precise Operational Semantics}
\label{sec:meas-oper-semant}

We give a second operational semantics that is closer to the intent of
\cite{VKS:SFPC,DLH:geom:bayes,EPT:PPCF}.  This requires slightly
stricter assumptions than simply having strictly observable first-order constants.
Those assumptions are always met in practice, as far as we know.  Note
that $\exp$, $\sin$, $\cos$, but also $\posop$ and $\log$, are
precise, for example.  One instance of a non-precise constant would be
the sign map, if we decide to define it as $\hat g$ (see
Section~\ref{sec:semantics}), where $g$ maps every negative number to
$0$ and every non-negative number to $1$; indeed, $\hat g$ maps every
interval $[a, b]$ with $a \leq 0\leq b$ to $[0, 1]$.  We can define
the (precise) sign map as
$\lambda x_{\realT} . \ifkw (\poskw\;x)\;\underline{1.0}\;\underline
{0.0}$ instead.

\begin{definition}
  \label{defn:assum:meas}
  The set $\Sigma$ consists of \emph{precise first-order constants} if
  and only if $\Sigma$ has strictly observable first-order constants, and
  additionally, for each constant $\underline f \colon \sigma_1 \to
  \cdots \to \sigma_k \to \tau$ of arity $k$ in $\Sigma$, $f$ is
  \emph{precise}, namely:
  \begin{itemize}
  \item for all maximal elements $v_1 \in \Eval {\sigma_1}$, \ldots,
    $v_k \in \Eval {\sigma_k}$, $f (v_1) \cdots (v_k)$ is either
    maximal in $\Eval \tau$ or equal to $\bot$.
  \end{itemize}
\end{definition}
When $\sigma_1=\cdots=\sigma_k=\tau = \realT$, this means that $f$
encodes a partial continuous map from $\real^k$ to $\real$.  If we
equate $i (v) = [v,v]$ with the real number $v$, the domain
$\{(v_1, \cdots, v_k) \in \real^k \mid f (v_1) \cdots (v_k) \neq
\bot\}$ of $f$ is an open subset of $\real^k$, and $f$ is required to
be continuous on that domain.

A \emph{precise} configuration $(M, \theta)$ is a configuration where
for every $x \in \FV (M)$, $\theta (x)$ is a maximal element of
$\IR_\bot$, namely a real number.

A \emph{precise} generalized ISPCF term is a term $M \theta$ obtained
from a precise configuration $(M, \theta)$.  Given a shape $M$, one
can equate $\theta$ with an element of $\real^{\FV (M})$, namely with
a finite tuple of real numbers.  $\real^{\FV (M)}$, and by extension
the set of precise configurations with shape $M$, is equipped with its
usual metric topology, which we will call its \emph{Euclidean topology}.
\begin{definition}[$\Config^0$]
  \label{defn:Config0}
  We topologize 
  $\Config^0$ as the set of all precise configurations, topologized as
  the coproduct over all shapes $M$ of the spaces of precise
  configurations with shape $M$, each being given the Euclidean topology.
\end{definition}
On the one hand, $\Config^0$ is a much more familiar space to the
ordinary topologist.  It is a countable coproduct of Polish spaces,
hence it is itself Polish.  On the other hand, $\Config^0$ is also a
topological subspace of $\Config$.  We write $\iota$ for the subspace
embedding; we have $\iota (M, \theta) = (M, i \circ \theta)$, where
$i$ is the familiar subspace embedding $r \mapsto [r, r]$ of $\real$
into $\IR_\bot$.

\begin{lemma}
  \label{lemma:Next0}
  Assume that $\Sigma$ consists of precise first-order constants.
  There is a unique continuous map $Next^0 \colon \Config^0 \to \Val
  \Config^0$ such that, for every $(M, \theta) \in \Config^0$, $\iota
  [Next^0 (M, \theta)] = Next (\iota (M, \theta))$.
\end{lemma}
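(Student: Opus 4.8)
The plan is to transport $Next$ through the subspace embedding $\iota$ coordinate-by-coordinate over shapes, using the case analysis already present in Definition~\ref{defn:opsem}. First I would observe that uniqueness is immediate: since $\iota$ is injective, if $\iota[Next^0(M,\theta)] = Next(\iota(M,\theta))$ determines $Next^0(M,\theta)$ uniquely as a valuation on $\Config$ supported on $\iota(\Config^0)$, and since $\iota$ is a topological embedding, such a valuation restricts back to a unique valuation on $\Config^0$. The real content is existence together with continuity. For existence, I would go through the cases of Figure~\ref{fig:ideal:opsem} and check that, starting from a precise configuration, each transition lands again in a precise configuration, so that $Next(\iota(M,\theta))$ is already concentrated on the image $\iota(\Config^0)$. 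The deterministic rules only rearrange subterms and substitute, never introducing new real constants that are non-maximal; rule~(\ref{rule:f}) is where preciseness of $\Sigma$ is used, since $f(a_1)\cdots(a_k)$ applied to maximal arguments is either maximal (hence an observable precise value) or $\bot$ — and in the $\bot$ case the term is blocked and $Next$ is the zero valuation, which trivially lies in $\Val\Config^0$; rule~(\ref{rule:score}) multiplies $\delta$ on a precise configuration by $|\mathbf a|$ with $\mathbf a = [a,a]$ maximal, and $|[a,a]| \in \realp$; rule~(\ref{rule:sample}) produces $f_{E^0}[\mu]$ where $f_{E^0}(a) = E^0[\retkw\underline a]$ is a precise configuration for each $a \in \real$, so the image valuation is supported on precise configurations.

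Then I would define $Next^0(M,\theta)$ to be the unique valuation $\nu$ on $\Config^0$ with $\iota[\nu] = Next(\iota(M,\theta))$ — which exists because we have just shown the right-hand side is supported on $\iota(\Config^0)$, and $\iota$ being an embedding means $\Open\Config^0 \cong \{U \cap \iota(\Config^0) \mid U \in \Open\Config\}$, so we may set $\nu(V) \eqdef Next(\iota(M,\theta))(\widetilde V)$ for any open $\widetilde V$ of $\Config$ with $\widetilde V \cap \iota(\Config^0) = \iota(V)$, checking this is well-defined by the support property. Strictness and modularity of $\nu$ follow from those of $Next(\iota(M,\theta))$, and Scott-continuity as well since directed suprema of opens in $\Config^0$ pull back through the correspondence.

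For continuity of $Next^0$ itself, I would mimic the proof of Proposition~\ref{prop:next:cont}. The structure there is shape-by-shape: one writes a term as $P\theta$, notes the target shape $Q$ depends only on $P$, and exhibits Scott-continuous reparametrization maps $S$. The same maps restrict to continuous maps on the Euclidean-topologized spaces $\real^{\FV(P)}$ (they are essentially coordinate projections, duplications, insertions, together with the continuous operations $\mathbf a \mapsto |\mathbf a|$ and $f$ itself — and here preciseness guarantees $f$ restricted to tuples of reals is genuinely a partial continuous real function, so it is continuous on its open domain in the Euclidean sense). Alternatively, and perhaps more cleanly, I would argue that $Next^0$ is continuous as a map $\Config^0 \to \Val\Config^0$ directly from the equation $\iota[Next^0(-)] = Next(\iota(-))$: the map $Next \circ \iota$ is continuous $\Config^0 \to \Val\Config$ (composition of continuous maps, using Proposition~\ref{prop:next:cont}), and it factors through the continuous embedding $\Val\iota \colon \Val\Config^0 \to \Val\Config$; if I can show $\Val\iota$ is itself a topological embedding onto the set of valuations supported on $\iota(\Config^0)$, then $Next^0$ is continuous as a corestriction of a continuous map along an embedding.

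The main obstacle will be this last point — verifying that $\Val\iota$ is a topological embedding, i.e.\ that the Scott topology on $\Val\Config^0$ coincides with the subspace topology inherited from $\Val\Config$ via $\Val\iota$. This is not automatic for arbitrary spaces, but here both $\Config$ and $\Config^0$ are well-behaved ($\Config$ is a continuous dcpo by Remark~\ref{rem:Config:cont}, $\Config^0$ is Polish), and $\iota$ is not just continuous but an embedding with the image being a dense subspace on each shape component. If establishing the embedding property of $\Val\iota$ turns out to be delicate, I would fall back on the explicit shape-by-shape argument in the style of Proposition~\ref{prop:next:cont}, which avoids the issue entirely by exhibiting explicit Scott-open neighborhoods; the preciseness hypothesis is exactly what makes the constant-application case~(\ref{rule:f}) go through in the Euclidean setting, since a precise $f$ is continuous on an open subset of $\real^k$ rather than merely Scott-continuous on $\IR_\bot^k$.
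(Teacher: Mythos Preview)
Your proposal is correct and follows essentially the same approach as the paper: the paper's proof is literally ``by inspection of Definition~\ref{defn:opsem}'', noting that the key point is the case of $\underline f$, where preciseness is required --- exactly the case analysis you carry out in detail. Your additional discussion of the alternative route via showing $\Val\iota$ is an embedding, and your correct identification of that as the delicate step to be avoided in favour of the shape-by-shape argument, goes well beyond what the paper says but is sound.
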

\begin{proof}
  By inspection of Definition~\ref{defn:opsem}.  The key point is the
  case of $\underline f$, where the result holds because $\Sigma$ has
  precise first-order constants.
\end{proof}

\begin{remark}
  \label{rem:Config0:min}
  In general, $Next (M \theta)$ is a minimal valuation on $\Config$.
  However, $Next^0 (M, \theta)$ need not be a minimal valuation.  For
  example, $Next^0 (x, \allowbreak [x:=\sample [\lambda]])$ is, up to
  isomorphism, the Lebesgue valuation on $\real$, which is not minimal
  by Lemma~\ref{lemma:lambda:notmin}.
\end{remark}

$Next^0$ defines a kernel.  Using Lemma~\ref{lemma:iter}, we obtain a
continuous map $Next^{0,*} \colon \Config^0 \to \Val \Config^0$,
provided we replace $Norm$ by the set
$Norm^0 \eqdef Norm \cap \Config^0$.  By induction on $n$, it is easy
to see that
$\iota [Next^{0, \leq n} (M, \theta)] = Next^{\leq n} (\iota (M,
\theta))$.  It follows:
\begin{lemma}
  \label{lemma:Next0*}
  Assume that $\Sigma$ consists of precise first-order constants.
  For every precise configuration $(M, \theta)$,
  $\iota [Next^{0,*} (M, \theta)] = Next^* (\iota (M, \theta))$.
\end{lemma}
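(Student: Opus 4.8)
The statement to prove is Lemma~\ref{lemma:Next0*}: for every precise configuration $(M,\theta)$ we have $\iota [Next^{0,*} (M, \theta)] = Next^* (\iota (M, \theta))$. The strategy is to push the already-established identity $\iota [Next^0 (M,\theta)] = Next (\iota (M,\theta))$ (Lemma~\ref{lemma:Next0}) through the iteration construction of Lemma~\ref{lemma:iter}, first at the finite levels $Next^{0,\leq n}$ and then at the supremum. The text already asserts the key intermediate fact — $\iota [Next^{0,\leq n} (M,\theta)] = Next^{\leq n}(\iota (M,\theta))$, by induction on $n$ — so the real work is (a) to carry out that induction cleanly, being careful about how image valuations interact with the operations $+$, $(\_)^\dagger$, and restriction to $Norm$ used in Lemma~\ref{lemma:iter}, and (b) to transfer the equality to the suprema.

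First I would record the basic compatibility facts for the image-valuation operation $\nu \mapsto g[\nu]$ along $g \eqdef \iota$. Since $\iota$ is a topological embedding, $g[\_] = \Val g$ is Scott-continuous, linear (it commutes with $+$ and with scalar multiplication), and satisfies $g[f^\dagger(\nu)] = (g \circ f)^\dagger(g[\nu])$ whenever things typecheck along $g$ — more precisely, for a Scott-continuous $\kappa \colon X \to \Val X$ with $X$ a $g$-image we will use that $g[\kappa^\dagger(\nu)]$ and $(\ldots)^\dagger(g[\nu])$ agree; this is just the change-of-variables formula of Lemma~\ref{lemma:eta:mu}(iii). I would also note that $g$ restricts to a homeomorphism of $Norm^0$ onto $\iota(\Config^0) \cap Norm$, and that $g[\delta_x] = \delta_{g(x)}$, so $g[\kappa^{0,\leq 0}(x)] = \kappa^{\leq 0}(g(x))$ for $x$ in the relevant $Norm$ sets and $g$ sends the zero valuation to the zero valuation. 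These are the ingredients needed to see that $g$ commutes with each clause in the definition $\kappa^{0,\leq 1} = \kappa^0 + \kappa^{0,\leq 0}$, $\kappa^{0,\leq n+1} = (\kappa^{0,\leq n})^\dagger \circ \kappa^{0,\leq 1}$.

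With those facts in hand, the induction on $n$ proving $\iota [Next^{0,\leq n}(M,\theta)] = Next^{\leq n}(\iota(M,\theta))$ is short: the base case $n=0$ is $g[\delta_{(M,\theta)}] = \delta_{g(M,\theta)}$ (when $(M,\theta)\in Norm^0$, which forces $\iota(M,\theta)\in Norm$) together with $g[0]=0$ otherwise; the case $n=1$ uses linearity of $g$ and Lemma~\ref{lemma:Next0}; and for $n\geq 2$ I would write $Next^{0,\leq n+1}(M,\theta) = (Next^{0,\leq n})^\dagger(Next^{0,\leq 1}(M,\theta))$, apply $g$, move it inside using the $(\_)^\dagger$-compatibility and the induction hypothesis, and recognize $Next^{\leq n+1}(\iota(M,\theta))$. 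Finally, for the supremum: by Lemma~\ref{lemma:iter}, $Next^{0,*}(M,\theta) = \sup_{n\geq 1} Next^{0,\leq n}(M,\theta)_{|Norm^0}$ and $Next^*(\iota(M,\theta)) = \sup_{n\geq 1} Next^{\leq n}(\iota(M,\theta))_{|Norm}$; since $g[\_]$ is Scott-continuous it commutes with the directed sup, and since $g$ maps $Norm^0$ homeomorphically onto its image and $Next^{0,\leq n}(M,\theta)$ is supported there, $g$ commutes with the restrictions $(\_)_{|Norm^0}$ versus $(\_)_{|Norm}$ (concretely, $g[\nu_{|Norm^0}](V) = \nu_{|Norm^0}(g^{-1}(V)) = \nu(Norm^0 \cap g^{-1}(V)) = \nu(g^{-1}(Norm \cap V)) = g[\nu](Norm \cap V) = (g[\nu])_{|Norm}(V)$, using $g^{-1}(Norm) = Norm^0$). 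Combining, $\iota[Next^{0,*}(M,\theta)] = \sup_n \iota[Next^{0,\leq n}(M,\theta)_{|Norm^0}] = \sup_n (Next^{\leq n}(\iota(M,\theta)))_{|Norm} = Next^*(\iota(M,\theta))$.

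**Main obstacle.** There is no deep difficulty here; the construction is routine once the compatibility of $\Val\iota$ with $+$, $(\_)^\dagger$ and $(\_)_{|Norm}$ is nailed down. The one point demanding care is the interaction with restriction to $Norm$: one must check that the subtlety of $Norm^0$ being a subspace of $Norm$ (not all of it) causes no trouble, which works precisely because $Norm$ is a property of \emph{shapes} (as observed just before Lemma~\ref{lemma:Next*}), so $g^{-1}(Norm) = Norm^0$ exactly, and because every $Next^{0,\leq n}(M,\theta)$ lives on $\Config^0$. A secondary bookkeeping point is that the change-of-variables identity $g[\kappa^\dagger(\nu)] = (g\circ\kappa)^\dagger(g[\nu])$ needs $g$ to be an embedding so that open sets of the target pull back correctly; this is exactly why we topologized $\Config^0$ as a subspace of $\Config$ and why Lemma~\ref{lemma:Next0} was stated in image-valuation form.
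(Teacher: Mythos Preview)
Your proposal is correct and follows exactly the approach the paper takes: prove $\iota [Next^{0,\leq n}(M,\theta)] = Next^{\leq n}(\iota(M,\theta))$ by induction on $n$, then pass to the supremum. The paper dispatches this in a single sentence (``By induction on $n$, it is easy to see that \ldots''), whereas you have filled in the details---the compatibility of $\Val\iota$ with $+$, with $(\_)^\dagger$ via change of variables, and with the restriction $(\_)_{|Norm}$ via $\iota^{-1}(Norm)=Norm^0$---all of which are indeed the right checks.
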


Proposition~\ref{prop:sound} and Theorem~\ref{thm:adeq} then
immediately imply the following.  We equate $\Config_0$ with a
subspace of $\Config$, and omit any mention of the embedding $\iota$.
\begin{theorem}
  \label{thm:precise:adeq}
  Let $\Sigma$ consist of precise first-order constants.  For every
  precise generalized ISPCF term $M \colon D \tau$, where $\tau$ is
  any type, for every open subset $U$ of $\Eval \tau$,
  $\Eval M (U) \geq Next^{0,*} (M) (\rover U \cap \Config^0)$.  This
  inequality is an equality if $\tau$ is an observable type and $U$ is
  observable.
\end{theorem}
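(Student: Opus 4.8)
The plan is to reduce the statement entirely to soundness (Proposition~\ref{prop:sound}) and adequacy (Theorem~\ref{thm:adeq}) for the raw operational semantics on $\Config$, transporting those results along the subspace embedding $\iota \colon \Config^0 \to \Config$ by means of Lemma~\ref{lemma:Next0*}. First I would fix the precise generalized ISPCF term $M \colon D\tau$; it is $\iota$ applied to a unique precise configuration, and $\iota$ of that configuration is, \emph{as a generalized ISPCF term}, literally $M$ again, because we equate $\underline r$ with $\underline{[r,r]}$. In particular the denotation $\Eval M$ does not depend on whether we regard $M$ as an element of $\Config^0$ or of $\Config$, which is exactly what justifies suppressing $\iota$ in the displayed formula of the theorem.

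The one genuine computation is the identity
\[
  Next^{0,*}(M)(\rover U \cap \Config^0)
  = Next^{0,*}(M)(\iota^{-1}(\rover U))
  = \iota[Next^{0,*}(M)](\rover U)
  = Next^*(M)(\rover U),
\]
which I would establish as follows: since $\iota$ is a topological embedding, $\rover U \cap \Config^0 = \iota^{-1}(\rover U)$ for the Scott-open set $\rover U \subseteq \Config$ (identifying $\Config^0$ with a subspace of $\Config$); the middle equality is the definition of the image valuation; and the last equality is Lemma~\ref{lemma:Next0*}. Once this is in place, the desired inequality $\Eval M (U) \ge Next^{0,*}(M)(\rover U \cap \Config^0)$ is just Proposition~\ref{prop:sound} applied to $M$, and when $\tau$ is an observable type $\beta$ and $U$ an observable open subset of $\Eval \beta$, the equality is just Theorem~\ref{thm:adeq}.

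I do not expect any serious obstacle; the proof really is a two-line corollary. The only things to be careful about are bookkeeping. One should check that $\rover U$, defined inside $\Config$, meets $\Config^0$ in exactly $\iota^{-1}(\rover U)$, which is clear since $\iota$ is a subspace embedding and the conditions defining $\rover U$ (being of the form $\retkw V$ with $V$ a normal form of type $\tau$ whose semantics lies in $U$) transfer verbatim along $\iota$. And one should confirm that the $n$-indexed identity $\iota[Next^{0,\leq n}(\cdot)] = Next^{\leq n}(\iota(\cdot))$ recorded before Lemma~\ref{lemma:Next0*} passes to the supremum defining $Next^{0,*}$ and $Next^*$, which it does because image valuation along a continuous map is Scott-continuous, because $Next^{0,*}$ and $Next^*$ are suprema of the restrictions to $Norm^0$, resp.\ $Norm$, of the maps $Next^{0,\leq n}$, resp.\ $Next^{\leq n}$ (Lemma~\ref{lemma:iter}), and because $Norm^0 = Norm \cap \Config^0 = \iota^{-1}(Norm)$. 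Granting these routine points, the theorem follows directly.
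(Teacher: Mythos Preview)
Your proposal is correct and takes essentially the same approach as the paper, which simply remarks that the result follows immediately from Proposition~\ref{prop:sound} and Theorem~\ref{thm:adeq} once $\Config^0$ is identified with a subspace of $\Config$ via $\iota$ and Lemma~\ref{lemma:Next0*} is invoked. Your write-up is actually more explicit than the paper's about the bookkeeping (the identification $\rover U \cap \Config^0 = \iota^{-1}(\rover U)$ and the passage of the $n$-indexed identity to the supremum), but there is no divergence in strategy.
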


\section{A Sampling-Based Operational Semantics}
\label{sec:effectivity}

An operational semantics is usually an abstract view of an
implementation.  However, operational semantics such as those of
previous sections (and such as those of
\cite{VKS:SFPC,DLH:geom:bayes,EPT:PPCF}, which even go as far as
manipulating \emph{true} real numbers) are far from an implementation.
Park, Pfenning and Thrun \cite{PPT:sampling}, and later Dal Lago and
Hoshino \cite{DLH:geom:bayes} introduce a more concrete
\emph{sampling-based} operational semantics.  Roughly speaking, it is
more concrete in the sense that it no longer has to sample from
arbitrary measures on $\real$, and instead draws arbitrarily long
strings of independent, uniform random bits, which are drawn at random
prior to execution.

We use a similar idea here.  This will apply under some reasonable
assumptions, the most important one being that we only have one term
of the form $\sample [\mu]$, namely $\sample [\lambda_1]$.  As we have
hinted in Section~\ref{sec:examples}, one can define quite a number of
other distributions from that one alone.

\subsection{Random strings}
\label{sec:random-strings}

Our sampling-based operational semantics will be a
\emph{deterministic} semantics, parameterized by an infinite string of
bits, which we can think as having been drawn at random, uniformly, in
advance.  This trick is routinely used in complexity theory, for
example, where a popular definition of a randomized Turing machine is
a Turing machine with an additional read-only \emph{random tape} on
which the head can only move right.  It is also at the heart of the
definition of computable probability theory, where a \emph{computable
  random variable} on a computable metric space $S$ is a measurable
map $X \colon \{0, 1\}^\nat \to S$ such that $X$ is computable on a
subset of $\{0, 1\}^\nat$ of $\upsilon$-measure one
\cite[Definition~II.12]{Roy:PhD}.  We have also used similar tricks in
Section~\ref{sec:more-distr-high}, too.

We will reuse several notions and notations from
Section~\ref{sec:distr-high-order}.  Notably, $\upsilon$ is the
uniform measure $\bin [\lambda_1]$ on $\{0, 1\}^\nat$.  We also recall
the maps $\bin \colon \real \to \{0, 1\}^\nat$,
$\num \colon \{0, 1\}^\nat \to \real$,
$split \colon \{0,1\}^\nat \to (\{0,1\}^\nat)^2$, the notation
$s [m] \eqdef {(s_{\langle m, n\rangle})}_{n \in \nat}$ and the shift
$\upc s$.


\subsection{The sampling-based operational semantics}
\label{sec:sampl-based-oper}

Our next semantics will be a partial map
$Next' (s) \colon \Config' \to \Config'$, parameterized by a fixed
random string $s$, and where $\Config'$ is a space of so-called
\emph{enriched configurations}.  This is somewhat similar to the
sampling-based operational semantics of
\cite[Section~3.3]{DLH:geom:bayes}.  An enriched configuration is a
tuple $(M, i, r)$ where $M$ is a generalized ISPCF term, $i \in \nat$,
and $r \in \realp$: $i$ is the number of the next random string
$s [i]$ to use in order to implement rule (\ref{rule:sample}), and $r$
is the product of the score values produced by rule (\ref{rule:score})
so far.  We let $\Config'$ be the space of all indexed configurations.
This is a product dcpo $\Config \times \nat \times \realp$, where
$\nat$ is ordered by equality and $\realp$ by its usual ordering.  The
Scott topology on the product coincides with the product topology,
since this is a finite product of continuous posets
\cite[Proposition~5.1.54]{goubault13a}; as
such, $\Config'$ is also a continuous poset.  Random strings are
elements of the topological product $\{0, 1\}^\nat$, where $\{0, 1\}$
has the discrete topology.
\begin{definition}[Sampling-based operational semantics]
  \label{defn:seeded}
  Assume that all constants in $\Sigma$ are first-order constants, and
  that the only term of the form $\sample [\mu]$ is
  $\sample [\lambda_1]$, where $\lambda_1$ is Lebesgue measure on
  $[0, 1]$.  The \emph{sampling-based operational semantics} of ISPCF
  is the partial map
  $Next' \colon \{0, 1\}^\nat \to \Config' \to \Config'$ defined by:
  \begin{itemize}
  \item for every instance of a rule $L \to R$ of
    Figure~\ref{fig:ideal:opsem}
    except (\ref{rule:sample}), (\ref{rule:score}) and (\ref{rule:f}),
    $Next' (s) (L, i, r) = (R, i, r)$;
  \item (case of $\sample$)
    $Next' (s) (E [\sample [\lambda_1]], i, r) \eqdef (E [\retkw
    \underline {\num(s [0])}], i+1, r)$;
  \item (case of $\score$)
    $Next' (s) (E [\score\; \underline {\mathbf a}], i, r) \eqdef (E [\retkw
      \underline *], i, |\mathbf a| . r)$;
    \item (case of $\underline f$) If
      $k = \alpha (\underline f) \neq 0$, then
      $Next' (s) (E [\underline f\; \underline a_1 \cdots \underline
      a_k], i, r) \eqdef (E [\underline {f (a_1) \cdots (a_k)}], i,
      r)$ if $f (a_1) \cdots (a_k)$ is observable;
  \item $Next' (s) (M, i, r)$ is undefined otherwise.
  \end{itemize}
\end{definition}
Let us write $(M, i, r) \to_s (M', i', r')$ for
``$Next' (s) (M, i, r)$ is defined and equal to $(M', i', r')$''.  Let
also $\to^*_s$ be the reflexive-transitive closure of $\to_s$.  For
every generalized ISPCF term $M$, and every open set $U$ of normal
forms, there is at most one enriched configuration $(M', i', r')$ such
that $(M, 0, 1) \to^*_s (M', i', r')$ and $M' \in U$.  If it exists,
we call that $r'$ the \emph{score} $\sigma [U] (M, s)$ of $M$ relative
to $s$ and $U$; otherwise we let $\sigma [U] (M, s) \eqdef 0$.  Note
that the score depends on $M$, on $U$, and on $s$.

One can refine this as follows.  This will be needed below.  Let
$\to^{\leq k}$ denote reachability in at most $k$ $Next'$ steps.
\begin{definition}
  \label{defn:sigmak}
  For each $k \in \nat$, let $\sigma_k [U] (M, s)$ be the unique value
  $r'$ such that $(M, 0, 1) \allowbreak \to^{\leq k}_s (M', i', r')$
  for some $M' \in U$, if that exists, and $0$ otherwise.
\end{definition}
Then $\sigma [U]$ is the supremum of the directed family of maps
${(\sigma_k [U])}_{k \in \nat}$.

\begin{lemma}
  \label{lemma:shift}
  For every $k \in \nat$, for all $i, i'' \in \nat$,
  $r, r' \in \realp$, for all generalized ISPCF terms $M$ and $M'$,
  $(M, i, r) \to_{\upc s}^{\leq k} (M', i', r')$ if and only if
  $(M, i+1, r) \to_s^{\leq k} (M', i'+1, r')$.
\end{lemma}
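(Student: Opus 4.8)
The plan is to prove the equivalence by induction on $k$, after isolating the single place where the shift of the random string actually matters. First I would observe that, among the clauses defining $Next'$ in Definition~\ref{defn:seeded}, every one except the clause for $\sample$ is completely independent of the random string and, moreover, leaves the index component $i$ unchanged: for a string\nobreakdash-independent deterministic step, a $\score$ step, or an $\underline f$ step, one has $Next'(s')(M,i,r)=(M_1,i,r_1)$ with $M_1$ and $r_1$ determined by the shape of $M$ alone (together with $\mathbf a$, resp.\ with $f(a_1)\cdots(a_k)$), for \emph{every} string $s'$. The $\sample$ clause is the only one that inspects the string, and it does so at the position recorded by the index: run on $\upc s$ at index $i$ it emits $\retkw\,\underline{\num((\upc s)[i])}$ and passes to index $i+1$. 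Also, whether a step is applicable at all is a property of the shape of the configuration only, not of $s$ or $i$, so $Next'$ being a \emph{partial} map causes no asymmetry between the two sides.

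The engine of the argument is the one\nobreakdash-step claim: $(M,i,r)\to_{\upc s}(M_1,i_1,r_1)$ if and only if $(M,i+1,r)\to_s(M_1,i_1+1,r_1)$. For every rule other than $\sample$ this is immediate from the observations above — the same clause applies on both sides, $i_1=i$ and the right\nobreakdash-hand index after the step is $i+1=i_1+1$, and $r_1$ is the same on both sides. For the $\sample$ rule, with $M=E[\sample[\lambda_1]]$, the term emitted on the left is $E[\retkw\,\underline{\num((\upc s)[i])}]$ and on the right $E[\retkw\,\underline{\num(s[i+1])}]$; these coincide by the defining property $(\upc s)[m]=s[m+1]$ of the shift, while the indices are $i_1=i+1$ on the left versus $i_1+1=i+2$ on the right, exactly as required. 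With the one\nobreakdash-step claim in hand, the induction is routine: the base case $k=0$ says $M=M'$, $r=r'$ and $i=i'$ on both sides (on the right because $i+1=i'+1$ iff $i=i'$); for the inductive step, if no step is available then both sides collapse to the base case, and otherwise one writes the left run as a first step $(M,i,r)\to_{\upc s}(M_1,i_1,r_1)$ followed by a tail $(M_1,i_1,r_1)\to_{\upc s}^{\leq k}(M',i',r')$, transfers the first step by the one\nobreakdash-step claim, transfers the tail by the induction hypothesis applied to $(M_1,i_1,r_1)$, and concatenates. The converse direction needs no extra work, since the induction hypothesis is itself an equivalence.

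I do not expect any genuine obstacle: the content is entirely bookkeeping. The points that need care are only that definedness of a step must be seen to depend on the shape of the configuration alone (so it transfers verbatim), and that the ``$+1$'' on the index component must be tracked correctly through the $\sample$ case and through the concatenation of a first step with a tail of at most $k$ steps. Accordingly I would present the proof in three short pieces: (i) the base case $k=0$; (ii) the one\nobreakdash-step claim, by a brief case analysis on the applicable rule of Figure~\ref{fig:ideal:opsem}; and (iii) the induction on $k$, combining (i), (ii), and the induction hypothesis.
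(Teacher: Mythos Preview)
Your proposal is correct and follows exactly the paper's approach: reduce to the one-step equivalence by induction on $k$, and verify the one-step claim by a case analysis on Definition~\ref{defn:seeded}; the paper dispatches this as ``an easy verification'' without spelling out the cases. You have also, rightly, read the $\sample$ clause as using $s[i]$ rather than the literal $s[0]$ printed in the definition---the surrounding prose (``$i$ is the number of the next random string $s[i]$ to use'') and the uses in Lemma~\ref{lemma:score:cont} and Proposition~\ref{prop:PU} make clear this is the intended reading, and with the literal $s[0]$ the lemma would fail at the $\sample$ step.
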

\begin{proof}
  It suffices to show that $(M, i, r) \to_{\upc s} (M', i', r')$ if
  and only if $(M, i+1, r) \to_s (M', i'+1, r')$, which is an easy
  verification, and to induct on $k$.
\end{proof}

The following is easily proved, too.
\begin{lemma}
  \label{lemma:scale}
  For every $k \in \nat$, for all $i, i'' \in \nat$,
  $r, r' \in \realp$, $\alpha \in \realp$, for all generalized ISPCF
  terms $M$ and $M'$, if $(M, i, 0) \to_s^{\leq k} (M', i', r')$ then
  $r'=0$.  If $(M, i, r) \to_s^{\leq k} (M', i', r')$ then
  $(M, i+1, \alpha.r) \to_s^{\leq k} (M', i'+1, \alpha.r')$.  The
  converse implication holds if $\alpha \neq 0$.
\end{lemma}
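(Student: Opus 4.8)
The plan is to establish all three assertions by a single induction on $k$, analysing the first transition step in each case. The one fact about the sampling-based semantics I would lean on, read off directly from Definition~\ref{defn:seeded}, is this structural observation: whether $Next' (s) (M, i, r)$ is defined, and if so which clause fires and which generalized ISPCF term it produces, depends on $M$ (and the fixed $s$) alone, never on the index $i$ or the accumulated score $r$; under such a step the index is either left unchanged or incremented by $1$ (the latter only in the $\sample$ clause, whose emitted constant $\underline {\num(s [0])}$ likewise does not mention $i$); and the score gets multiplied by a factor depending on $M$ alone, namely $1$ for every clause except the $\score$ one, where it is $|\mathbf a| \in \realp$.

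For the first assertion I would note that $0 \cdot 1 = 0$ and $0 \cdot |\mathbf a| = 0$ in $\realp$, so the score component remains $0$ along any $\to_s$-chain once it has reached $0$; a one-line induction on $k$ then gives the claim. For the forward part of the second assertion I would again induct on $k$. The base case $k = 0$ is immediate since $\to_s^{\leq 0}$ is equality. In the inductive step I would either invoke the induction hypothesis (if at most $k$ steps already suffice) or write $(M, i, r) \to_s (M_1, i_1, r_1) \to_s^{\leq k} (M', i', r')$ and use the structural observation to conclude that $Next' (s) (M, i+1, \alpha.r)$ is defined, yields the same $M_1$, has index component $i_1 + 1$, and has score component $\alpha.r_1$ — the $\score$ case needing only commutativity and associativity of multiplication on $\realp$, that is $\alpha.(|\mathbf a|.r) = |\mathbf a|.(\alpha.r)$. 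Then the induction hypothesis applied to $(M_1, i_1, r_1) \to_s^{\leq k} (M', i', r')$ closes the step.

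For the converse under the hypothesis $\alpha \neq 0$ I would exploit that $Next' (s)$ is a partial \emph{function}: the $\to_s$-chain issuing from $(M, i, r)$ is then uniquely determined, and the forward part shows that its image under $(N, j, \rho) \mapsto (N, j+1, \alpha.\rho)$ is exactly the $\to_s$-chain issuing from $(M, i+1, \alpha.r)$, the two chains having equal length (definedness of the next step is a property of the term component only). Hence, from $(M, i+1, \alpha.r) \to_s^{\leq k} (M', i'+1, \alpha.r')$ I can read off the configuration $(M_n, i_n, r_n)$, $n \leq k$, reached after the same number of steps along the chain issuing from $(M, i, r)$, and it satisfies $M_n = M'$, $i_n = i'$, and $\alpha.r_n = \alpha.r'$; cancelling the finite nonzero factor $\alpha$ yields $r_n = r'$, so $(M, i, r) \to_s^{\leq k} (M', i', r')$.

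I expect the main obstacle to be the bookkeeping in the converse rather than any conceptual difficulty: it genuinely relies on $Next' (s)$ being deterministic and on rule applicability ignoring the index and score components, so that the two computations proceed in lockstep and neither can stall before the other. Everything else reduces to a routine check against the clauses of Definition~\ref{defn:seeded}.
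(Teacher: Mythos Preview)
Your proposal is correct and is precisely the routine verification the paper has in mind: the paper gives no proof beyond the remark ``The following is easily proved, too,'' and your induction on $k$, driven by the structural observation that applicability of $Next'(s)$, the resulting term, the index increment, and the multiplicative score factor all depend only on the term component (and $s$), is exactly the intended argument. Your handling of the converse via determinism of $Next'(s)$ and cancellation of the nonzero factor $\alpha$ in $\realp$ is also sound.
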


\begin{lemma}
  \label{lemma:score:cont}
  Under the assumptions of Definition~\ref{defn:seeded}, for every
  open subset $U$ of $Norm$, the maps $\sigma_k [U]$ ($k \in \nat$)
  and $\sigma [U]$ are lower semicontinuous from
  $\Config \times \{0, 1\}^\nat$ to $\realp$.
\end{lemma}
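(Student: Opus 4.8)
The plan is to prove the statement for each $\sigma_k [U]$ by induction on $k$, and then to deduce it for $\sigma [U]$: since $\sigma [U]$ is the supremum of the directed (in fact increasing) family ${(\sigma_k [U])}_{k \in \nat}$ and $\Lform (\Config \times \{0, 1\}^\nat)$ is a dcpo under the pointwise ordering, in which directed suprema are computed pointwise, a directed supremum of lower semicontinuous maps is lower semicontinuous. For the induction it is convenient to strengthen the statement: I will show that the map $\hat\sigma_k [U]$ sending $(M, i, r, s) \in \Config \times \nat \times \realp \times \{0, 1\}^\nat$ (with $\nat$ discrete) to the unique $r'$ such that $(M, i, r) \to_s^{\le k} (M', i', r')$ for some $M' \in U$, or to $0$ if there is no such $M'$, is lower semicontinuous with values in $\realp$; then $\sigma_k [U]$ is the restriction of $\hat\sigma_k [U]$ to the subspace $\{i = 0, \ r = 1\}$, hence is itself lower semicontinuous. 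Carrying $i$ and $r$ as explicit variables is what makes the step‑counting recursion go through cleanly, and encapsulates the bookkeeping of Lemmas~\ref{lemma:shift} and~\ref{lemma:scale}.

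For $k = 0$ no reduction step occurs, so $\hat\sigma_0 [U] (M, i, r, s)$ equals $r$ if $M \in U$ and $0$ otherwise, i.e.\ $\hat\sigma_0 [U] = r \cdot \chi_W$ where $W \eqdef \{(M, i, r, s) \mid M \in U\}$; since $U$ is Scott‑open in $\Config$ (it is open in $Norm$, and $Norm$ is open in $\Config$ because normality is a property of shapes), $W$ is open, and $\hat\sigma_0 [U]$ is the product of a continuous map and a lower semicontinuous one, hence lower semicontinuous, multiplication being Scott‑continuous on $\creal$. For the inductive step, $\Config$ is the topological coproduct of its shape components $C_P \cong \IR_\bot^{\FV (P)}$, so it suffices to prove lower semicontinuity of $\hat\sigma_{k+1} [U]$ on each $C_P \times \nat \times \realp \times \{0, 1\}^\nat$; and since lower semicontinuity is a local property on an open cover, we may further restrict to open subsets. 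Fix a shape $P$. Which of the ten rules of Figure~\ref{fig:ideal:opsem} can fire at $P\theta$ is determined by $P$ alone, with the sole exception of rule~(\ref{rule:f}) at a shape $E [\underline f\; \underline{a_1} \cdots \underline{a_k}]$ ($k = \alpha (\underline f) \neq 0$), which additionally requires $f (a_1) \cdots (a_k)$ to be observable; by Scott‑continuity of $f$ and the fact that the observable elements $Obs_\beta$ of $\Eval \beta$ form a Scott‑open set, the subset $O$ of $C_P \times \nat \times \realp \times \{0, 1\}^\nat$ where this holds is open (this is where strict observability of the constants enters; compare the proof of Proposition~\ref{prop:next:cont}).

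There are then essentially five cases. (1) If no rule fires at $P\theta$ — i.e.\ $P$ is the shape of a normal form, or (on the closed complement of $O$) of a blocked term — then $\hat\sigma_{k+1} [U] = r \cdot \chi_W$ on that set, lower semicontinuous as before (and identically $0$ on a blocked‑term region since $U \subseteq Norm$). (2) If $P\theta$ is a redex via $\reckw$, $\beta$‑reduction, the $\dokw/\retkw$ rule, a projection, a $\casekw$ rule, or rule~(\ref{rule:f}) on $O$, the unique step yields $(R_\theta, i, r)$; for the rules other than (\ref{rule:f}) the shape of $R_\theta$ is fixed and $\theta \mapsto R_\theta$ is continuous, exactly as checked case by case in the proof of Proposition~\ref{prop:next:cont}, while for rule~(\ref{rule:f}) the shape of $R_\theta = E_\theta [\underline{f (a_1) \cdots (a_k)}]$ is locally constant on $O$, which decomposes into open pieces on each of which it is constant and $\theta \mapsto R_\theta$ continuous (using Scott‑continuity of $f$ and the decomposition of $Obs_\beta$ into disjoint open ``constant‑shape'' pieces); in all these subcases $\hat\sigma_{k+1} [U] (P\theta, i, r, s) = \hat\sigma_k [U] (R_\theta, i, r, s)$ is a composite of a continuous map with the lower semicontinuous $\hat\sigma_k [U]$, hence lower semicontinuous. (3) If $P\theta = E^0_\theta [\sample [\lambda_1]]$, the unique step yields $(E^0_\theta [\retkw \underline{\num (s [0])}], i{+}1, r)$, and $(\theta, s) \mapsto (E^0_\theta [\retkw \underline{\num (s [0])}], i{+}1, r, s)$ is continuous because $s \mapsto s [0]$, $\num$ (Lemma~\ref{lemma:num}) and the embedding $i \colon \real \to \IR_\bot$ are all continuous; conclude by the induction hypothesis. (4) If $P\theta = E^0_\theta [\score\; \underline{\theta (x_m)}]$ — the argument of $\score$ in a redex being necessarily a constant of type $\realT$, hence a template variable $x_m$ — the unique step yields $(E^0_\theta [\retkw \underline *], i, |\theta (x_m)| \cdot r)$, so $\hat\sigma_{k+1} [U] (P\theta, i, r, s) = \hat\sigma_k [U] (E^0_\theta [\retkw \underline *], i, |\theta (x_m)| \cdot r, s)$; here $\mathbf a \mapsto |\mathbf a|$ is Scott‑continuous from $\IR_\bot$ to $\creal$ and multiplication is Scott‑continuous on $\creal$, so $(\theta, r) \mapsto |\theta (x_m)| \cdot r$ is continuous into $\realp$, and we conclude by the induction hypothesis. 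This exhausts all shapes, completing the induction and the proof.

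The proof is almost entirely bookkeeping; the only genuine obstacles are the two points already flagged. First, rule~(\ref{rule:f}): one must see both that the region where it applies is open and that the shape of the reduct is \emph{locally} (not globally) constant there — both resting on Scott‑continuity of $f$ and on the structure of $Obs_\beta$ — so that the one‑step transition, viewed as a function of $(\theta, s)$, is well defined and continuous. Second, the $\score$ rule is the one place where this transition is not merely a continuous relabelling but genuinely multiplies the accumulated score by $|\theta (x_m)|$; here one leaves the pattern ``compose $\hat\sigma_k [U]$ with a continuous map'' and must instead use that multiplication on $\creal$ is Scott‑continuous, which is precisely why carrying the score $r$ as an explicit variable in the strengthened induction hypothesis is the natural move.
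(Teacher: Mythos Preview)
Your proof is correct and takes a genuinely different route from the paper's. The paper works directly with $\sigma_k[U]$, invokes Ershov's theorem that on a c-space separate continuity implies joint continuity (so it only checks continuity in $M$ and in $s$ independently), and in the $\sample$ and $\score$ cases uses Lemmas~\ref{lemma:shift} and~\ref{lemma:scale} to renormalise back to a configuration starting at $(0,1)$ before applying the induction hypothesis. Your strengthening to $\hat\sigma_k[U]$ with explicit $(i,r)$ absorbs exactly that renormalisation, so shift/scale are not needed; and because the one-step transition $(\theta,i,r,s)\mapsto(R_\theta,i',r',s')$ is visibly a jointly continuous map into the product (a coordinate reshuffling, composed with $\num$, $|\cdot|$, multiplication, etc.), you do not need Ershov either. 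Your approach is more self-contained; the paper's buys modularity, since Lemmas~\ref{lemma:shift} and~\ref{lemma:scale} are reused verbatim in the proof of Proposition~\ref{prop:PU}.

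One small presentational point: for a rule-(\ref{rule:f}) shape, the decomposition into the open region $O$ (where the rule fires) and its closed complement (blocked terms) is not an open cover, so the phrase ``lower semicontinuity is a local property on an open cover'' does not literally justify the gluing. The correct argument, which your case~(1) essentially contains, is that $\hat\sigma_{k+1}[U]$ is identically $0$ on the complement of $O$ (blocked terms are not in $U\subseteq Norm$), and the extension by $0$ of a nonnegative lsc function defined on an open set is lsc on the whole space: for $t\geq 0$ the preimage of $]t,\infty]$ lies inside $O$ and is open there, hence open globally. With that one sentence made explicit, the argument is complete.
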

\begin{proof}
  We recall that lower semicontinuous simply means continuous,
  provided that we equip $\realp$ with its Scott topology, which we
  now assume.
  
  In order to simplify the proof, we use the following result by
  Ershov \cite[Proposition~2]{Ershov:aspace:hull}: given a c-space $X$
  (an $\alpha$-space in Ershov's terminology, e.g., a continuous poset
  with its Scott topology), and two topological spaces $Y$ and $Z$, a
  function $f \colon X \times Y \to Z$ is separately continuous if and
  only if it is jointly continuous.  Joint continuity is ordinary
  continuity from $X \times Y$ with the product topology.  Separate
  continuity means that $f (x, \_)$ and $f (\_, y)$ are continuous for
  fixed $x$ and $y$ respectively.  This applies here, as $\Config'$ is
  a continuous poset.

  Since suprema of lower semicontinuous maps are lower semicontinuous,
  it suffices to show that $\sigma_k [U]$ is lower semicontinuous for
  every $k \in \nat$, which we do by induction on $k$.

  We fix a template $M_0$, and we show that $\sigma_k [U]$ is
  continuous from the product of the space of generalized ISPCF terms
  of shape $M_0$ with $\{0, 1\}^\nat$ to $\creal$.  We only need
  to prove separate continuity.
  
  If $k=0$, then $\sigma_0 [U] (M, s)=1$ if $M$ is normal, $0$
  otherwise.  For $M$ fixed, this is constant hence continuous in $s$.
  For $s$ fixed, we note that normality only depends on the shape
  (here, $M_0$).  Hence $\sigma_0 [U] (\_, s)$ is constant, hence
  continuous, on the subspace of generalized ISPCF terms of shape
  $M_0$.

  Let us assume $k \geq 1$.  If the unique $\to_s$ step from
  $(M, i, r)$ is by applying a rule $L \to R$ of $L \to R$ of
  Figure~\ref{fig:ideal:opsem} except (\ref{rule:sample}),
  (\ref{rule:score}) and (\ref{rule:f}), then we note that the same
  rule would apply if we replaced $M$ by any other term with the same
  shape $M_0$.  Then $\sigma_k [U] (M, s) = \sigma_{k-1} [U] (R, s)$,
  and we apply the induction hypothesis.

  In the case of $\sample$, $M = E [\sample [\lambda_1]]$, and if
  $(M, 0, 1) \to^{\leq k}_s (M', i', r')$ for some $M' \in U$, then
  $(M, 0, 1) \to_s (E [\retkw \underline {\num(s [0])}], 1, 1)
  \to^{\leq k-1}_s (M', i', r')$, where $r' = \sigma_k [U] (M, s)$.
  We then have
  $(E [\retkw \underline {\num(s [0])}], \allowbreak 0, 1) \to^{\leq
    k-1}_{\upc s} (M', i'-1, r')$ by Lemma~\ref{lemma:shift}, hence
  $r' = \sigma_k [U] (M, \upc s) = \sigma_{k-1} [U] (E [\retkw
  \underline {\num(s [0])}], \upc s)$.  For $M$ fixed, we see that the
  composition of the maps $s \mapsto s [0] \mapsto \num(s [0])$ is
  continuous from $\{0, 1\}^\nat$ to $\real$, using
  Lemma~\ref{lemma:num}; also, $s \mapsto \upc s$ is continuous.  As
  in the proof of Proposition~\ref{prop:next:cont}, the map
  $a \in \real \mapsto E [\retkw \underline {a}]$ is continuous, so
  $\sigma_k [U] (M, s)$ defines a continuous function of $s$.  For $s$
  fixed, this is the composition of the continuous function that maps
  every term $M = E [\sample [\lambda_1]]$ of shape $M_0$ to
  $E [\retkw \underline a]$ (where $a \eqdef \num(s [0])$) with
  $s_{k-1} [U] (\_, \upc s)$, hence is continuous as well.

  In the case of $\score$, we have
  $M = E [\score\;\underline{\mathbf a}]$, and assuming that
  $(M, 0, 1) \to^{\leq k}_s (M', i', r')$ for some $M' \in U$, we have
  $(M, 0, 1) \to_s (E [\retkw \underline *], 0, \allowbreak|\mathbf
  a|) \to^{\leq k-1}_s (M', i', r')$, and $r' = \sigma_k [U] (M, s)$.
  In particular, if $|\mathbf a|\neq 0$ then
  $(E [\retkw \underline *], 0, 1) \to^{\leq k-1}_s (M', i',
  r'/|\mathbf a|)$ by Lemma~\ref{lemma:scale} (final part), so
  $\sigma_{k-1} [U] (E [\retkw \underline *], s) = r' / |\mathbf a|$.
  Therefore
  $\sigma_k [U] (M, s) = |\mathbf a| . \sigma_{k-1} [U] (E [\retkw
  \underline *], s)$.  We see that this holds also when
  $|\mathbf a|=0$, using the first part of Lemma~\ref{lemma:scale},
  and also when $(M, 0, 1) \to^{\leq k}_s (M', i', r')$ for no
  $M' \in U$, in which case both sides of the equality are $0$.  By
  induction hypothesis, $\sigma_{k-1}$ is continuous.  The map
  $M = E [\score\;\underline{\mathbf a}] \mapsto E [\retkw \underline
  *]$ is continuous, by a similar argument as in
  Proposition~\ref{prop:next:cont}.  The map
  $M = E [\score\;\underline{\mathbf a}] \mapsto \mathbf a$ is also
  clearly continuous, the map $\mathbf a \mapsto |\mathbf a|$ is
  continuous from $\IR_\bot$ to $\creal$, and product is
  Scott-continuous on $\creal$, so $\sigma_k [U] (\_, s)$ is
  continuous, for every fixed $s$.  For
  $M = E [\score\;\underline{\mathbf a}]$ fixed,
  $\sigma_k [U] (M, \_) = |\mathbf a| . \sigma_{k-1} [U] (E [\retkw
  \underline *], \_)$ is obviously continuous.  Hence $\sigma_k [U]$
  itself is continuous.

  In case
  $M = E [\underline f\; \underline a_1 \cdots \underline a_k]$ where
  $k = \alpha (\underline f) \neq 0$, then either
  $f (a_1) \cdots (a_k)$ is observable and
  $\sigma_k [U] (M, s) = \sigma_{k-1} [U] (E [\underline {f (a_1)
    \cdots (a_k)}], s)$, or else $\sigma_k [U] (M, s) = 0$.  For $M$
  fixed, this defines a continuous map in $s$.  We now consider $s$
  fixed.  The function that maps every term $M$ as above, of shape
  $M_0$, to the tuple of (denotations of) constants of type $\realT$
  among $\underline a_1$, \ldots, $\underline a_k$ is continuous,
  hence also the function that maps $M$ (of given shape $M_0$) to
  $f (a_1) \cdots (a_k)$.  In particular the set $V$ of generalized
  ISPCF terms $M$ of shape $M_0$ such that $f (a_1) \cdots (a_k)$ is
  observable is open.  Let $g$ be the function that maps every
  $M = E [\underline f\; \underline {\mathbf a_1} \cdots \underline
  {\mathbf a_k}]$ in $V$ to
  $E [\underline {f (\mathbf a_1) \cdots (\mathbf a_k)}]$.  This is
  continuous.  The inverse image of every non-trivial Scott-open
  subset $]a, +\infty]$ of $\creal$ (namely, $a > 0$) by
  $\sigma_k [U] (\_, s)$ is equal to
  $g^{-1} (\sigma_{k-1} [U] (\_, s)^{-1} ]a, +\infty])$ is then open,
  showing that $\sigma_k [U] (\_, s)$ is continuous.
\end{proof}

We define the `probability' of reaching a normal form in some open
subset $U$ of $Norm$ through the sampling-based operational semantics
as:
\begin{align*}
  P [M \to U] & \eqdef \int_{s \in \{0, 1\}^\nat} \sigma [U] (M, s) d\upsilon.
\end{align*}
In other words, $P [M \to U]$ is the average score of all strings of
rewrite steps starting from $M$, where $s$ is drawn uniformly at
random.  The integral makes sense by
Proposition~\ref{lemma:score:cont}.  We now verify that $P [M \to U]$
is exactly the `probability' $Next^* (M) (U)$ that $M$ eventually
evaluates to a normal form in $U$ according to the operational
semantics of Section~\ref{sec:ideal-oper-semant}.
\begin{proposition}
  \label{prop:PU}
  Assume that all constants in $\Sigma$ are first-order constants, and
  that the only available term of the form $\sample [\mu]$ is
  $\sample [\lambda_1]$, where $\lambda_1$ is Lebesgue measure on
  $[0, 1]$.  For every open subset $U$ of $Norm$, for every
  generalized ISPCF term $M$, $P [M \to U] = Next^* (M) (U)$.
\end{proposition}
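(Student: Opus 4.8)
The plan is to prove, by induction on $k$, the identity
$\int_{s \in \{0,1\}^\nat} \sigma_k[U](M,s)\,d\upsilon = Next^{\leq k}(M)(U)$
for every $k\in\nat$, every generalized ISPCF term $M$, and every open subset $U$ of $Norm$. Granting this, the proposition follows by taking suprema: on the one hand $\sigma[U](M,\cdot) = \sup_k \sigma_k[U](M,\cdot)$ is a directed supremum of lower semicontinuous maps (Lemma~\ref{lemma:score:cont}), so $P[M\to U] = \sup_k \int_s \sigma_k[U](M,s)\,d\upsilon$ by Scott-continuity of the integral in its function argument; on the other hand $Next^*(M)(U) = \sup_{n\geq 1} Next^{\leq n}(M)_{|Norm}(U) = \sup_n Next^{\leq n}(M)(U)$ since $U\subseteq Norm$. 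The base case $k=0$ is immediate: $\sigma_0[U](M,s) = \chi_U(M)$ (no step is taken), while $Next^{\leq 0}(M) = \delta_M$ if $M$ is a normal form and is the zero valuation otherwise (Lemma~\ref{lemma:iter}), so $Next^{\leq 0}(M)(U) = \chi_U(M)$ because $U\subseteq Norm$.

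For the inductive step, fix $k$ and assume the identity at rank $k$ for all terms; we analyse the shape of $M$. If $M$ is a normal form or a blocked term, no $Next'$-step and no $Next$-step applies, and both sides equal $\chi_U(M)$ exactly as in the base case (using Lemma~\ref{lemma:Next*}, item~1, for normal forms, and $Next(M)=0$ for blocked terms). Otherwise $M$ is a redex to which exactly one rule of Figure~\ref{fig:ideal:opsem} applies. If $M\to R$ by a deterministic rule, or by rule (\ref{rule:f}) with an observable result, then $Next'(s)(M,0,1) = (R,0,1)$ for every $s$, so $\sigma_{k+1}[U](M,s) = \sigma_k[U](R,s)$; and $Next^{\leq 1}(M) = Next(M) = \delta_R$, hence $Next^{\leq k+1}(M) = (Next^{\leq k})^\dagger(\delta_R) = Next^{\leq k}(R)$; the induction hypothesis at $R$ closes the case. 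If $M$ is blocked through rule (\ref{rule:f}), both sides are $0$. If $M = E^0[\score\;\underline{\mathbf a}]$, the step multiplies the score by $|\mathbf a|$ and is otherwise inert, so $\sigma_{k+1}[U](M,s) = |\mathbf a|\cdot\sigma_k[U](E^0[\retkw\underline *],s)$ (Lemma~\ref{lemma:scale}); since $Next(M) = |\mathbf a|\cdot\delta_{E^0[\retkw\underline *]}$, linearity of $\_^\dagger$ in its valuation argument gives $Next^{\leq k+1}(M) = |\mathbf a|\cdot Next^{\leq k}(E^0[\retkw\underline *])$, and the induction hypothesis at $E^0[\retkw\underline *]$, after integrating, finishes the case.

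The remaining and main case is rule (\ref{rule:sample}): $M = E^0[\sample[\lambda_1]]$. The first $\to_s$-step produces $(E^0[\retkw\underline{\num(s[0])}],1,1)$, and by Lemma~\ref{lemma:shift} the subsequent $\leq k$ steps, taken from index $1$ on $s$, coincide with $\leq k$ steps from index $0$ on $\upc s$; hence $\sigma_{k+1}[U](M,s) = \sigma_k[U](E^0[\retkw\underline{\num(s[0])}],\upc s)$. Write this as $(G\circ split)(s)$ with $G(t,s') \eqdef \sigma_k[U](E^0[\retkw\underline{\num(t)}],s')$ and $split(s) = (s[0],\upc s)$. Using $split[\upsilon] = \upsilon\otimes\upsilon$ (Lemma~\ref{lemma:split}), the change-of-variables formula, Tonelli's theorem, and the induction hypothesis (applied, for each fixed $t$, to $E^0[\retkw\underline{\num(t)}]$) give in succession $\int_s \sigma_{k+1}[U](M,s)\,d\upsilon = \int_{(t,s')} G\,d(\upsilon\otimes\upsilon) = \int_t\bigl(\int_{s'} G(t,s')\,d\upsilon(s')\bigr)d\upsilon(t) = \int_t Next^{\leq k}(E^0[\retkw\underline{\num(t)}])(U)\,d\upsilon(t)$. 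Since $\num[\upsilon] = \lambda_1$ (Lemma~\ref{lemma:num}) and $a\mapsto E^0[\retkw\underline a]$ is continuous (as in the proof of Proposition~\ref{prop:next:cont}), a further change of variables rewrites this as $\int_{a\in\real} Next^{\leq k}(E^0[\retkw\underline a])(U)\,d\lambda_1$. Finally, $M$ being a redex, $Next^{\leq k+1}(M) = (Next^{\leq k})^\dagger(Next(M)) = (Next^{\leq k})^\dagger(f_{E^0}[\lambda_1])$ with $f_{E^0}(a) = E^0[\retkw\underline a]$ (Definition~\ref{defn:opsem}); unfolding $\_^\dagger$ and applying the change-of-variables formula once more yields $Next^{\leq k+1}(M)(U) = \int_{a\in\real} Next^{\leq k}(E^0[\retkw\underline a])(U)\,d\lambda_1$, matching the previous line. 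This completes the induction, and taking suprema over $k$ concludes the proof.

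Throughout, the integrals and changes of variables are legitimate because the maps involved are measurable: $\sigma_k[U]$ is lower semicontinuous (Lemma~\ref{lemma:score:cont}), so is $Next^{\leq k}(\_)(U)$, the context substitutions $a\mapsto E^0[\retkw\underline a]$ are continuous, and $\num$, $split$ are continuous. The step I expect to require genuine care is rule (\ref{rule:sample}): one must track the index and shift correctly so that the computation tail restarts at index $0$ on $\upc s$, exploit the independence of $s[0]$ and $\upc s$ under $\upsilon$ (Lemma~\ref{lemma:split}) to factor the integral, and recognise that the resulting integral over $\real$ against $\lambda_1$ is precisely the unfolding of $(Next^{\leq k})^\dagger(f_{E^0}[\lambda_1])(U)$; the other cases amount to bookkeeping about the number of steps.
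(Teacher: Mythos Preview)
Your proof is correct and follows essentially the same route as the paper's: induction on $k$ establishing $\int_s \sigma_k[U](M,s)\,d\upsilon = Next^{\leq k}(M)(U)$, with the $\sample$ case handled via $split[\upsilon]=\upsilon\otimes\upsilon$, Fubini--Tonelli, and $\num[\upsilon]=\lambda_1$, then taking suprema. The only cosmetic difference is that the paper packages the $Next^{\leq k}$ recurrences into Lemma~\ref{lemma:Next*} and cites it, whereas you unfold $(Next^{\leq k})^\dagger \circ Next^{\leq 1}$ directly; both are fine, though note that your identity $Next^{\leq k+1}(M) = (Next^{\leq k})^\dagger(Next(M))$ holds as stated only for $k\geq 1$ (for $k=0$ it holds after evaluating at $U\subseteq Norm$, which is all you use).
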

\begin{proof}
  We show that
  $\int_{s \in \{0, 1\}^\nat} \sigma_k [U] (M, s) d\upsilon =
  Next^{\leq k} (M) (U)$ by induction on $k \in \nat$.  The result
  will follow by taking suprema over $k$.

  When $k=0$, $Next^{\leq 0} (M) (U)$ is equal to $\delta_M (U)$ if
  $M \in Norm$, to $0$ otherwise, hence to $\chi_U (M)$ in all cases.
  For every $s \in [0, 1]^\nat$, $\sigma_0 [U] (M, s)$ is equal to
  $\chi_U (M)$, and since $\upsilon$ is a probability distribution,
  $\int_{s \in \{0, 1\}^\nat} \sigma_k [U] (M, s) d\upsilon = \chi_U
  (M)$.

  When $k \geq 1$, we look at the shape of $M$, and we use
  Lemma~\ref{lemma:Next*} to evaluate $Next^{\leq k} (M) (U)$.

  If $M \in Norm$, then
  $Next^{\leq k} (M) (U) = \delta_M (U) = \chi_U (M)$.  Also,
  $\sigma_k [U] (M) = \chi_U (M)$, so
  $\int_{s \in \{0, 1\}^\nat} \sigma_k [U] (M, s) d\upsilon = \chi_U
  (M)$.

  We arrive at the main case of the proof.  If
  $M = E [\sample [\mu]]$, where necessarily $\mu = \lambda_1$, then,
  as in the proof of Lemma~\ref{lemma:score:cont},
  $\sigma_k [U] (M, \upc s) = \sigma_{k-1} [U] (E [\retkw \underline
  {\num(s [0])}], \upc s)$, so:
  \begin{align*}
    \int_{s \in \{0, 1\}^\nat} \sigma_k [U] (M, s) d\upsilon
    & = \int_{s \in \{0, 1\}^\nat} \sigma_{k-1} [U] (E [\retkw \underline
      {\num(s [0])}], \upc s) d \upsilon.
  \end{align*}
  We now use the change-of-variables formula with respect to the
  homeomorphism
  $split \colon s \in \{0, 1\}^\nat \mapsto (s [0], \upc s) \in (\{0,
  1\}^\nat)^2$ (see Lemma~\ref{lemma:split}).  Since $split [\upsilon]
  = \upsilon \otimes \upsilon$, and using Proposition~\ref{prop:fubini:top}:
  \begin{align*}
    \int_{s \in \{0, 1\}^\nat} \sigma_k [U] (M, s) d\upsilon
    & = \int_{t \in \{0, 1\}^\nat} \left(\int_{(s' \in \{0, 1\}^\nat} \sigma_{k-1} [U] (E
      [\retkw \underline {\num(t)}], s') d\upsilon\right) d \upsilon \\
    & = \int_{t \in \{0, 1\}^\nat} Next^{\leq k-1} (E [\retkw
      \underline {\num(t)}]) (U) d\upsilon \\
    & \text{by induction hypothesis} \\
    & = \int_{a \in \real} Next^{\leq k-1} (E [\retkw \underline a])
      (U) d\lambda_1 \\
    & \text{by the change-of-variables formula and
      Lemma~\ref{lemma:num}} \\
    & = Next^{\leq k} (M) (U),
  \end{align*}
  by Lemma~\ref{lemma:Next*}, item~2.

  If $M = E [\score \; \underline {\mathbf a}]$, then, as in the proof
  of Lemma~\ref{lemma:score:cont},
  $\sigma_k [U] (M, s) = |\mathbf a| . \sigma_{k-1} [U] (E [\retkw
  \underline *], s)$.  Hence:
  \begin{align*}
    \int_{s \in \{0, 1\}^\nat} \sigma_k [U] (M, s) d\upsilon
    & = \int_{s \in \{0, 1\}^\nat} |\mathbf a| . \sigma_{k-1} [U] (E [\retkw
      \underline *], s) d\upsilon \\
    & = |\mathbf a| . Next^{\leq k-1} (E [\retkw
      \underline *], s) (U) = Next^{\leq k} (M) (U),
  \end{align*}
  by induction hypothesis and Lemma~\ref{lemma:Next*}, item~3.  The
  remaining cases are equally easy, and left to the reader.
\end{proof}

Combining Proposition~\ref{prop:PU} with Theorem~\ref{thm:adeq}, it
follows that the denotational, operational, and sampling-based operational
semantics all agree at observable types.
\begin{theorem}
  \label{thm:adeq:final}
  Let $\Sigma$ consist of strictly observable first-order constants, and let us
  assume that the only available term of the form $\sample [\mu]$ is
  $\sample [\lambda_1]$, where $\lambda_1$ is Lebesgue measure on
  $[0, 1]$.  For every observable type $\beta$, for every generalized
  ISPCF term $M \colon D \beta$, for every observable open subset $U$ of
  $\Eval \beta$,
  \[
    \Eval M (U) = Next^* (M) (\rover U) = P [M \to
    \rover U].
  \]
\end{theorem}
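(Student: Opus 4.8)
The plan is to obtain Theorem~\ref{thm:adeq:final} by simply chaining together two results already proved in the preceding sections: the adequacy Theorem~\ref{thm:adeq} and Proposition~\ref{prop:PU}. First I would check that the hypotheses line up. Since $\Sigma$ consists of strictly observable first-order constants, it in particular consists of first-order constants (Definition~\ref{defn:assum:strict}), so the hypothesis of Proposition~\ref{prop:PU} is met, the extra assumption that the only available term of the form $\sample [\mu]$ is $\sample [\lambda_1]$ being exactly the one it requires; and the hypothesis of Theorem~\ref{thm:adeq} holds verbatim.

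Next I would invoke Theorem~\ref{thm:adeq} with the observable type $\beta$, the generalized ISPCF term $M \colon D \beta$, and the observable Scott-open subset $U$ of $\Eval \beta$, which gives
\[
  \Eval M (U) = Next^* (M) (\rover U).
\]
Then I would apply Proposition~\ref{prop:PU}. For that I need $\rover U$ to be an open subset of $Norm$: by Definition~\ref{defn:rover}, $\rover U$ is the Scott-open set of terms $\retkw V$ with $V \in \overline U$, and each such $V$ is a normal form by Lemma~\ref{lemma:normal}, hence a value, so $\retkw V$ is again a value and $\rover U \subseteq Norm$; being Scott-open in $\Config$, $\rover U$ is open in the subspace $Norm$. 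Proposition~\ref{prop:PU} applied to the open set $\rover U$ then yields $P [M \to \rover U] = Next^* (M) (\rover U)$. Combining the two displayed equalities gives $\Eval M (U) = Next^* (M) (\rover U) = P [M \to \rover U]$, which is the claim.

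The main obstacle here is, frankly, nothing more than bookkeeping: verifying that the two sets of hypotheses (strictly observable first-order constants for adequacy, first-order constants plus the single $\sample [\lambda_1]$ for the sampling-based semantics) are both subsumed by the hypotheses of Theorem~\ref{thm:adeq:final}, and checking that $\rover U$ genuinely is the kind of open subset of $Norm$ to which Proposition~\ref{prop:PU} applies. All the genuine mathematical content — the logical-relations argument for adequacy and the change-of-variables/Fubini computation relating $\sigma [U]$-integrals to $Next^*$ — has already been carried out, so no new difficulty arises at this stage.
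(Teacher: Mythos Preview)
Your proposal is correct and follows exactly the same approach as the paper: the theorem is stated immediately after Proposition~\ref{prop:PU} with the one-line justification ``Combining Proposition~\ref{prop:PU} with Theorem~\ref{thm:adeq}'', and you have simply made explicit the routine checks (hypothesis matching and $\rover U \subseteq Norm$ open) that the paper leaves implicit.
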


\subsection{A few notes on implementation}
\label{sec:few-notes-impl}

We finish this section by briefly mentioning that the deterministic
relation $\to_s$ is easily implemented.  One only needs to be able to
compute on exact real numbers.  There are quite a number of proposals
in order to do this
\cite{BC:exact:real,lester92,sunderhauf95c,escardo96a,escardo96,edalat97,Plume:ERC,marcial04,dGL:ERA,lambov05,Ho:ERCE}.
The only modification one needs to make is that some of the exact real
numbers $\theta (x) \in \IR$ we have to consider were drawn by
sampling from $\sample [\lambda_1]$, i.e., were obtained as
$\num (s [i])$ for some $i \in \nat$.  Reading that number up to a
specified accuracy $\epsilon > 0$ means reading its first $k$ bits,
where $1/2^k < \epsilon$.  In an actual implementation, we would
therefore only need (arbitrarily large) finite prefixes of an
(arbitrarily large) finite number of substrings $s [i]$, and its bits
can be drawn at random, and stored, on demand.

Explicitly, in the redundant signed-digit representation once proposed
by Boehm and Cartwright in \cite{BC:exact:real} and later by di
Gianantonio and Lanzi \cite{dGL:ERA}, and by Ho \cite{Ho:ERCE}, real
numbers between $-1$ and $1$ are infinite streams of digits in
$\{-1, 0, 1\}$, to be read in base $2$.  Those infinite streams are
read on demand, and can represent $s [i]$ verbatim.  In Boehm and
Cartwright's second representation, real numbers $x$ are encoded as
functions $\overline x$ mapping each natural number $k$ to an integer
$\overline x (k)$ such that $\overline x (k)/4^k$ is at distance at
most $1/4^k$ from $x$.  One can then represent $s [i]$ by maintaining
a table $T_i$ of the first few bits of $s [i]$ in mutable store.  When
$\overline {s [i]} (k)$ is required, we make sure that at least the
first $2k$ entries of $T_i$ are populated, possibly drawing fresh
bits, uniformly and independently at random, and storing them in
$T_i$; then we return $\sum_{j=0}^{2k-1} T_i [j]/2^{j+1}$.

The other primitives working on real numbers that we have considered
are well covered in the literature, except perhaps for $\bin'$ and
$\num'$, mentioned in Section~\ref{sec:distr-high-order}.  We have
already described how $\bin'$ can be implement, right before
Remark~\ref{rem:bin:unfold}.  In a redundant signed-digit
representation, Ho explains how one can compute series of the form
$\sum_{k=0}^{+\infty} x_k /2^{k+1}$, where
$(x_0, x_1, \cdots, x_k, \cdots)$ is a stream of exact real numbers in
$[-1, 1]$ \cite[Section~3.1.2]{Ho:ERCE}, based on unpublished work by
Adam Scriven.  (This is also the denotation of Simpson's
$\mathtt{coerce}$ operation \cite[Figure~3]{simpson98}.)  From this,
$\num'$ is easily implemented.  In Boehm and Cartwright's second
representation, one can implement $\num' (s)$ as the function that
maps each $k \in \nat$ to the partial, finite sum
$\sum_{k'=0}^{2k-1} s_{k'}/2^{k'+1}$ if $s_0$, \ldots, $s_{2k-1}$ are
all different from $\bot$, and as $\bot$ otherwise.
While this definition of $\num'$ differs slightly from
(\ref{eq:num'}), it shares with it the property that $\num' (s)$
computes the exact real representation of $\num (s)$ if no bit $s_k$
is equal to $\bot$, which is the only thing we require from it.

      %

\section{Related work}
\label{sec:related-work}


We have already cited a few papers on statistical programming
languages \cite{GMRBT:Church,WvdMM:Anglican,GS:webppl,MSP:venture}
and, what is of more interest to us, on \emph{semantics} of such
languages \cite{VKS:SFPC,HKSY:qBorel,DLH:geom:bayes,EPT:PPCF}.  We
will not cite the even larger body of literature that deals with the
simpler case of probabilistic, higher-order languages without real
numbers, continuous distributions, or soft constraints.  As only
exception, we mention \cite{PPT:sampling}, which presents a
sampling-based operational semantics for a higher-order probabilistic
call-by-value language with continuous distributions and true real
numbers, and which can be seen as a precursor of some of the
previously mentioned papers.

We have insisted on obtaining a commutative monad of continuous
$R$-valuations, and we have noted that this amounts to forms of the
Fubini-Tonelli theorem.  There are several such commutative monads in
the literature already, including some that predate the novel
commutative monad of \cite{VKS:SFPC}.  On the category of measurable
spaces, the monad of spaces of measures, popularized by M. Giry
\cite{giry82}, is commutative, as first proved by L. Tonelli
\cite{Tonelli09}.  (Tonelli's theorem applies to non-negative
measurable maps, while the Fubini-Tonelli theorem applies to
measurable maps whose absolute value is integrable.)  Vickers
introduced a valuation monad on the category of locales, and proved
that it is commutative, too \cite{Vickers:val:loc}.  The topological
counterpart of that result is Proposition~\ref{prop:fubini:top}.  We
have argued in Section~\ref{sec:fubini-theorem-what} why this does not
imply a corresponding Fubini-Tonelli theorem on the category $\Dcpo$
of dcpos.

We get around this problem by considering \emph{minimal} valuations
instead of general continuous valuations.  The minimal
\emph{subprobability} valuations are an instance of a more general
construction of Jia and Mislove \cite{jiamis}, based on Keimel and
Lawson's $K$-completions \cite{keimel08}.  Heckmann's point-continuous
valuations \cite{heckmann95} are another instance of that
construction.  It was realized independently by the first and second
authors in 2020 that such constructions provide for commutative monads
on $\Dcpo$.  Explicitly, Jia realized that the constructions of
\cite{jiamis} all produced commutative monads, while Goubault-Larrecq
realized that the monad of point-continuous valuations was commutative
on $\Dcpo$, and made it an exercise in a book in preparation
\cite[Exercise~12.8.11]{GL:val}.  The two groups decided to write up
on this, in different settings and with different applications in
mind.  This led to the paper \cite{JLM:prob:quant} by the Tulane
group, and to the present paper.

The general idea of using a form of completion of simple objects
(simple valuations, here) in order to obtain objects with better
properties is, naturally, a very common idea.  A recent and topical
example is the construction of the commutative monad $T$ of
\cite[Section~4]{VKS:SFPC}.  There, the authors consider the
continuation monad $J$ with answers in $[0, +\infty]$.  The analogue
in our case would be to define $J X$ as
$[[X \to \creal] \to \creal] = [\Lform X \to \creal]$, and one should
realize that $\Val X$ embeds into $J X$ by mapping each continuous
valuation $\nu$ to its integration functional $\int \_ d\nu$.  Then,
they consider the so-called randomisable expectation operators, which
are obtained as the (integration functionals of) the image measures of
Lebesgue measure on $\real$.  Finally, $T$ is the smallest full
submonad of $J$ that contains all the randomisable expectation
operators.  A simpler, and more classical example of completion is the
notion of \emph{s-finite} measure
\cite{Sharpe:markov,Getoor:exc:meas}, which are the suprema of
countable chains of bounded measures; s-finite measures satisfy
Fubini-Tonelli, but are not known to form a monad on the category of
measurable sets and measurable maps.  Instead, s-finite kernels form a
commutative monad on that category, which Staton used to give
semantics to a simple first-order probabilistic language
\cite{Staton:sfinite}.

\section{Conclusion}
\label{sec:conc}

We have given a simple domain-theoretic, denotational semantics for a
simple statistical programming language, based on the novel notion of
\emph{minimal} valuation, forming a commutative monad over $\Dcpo$.
We believe that the notion is natural, and that the resulting
denotational semantics is simple.  Through an extensive list of
examples, we hope to have demonstrated that a language based on this
semantics is able to implement a rich set of distributions, including
some non-trivial distributions on higher-order objects.  We have given
three operational semantics, which are all sound and adequate, under
various, natural sets of assumptions on the semantics of available
constants.  Further work include formally verified efficient
implementations, for example extending the particle Monte Carlo Markov
chain algorithm of Anglican \cite{WvdMM:Anglican}.

\section*{Acknowledgments}
\label{sec:acknowledgments}

We took some advice from the anonymous referees on an earlier, and
very different version of this paper submitted at the LICS'21
conference.  We thank Vladimir Zamdzhiev, who spotted a mistake in
the claim of soundness in one version of this paper.  We thank Daniel
Roy for mentioning theories of computable probability distributions to
us, and Sam Staton for correcting a misunderstanding we had about
s-finite measures.


\bibliographystyle{apalike}
\ifarxiv

\else
\bibliography{refs}
\fi

\appendix




\end{document}